\newcommand{\ii}{\mathrm{i}}
\newcommand{\ee}{\mathrm{e}}
\newcommand{\T}{\mathrm{T}}
\newcommand{\cn}{\mathrm{cn}}
\newtheorem{rhp}{Riemann-Hilbert Problem}
\newtheorem{theorem}{Theorem}
\newtheorem{lemma}{Lemma}
\newtheorem{definition}{Definition}
\titleformat{\section}{\centering\LARGE\bfseries}{\thesection}{1em}{}
\titleformat{\subsection}{\Large\bfseries}{\thesubsection}{1em}{}
\begin{document}

\title{Large order breathers of the nonlinear Schr\"odinger equation}

\author{Liming Ling}
\address{School of Mathematics, South China University of Technology, Guangzhou, China 510641}
\email{linglm@scut.edu.cn}
\author{Xiaoen Zhang}
\address{School of Mathematics, South China University of Technology, Guangzhou, China 510641}
\email{zhangxiaoen@scut.edu.cn}

\begin{abstract}
Multi-soliton and high-order soliton solutions are two type of famous ones in the integrable focusing nonlinear Schr\"odinger equation. The dynamics of multi-soliton was well known to us since 70s of the last century by the determinant analysis. However, there is few progress on the high-order solitons. In this work, we would like to analyze the large order asymptotics for the high-order breathers, which are special cases of double high-order solitons with the same velocity to the nonlinear Schr\"odinger equation. To analyze the large order dynamics, we first convert the representation of Darboux transformation into a framework of Riemann-Hilbert problem. Then we show that there exist five distinct asymptotic regions by the Deift-Zhou nonlinear steepest descent method. It is very interesting that a novel genus-three asymptotic region is first found in the large order asymptotics to large high-order breathers, which enriches the dynamic behaviors in the field of large order solitons. All results to the asymptotic analysis are verified by the numerical method.

{\bf Keywords:} Nonlinear Schr\"odinger equation, high-order breathers, asymptotic analysis, Riemann-Hilbert problem, Darboux transformation.

{\bf 2020 MSC:} 35Q55, 35Q51, 37K10, 37K15, 35Q15, 37K40.
\end{abstract}

\date{\today}

\maketitle
\section{Introduction}
The well known one dimensional focusing nonlinear Schr\"{o}dinger equation
\begin{equation}\label{eq:nls-eq}
\ii q_{t}+\frac{1}{2}q_{xx}+\left|q\right|^2q=0,
\end{equation}
is a complete integrable model and admits lots of exact soliton solutions. For instance, the multi-soliton and high-order soliton. In the terminology of inverse scattering transform, the multi-solitons correspond to the scattering data with distinct pairs of simple pole; but the high-order ones correspond to the scattering data with complex-conjugated pairs of high-order poles.
The real and imaginary part of pole determine the velocity and amplitude of soliton respectively. For the multi-soliton with different velocity, their large time asymptotics can be performed by the determinant analysis, which shows that the $N$-soliton with distinct real poles can be decomposed into $N$ single solitons. If the multi-soliton has the same velocity, it will form a breather, which can be regarded as a special type of periodic solution. As the period tends to infinity, the breathers will form the second-order soliton (alias the double poles soliton). The second order soliton was firstly discovered in the original work of Zaharov and Shabat as early as 1972 \cite{shabat1972exact}, and who claimed that the dynamics of second-order soliton separated to each other with the distance $\ln(4\eta^2 t)$. A systematic study on the high-order solitons was carried out by Olmedilla \cite{OLMEDILLA1987330}. Based on the computer computations, he proposed a conjecture for the single arbitrary order soliton, which was proved by Schiebold after $30$ years ago \cite{schiebold2017asymptotics} based on the operator theoretic approach and determinant analysis. Another method based on the generalized Darboux transformation and determinant analysis was proposed by two authors of this work to the Hirota equation \cite{zhang2021asymptotic}.

A pioneered work for the infinite order soliton of KdV equation was proposed by Gesztesy, Karwowski and Zhao \cite{Fritz-1992} by using the spectral theory of operators. An alternative method had been proposed by Shabat through the Darboux transformation \cite{Shabat_1992}. Actually, the relative work on the infinite solitons can retrospect to the pioneering work by McKean and Trubowitz \cite{CPAM-1975}, who considered the infinitely many branch points of linear Schr\"odinger spectral problem (shrinking branch cuts into points in the hyperelliptic function theory can yield the solitons). For the nonlinear Schr\"odinger equation, the infinite many solitons were first obtained by Kamvissis \cite{JMP-1995}. Another independent works on the countable superposition of solitons to nonlinear Schr\"odinger equation was given by Schiebold \cite{Schiebold_2010}. In recent years, Grava et.al. studied the asymptotics of the KdV soliton and the nonlinear Schr\"{o}dinger soliton by using the Riemann-Hilbert method as $N\to\infty$ \cite{bertola2021soliton,girotti2021rigorous}. The results show that the soliton gas to these two equations can be asymptotically step-like oscillatory for $x\to\pm\infty$ respectively. While the difference is that the nonlinear Schr\"{o}dinger equation will exhibit soliton shielding effect, but the KdV equation will not.

For the nonlinear Schr\"{o}dinger equation, with some special initial data, the long time asymptotics will include some high genus sectors. Such as, through the Riemann-Hilbert method and the Deift-Zhou nonlinear steepest descent method \cite{deift1993steepest}, for a general nonzero background, the long time asymptotics of solution can be given by the Jacobi-Theta function \cite{biondini2017long}. For the nondecaying initial data $q(x,0)=A\ee^{-\ii |\mu|x}$, the solution will be asymptotic to the Theta function attached to a Riemann surface with genus two in the transition region \cite{buckingham2007long}. While for the step-like oscillating background, the asymptotics can be given by the hyperelliptic Theta functions attached to a Riemann surface with genus-three \cite{monvel2022focusing}. In addition, the semi-classical limit to the nonlinear Schr\"{o}dinger equation will also generate high genus region \cite{kamvissis2003semiclassical,lyng2007n,miller2008Riemann}.

Very recently, there are series of progresses on the infinite/large order rogue waves and solitons for the scalar or coupled nonlinear Schr\"odinger equation \cite{Bilman-JNS-2019,Bilman-JDE-2021,Bilman-Duke-2019,BILMAN2022133289,Bilman-arxiv-2021,Ling-arxiv-2021} by the far-field or near field limit and Riemann-Hilbert method. The first recent breakthrough in this aspect is infinite order rogue wave by the near-field limit, in which the infinite order rogue wave is related to the Painl\'eve-III hierarchy \cite{Bilman-Duke-2019}. The far-field asymptotics for single multiple-pole solitons was beautifully analyzed by the Riemann-Hilbert method \cite{Bilman-JNS-2019,Bilman-JDE-2021}. After that, the universal analyzing for the large order soliton or rogue wave was carried out in the literature \cite{BILMAN2022133289, Bilman-arxiv-2021}. The large/infinite asymptotics for the single high-order soliton in the coupled NLS equation was given in our previous work \cite{Ling-arxiv-2021}. In above-mentioned works, all the solutions are related to the single high-order poles under the zero or non-zero background in the framework of inverse scattering transform. Therefore, it is naturally to extend the single high-order poles to the general high-order solitons in large or infinite order asymptotics.

Another motivation for this work is from the study of integrable turbulence and soliton gas
\cite{PRL-2019}, in which they considered ensembles of $N$-soliton ($N$ is large enough) having a particular semiclassical distribution of IST eigenvalues and then they obtained the similar structures resembling the Akhmediev breathers. As $N\to\infty$, the cluster points will appear in a bounded region of spectral plane. The single high-order soliton can be considered as a single cluster point in the upper half plane. The double high-order soliton will correspond to two cluster points in the upper half plane.
Thus, in this work, we would like to analyze the asymptotics of a special double high-order solitons or a high-order breather under the far field limit. As we know that, as two solitons share the same velocity, this soliton will form a bound state, which exhibits the breathing dynamics with the evolution of time \cite{chen1976solitons,doktorov2007dressing,satsuma1974b,shabat1972exact,yang2010nonlinear}, whose dynamics is shown in Fig. \ref{fig:exact-3-d}(a). For the second order breather, as shown in Fig. \ref{fig:exact-3-d}(b), we can observe that two solitons admit the breathing dynamics in the center and another two solitons locating on either sides possess soliton behavior with the logarithm separating velocity. With the increasing of order, it is evidently seen that the dynamics of solitons in the center are differ from the ones on two sides (Fig. \ref{fig:exact-3-d-1}). By the plotting of lower order breathers, compared to the single high-order solitons, we can find that the dynamics of high-order breathers become more diversity and a new different asymptotic region in the center appears. Even though the high-order breather solutions have the determinant representation, it is difficult to analyze or plot by the large order determinant directly. Looking for a replacing way to analyze the dynamics systematically is one of our pursuit. The main tool in this work is using the Riemann-Hilbert representation. Return to the high-order breathers, we will show that the novel breathing solitons in the center part will correspond to the high genus case, which brings challenges for the asymptotic analysis and numeric verification.

To deal with the large order asymptotics of high-order breathers, we first establish the corresponding Riemann-Hilbert problem (RHP) by using the Darboux transformation. When $n$ is large, by using the far-field limit we can get five different asymptotic regions on the $(\chi,\tau)$-plane, which are the exponential-decay region, the algebraic-decay region, the genus-zero region, the genus-one region and the genus-three region respectively. For the exponential-decay region, we can get the asymptotics via the RHP directly. For the algebraic-decay region, the phase term appearing in the RHP is well suit for analyzing through the Deift-Zhou nonlinear steepest descent method directly. While for the left three asymptotic regions, we should introduce three different $g$ functions \cite{deift1997new} depending on different genus of algebraic curves to convert the controlling phase term into the new phase $h(\lambda)$ function.

The innovation of this work contains the following four points. (i) We extend the asymptotics of single high-order solitons to the high-order breathers.
Through the RHP and Deift-Zhou nonlinear steepest descent method, we establish the five different asymptotic regions, which enriches the dynamics of large order solitons further. (ii) We construct the corresponding RHP via the residue and kernel conditions of Darboux transformation directly. By the deformation of contour, we also establish the equivalence of RHP between normalization method \cite{Bilman-2019-CPAM} and direct way. By the RHP of direct way, we can analyze the asymptotic behavior for the exponential-decay region directly without deforming the contour. (iii) Due to the interaction of high-order breathers, it appears a new genus-three region. By introducing an algebraic curve with genus-three, we give the asymptotics expressing with the Riemann-Theta function. To our best knowledge, this is the first result about the asymptotic region of genus-three on the large order solitons. (iv) By choosing fixed $\chi$ and $\tau$, we numerically verify that the asymptotic solutions can match the determinantal ones very well.
\subsection{The Riemann-Hilbert representation of high-order breathers}

The Lax pair of the nonlinear Schr\"{o}dinger equation \eqref{eq:nls-eq} is given by the following linear system
\begin{equation}
\begin{aligned}
\pmb{\Phi}_{x}&=\mathbf{U}(\lambda; x, t)\pmb{\Phi},\quad \mathbf{U}(\lambda; x, t)=-\ii\lambda\sigma_3+\mathbf{Q},\quad \mathbf{Q}=\begin{bmatrix}0&q\\
-q^*&0
\end{bmatrix},\\
\pmb{\Phi}_{t}&=\mathbf{V}(\lambda; x, t)\pmb{\Phi},\quad\pmb{V}(\lambda; x, t)=-\ii\lambda^2\sigma_3+\lambda\mathbf{Q}+\frac{1}{2}\ii\sigma_3\left(\mathbf{Q}_x-\mathbf{Q}^2\right),
\end{aligned}
\end{equation}
where $\sigma_3$ is one of the Pauli matrices
\begin{equation}\label{eq:lax-pair}
\sigma_1=\begin{bmatrix}0&1\\
1&0
\end{bmatrix},\quad \sigma_2=\begin{bmatrix}0&-\ii\\
\ii&0
\end{bmatrix},\quad \sigma_3=\begin{bmatrix}1&0\\
0&-1
\end{bmatrix}.
\end{equation}
Under the zero background (i.e. $q=0$), the fundamental solution of Lax pair \eqref{eq:lax-pair} is
\begin{equation}
\pmb{\Phi}_{{\rm bg}}(\lambda; x, t)=\ee^{-\ii\lambda\left(x+\lambda t\right)\sigma_3}.
\end{equation}
In this paper, we would like to study the asymptotics of the large order breathers via the Riemann-Hilbert method and the Deift-Zhou nonlinear steepest descent method. Following the idea for analyzing the asymptotics of single large order soliton and large order rogue wave, this RHP can be given by using the Darboux transformation. Thus we give a brief review about the Darboux transformation about the nonlinear Schr\"{o}dinger equation. For the AKNS system with $su(2)$ symmetry, the loop group version of Darboux transformation $\mathbf{T}_1^{[1]}(\lambda; x, t)$ is given by Terng and Uhlebeck in \cite{terng2000backlund}:
\begin{equation}
\mathbf{T}_{1}^{[1]}(\lambda; x, t)=\mathbb{I}-\frac{\lambda_1-\lambda_1^*}{\lambda-\lambda_1^*}\mathbf{P}_{1}^{[1]}(x,t), \quad \mathbf{P}_{1}^{[1]}(x,t)=\frac{\pmb{\phi}_1(x,t)\pmb{\phi}_{1}^{\dagger}(x,t)}{\pmb{\phi}_{1}^{\dagger}(x,t)\pmb{\phi}_{1}(x,t)},\quad \pmb{\phi}_{1}(x,t)=\pmb{\Phi}_{{\rm bg}}(\lambda; x, t)\left(c_1, c_2\right)^{\T},
\end{equation}
where $(c_1, c_2)$ are arbitrary complex constants. Moreover, if we want to derive the high-order solitons at the same spectral parameter $\lambda_1$, the classical Darboux transformation can not work any more, we should introduce the generalized Darboux transformation \cite{guo2012nonlinear}. Another recent version of Darboux transformation to construct the high-order solution was proposed by Bilman and Miller in \cite{Bilman-2019-CPAM} based on the robust inverse scattering transform.

Set the second order Darboux transformation $\mathbf{T}_{1}^{[2]}(\lambda; x, t)$ as
\begin{multline}
\mathbf{T}^{[2]}_{1}(\lambda; x, t)=\mathbb{I}-\frac{\lambda_1-\lambda_1^*}{\lambda-\lambda_1^*}\mathbf{P}_{1}^{[2]}(x,t),\quad \mathbf{P}_{1}^{[2]}(x,t)=\frac{\pmb{\phi}_1^{[1]}(x,t)\left(\pmb{\phi}_1^{[1]}(x,t)\right)^{\dagger}}{\left(\pmb{\phi}_1^{[1]}(x,t)\right)^{\dagger}\pmb{\phi}_1^{[1]}(x,t)}, \\ \pmb{\phi}_1^{[1]}(x,t)=\frac{d}{d\lambda}\left(\mathbf{T}_{1}^{[1]}(\lambda; x, t)\pmb{\Phi}_{\rm bg}(\lambda; x, t)\right)\Big{|}_{\lambda=\lambda_1}\left(c_1, c_2\right)^{\T}.
\end{multline}
Continuing the above iteration, we can get the $n$-th order Darboux transformation about single pole case as
\begin{multline}
\mathbf{T}^{[n]}_{1}(\lambda; x, t)=\mathbb{I}-\frac{\lambda_1-\lambda_1^*}{\lambda-\lambda_1^*}\mathbf{P}_{1}^{[n]}(x,t),\quad \mathbf{P}_{1}^{[n]}(x,t)=\frac{\pmb{\phi}_{1}^{[n-1]}(x,t)\left(\pmb{\phi}_{1}^{[n-1]}(x,t)\right)^{\dagger}}{\left(\pmb{\phi}_{1}^{[n-1]}(x,t)\right)^{\dagger}\pmb{\phi}_{1}^{[n-1]}(x,t)},\\
\pmb{\phi}_{1}^{[n-1]}(x,t)=\frac{1}{(n-1)!}\frac{d^{n-1}\left(\mathbf{T}^{[n-1]}_{1}(\lambda; x, t)\mathbf{T}^{[n-2]}_{1}(\lambda; x, t)\cdots\mathbf{T}^{[1]}_{1}(\lambda; x, t)\pmb{\Phi}_{\rm bg}(\lambda; x, t)\right)}{d\lambda^{n-1}}\Bigg{|}_{\lambda=\lambda_1}\left(c_1, c_2\right)^{\T}.
\end{multline}
In this paper, our major aim is studying the large order asymptotics about the high-order breathers, which is a particular one of double high-order solitons with the same velocity. By combining the classical Darboux transformation with $N$ distinct spectral parameters and the generalized Darboux transformation for high-order solitons at the same spectral parameters, we can set the generalized Darboux transformation with two spectral parameters $\lambda_1, \lambda_2$ as
\begin{equation}\label{eq:DT}
\mathbf{T}^{[2n]}(\lambda; x, t)=\mathbf{T}_{2}^{[n]}\mathbf{T}_{2}^{[n-1]}\cdots\mathbf{T}_{2}^{[1]}\mathbf{T}_{1}^{[n]}\mathbf{T}_{1}^{[n-1]}\cdots\mathbf{T}_{1}^{[1]},
\end{equation}
where
\begin{equation}
\mathbf{T}_{i}^{[j]}\equiv\mathbf{T}_{i}^{[j]}(\lambda; x, t), \quad \mathbf{T}_{i}^{[j]}:=\mathbb{I}-\frac{\lambda_i-\lambda_i^*}{\lambda-\lambda_i^*}\mathbf{P}_{i}^{[j]}, \left(i=1,2, j=1,2,\cdots, n\right),
\end{equation}
and
\begin{equation}
\begin{aligned}
\mathbf{P}_{i}^{[j]}&=\frac{\pmb{\phi}_{i}^{[j-1]}\left(\pmb{\phi}_{i}^{[j-1]}\right)^{\dagger}}{\left(\pmb{\phi}_{i}^{[j-1]}\right)^{\dagger}\pmb{\phi}_{i}^{[j-1]}},
\quad \pmb{\phi}_{1}^{[j]}=\frac{1}{j!}\frac{d^{j}\left(\mathbf{T}_{1}^{[j]}\mathbf{T}_{1}^{[j-1]}\cdots\mathbf{T}_{1}^{[1]}\pmb{\Phi}_{\rm bg}(\lambda; x, t)\right)}{d\lambda^{j}}\Bigg{|}_{\lambda=\lambda_1}\left(c_1, c_2\right)^{\T},\\
\pmb{\phi}_{2}^{[j]}&=\frac{1}{j!}\frac{d^{j}\left(\mathbf{T}_{2}^{[j]}\mathbf{T}_{2}^{[j-1]}\cdots\mathbf{T}_{2}^{[1]}\mathbf{T}_{1}^{[n]}\mathbf{T}_{1}^{[n-1]}\cdots\mathbf{T}_{1}^{[1]}\pmb{\Phi}_{\rm bg}(\lambda; x, t)\right)}{d\lambda^j}\Bigg{|}_{\lambda=\lambda_2}\left(c_1, c_2\right)^{\T}.
\end{aligned}
\end{equation}
Following the idea in \cite{ling2016multi}, we can rewrite the Darboux transformation Eq.\eqref{eq:DT} as another equivalent formula, that is
\begin{equation}\label{eq:DT-1}
\mathbf{T}^{[2n]}(\lambda; x, t)=\mathbb{I}+\mathbf{Y}_{2n}\mathbf{M}^{-1}\mathbf{D}\mathbf{Y}_{2n}^{\dagger},\quad \mathbf{M}=\mathbf{X}^{\dagger}\mathbf{S}\mathbf{X},
\end{equation}
where
\begin{equation}
\begin{aligned}
\mathbf{Y}_{2n} &=\left[\pmb{\varphi}_1^{[0]},\pmb{\varphi}_1^{[1]},\cdots,\pmb{\varphi}_1^{[n-1]}, \pmb{\varphi}_2^{[0]},\pmb{\varphi}_2^{[1]},\cdots,\pmb{\varphi}_2^{[n-1]}\right],\\
\mathbf{D}&=\begin{bmatrix}
\mathbf{D}_1 &0\\
0 &\mathbf{D}_2
\end{bmatrix}, \quad
\mathbf{X}=\begin{bmatrix}
\mathbf{X}_1 &0\\
0 &\mathbf{X}_2
\end{bmatrix},\quad
\mathbf{S}=\begin{bmatrix}
\mathbf{S}_{11} & \mathbf{S}_{12}\\
\mathbf{S}_{21} &\mathbf{S}_{22}
\end{bmatrix},\\\mathbf{D}_i&=\begin{bmatrix}
 \frac{1}{\lambda-\lambda_i^*}&0 &\cdots & 0 \\
\frac{1}{(\lambda-\lambda_i^*)^2}&\frac{1}{\lambda-\lambda_i^*} &\cdots & 0 \\
\vdots &\vdots & &\vdots \\
\frac{1}{(\lambda-\lambda_i^*)^{n}}&\frac{1}{(\lambda-\lambda_i^*)^{n-1}} &\cdots &\frac{1}{\lambda-\lambda_i^*} \\
\end{bmatrix},\quad \mathbf{X}_i=\begin{bmatrix}
\pmb{\varphi}_i^{[0]}&\pmb{\varphi}_i^{[1]}&\cdots&\pmb{\varphi}_i^{[n-1]}\\
0&\pmb{\varphi}_i^{[0]}&\cdots&\pmb{\varphi}_i^{[n-2]} \\
\vdots&\vdots&\ddots&\vdots\\
0&0&\cdots& \pmb{\varphi}_i^{[0]}\\
\end{bmatrix},\quad (i=1,2),\\
\mathbf{S}_{i,j}&=\begin{bmatrix}
\binom{0}{0}\frac{\mathbb{I}_2}{\lambda_i^*-\lambda_j}&\binom{1}{0} \frac{\mathbb{I}_2}{(\lambda_i^*-\lambda_j)^2} &\cdots&\binom{n-1}{0}\frac{\mathbb{I}_2}{(\lambda_i^*-\lambda_j)^{n}} \\
\binom{1}{1}\frac{(-1)\mathbb{I}_2}{(\lambda_i^*-\lambda_j)^2}& \binom{2}{1}\frac{(-1)\mathbb{I}_2}{(\lambda_i^*-\lambda_j)^3} &\cdots&\binom{n}{1}\frac{(-1)\mathbb{I}_2}{(\lambda_i^*-\lambda_j)^{n+1}} \\\vdots&\vdots&\ddots&\vdots\\
\binom{n-1}{n-1}\frac{(-1)^{n-1}\mathbb{I}_2}{(\lambda_i^*-\lambda_j)^{n}}& \binom{n}{n-1}\frac{(-1)^{n-1}\mathbb{I}_2}{\lambda_i^*-\lambda_j)^{n+1}} &\cdots&\binom{2n-2}{n-1}\frac{(-1)^{n-1}\mathbb{I}_2}{(\lambda_i^*-\lambda_j)^{2n-1}}\\
\end{bmatrix},\quad (i,j=1,2)
\end{aligned}
\end{equation}
and $\pmb{\varphi}_i^{[k]}=\frac{1}{k!}\left(\frac{\rm d}{{\rm d}\lambda}\right)^{k}\pmb{\varphi}_i(\lambda)\Big|_{\lambda=\lambda_i}$, $\pmb{\varphi}_i(\lambda_i)$ is a special solution of the Lax pair with $\lambda=\lambda_i$, that is $\pmb{\varphi}_i(\lambda_i)=\pmb{\Phi}_{\rm bg}(\lambda_i; x, t)\left(c_1, c_2\right)^{\T}$. Then the high-order breather $q^{[n]}(x,t)$ can be recovered by a limit calculation:
\begin{equation}\label{eq:qn-exact}
q^{[n]}(x,t)=2\ii\lim\limits_{\lambda\to\infty}\lambda\mathbf{T}^{[2n]}_{12}(\lambda; x, t)=2\ii \mathbf{Y}_{2n,2}\mathbf{M}^{-1}\mathbf{Y}_{2n,1}^{\dagger},
\end{equation}
where the subscript $_{2n,1}$ and $_{2n,2}$ denote the first row and second row of $\mathbf{Y}_{2n}$ respectively. In \cite{zhang2020study}, the authors proved the regularities of high order solitons by the inverse scattering transform. For the formula \eqref{eq:qn-exact}, its regularity is evidently because the matrix $\mathbf{M}$ is positive definite.

Without loss of generality, we set $\lambda_1=\ii, \lambda_2=k\ii \,(k>1)$. For $k<1$ case, the solutions can be obtained by the scale invariance of nonlinear Schr\"{o}dinger equation
\begin{equation}
\widetilde{q}(x,t)=\ee^{\epsilon}q\left(x\ee^{\epsilon},t\ee^{2\epsilon}\right).
\end{equation}
When $n=1$,  by the determinant formula \eqref{eq:qn-exact}, the first order breather can be given as
\begin{equation}\label{eq:single-breather}
q^{[1]}=\frac{4c_1c_2^*(k^2-1)\left(k\ee^{2\ii k^2t}\cosh\left(2x+\log\left|\frac{c_1}{c_2}\right|\right)-\ee^{2\ii t}\cosh\left(2kx+\log\left|\frac{c_1}{c_2}\right|\right)\right)}{\left|c_1c_2\right|\left((k-1)^2\cosh\left(2(k+1)x+2\log\left|\frac{c_1}{c_2}\right|\right)+(k+1)^2\cosh\left(2(k-1)x\right)-4k\cos\left(2(k^2-1)t\right)\right)}.
\end{equation}
From the above exact solution, we can see that it is localized in the $x$-direction and periodic in the $t$-direction with the frequency $\left|k^2-1\right|/\pi$, which has the similar dynamics as the Kuznetsov-Ma breather in the plane wave background \cite{kuznetsov1977solitons,ma1979perturbed}. As shown in the expression \eqref{eq:single-breather}, the parameter $c_1/c_2$ will  not only affect the location of breather but also the dynamics. If the value of $c_1/c_2$ is large enough, the solutions will exhibit two parallel propagating solitons with distinct amplitudes and very weak breathing effect. Thus in the following, we just analyze the special case $c_1=c_2=1$, in which the interaction between two solitons are strongest to each other periodically. For the high-order breathers, their expressions are the rational functions of exponential and polynomial functions and very complicated. Thus we just show their dynamics by the computer graphics in the following subsection.
\subsection{Qualitative properties of high-order breathers}
In this subsection, we continue to display some features of high-order breathers. From the exact solutions in Eq.\eqref{eq:qn-exact}, by choosing different order $n$, we plot some high-order breathers in Fig. \ref{fig:exact-3-d}.
\begin{figure}[ht]
\centering
\includegraphics[width=0.8\textwidth,height=0.6\textwidth]{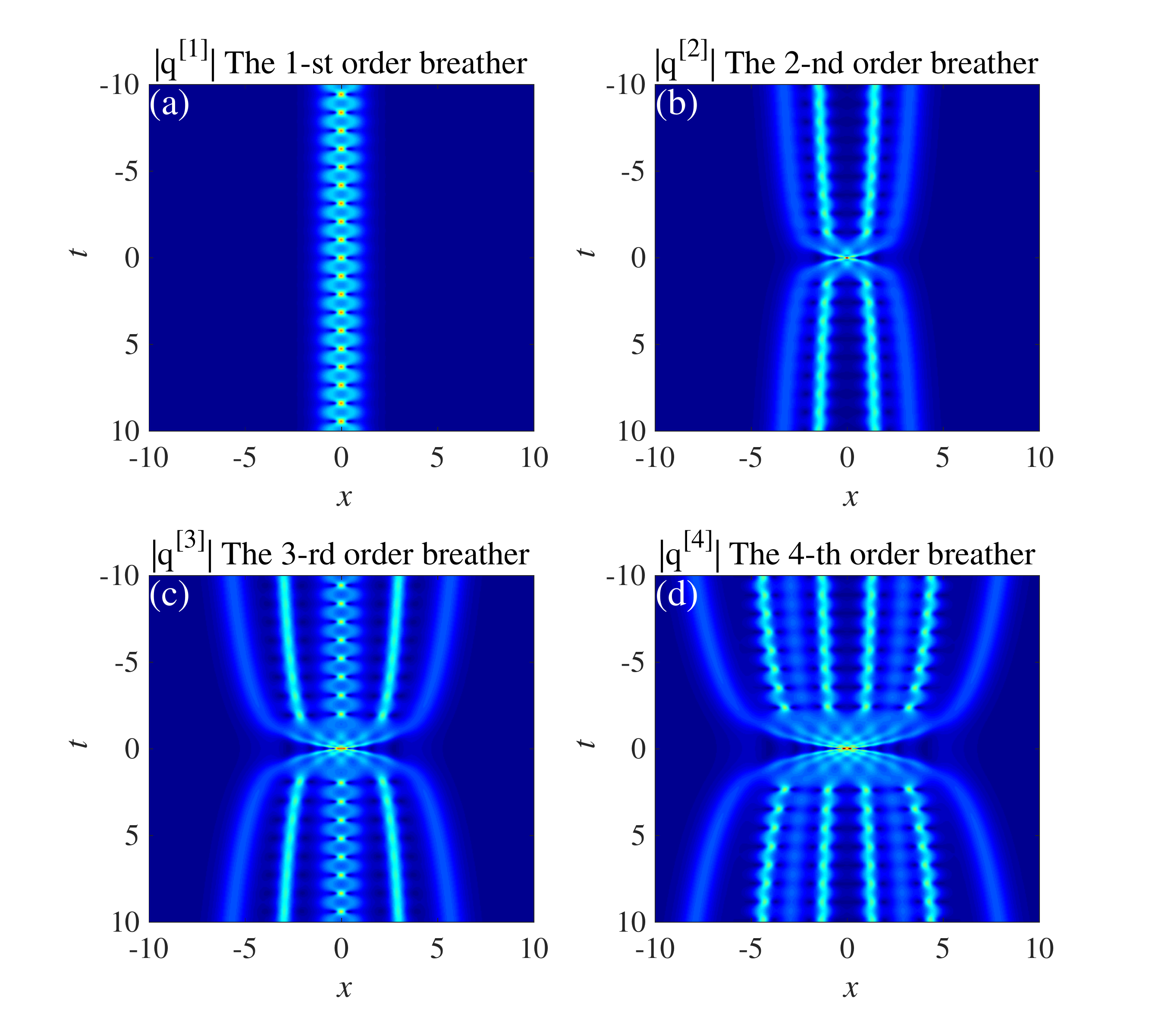}
\caption{The modulus of $\left|q^{[n]}(x,t)\right|$ for $n=1$, $n=2$, $n=3$ and $n=4$ by choosing $k=2$ and $c_1=c_2=1.$}
\label{fig:exact-3-d}
\end{figure}

When $n=1$, there is only one single breather, and as the order increases, there appear some supernatural regions describing the interaction of high-order breathers. Especially, due to the effect of periodicity, there is a high frequency oscillatory region, which is not existent in the single high-order solitons. To see their dynamics clearly, we also give some high-order cases in Fig. \ref{fig:exact-3-d-1}, which show that the additional region is observable evidently.
\begin{figure}[ht]
\centering
\includegraphics[width=0.8\textwidth,height=0.3\textwidth]{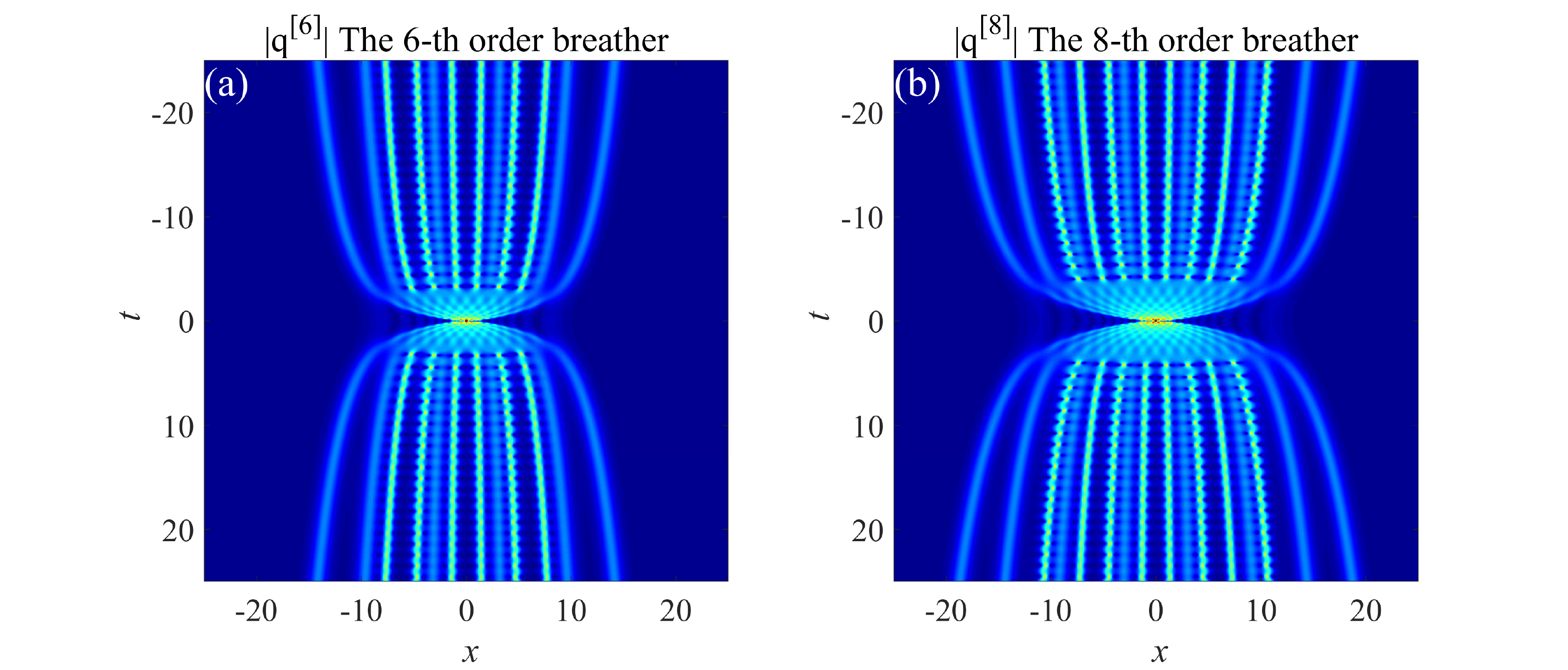}
\caption{The modulus of $\left|q^{[n]}(x,t)\right|$ for $n=6$ and $n=8$ with the parameter $k=2$ respectively.}
\label{fig:exact-3-d-1}
\end{figure}

From these figures, we can find that the asymptotics about this additional interaction region is out of the previous analyzing framework, thus we need to improve the corresponding methods to handle them. If we try to use the exact solution in Eq.\eqref{eq:qn-exact} for this analysis, we can see that the Darboux transformation $\mathbf{T}^{[2n]}(\lambda; x, t)$ is very complicated, which is hard to be used to analyze the large order asymptotics. Following the idea in \cite{Bilman-JDE-2021}, we would like to use the Riemann-Hilbert method and the Deift-Zhou nonlinear steepest descent method to analyze the asymptotics.
\subsection{The Riemann-Hilbert representation of high-order breather}
In this subsection, we will construct the RHP about the high-order breathers via the Darboux transformation. Before establishing the RHP, we first give a lemma about the properties of Darboux transformation.
\begin{lemma}
Set $c_1=c_2=1$, then $\mathbf{T}^{[2n]}(\lambda; x, t)\begin{bmatrix}\Delta(\lambda)^2\\\Delta(\lambda)^2\ee^{2\ii\lambda(x+\lambda t)}
\end{bmatrix}$
is analytic at the singularities $\lambda=\lambda_1, \lambda=\lambda_2$ and $\mathbf{T}^{[2n]}(\lambda; x, t)\begin{bmatrix}-\ee^{-2\ii\lambda(x+\lambda t)}\\1
\end{bmatrix}$ is analytic at $\lambda=\lambda_1^*, \lambda=\lambda_2^*$, where $\Delta(\lambda)=\left(\frac{\lambda-\lambda_1}{\lambda-\lambda_1^*}\frac{\lambda-\lambda_2}{\lambda-\lambda_2^*}\right)^{-n/2}$.
\end{lemma}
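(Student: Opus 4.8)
The plan is to analyze the behavior of each column-combination near the relevant spectral singularities using the product structure of $\mathbf{T}^{[2n]}$ and the fact that each elementary factor $\mathbf{T}_i^{[j]}$ has a simple pole only at $\lambda=\lambda_i^*$ with rank-one residue proportional to the projector $\mathbf{P}_i^{[j]}$. First I would observe that $\mathbf{T}^{[2n]}(\lambda;x,t)$, as a function of $\lambda$, has poles only at $\lambda=\lambda_1^*$ and $\lambda=\lambda_2^*$ (each of order at most $n$, coming from the $n$ factors $\mathbf{T}_i^{[1]},\dots,\mathbf{T}_i^{[n]}$), and is analytic and invertible at $\lambda=\lambda_1,\lambda=\lambda_2$. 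Consequently the first assertion — analyticity of $\mathbf{T}^{[2n]}\big(\Delta(\lambda)^2,\Delta(\lambda)^2\ee^{2\ii\lambda(x+\lambda t)}\big)^{\T}$ at $\lambda=\lambda_1,\lambda_2$ — is actually immediate: the vector itself has only the branch-point/pole structure of $\Delta(\lambda)^2=\big(\frac{\lambda-\lambda_1}{\lambda-\lambda_1^*}\frac{\lambda-\lambda_2}{\lambda-\lambda_2^*}\big)^{-n}$, which at $\lambda=\lambda_1,\lambda_2$ has a zero of order $n$, and $\mathbf{T}^{[2n]}$ is holomorphic there, so the product is holomorphic. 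The content is therefore entirely in the second assertion, at $\lambda=\lambda_1^*,\lambda_2^*$, where $\mathbf{T}^{[2n]}$ genuinely blows up and the prefactor $\Delta(\lambda)^{-2}$-type growth must be cancelled by a kernel condition.

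For the singularity at $\lambda=\lambda_1^*$ I would argue as follows. Write $\mathbf{T}^{[2n]}=\mathbf{T}_2^{[n]}\cdots\mathbf{T}_2^{[1]}\cdot\mathbf{T}_1^{[n]}\cdots\mathbf{T}_1^{[1]}$. The factors $\mathbf{T}_2^{[j]}$ are analytic at $\lambda=\lambda_1^*$, so it suffices to show that $\mathbf{T}_1^{[n]}\cdots\mathbf{T}_1^{[1]}$ applied to the vector $\big(-\ee^{-2\ii\lambda(x+\lambda t)},1\big)^{\T}$ is analytic at $\lambda=\lambda_1^*$. The key point, which I would state as the engine of the proof, is the classical "kernel identity" for these Darboux factors: the vector $\pmb{\phi}_1^{[j]}$ used to build $\mathbf{P}_1^{[j+1]}$ lies in the kernel of the residue of $\mathbf{T}_1^{[j+1]}$, and more usefully, the specific background solution $\pmb{\varphi}_1(\lambda)=\ee^{-\ii\lambda(x+\lambda t)\sigma_3}(1,1)^{\T}$ at $\lambda=\lambda_1$ equals $\big(\ee^{-\ii\lambda(x+\lambda t)},\ee^{\ii\lambda(x+\lambda t)}\big)^{\T}\big|_{\lambda=\lambda_1}$, and the "adjoint" solution that is singular at $\lambda_1^*$ is $\sigma_2\overline{\pmb{\varphi}_1(\bar\lambda)}=\big(-\ee^{-2\ii\lambda(x+\lambda t)},1\big)^{\T}\ee^{\ii\lambda(x+\lambda t)}$ evaluated at $\lambda_1^*$, up to the scalar $\ee^{\ii\lambda(x+\lambda t)}$. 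Because the entire Darboux chain at spectral point $\lambda_1$ is built from $\pmb{\varphi}_1$ and its $\lambda$-derivatives, the residues of $\mathbf{T}_1^{[j]}$ at $\lambda=\lambda_1^*$ are rank-one with ranges spanned by these $\pmb{\phi}_1^{[j-1]}$ vectors; pairing against the conjugate-symmetric vector $\big(-\ee^{-2\ii\lambda(x+\lambda t)},1\big)^{\T}$ (which up to scalar is the Schur/cofactor partner) kills all the principal parts order by order. Concretely I would expand $\mathbf{T}_1^{[j]}(\lambda)=\mathbb{I}-\frac{\lambda_1-\lambda_1^*}{\lambda-\lambda_1^*}\mathbf{P}_1^{[j]}$ near $\lambda=\lambda_1^*$, multiply the chain out, and use $\mathbf{P}_1^{[j]}\big(-\ee^{-2\ii\lambda_1^*(x+\lambda_1^* t)},1\big)^{\T}=0$ — which holds because $\big(-\ee^{-2\ii\lambda_1^*(x+\lambda_1^* t)},1\big)^{\T}$ is, up to a nonzero scalar, $\sigma_2$ times the complex conjugate of the generating vector $\pmb{\phi}_1^{[j-1]}(x,t)$ and hence orthogonal to it in the Hermitian pairing that defines $\mathbf{P}_1^{[j]}$. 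The same argument with indices $1\to 2$ and $x+\lambda t$ evaluated appropriately handles $\lambda=\lambda_2^*$, using additionally that the $\mathbf{T}_1$-block is analytic and invertible at $\lambda_2^*$ and maps the relevant conjugate vector to a scalar multiple of the corresponding generating vector at $\lambda_2$ (this compatibility is exactly the statement that the Darboux transformation respects the $su(2)$ symmetry $\mathbf{T}(\lambda)^\dagger=\mathbf{T}(\bar\lambda)^{-1}$ up to the scalar factor $\prod\frac{\lambda-\lambda_i^*}{\lambda-\lambda_i}$, i.e. $\Delta(\lambda)^{-2/n}$-type).

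I expect the main obstacle to be the bookkeeping at $\lambda=\lambda_2^*$: one must track how the already-built block $\mathbf{T}_1^{[n]}\cdots\mathbf{T}_1^{[1]}$ transforms the vector $\big(-\ee^{-2\ii\lambda(x+\lambda t)},1\big)^{\T}$ near $\lambda_2^*$ and verify that the result is still, up to an analytic invertible factor and a nonvanishing scalar, the conjugate-partner of the generating vectors $\pmb{\phi}_2^{[j-1]}$, so that the projectors $\mathbf{P}_2^{[j]}$ again annihilate all principal parts. This is where the symmetry relation $\mathbf{T}^{[2n]}(\lambda)^\dagger\,\mathbf{T}^{[2n]}(\bar\lambda)=\big(\prod_{i}\frac{\lambda-\lambda_i^*}{\lambda-\lambda_i}\big)^{n}\mathbb{I}=\Delta(\lambda)^{-2}\mathbb{I}$ does the heavy lifting: applying it to the first claim (analyticity of $\mathbf{T}^{[2n]}(\Delta^2,\Delta^2\ee^{2\ii\lambda(x+\lambda t)})^{\T}$ at $\lambda_i$) and taking Hermitian adjoints at conjugate points $\bar\lambda\to\lambda_i^*$ directly yields the second claim, since $\Delta(\lambda)^{-2}$ cancels the $\Delta(\bar\lambda)^2$ factors and exchanges the two vector structures $\big(\Delta^2,\Delta^2\ee^{2\ii\lambda(x+\lambda t)}\big)^{\T}\leftrightarrow\big(-\ee^{-2\ii\lambda(x+\lambda t)},1\big)^{\T}$ up to scalars. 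So in the writeup I would: (1) record the pole/analyticity structure of $\mathbf{T}^{[2n]}$; (2) prove the $\lambda_1,\lambda_2$ statement by the order-counting argument above; (3) establish the determinant/symmetry identity $\mathbf{T}^{[2n]}(\lambda)^\dagger\mathbf{T}^{[2n]}(\bar\lambda)=\Delta(\lambda)^{-2}\mathbb{I}$ from the corresponding identity for each elementary factor; and (4) deduce the $\lambda_1^*,\lambda_2^*$ statement by taking adjoints, with a short separate check that the scalar prefactors generated along the way are nonzero at the singularities.
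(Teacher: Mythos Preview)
Your proposal contains a sign error that inverts the entire argument. You write that $\Delta(\lambda)^2=\big(\frac{\lambda-\lambda_1}{\lambda-\lambda_1^*}\frac{\lambda-\lambda_2}{\lambda-\lambda_2^*}\big)^{-n}$ ``at $\lambda=\lambda_1,\lambda_2$ has a zero of order $n$''. It does not: the base fraction vanishes at $\lambda_1,\lambda_2$, so its $(-n)$th power has a \emph{pole} of order $n$ there (and a zero of order $n$ at $\lambda_1^*,\lambda_2^*$). Consequently the first assertion is not immediate at all --- $\mathbf{T}^{[2n]}$ is indeed holomorphic at $\lambda_1,\lambda_2$, but you are multiplying it against something that blows up like $(\lambda-\lambda_i)^{-n}$. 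What saves the product is precisely the \emph{kernel condition} of the generalized Darboux transformation: $\mathbf{T}^{[2n]}(\lambda;x,t)\big(1,\ee^{2\ii\lambda(x+\lambda t)}\big)^{\T}$ vanishes to order $n$ at each $\lambda_i$, because the iterated construction of $\mathbf{T}_i^{[1]},\dots,\mathbf{T}_i^{[n]}$ is designed so that this vector and its first $n-1$ $\lambda$-derivatives at $\lambda_i$ are annihilated. That zero of order $n$ cancels the pole of $\Delta(\lambda)^2$. This is exactly what the paper's proof invokes (``kernel conditions'') for the first part.

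Your detailed mechanism --- projector annihilation, the $su(2)$ symmetry $\mathbf{T}(\lambda)^\dagger\mathbf{T}(\bar\lambda)=\Delta(\lambda)^{-2}\mathbb{I}$, and the adjoint passage between the two vector structures --- is the right toolkit, but you have assigned it to the wrong half of the lemma. Once you correct the pole/zero bookkeeping, the kernel argument handles $\lambda_1,\lambda_2$, and the residue condition (equivalently, your adjoint/symmetry argument applied in the correct direction) handles $\lambda_1^*,\lambda_2^*$: there $\mathbf{T}^{[2n]}$ has poles of order $n$ whose principal parts annihilate $\big(-\ee^{-2\ii\lambda(x+\lambda t)},1\big)^{\T}$, which is (up to a nonzero scalar) orthogonal to the range of each rank-one residue. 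Neither assertion is trivial; both rest on the Darboux structure, as the paper's short proof indicates.
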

\begin{proof}
From the kernel conditions of the Darboux transformation, in the neighborhood of $\lambda=\lambda_1$ and $\lambda=\lambda_2$, we know that the meromorphic vector $\mathbf{T}^{[2n]}(\lambda; x, t)\begin{bmatrix}1\\\ee^{2\ii\lambda(x+\lambda t)}
\end{bmatrix}$ can be expanded into the form of $\mathcal{O}\left((\lambda-\lambda_1)^{n}\right)$ and $\mathcal{O}\left((\lambda-\lambda_2)^{n}\right)$ respectively. Thus we get the conclusion that if this vector multiplies the factor $\Delta(\lambda)^2$, it is analytic in the neighborhood of $\lambda=\lambda_1$ and $\lambda=\lambda_2$. Similarly, in the neighborhood of $\lambda=\lambda_1^*$ and $\lambda=\lambda_2^*$, with the residue conditions of Darboux transformation, we know that $\mathbf{T}^{[2n]}(\lambda; x, t)\begin{bmatrix}-\ee^{-2\ii\lambda(x+\lambda t)}\\1
\end{bmatrix}$ is analytic at $\lambda=\lambda_1^*, \lambda=\lambda_2^*$, it completes the proof.
\end{proof}
From the properties of Darboux transformation, we can construct three sectional analytic matrices as follows,
\begin{equation}\label{eq:M}
\mathbf{M}(\lambda; x, t):=\left\{\begin{aligned}&\mathbf{T}^{[2n]}(\lambda; x, t)\begin{bmatrix}\Delta(\lambda)^2&0\\0&1\end{bmatrix},\quad\lambda\in\mathbb{C}\setminus\left(D_{0}^{+}\cup D_{0}^{-}\right),\\
&\mathbf{T}^{[2n]}(\lambda; x, t)\begin{bmatrix}\Delta(\lambda)^2&0\\
\Delta(\lambda)^2\ee^{2\ii\lambda(x+\lambda t)}&1
\end{bmatrix},\quad \lambda\in D_{0}^{+},\\
&\mathbf{T}^{[2n]}(\lambda; x, t)\begin{bmatrix}\Delta(\lambda)^2&-\ee^{-2\ii\lambda(x+\lambda t)}\\
0&1
\end{bmatrix},\quad \lambda\in D_{0}^{-},
\end{aligned}\right.
\end{equation}
where $D_{0}^{+}$ and $D_{0}^{-}$ are two upper/lower semicircles involving the singularities $\lambda=\lambda_1, \lambda=\lambda_2$ and $\lambda=\lambda_1^*, \lambda=\lambda_2^*$ respectively, which can be seen from Fig.\ref{fig:jump-contour}.
\begin{figure}[ht]
\centering
\includegraphics[width=0.4\textwidth]{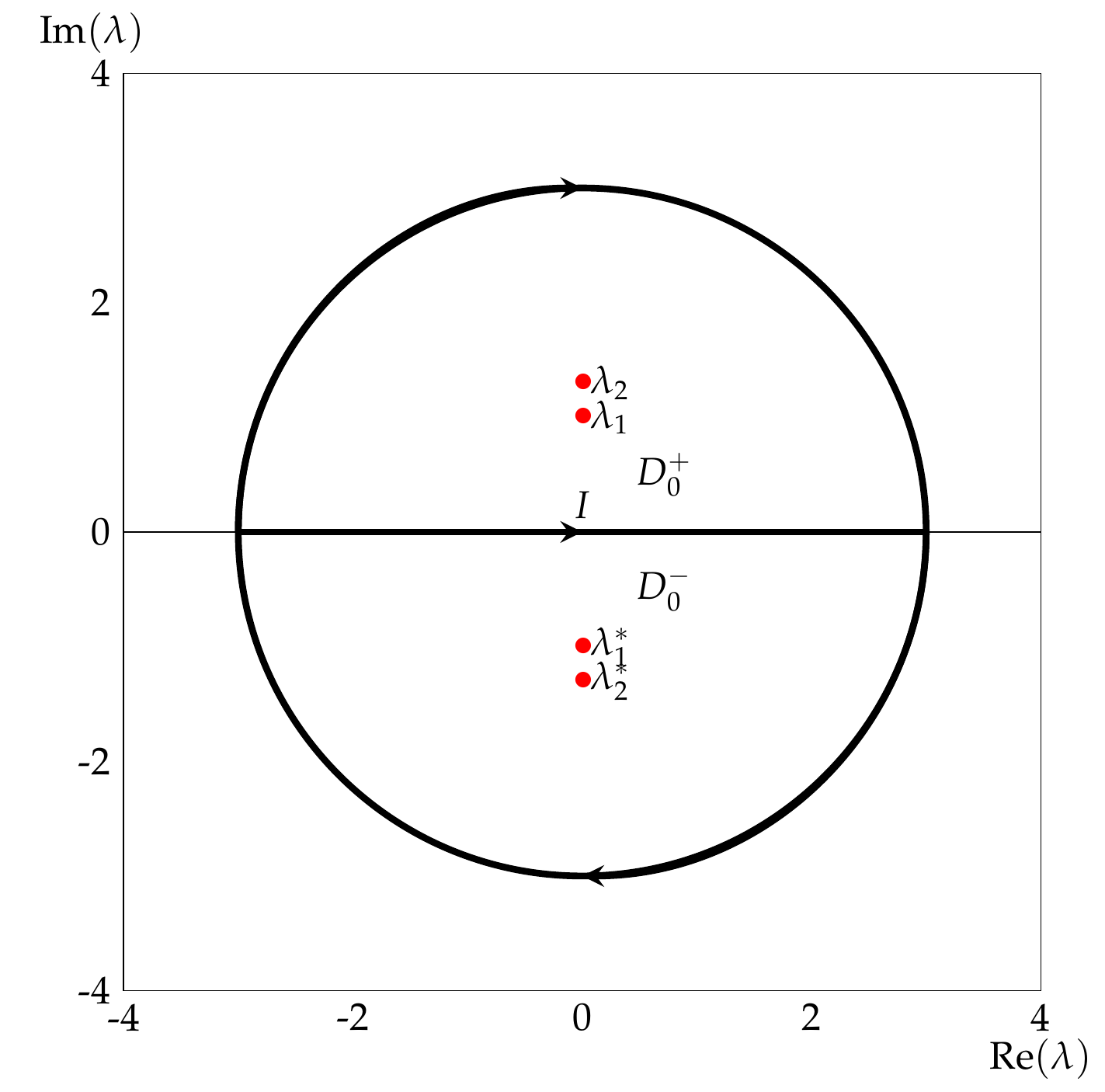}
\caption{The jump contour about $\mathbf{M}(\lambda; x, t)$}
\label{fig:jump-contour}
\end{figure}
Then the jump condition about $\mathbf{M}(\lambda; x, t)$ changes into
\begin{equation}\label{eq:jump-M}
\mathbf{M}_{+}(\lambda; x, t)=\mathbf{M}_{-}(\lambda; x, t)\left\{\begin{aligned}&\begin{bmatrix}1&0\\
-\ee^{2\ii\theta(\lambda; x, t)}&1
\end{bmatrix},\quad\lambda\in\partial D_{0}^{+},\\
&\begin{bmatrix}1&\ee^{-2\ii\theta(\lambda; x, t)}\\
0&1
\end{bmatrix},\quad \lambda\in\partial D_{0}^{-},\\
&\begin{bmatrix}2&\ee^{-2\ii\theta(\lambda; x, t)}\\
\ee^{2\ii\theta(\lambda; x, t)}&1
\end{bmatrix},\quad \lambda\in I,
\end{aligned}\right.
\end{equation}
where $\theta(\lambda; x, t)=\lambda(x+\lambda t)-\ii\log\Delta(\lambda)$. Next, we begin to study the asymptotics under the framework of this RHP. We can see that the phase term $\theta(\lambda; x, t)$ appearing in $\mathbf{M}(\lambda; x, t)$ is well suited for analyzing the large order asymptotics when the variables $x$ and $t$ are proportional to the order $n$.
Indeed, we give a rescale transformation as \cite{Bilman-JDE-2021}:
\begin{equation}
x=n \chi,\quad t=n\tau,
\end{equation}
then the jump conditions on $\partial D_{0}^{\pm}$ change into
\begin{equation}\label{eq:jump-M-1}
\mathbf{M}_{+}(\lambda; n\chi, n\tau)=\mathbf{M}_{-}(\lambda; n\chi, n\tau)\left\{\begin{aligned}&\begin{bmatrix}1&0\\
-\ee^{2\ii n\vartheta(\lambda; \chi, \tau)}&1
\end{bmatrix},\quad\lambda\in\partial D_{0}^{+},\\
&\begin{bmatrix}1&\ee^{-2\ii n\vartheta(\lambda; \chi, \tau)}\\
0&1
\end{bmatrix},\quad \lambda\in\partial D_{0}^{-},
\end{aligned}\right.
\end{equation}
where
\begin{equation}\label{eq:phase}
\vartheta(\lambda; \chi, \tau):=\lambda\chi+\lambda^2\tau+\frac{\ii}{2}\log\left(\frac{\lambda-\lambda_1}{\lambda-\lambda_1^*}\right)+\frac{\ii}{2}\log\left(\frac{\lambda-\lambda_2}{\lambda-\lambda_2^*}\right).
\end{equation}
The $n$-th order breather $q^{[n]}(n\chi, n\tau)$ can be recovered by the asymptotic expansion,
\begin{equation}\label{eq:qn-M}
q^{[n]}(n\chi, n\tau)=2\ii\lim\limits_{\lambda\to\infty}\lambda \mathbf{M}(\lambda; n\chi, n\tau).
\end{equation}
It is easy to see that when $n$ is large, the exponential term $\ee^{2\ii n\vartheta(\lambda; \chi, \tau)}$ in the jump matrix will decay into zero in the region of $\Im(\vartheta(\lambda; \chi, \tau))>0$, and the factor $\ee^{-2\ii n\vartheta(\lambda; \chi, \tau)}$ will decay to zero in the region of $\Im(\vartheta(\lambda; \chi, \tau))<0.$ By choosing one fixed $\chi$ and $\tau$, we give the contour plot of $\Im(\vartheta(\lambda; \chi, \tau))$ in the following figure (Fig. \ref{fig:exp-decay}).
\begin{figure}[ht]
\centering
\includegraphics[width=0.45\textwidth]{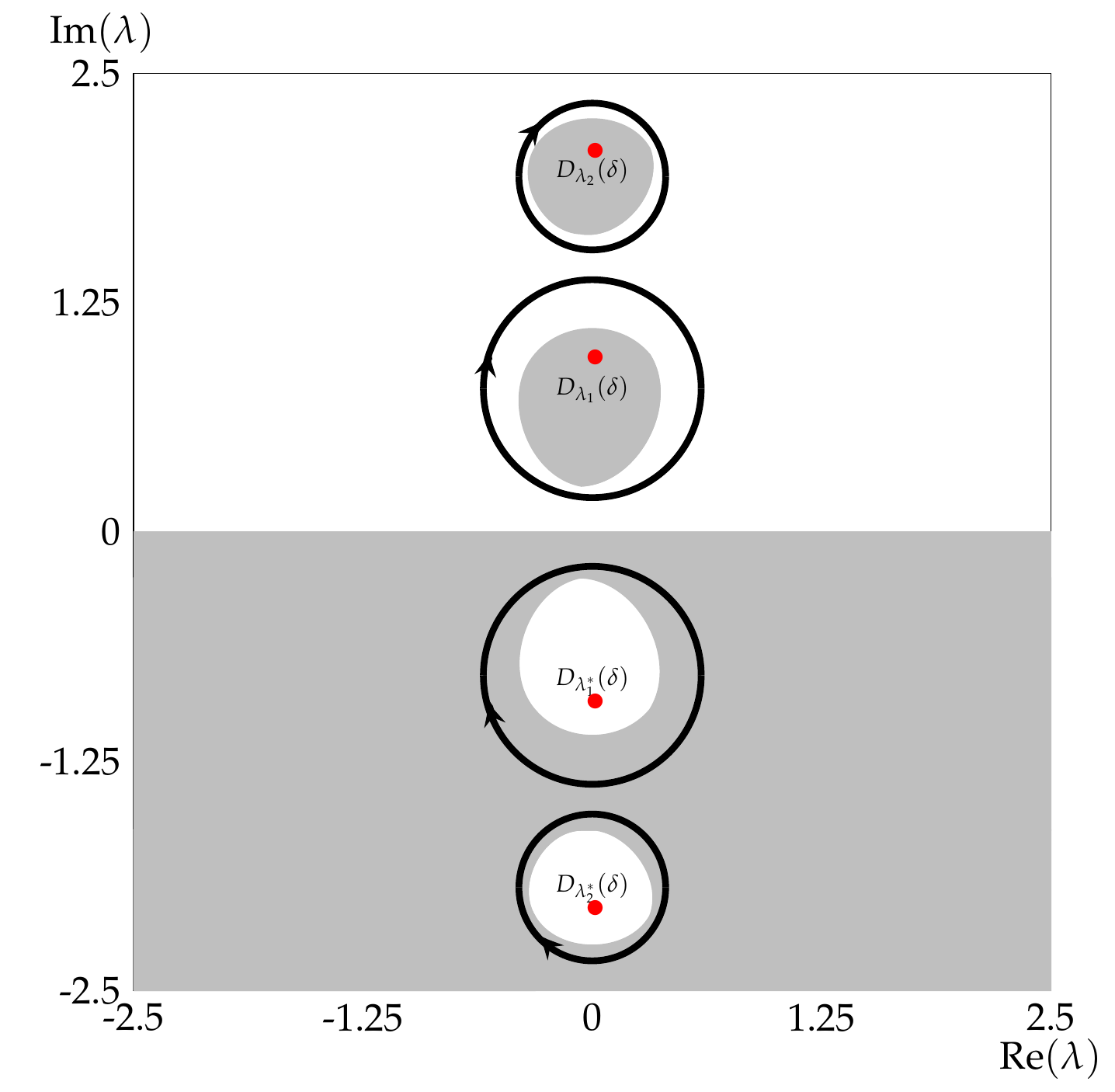}
\caption{The contour plot of $\Im\left(\vartheta\left(\lambda; 38/25, 0\right)\right)$ in the exponential-decay region by choosing $k=2$, where $\Im\left(\vartheta\left(\lambda; 38/25, 0\right)\right)<0$(shaded) and $\Im\left(\vartheta\left(\lambda; 38/25, 0\right)\right)>0$(unshaded), the red dots are the singularities at $\lambda=\pm\ii$ and $\lambda=\pm2\ii$.}
\label{fig:exp-decay}
\end{figure}

From the jump condition about $\mathbf{M}(\lambda; n\chi, n\tau)$ in Eq.\eqref{eq:jump-M-1}, we see that if the jump contour on $\partial D_{0}^{+}$ deforms into the boundary of $\partial D_{\lambda_2}(\delta)$ and $\partial D_{\lambda_1}(\delta)$, and the jump contour on $\partial D_{0}^{-}$ deforms into the boundary of $\partial D_{\lambda_2^*}(\delta)$ and $\partial D_{\lambda_1^*}(\delta)$, which are shown in Fig. \ref{fig:exp-decay}.
Under this condition, the jump matrices in Eq.\eqref{eq:jump-M} satisfy
\begin{equation}
\|\mathbf{V}_{\mathbf{M}}(\lambda; n\chi, n\tau)-\mathbb{I}\|=\mathcal{O}(\ee^{-\mu n}), \quad \mu>0,
\end{equation}
where $\mathbf{V}_{\mathbf{M}}(\lambda; n\chi, n\tau)$ indicates the jump matrix of $\mathbf{M}(\lambda; n\chi, n\tau)$. According to Eq.\eqref{eq:qn-M}, we can get the asymptotics in the exponential-decay region, that is
\begin{equation}\label{eq:exp-decay}
q^{[n]}(n\chi, n\tau)=\mathcal{O}\left(\ee^{-\mu n}\right).
\end{equation}
To get the asymptotics in other regions, we should deform $\mathbf{M}(\lambda; n\chi, n\tau)$ in a proper way. To realize it, we first need to eliminate the jump contour on $\lambda=I$. With the Deift-Zhou nonlinear steepest descent method, we take a decomposition to the jump matrix on $\lambda=I$,
\begin{equation}
\begin{bmatrix}2&\ee^{-2\ii n\vartheta(\lambda; \chi, \tau)}\\
\ee^{2\ii n\vartheta(\lambda; \chi, \tau)}&1
\end{bmatrix}=\ee^{-\ii n\vartheta(\lambda; \chi, \tau)\sigma_3}\begin{bmatrix}\sqrt{2}&0\\
\frac{\sqrt{2}}{2}&\frac{\sqrt{2}}{2}
\end{bmatrix}\begin{bmatrix}\sqrt{2}&\frac{\sqrt{2}}{2}\\
0&\frac{\sqrt{2}}{2}
\end{bmatrix}\ee^{\ii n\vartheta(\lambda; \chi, \tau)\sigma_3}.
\end{equation}
Based on this decomposition, we can introduce another matrix $\mathbf{R}(\lambda; \chi, \tau)$ defined as
\begin{equation}
\mathbf{R}(\lambda; \chi, \tau):=\left\{\begin{aligned}&\mathbf{M}(\lambda; n\chi, n\tau)\ee^{-\ii n\vartheta(\lambda; \chi, \tau)\sigma_3}\begin{bmatrix}\frac{\sqrt{2}}{2}&-\frac{\sqrt{2}}{2}\\
0&\sqrt{2}
\end{bmatrix}\ee^{\ii n\vartheta(\lambda; \chi, \tau)\sigma_3},\quad\lambda\in D_{0}^{+},\\
&\mathbf{M}(\lambda; n\chi, n\tau)\ee^{-\ii n\vartheta(\lambda; \chi, \tau)\sigma_3}\begin{bmatrix}\sqrt{2}&0\\
\frac{\sqrt{2}}{2}&\frac{\sqrt{2}}{2}
\end{bmatrix}\ee^{\ii n\vartheta(\lambda; \chi, \tau)\sigma_3},\quad\lambda\in D_{0}^{-},\\
&\mathbf{M}(\lambda; n\chi, n\tau),\quad\lambda\in\mathbb{C}\setminus\left(D_{0}^{+}\cup D_{0}^{-}\right).
\end{aligned}\right.
\end{equation}
Then $\mathbf{R}(\lambda; \chi, \tau)$ satisfies the following RHP.

\begin{rhp}\label{rhp-M1}
($n$-th order breather) Let $(\chi,\tau)\in\mathbb{R}^2$ be arbitrary parameters. Then we can find a $2\times 2$ matrix $\mathbf{R}(\lambda; \chi, \tau)$ with the following properties.
\begin{itemize}
\item {\bf Analyticity:} $\mathbf{R}(\lambda; \chi, \tau)$ is analytic for $\lambda\in\mathbb{C}\setminus \partial D_0$ and it takes continuous boundary value from the interior and the exterior of $\partial D_0$.
\item {\bf Jump condition:} When $\lambda\in\partial D_0$, $\mathbf{R}_{\pm}(\lambda; \chi, \tau)$ are related by the following jump condition,
\begin{equation}\label{eq:jump-R}
\mathbf{R}_{+}(\lambda; \chi, \tau)=\mathbf{R}_{-}(\lambda; \chi, \tau)\ee^{-\ii n\vartheta(\lambda; \chi, \tau)\sigma_3}\mathbf{Q}_{c}\ee^{\ii n\vartheta(\lambda; \chi, \tau)\sigma_3}.
\end{equation}
\item {\bf Normalization:} $\mathbf{R}(\lambda; \chi, \tau)\to\mathbb{I}$ as $\lambda\to\infty$,
\end{itemize}
where $D_{0}=D_{0}^+\cup D_{0}^-$, $\mathbf{Q}_{c}=\frac{1}{\sqrt{2}}
\begin{bmatrix}1&1\\
-1&1
\end{bmatrix}$.
\end{rhp}
In the references \cite{Bilman-JDE-2021,Bilman-2019-CPAM,Bilman-arxiv-2021}, the authors give the corresponding RHP through the normalized method. While we derive the RHP by using the properties of Darboux transformation directly, we can further check that both two ways of Riemann-Hilbert problems are consistent.
From this RHP \ref{rhp-M1}, the solution $q^{[n]}(n\chi, n\tau)$ can be given with the following limit process,
\begin{equation}\label{eq:qn-al}
q^{[n]}(n\chi, n\tau)=2\ii\lim\limits_{\lambda\to\infty} \lambda R_{12}(\lambda; \chi, \tau).
\end{equation}
Following the theory of Deift-Zhou nonlinear steepest descent method, the controlling phase term $\vartheta(\lambda; \chi, \tau)$ and the decomposition of $\mathbf{Q}_{c}$ are important for the asymptotic analysis. So we show the following notations about the decomposition of this constant matrix $\mathbf{Q}_{c}$,
\begin{equation}\label{remark:decom}
\begin{aligned}
\mathbf{Q}_{c}&=\begin{bmatrix}\frac{\sqrt{2}}{2}&0\\0&\sqrt{2}
\end{bmatrix}\begin{bmatrix}1&0\\-\frac{1}{2}&1
\end{bmatrix}\begin{bmatrix}1&1\\0&1
\end{bmatrix}:=\mathbf{Q}_{L}^{[1]}\mathbf{Q}_C^{[1]}\mathbf{Q}_{R}^{[1]},&\quad \left(``{\rm DLU}"\right),\\
\mathbf{Q}_{c}&=\begin{bmatrix}\sqrt{2}&0\\0&\frac{\sqrt{2}}{2}
\end{bmatrix}\begin{bmatrix}1&\frac{1}{2}\\0&1
\end{bmatrix}\begin{bmatrix}1&0\\-1&1
\end{bmatrix}:=\mathbf{Q}_{L}^{[2]}\mathbf{Q}_C^{[2]}\mathbf{Q}_{R}^{[2]},&\quad \left(``{\rm DUL}"\right),\\
\mathbf{Q}_{c}&=\mathbf{Q}_{L}^{[2]}\begin{bmatrix}1&-\frac{1}{2}\\0&1
\end{bmatrix}\begin{bmatrix}0&1\\-1&0
\end{bmatrix}\begin{bmatrix}1&-1\\0&1
\end{bmatrix}:=\mathbf{Q}_{L}^{[2]}\mathbf{Q}_{L}^{[3]}\mathbf{Q}_C^{[3]}\mathbf{Q}_{R}^{[3]},&\quad \left(``\rm{DUTU}"\right),\\
\mathbf{Q}_{c}&=\mathbf{Q}_{L}^{[1]}\begin{bmatrix}1&0\\\frac{1}{2}&1
\end{bmatrix}\begin{bmatrix}0&1\\-1&0
\end{bmatrix}\begin{bmatrix}1&0\\1&1
\end{bmatrix}:=\mathbf{Q}_{L}^{[1]}\mathbf{Q}_{L}^{[4]}\mathbf{Q}_C^{[4]}\mathbf{Q}_{R}^{[4]}&\quad \left(``\rm{DLTL}"\right),\\
\mathbf{Q}_{c}&=\begin{bmatrix}1&1\\0&1
\end{bmatrix}\begin{bmatrix}\sqrt{2}&0\\0&\frac{\sqrt{2}}{2}
\end{bmatrix}\begin{bmatrix}1&0\\-1&1
\end{bmatrix}:=\mathbf{Q}_{L}^{[5]}\mathbf{Q}_C^{[5]}\mathbf{Q}_{R}^{[5]}&\quad \left(``\rm{UDL}"\right),\\
\mathbf{Q}_{c}&=\begin{bmatrix}1&0\\-1&1
\end{bmatrix}\begin{bmatrix}\frac{\sqrt{2}}{2}&0\\0&\sqrt{2}
\end{bmatrix}\begin{bmatrix}1&1\\0&1
\end{bmatrix}:=\mathbf{Q}_{L}^{[6]}\mathbf{Q}_C^{[6]}\mathbf{Q}_{R}^{[6]}&\quad \left(``\rm{LDU}"\right).
\end{aligned}
\end{equation}
As to this RHP \ref{rhp-M1}, when $\lambda_1=\ii, \lambda_2=k\ii\,(k>1)$, we plot the $15$-th order breather in Fig.\ref{fig:3-d} and give the boundary curves (yellow solid lines) to these five different regions marked with $A, E$, $g=0, g=1$ and $g=3$. Now, we give a detailed description about these boundary lines. From the definition of $\vartheta(\lambda; \chi, \tau)$ in Eq.\eqref{eq:phase}, the critical points of $\vartheta(\lambda; \chi, \tau)$ are given by the roots of following algebraic equation:
\begin{equation}\label{eq:cri-poi}
\left(\chi+2\tau\lambda\right)\left(\lambda^2+1\right)\left(\lambda^2+k^2\right)-(k+1)\left(\lambda^2+k\right)=0.
\end{equation}
The discriminant of Eq.\eqref{eq:cri-poi} about $\lambda$ is a function with respect to $\chi, \tau$ and $k$, and the exact expression is tedious, so we give a special example for $k=2$,
\begin{multline}\label{eq:dis}
288\Big[73728\tau^8+\left(222560-229632\chi+46080\chi^2\right)\tau^6+\left(9504\chi^4-50832\chi^3+99168\chi^2-54360\chi-14094\right)\tau^4\\
+\left(720\chi^6-3744\chi^5+6936\chi^4-8685\chi^3+8721\chi^2-5103\chi+2187\right)\tau^2\\+18\chi^8-51\chi^7+80\chi^6-90\chi^5+54\chi^4-27
\chi^3\Big].
\end{multline}
If this discriminant is greater than $0$, the quintic polynomial Eq.\eqref{eq:cri-poi} has at least three real critical points, otherwise, it has only one real critical point. We can show that Eq.\eqref{eq:dis}$>0$ will correspond to the algebraic-decay region. On the exterior of this algebraic-decay region, there are three kinds of curves separating the different regions, one is the boundary line between $A$ and $g=1$, one is the boundary line between $g=1$ and $g=3$, the rest one is the boundary between $g=0$ and $g=1$ and $g=3$. Similar to the large order soliton and large order rogue wave asymptotic analysis in \cite{Bilman-arxiv-2021}, this curve between $A$ and $g=1$ can be determined by the condition
\begin{equation}
\ell_{sol}:\Re\left(\int_{\Gamma}\ii\vartheta'\lambda; \chi, \tau)d\lambda\right)=0,
\end{equation}
where $\Gamma$ is any Schwarz-symmetric contour avoiding these four singularities $\lambda=\pm\ii, \lambda=\pm k\ii$. As $\vartheta(\lambda; \chi, \tau)$ has five different critical points in this region, one is a real root and the others are two pairs of conjugate complex roots, thus there exist two different curves, one is separating the $g=1$ region and the $E$ region and the other one is an approximate boundary between the $g=3$ region and the $g=1$ region, which should be revised by the discriminant of the modified controlling phase term $h_3(\lambda; \chi, \tau)$ function. This modified phase term is given in the asymptotic analysis of genus-three region. The rest of curves (Fig. \ref{fig:3-d}) are the boundaries between the $g=0$ region and the $g=1$ and $g=3$ region, which are determined by the algebraic curve of genus-zero. In the genus-zero region, the original phase term $\vartheta(\lambda; \chi, \tau)$ is no longer used to analyze the asymptotics, we need to introduce the $g$-function defined as
\begin{equation}
g'(\lambda):=\frac{R(\lambda)}{2}\left(\frac{\ii}{\left(\lambda-\ii\right)R(\ii)}-\frac{\ii}{(\lambda+\ii)R(-\ii)}+\frac{\ii}{(\lambda-k\ii)R(k\ii)}-\frac{\ii}{(\lambda+k\ii)R(-k\ii)}+4\tau\right)-\vartheta'(\lambda; \chi, \tau),
\end{equation}
where
$$g(\lambda)\equiv g(\lambda; \chi, \tau),\quad R(\lambda)\equiv R(\lambda; \chi, \tau)=\sqrt{(\lambda-a_1(\chi, \tau))(\lambda-a_1^*(\chi, \tau))}.$$
Then the controlling phase term $h(\lambda)\equiv h(\lambda; \chi, \tau)$ can be defined as
\begin{equation}
h'(\lambda):=g'(\lambda)+\vartheta'(\lambda; \chi, \tau)=\frac{R(\lambda)}{2}\left(\frac{\ii}{\left(\lambda{-}\ii\right)R(\ii)}{-}\frac{\ii}{(\lambda{+}\ii)R({-}\ii)}{+}\frac{\ii}{(\lambda{-}k\ii)R(k\ii)}{-}\frac{\ii}{(\lambda{+}k\ii)R({-}k\ii)}{+}4\tau\right).
\end{equation}
In general, in the genus-zero region, we require that $h'(\lambda)$ has at least two real roots such that there exists a closed curve involving at least one pair of conjugate singularities. But when these two real roots coincide into one doule root, $(\chi, \tau)$ will transfer into the higher genus regions. Therefore, we can get the  boundary between the genus-zero region and higher genus regions, which can be seen from Fig.\ref{fig:3-d}.

\subsection{Outline of this work}
The outline of the following work is given as follows: In section \ref{sec:result}, we present the main results about the asymptotics for these different five regions, and we also give the numerical verification between the exact solutions and the asymptotic expressions. Then the next four sections are the detailed calculations to these asymptotic regions. In section \ref{sec:al-decay}, we first give the asymptotics in the algebraic-decay region, and the local analysis can be given by the solution of parabolic cylinder function, the leading order term is $\mathcal{O}(n^{-1/2})$. Afterwards, we give the asymptotic analysis to the genus-zero region in section \ref{sec:genus-zero}. In this region, we use the parabolic cylinder function and the Airy function to construct the inner parametrix, and we also give a detailed calculation about the solution of Airy function in the appendix \ref{app:Airy}. In last two sections, we give the asymptotics about the genus-one region and the genus-three region respectively, both of them are expressed by the Riemann-Theta function, especially, to the genus-one region, the asymptotics can also be simplified as the Jacobi elliptic function.
\section{The main results and the numeric verification}
\label{sec:result}
In this section, we give our main results about the asymptotic analysis of high-order breathers to nonlinear Schr\"{o}dinger equation. By choosing distinct $\chi$ and $\tau$, we can get five different asymptotic regions, which can also be seen from the exact solution in Fig. \ref{fig:3-d} clearly.
\begin{figure}[ht]
\centering
\includegraphics[width=0.45\textwidth]{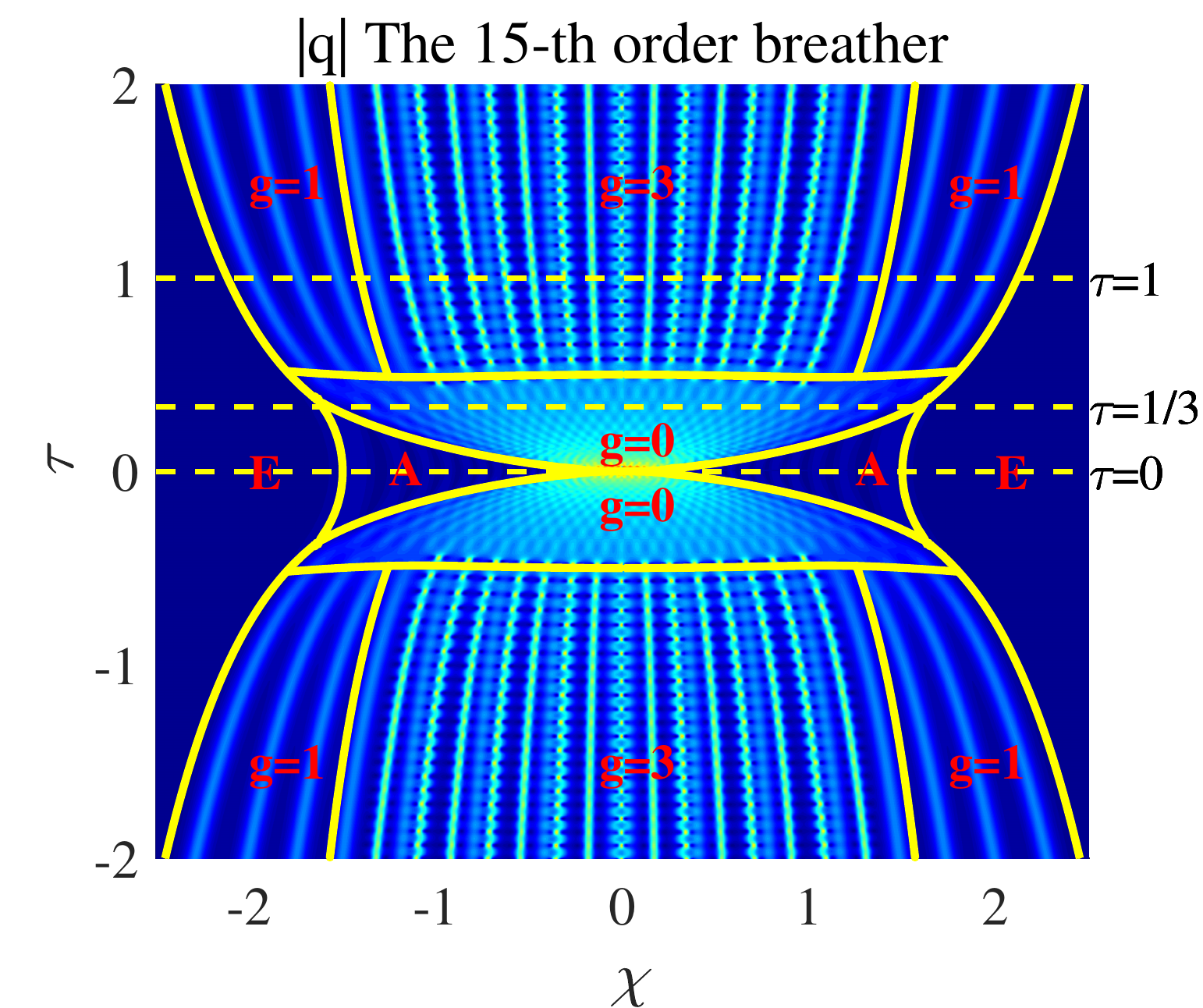}
\caption{The $15$-th order solution for nonlinear Schr\"{o}dinger equation by choosing the parameters $\lambda_1=\ii$ and $\lambda_2=2\ii$. The yellow lines are the boundary between the different regions. The alphabets $A$ and $E$ denote the algebraic-decay and the exponential-decay region, $g=0, g=1, g=3$ indicate the regions of genus-zero, genus-one and genus-three respectively. We give three different yellow dashed lines about $\tau=0$, $\tau=1/3$ and $\tau=1$, which correspond the algebraic-decay region, the genus-zero region and genus-one region as well as the genus-three region respectively. In later analysis, we choose these three different $\tau$ to compare the exact solutions and the asymptotic results in Fig. \ref{fig:al}, Fig. \ref{fig:no} and Fig. \ref{fig:os}.}
\label{fig:3-d}
\end{figure}

For these five different regions, we give the asymptotic expressions by the following five theorems.
\begin{theorem}\label{theo-exp}
(The exponential-decay region) When $\left(\chi, \tau\right)$ is in the exponential-decay region, the asymptotics of $q^{[n]}(n\chi, n\tau)$ is
\begin{equation}
q^{[n]}(n\chi, n\tau)=\ee^{-\mu n}, (\mu>0).
\end{equation}
\end{theorem}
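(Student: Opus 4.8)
The statement is essentially a consolidation of the reduction carried out in the text preceding it, and the plan is to make that reduction precise and then invoke small-norm Riemann--Hilbert theory. The goal is to show that in the exponential-decay region the Riemann--Hilbert problem underlying $q^{[n]}(n\chi,n\tau)$ can be brought into a form in which every jump matrix equals $\mathbb{I}$ up to an error of size $\mathcal{O}(\ee^{-\mu n})$, so that the reconstruction formula \eqref{eq:qn-M} yields the conclusion at once. The first step exploits the freedom in the construction of the sectionally analytic matrix $\mathbf{M}(\lambda;x,t)$ in \eqref{eq:M}: by the Lemma the two dressed columns used there are analytic at \emph{each} of the four singularities $\lambda_1,\lambda_2,\lambda_1^*,\lambda_2^*$ separately, so one may replace the connected semicircular regions $D_0^{\pm}$ by four pairwise-disjoint Schwarz-symmetric disks $D_{\lambda_1}(\delta),D_{\lambda_2}(\delta),D_{\lambda_1^*}(\delta),D_{\lambda_2^*}(\delta)$ of a \emph{fixed} radius $\delta=\delta(\chi,\tau)>0$. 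With this choice one checks that $\det\mathbf{M}\equiv1$, that $\mathbf{M}$ is analytic off the four circles (the exterior expression $\mathbf{T}^{[2n]}(\lambda;x,t)\,\mathrm{diag}(\Delta(\lambda)^2,1)$ being analytic there, so that the auxiliary jump on the segment $I$ no longer arises), that on the four circles $\mathbf{M}_{\pm}$ are related by the unipotent triangular matrices in \eqref{eq:jump-M-1}, and that the reconstruction \eqref{eq:qn-M} is unchanged.

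Next I would invoke the defining property of the exponential-decay region: it is exactly the set of $(\chi,\tau)$ for which $\delta$ may be chosen so that $\Im\vartheta(\lambda;\chi,\tau)\ge c_0>0$ on $\partial D_{\lambda_1}(\delta)\cup\partial D_{\lambda_2}(\delta)$ and hence, by the symmetry $\vartheta(\lambda^*;\chi,\tau)=\overline{\vartheta(\lambda;\chi,\tau)}$, $\Im\vartheta\le-c_0<0$ on the two conjugate circles; this is precisely the sign pattern drawn in Fig.~\ref{fig:exp-decay}. On $\partial D_{\lambda_1}(\delta)\cup\partial D_{\lambda_2}(\delta)$ the jump differs from $\mathbb{I}$ only in the lower-left entry $-\ee^{2\ii n\vartheta}$, with $|\ee^{2\ii n\vartheta}|=\ee^{-2n\Im\vartheta}\le\ee^{-2c_0 n}$, and on the conjugate circles the jump differs from $\mathbb{I}$ only in the upper-right entry $\ee^{-2\ii n\vartheta}$, with $|\ee^{-2\ii n\vartheta}|=\ee^{2n\Im\vartheta}\le\ee^{-2c_0 n}$; since the jump contour is compact this gives $\|\mathbf{V}_{\mathbf{M}}-\mathbb{I}\|_{L^1\cap L^2\cap L^\infty}=\mathcal{O}(\ee^{-\mu n})$ with $\mu=2c_0$. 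Standard small-norm theory (a Neumann series for the associated singular integral operator) then gives, for $n$ large, existence and uniqueness of $\mathbf{M}$, the uniform bound $\mathbf{M}(\lambda;n\chi,n\tau)=\mathbb{I}+\mathcal{O}(\ee^{-\mu n})$ off the circles, and a large-$\lambda$ expansion whose $\lambda^{-1}$ coefficient is $\mathcal{O}(\ee^{-\mu n})$; inserting this into \eqref{eq:qn-M} yields $q^{[n]}(n\chi,n\tau)=2\ii\lim_{\lambda\to\infty}\lambda M_{12}(\lambda;n\chi,n\tau)=\mathcal{O}(\ee^{-\mu n})$, i.e.\ the assertion, with some (possibly smaller) $\mu>0$.

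The small-norm estimate and the reconstruction are routine; the point that needs care — and the one I expect to be the main obstacle — is the geometry behind the choice of $\delta$. Each logarithmic term in $\vartheta$ forces $\Im\vartheta\to-\infty$ as $\lambda\to\lambda_1,\lambda_2$ and $\Im\vartheta\to+\infty$ as $\lambda\to\lambda_1^*,\lambda_2^*$, so the disks cannot be shrunk as $n\to\infty$; one must fix $\delta$ depending only on $(\chi,\tau)$ and verify that $\Im\vartheta$ keeps the correct sign over the \emph{entire} circle $|\lambda-\lambda_j|=\delta$ — the decisive point being the one facing the nearer singularity on the same side of the real axis — that the two upper disks (hence, by conjugation, the two lower ones) can be taken disjoint, and that $\delta$ is small enough that no disk surrounds a second singularity. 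Checking that the set of $(\chi,\tau)$ for which such a $\delta$ exists is open, nonempty, and coincides with the region $E$ delimited in Fig.~\ref{fig:3-d} is exactly the book-keeping that \emph{defines} the exponential-decay region; once that is in place there is nothing further to prove here, and the remaining four theorems cover the complementary regions.
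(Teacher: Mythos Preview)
Your proposal is correct and follows essentially the same route as the paper: deform the jump contours of $\mathbf{M}$ from the connected semicircles $D_0^{\pm}$ to four small disjoint circles around $\lambda_1,\lambda_2,\lambda_1^*,\lambda_2^*$ on which $\Im\vartheta$ has a definite sign, observe that the triangular jumps then satisfy $\|\mathbf{V}_{\mathbf{M}}-\mathbb{I}\|=\mathcal{O}(\ee^{-\mu n})$, and conclude via small-norm theory and \eqref{eq:qn-M}. The paper carries this out more tersely (just above \eqref{eq:exp-decay}); your write-up is more explicit about why the auxiliary jump on $I$ disappears and about the geometric constraints on $\delta$, but the argument is the same.
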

This theorem is proven in the last section.
Afterwards, we give the asymptotic result about the algebraic-decay region.
\begin{theorem}\label{theo:al}
(The algebraic-decay region) When $\left(\chi, \tau\right)$ is located in the algebraic-decay region, the asymptotics of $q^{[n]}(n\chi, n\tau)$ is
\begin{multline}\label{eq:al}
q^{[n]}(n\chi, n\tau){=}\frac{1}{n^{1/2}}\sqrt{\frac{\ln(2)}{\pi}}\left[\frac{\ee^{{-}2\ii n\vartheta(\alpha_1, \chi, \tau)}\left({-}\vartheta''(\alpha_1; \chi, \tau)\right)^{{-}\ii p}}{\sqrt{{-}\vartheta''(\alpha_1; \chi, \tau)}}\ee^{\ii\phi(\chi; \tau)}{+}\frac{\ee^{{-}2\ii n\vartheta(\beta_1, \chi, \tau)}\vartheta''(\beta_1; \chi, \tau)^{\ii p}}{\sqrt{\vartheta''(\beta_1; \chi, \tau)}}\ee^{{-}\ii\phi(\chi; \tau)}\right]\\+\mathcal{O}(n^{-3/2}),
\end{multline}
where
$$\phi=-\frac{\ln(2)}{2\pi}\ln(n)-\frac{\ln(2)}{\pi}\ln\left(\beta_1-\alpha_1\right)-\frac{\ln(2)^2}{2\pi}-\frac{1}{4}\pi+\arg\left(\Gamma\left(\frac{\ii\ln(2)}{2\pi}\right)\right).$$
\end{theorem}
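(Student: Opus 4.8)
The plan is to run the Deift--Zhou nonlinear steepest descent directly on RHP~\ref{rhp-M1}, since in the algebraic-decay region no $g$-function is needed. First I would draw the sign chart of $\Im\vartheta(\lambda;\chi,\tau)$ for the phase \eqref{eq:phase}: the hypothesis that the discriminant of \eqref{eq:cri-poi} is positive guarantees that $\vartheta'$ has (at least) three real zeros, and I would single out the two real simple saddle points $\alpha_1<\beta_1$ of $\vartheta$ at which the level set $\{\Im\vartheta=0\}$ pinches the real axis on the two sides of the band that will carry the residual constant jump. Using the ``DLU'' and ``DUL'' factorizations of $\mathbf Q_c$ in \eqref{remark:decom}, I would open lenses on $\partial D_0$: the unipotent lower-triangular factors (carrying $\ee^{2\ii n\vartheta}$) get pushed into $\{\Im\vartheta>0\}$ and the upper-triangular ones (carrying $\ee^{-2\ii n\vartheta}$) into $\{\Im\vartheta<0\}$, so that away from fixed small disks $D_{\alpha_1}$, $D_{\beta_1}$ all deformed jumps are $\mathbb I+\mathcal O(\ee^{-cn})$; what survives is the $\lambda$-independent diagonal factor $\mathbf Q_L^{[1]}=2^{-\sigma_3/2}$ (or $\mathbf Q_L^{[2]}=2^{\sigma_3/2}$) on a bounded real segment $\Sigma_0$ joining $\alpha_1$ and $\beta_1$.

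Next I would construct the outer parametrix $\dot{\mathbf R}^{\mathrm{out}}$, solving the model RHP with jump $2^{\mp\sigma_3/2}$ on $\Sigma_0$ and normalization $\mathbb I$ at $\infty$. Since both jump and normalization are diagonal, this model has the explicit \emph{diagonal} solution $\dot{\mathbf R}^{\mathrm{out}}(\lambda)=\bigl(\tfrac{\lambda-\beta_1}{\lambda-\alpha_1}\bigr)^{\ii p\sigma_3}$, with $p=\tfrac{\ln 2}{2\pi}$ forced by matching the jump of the power function to $2^{\mp1/2}$; this is the origin of every $\ln 2$ in the final formula. Crucially, because $\dot{\mathbf R}^{\mathrm{out}}$ is diagonal there is no macroscopic off-diagonal content, which is exactly why $q^{[n]}$ decays: the soliton amplitude will come entirely from the subleading saddle-point corrections rather than from a band.

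The delicate ingredient is the local analysis at $\alpha_1$ and $\beta_1$: each is simultaneously a simple (quadratic) saddle of $\vartheta$ and a branch point of $\dot{\mathbf R}^{\mathrm{out}}$, so the correct local model is the parabolic-cylinder RHP, not a Gaussian one. Near $\alpha_1$ I would use the conformal coordinate $\zeta$ with $\zeta^2=2n\bigl(\vartheta(\lambda;\chi,\tau)-\vartheta(\alpha_1;\chi,\tau)\bigr)$, so that $\zeta\sim\sqrt{n\,\vartheta''(\alpha_1;\chi,\tau)}\,(\lambda-\alpha_1)$, and match the dressed local RHP to the standard parabolic-cylinder parametrix $\mathbf P^{\mathrm{PC}}$ of order $\ii p$; its connection coefficients bring in $\Gamma\!\bigl(\tfrac{\ii\ln 2}{2\pi}\bigr)$, while the rescaling produces the powers $n^{\mp\ii p}=\ee^{\mp\ii p\ln n}$ and $(\pm\vartheta''(\alpha_1;\chi,\tau))^{\mp\ii p}$. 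The analysis at $\beta_1$ is the mirror image with the opposite sign of $\vartheta''$, and on $\partial D_{\alpha_1}\cup\partial D_{\beta_1}$ these parametrices match $\dot{\mathbf R}^{\mathrm{out}}$ up to $\mathbb I+\mathcal O(n^{-1/2})$.

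Finally I would set $\mathbf E:=\mathbf R\,(\dot{\mathbf R}^{\mathrm{out}})^{-1}$ outside the disks and $\mathbf E:=\mathbf R\,(\mathbf P^{\mathrm{PC}})^{-1}$ inside, obtaining a small-norm RHP with jumps $\mathbb I+\mathcal O(n^{-1/2})$ on the two circles and $\mathbb I+\mathcal O(\ee^{-cn})$ elsewhere, hence $\mathbf E=\mathbb I+\mathbf E_1\lambda^{-1}+\mathcal O(n^{-1})$ with $\mathbf E_1=\mathcal O(n^{-1/2})$ computed by summing the $\lambda^{-1}$-parts of $\mathbf P^{\mathrm{PC}}(\dot{\mathbf R}^{\mathrm{out}})^{-1}-\mathbb I$ over $\partial D_{\alpha_1}$ and $\partial D_{\beta_1}$. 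Because $\dot{\mathbf R}^{\mathrm{out}}$ is diagonal, the reconstruction \eqref{eq:qn-al} gives $q^{[n]}(n\chi,n\tau)=2\ii\,(\mathbf E_1)_{12}$; collecting the two saddle contributions, reducing the parabolic-cylinder data to the prefactor $\sqrt{\ln(2)/\pi}$ through the Euler reflection formula for $\bigl|\Gamma(\ii\ln(2)/2\pi)\bigr|$, and packaging the residual phases into $\phi=-\tfrac{\ln(2)}{2\pi}\ln n-\tfrac{\ln(2)}{\pi}\ln(\beta_1-\alpha_1)-\tfrac{(\ln 2)^2}{2\pi}-\tfrac14\pi+\arg\Gamma\!\bigl(\tfrac{\ii\ln(2)}{2\pi}\bigr)$ yields \eqref{eq:al}, the $\mathcal O(n^{-3/2})$ error being the next order of the small-norm expansion. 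The hard part will be twofold: (i) justifying the sign chart of $\Im\vartheta$ and the admissibility of the lens deformation uniformly over the algebraic-decay region, in particular isolating the correct pair of real saddles among the (at least) three available; and (ii) the branch-cut bookkeeping for the logarithms in $\vartheta$, for $\dot{\mathbf R}^{\mathrm{out}}$, and for the local coordinate $\zeta$, which must be tracked precisely so that $\phi$ — especially its $\ln n$, $\ln(\beta_1-\alpha_1)$ and $\arg\Gamma$ pieces — emerges with exactly the stated signs.
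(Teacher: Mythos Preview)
Your proposal is correct and follows essentially the same route as the paper: open lenses via the factorizations \eqref{remark:decom}, build the diagonal outer parametrix $\bigl(\tfrac{\lambda-\alpha_1}{\lambda-\beta_1}\bigr)^{\ii p\sigma_3}$, install parabolic-cylinder local parametrices at the two real saddles, and read off $q^{[n]}$ from the $(1,2)$ entry of the small-norm error. One small correction: the surviving constant jump on the real segment $I=[\alpha_1,\beta_1]$ is $(\mathbf Q_L^{[1]})^{-1}\mathbf Q_L^{[2]}=2^{\sigma_3}$, not $2^{\pm\sigma_3/2}$, since both diagonal factors contribute across $I$; this is what fixes $p=\tfrac{\ln 2}{2\pi}$ and is consistent with the $\phi$ you wrote down.
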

In this case, we choose $\tau=0$ and give the comparison between the exact solution and the asymptotic solution, which is shown in Fig. \ref{fig:al}.
\begin{figure}[ht]
\centering
\includegraphics[width=0.8\textwidth,height=0.6\textwidth]{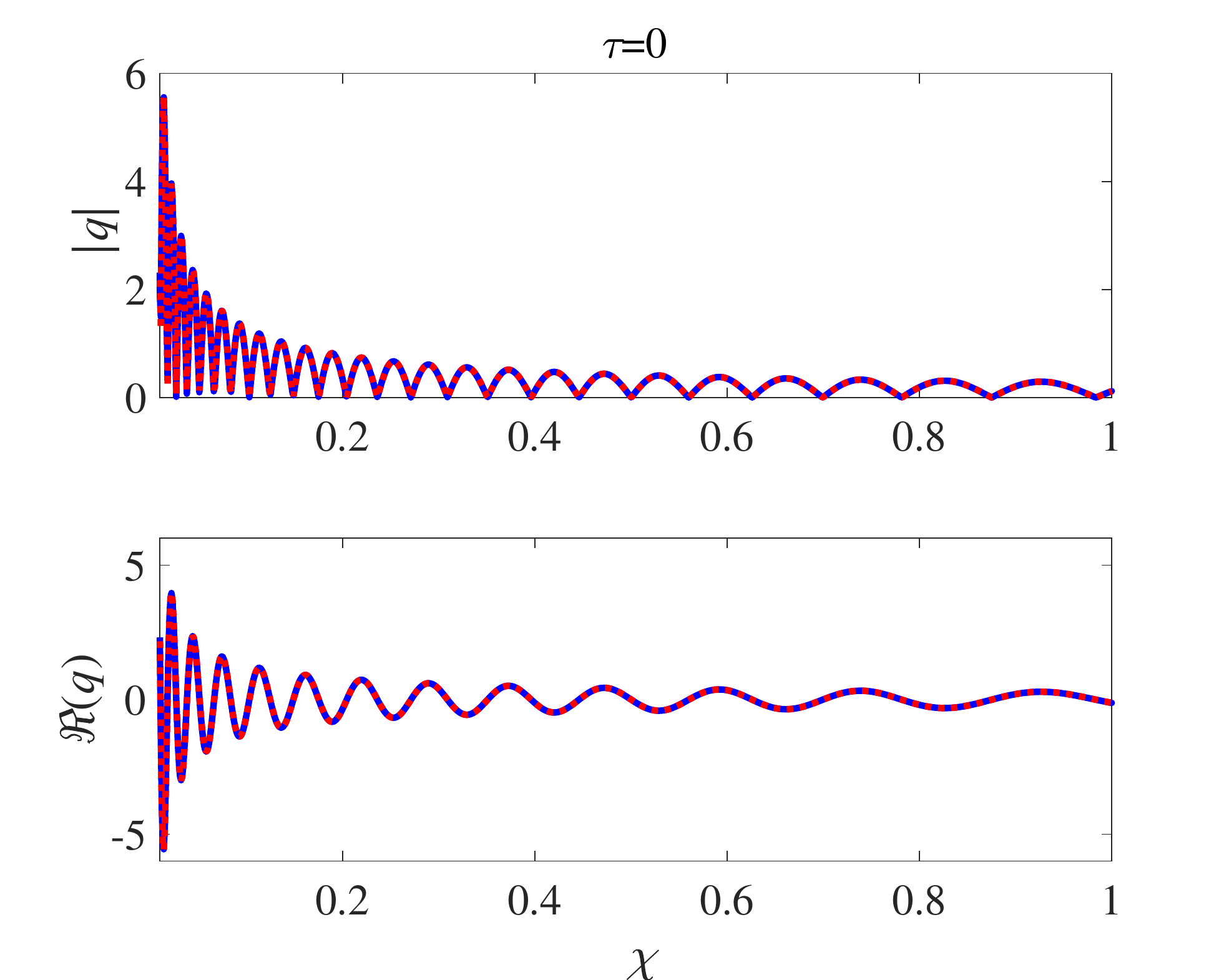}
\caption{The comparison between the exact solution ($15$-th order breather) and its asymptotic solution in the algebraic-decay region by choosing $\tau=0$ (as shown by the yellow dashed line in Fig. \ref{fig:3-d}). The blue solid line is the exact solution and the red dotted line is the asymptotic solution. The upper one is the modulus of this solution and the below one is the real part of it.}
\label{fig:al}
\end{figure}

When $\tau$ increases, the asymptotic region will transfer into the genus-zero region for small $\chi$. In this region, the original phase term $\vartheta(\lambda; \chi, \tau)$ is no longer viable, we should construct a $g$ function depending on an algebraic curve with the genus-zero to modify the phase term into a $h$ function. When $n$ is large, the asymptotics is shown in the following theorem \ref{theo:genus-zero},
\begin{theorem}\label{theo:genus-zero} (The genus-zero region) If $(\chi,\tau)$ is located in the genus-zero region, the large order asymptotics of $q^{[n]}(n\chi,n\tau)$ is given by
\begin{multline}\label{eq:q-gen-zero}
q^{[n]}(n\chi, n\tau)=\ee^{2\ii k_{2}(\infty)-\ii n\kappa_2}\Bigg[\frac{\sqrt{2p}}{n^{1/2}\sqrt{-h''_{2,-}(\alpha_2)}}\left(m_{-}^{\alpha_2}\ee^{\ii\phi_{\alpha_2}}-m_{+}^{\alpha_2}\ee^{-\ii\phi_{\alpha_2}}\right)\\
+\frac{\sqrt{2p}}{n^{1/2}\sqrt{h''_{2}(\beta_2)}}\left(m_{+}^{\beta_2}\ee^{\ii\phi_{\beta_2}}-m_{-}^{\beta_2}\ee^{-\ii\phi_{\beta_2}}\right)-\ii {\Im (a_2)}\Bigg]+\mathcal{O}(n^{-1}),
\end{multline}
where the parameters $m_{\pm}^{\alpha_2, \beta_2}$ are defined in Eq.\eqref{eq:para-genus-zero}, $ k_{2}(\infty)$ is defined in Eq.\eqref{eq:mu-genus-zero}, $h_2(\lambda; \chi, \tau)$ is defined in Eq.\eqref{eq:h2} and $\alpha_2, \beta_2$ are two real roots of $h_2(\lambda; \chi, \tau)$, the constant $\kappa_2$ is an integration constant defined in RHP\ref{rhp-genus-zero}. Similarly, we choose one fixed $\tau$ and give the comparison between the exact solution and the asymptotic solution in this region, which is shown in Fig.\ref{fig:no}.
\end{theorem}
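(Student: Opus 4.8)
The plan is to carry out a standard Deift--Zhou nonlinear steepest descent analysis of RHP \ref{rhp-M1} in the genus-zero regime, with the twist that the original phase $\vartheta(\lambda;\chi,\tau)$ must be replaced by a $g$-function attached to a genus-zero curve $R(\lambda)=\sqrt{(\lambda-a_2)(\lambda-a_2^*)}$ before the steepest descent can be executed. First I would fix the endpoints $a_2,a_2^*$ by the genus-zero conditions: $g'(\lambda)$ as defined in the excerpt must vanish to the appropriate order at $\infty$ (no residue at infinity forces the $+4\tau$ term and the four half-residues at $\pm\mathrm{i},\pm k\mathrm{i}$ to conspire), and the $B$-cycle (or real-normalization) condition $\Re\bigl(\int h'(\lambda)\,d\lambda\bigr)=0$ along the appropriate arcs pins down $a_2(\chi,\tau)$. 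This yields the modified controlling phase $h_2(\lambda;\chi,\tau)$ of Eq.~\eqref{eq:h2}, whose two real roots I call $\alpha_2,\beta_2$; existence of two such roots (as opposed to a double root) is precisely the definition of the genus-zero region, so this is available by hypothesis.

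Next I would conjugate $\mathbf{R}$ by $\ee^{\mathrm{i}n g(\lambda)\sigma_3}$ (together with the scalar factor $\ee^{\mathrm{i}n k_2(\infty)}$ collected from the behavior at $\infty$, see Eq.~\eqref{eq:mu-genus-zero}), and then open lenses around the band using the ``DLU'' / ``DUL'' factorizations of $\mathbf{Q}_c$ listed in Eq.~\eqref{remark:decom}, choosing at each arc the factorization whose off-diagonal entries are exponentially small by virtue of $\Im h_2$ having the correct sign. After these deformations the jump on the lens boundaries and on the arcs away from $\{\alpha_2,\beta_2\}$ and the branch points is $\mathbb{I}+\mathcal{O}(\ee^{-cn})$, and what remains is (i) a global parametrix solving the model problem with the constant jump $\mathbf{Q}_c$ (resp.\ its conjugated pieces) across the band, built from the scalar function $R(\lambda)$ and a Szeg\H{o}-type function to absorb the constant jump, producing the quantities $m_\pm^{\alpha_2,\beta_2}$ of Eq.~\eqref{eq:para-genus-zero} and the constant $\kappa_2$ from RHP \ref{rhp-genus-zero}; (ii) local parametrices at the hard edges $\pm\mathrm{i}$, $\pm k\mathrm{i}$ and at the branch points $a_2,a_2^*$ --- the former built from parabolic cylinder functions (since $h_2$ has a logarithmic-type behavior there inherited from the $\log\Delta$ terms) and the latter from Airy functions, as flagged in the outline. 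At $\alpha_2,\beta_2$, where $h_2$ has simple real critical points and the exponentials $\ee^{\pm 2\mathrm{i}n h_2}$ become non-negligible, I would install parabolic-cylinder parametrices exactly as in the algebraic-decay analysis; these are the sources of the $n^{-1/2}$ terms, with local variables $\sqrt{-h_2''(\alpha_2)}$ and $\sqrt{h_2''(\beta_2)}$ and phases $\phi_{\alpha_2},\phi_{\beta_2}$ matching the $\phi$ of Theorem \ref{theo:al}.

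Finally I would define the error matrix $\mathbf{E}(\lambda)$ as $\mathbf{R}$ times the inverse of the global-plus-local parametrix; a small-norm argument shows $\mathbf{E}=\mathbb{I}+\mathcal{O}(n^{-1})$ uniformly, with the $\mathcal{O}(n^{-1/2})$ contributions coming only from the two parabolic-cylinder cells at $\alpha_2$ and $\beta_2$ (the Airy cells at the branch points and the parabolic-cylinder cells at the hard edges contribute only at $\mathcal{O}(n^{-1})$ to the relevant matrix entry). Extracting $q^{[n]}(n\chi,n\tau)=2\mathrm{i}\lim_{\lambda\to\infty}\lambda R_{12}$ via Eq.~\eqref{eq:qn-al}, the $\lambda\to\infty$ expansion of the parametrix supplies the leading $-\mathrm{i}\,\Im(a_2)$ term and the prefactor $\ee^{2\mathrm{i}k_2(\infty)-\mathrm{i}n\kappa_2}$, while the two local cells supply the bracketed $n^{-1/2}$ sum, giving Eq.~\eqref{eq:q-gen-zero}. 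I expect the main obstacle to be step (i)--(ii) of the construction, i.e.\ simultaneously enforcing the genus-zero normalization that fixes $a_2(\chi,\tau)$ and verifying the sign of $\Im h_2$ on all arcs so that the lens factorizations are indeed exponentially decaying --- this is the geometric heart of the argument and the place where the region boundary (double root of $h_2$) enters --- together with checking that the hard-edge parabolic-cylinder parametrices match the global parametrix to the required order, since the $\log\Delta$ singularities make the matching nonstandard compared with the pure-soliton case.
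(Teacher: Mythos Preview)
Your overall strategy is correct and matches the paper's: introduce the genus-zero $g$-function via RHP~\ref{rhp-genus-zero}, replace $\vartheta$ by $h_2=g_2+\vartheta$, open lenses using the factorizations \eqref{remark:decom}, build an outer parametrix from $\mathcal{R}_2(\lambda)$ together with a Szeg\H{o}-type scalar (this is the paper's $\mathbf{K}_2$ and the factor $\bigl(\tfrac{\lambda-\alpha_2}{\lambda-\beta_2}\bigr)^{\ii p\sigma_3}$), install parabolic-cylinder parametrices at the real critical points $\alpha_2,\beta_2$ and Airy parametrices at the branch points $a_2,a_2^*$, and read off \eqref{eq:q-gen-zero} from the small-norm error problem.

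Two points in your write-up are off and would cost you time if you pursued them. First, there are \emph{no} local parametrices at the logarithmic singularities $\pm\ii,\pm k\ii$. After the contour deformation these four points sit strictly inside the lens regions where the jumps are $\mathbb{I}+\mathcal{O}(\ee^{-cn})$; the matrix $\mathbf{R}$ (hence $\mathbf{T}_2$) is already analytic there by construction, so no ``hard-edge'' analysis is needed, and the matching problem you flag as the main obstacle does not arise. Second, the determination of $a_2(\chi,\tau)$ comes purely from the normalization $g_2'(\lambda)=\mathcal{O}(\lambda^{-2})$ at infinity, which gives exactly two real equations \eqref{eq:two-eq-genus-zero} for the two real unknowns $\Re(a_2),\Im(a_2)$; there is no separate $B$-cycle or real-normalization condition in genus zero. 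The genuine geometric work is verifying the sign chart of $\Im h_2$ (your Figure-type argument) so that every lens jump decays, and this is indeed where the region boundary enters via coalescence of the real roots $\alpha_2,\beta_2$.
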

\begin{figure}[ht]
\centering
\includegraphics[width=0.3\textwidth]{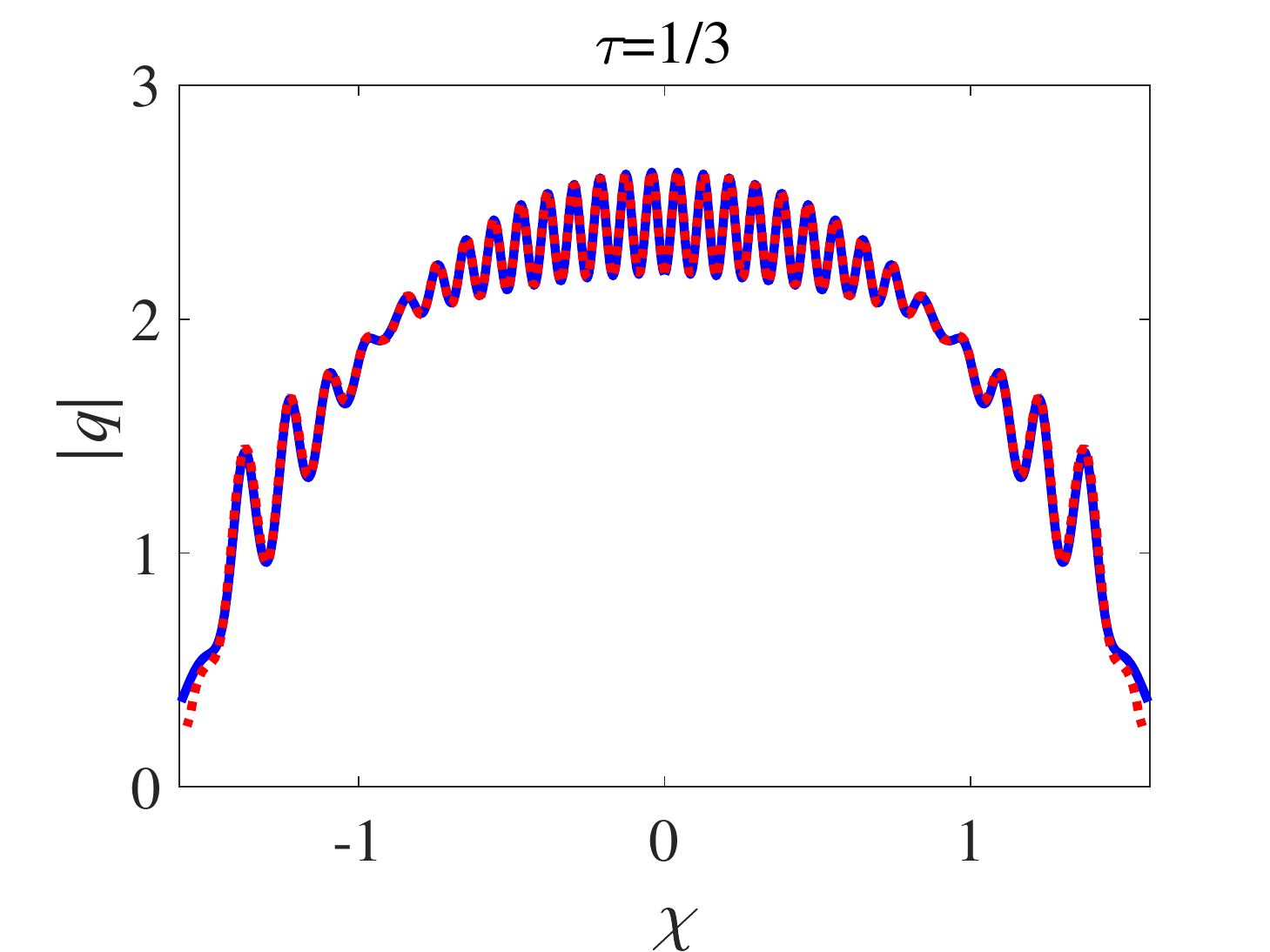}
\centering
\includegraphics[width=0.3\textwidth]{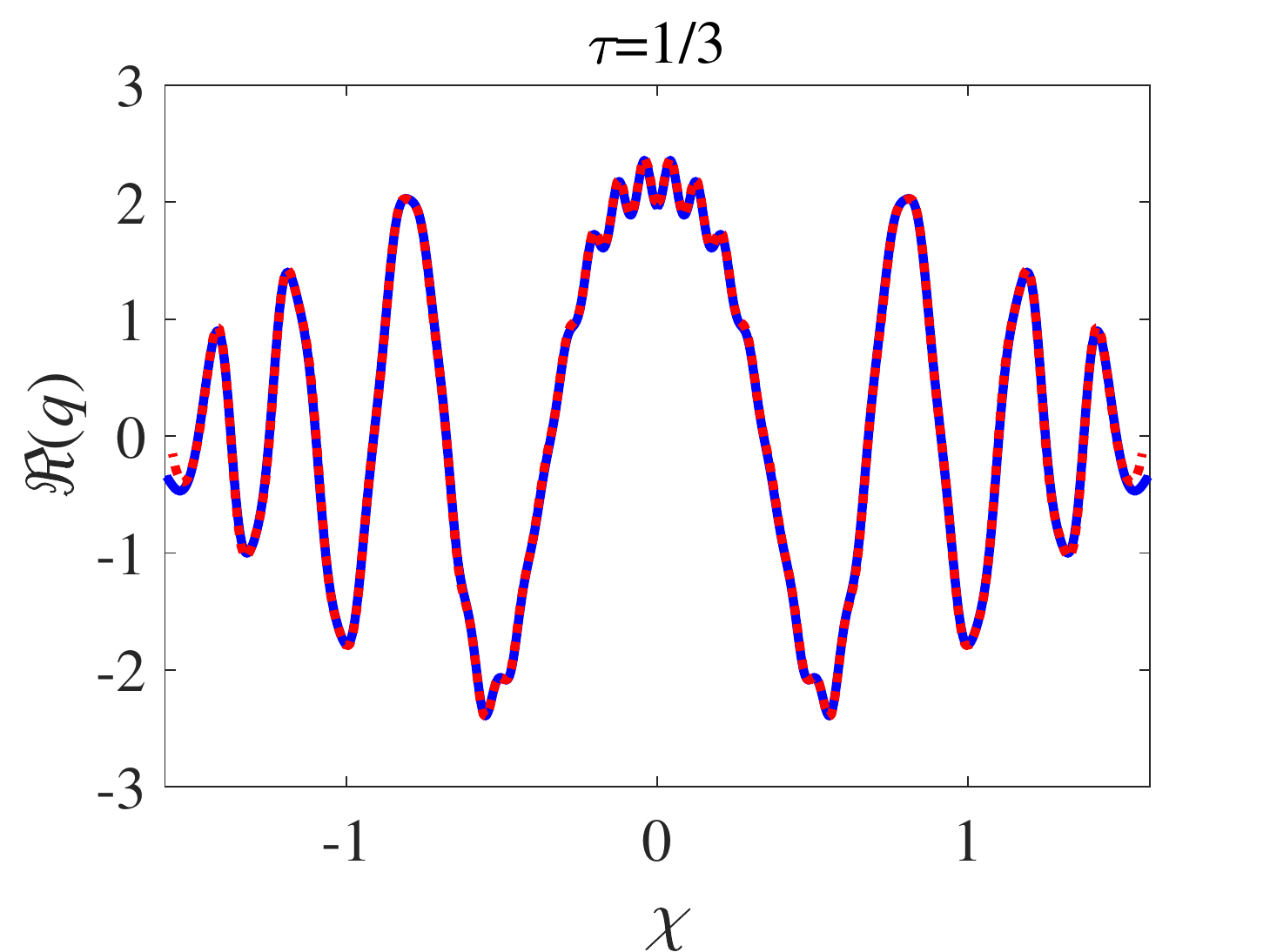}
\centering
\includegraphics[width=0.3\textwidth]{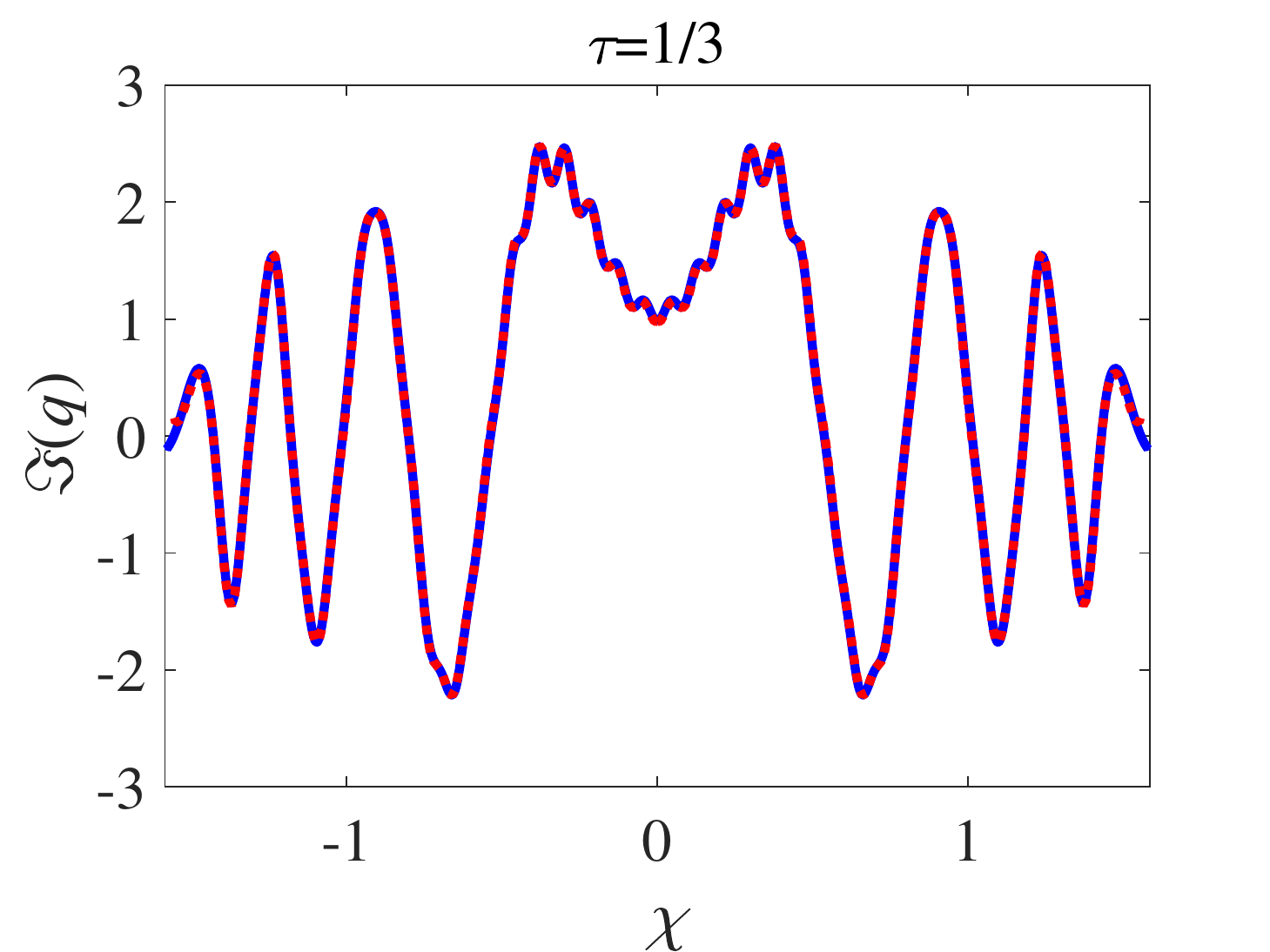}
\caption{The comparison between the exact solution ($15$-th order breather) and its asymptotic solution in the genus-zero region with $\tau=\frac{1}{3}$ (as shown by the yellow dashed line in Fig. \ref{fig:3-d}). The blue solid line is the exact solution and the red dotted line is the asymptotic solution. The left one is the modulus of this solution, the middle one is the real part and the right one is the imaginary part.}
\label{fig:no}
\end{figure}

Moreover, when $h_{2}(\lambda; \chi, \tau)$ defined in the genus-zero region has double root, the algebraic curve given in this region is no longer true, which should be modified a little. As a result, we can get genus-one region and genus-three region. Firstly, we give the asymptotic analysis to the genus-three region.
\begin{theorem}\label{theo:genus-three}
(The asymptotics in the genus-three region) When $\left(\chi, \tau\right)$ is in the genus-three region, we should introduce an algebraic curve with the genus-three, and the asymptotic expression can be given with the Riemann-Theta function,
\begin{multline}\label{eq:qn-genus-3-1}
q^{[n]}(n\chi, n\tau)=\frac{\Theta\left(\mathbf{A}(\infty)+\mathbf{d}\right)}{\Theta\left(\mathbf{A}(\infty)+\mathbf{d}-\pmb{\mathcal{U}}F_{41}-\pmb{\mathcal{V}}F_{42}-\pmb{\mathcal{W}}F_{43}\right)}
\frac{\Theta\left(\mathbf{A}(\infty)-\mathbf{d}+\pmb{\mathcal{U}}F_{41}+\pmb{\mathcal{V}}F_{42}+\pmb{\mathcal{W}}F_{43}\right)}{\Theta\left(\mathbf{A}(\infty)-\mathbf{d}\right)}\\
\times \ii\left(\Im(a_4)-\Im(b_4)+\Im(c_4)-\Im(d_4)\right)\ee^{2F_{41}J_{41}+F_{42}J_{42}+F_{43}J_{43}-2F_{40}}+\mathcal{O}(n^{-1}),
\end{multline}
where $\mathbf{A}(\infty)$ is the Abel mapping defined in Eq.\eqref{eq:abel-map}, and $\mathbf{U}, \mathbf{V}, \mathbf{W}, J_{41}, J_{42}, J_{43}, F_{40}, F_{41}, F_{42}, F_{43}$ are some functions with $\chi$ and $\tau$  given in Eq.\eqref{eq:UVW}, Eq.\eqref{eq:J-cons} and Eq.\eqref{eq:F4342}, $\mathbf{d}$ is related to the Abel mapping and the Riemann-Theta constant defined in Eq.\eqref{eq:d}, $a_{4}, b_{4}, c_{4}, d_{4}$ as well as their conjugates are eight branch points about this algebraic curve.
\end{theorem}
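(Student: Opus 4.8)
The plan is to run the Deift--Zhou nonlinear steepest descent analysis of RHP~\ref{rhp-M1} adapted to a hyperelliptic curve of genus three, following the same scheme as in the genus-one case but now with a Riemann surface carrying four bands and four gaps.

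\emph{The $g$-function.} Write
\[
R(\lambda)=\sqrt{(\lambda-a_4)(\lambda-a_4^*)(\lambda-b_4)(\lambda-b_4^*)(\lambda-c_4)(\lambda-c_4^*)(\lambda-d_4)(\lambda-d_4^*)},
\]
a Schwarz-symmetric radical with eight branch points, hence genus three. I would fix $h_3'(\lambda)$ to be $R(\lambda)$ times a rational function, determined by requiring that $h_3'-\vartheta'(\cdot;\chi,\tau)$ (with $\vartheta$ as in Eq.~\eqref{eq:phase}) be analytic across the four singularities $\pm\ii,\pm k\ii$ and vanish at $\lambda=\infty$. Demanding that $h_3$ be single-valued with constant imaginary part on each band, together with the vanishing conditions that make the band endpoints genuine square-root branch points, yields eight real equations fixing $a_4,b_4,c_4,d_4$ as functions of $(\chi,\tau)$. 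The genus-three region is precisely the set of $(\chi,\tau)$ for which this system has a Schwarz-symmetric solution with the right sign pattern, namely $\Im h_3>0$ on one side of every gap and $\Im h_3<0$ on the other with no spurious sign changes; its boundary is the discriminant locus of $h_3$ mentioned before the theorem.

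\emph{Deformation and parametrices.} Conjugating $\mathbf{R}$ by $\ee^{\ii n(g-\vartheta)\sigma_3}$ and absorbing an integration constant (the analogue of $\kappa_2$) into the normalization turns the jump on $\partial D_0$ into one that is oscillatory on the bands and exponentially close to $\mathbb{I}$ on the gaps. Using the ``DUTU'' and ``DLTL'' factorizations of $\mathbf{Q}_c$ from Eq.~\eqref{remark:decom}, I would move the triangular factors off each band onto lens boundaries, where the sign inequalities above force them to $\mathbb{I}$ exponentially fast, leaving a model problem with piecewise-constant jumps on four bands. The outer parametrix is then built on the genus-three curve from the normalized holomorphic differentials, the Abel map $\mathbf{A}$, the period matrix and the Riemann constant vector; its entries are ratios of theta functions $\Theta(\mathbf{A}(\infty)\pm\mathbf{d}\mp(\pmb{\mathcal{U}}F_{41}+\pmb{\mathcal{V}}F_{42}+\pmb{\mathcal{W}}F_{43}))$ times a scalar Szeg\H{o}-type function resolving the jump phases, the shift $\pmb{\mathcal{U}}F_{41}+\pmb{\mathcal{V}}F_{42}+\pmb{\mathcal{W}}F_{43}$ and the scalar factor $\ee^{2F_{41}J_{41}+F_{42}J_{42}+F_{43}J_{43}-2F_{40}}$ encoding the linear-in-$n$ phases accumulated on the gaps and the value of the scalar function at $\infty$. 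Near each of the eight branch points I would glue in an Airy parametrix (the construction detailed in Appendix~\ref{app:Airy}), matching the outer parametrix to $\mathcal{O}(n^{-1})$ on the matching circles.

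\emph{Error and extraction.} With $\dot{\mathbf{S}}$ the assembled parametrix, the error $\mathbf{E}=\mathbf{S}\dot{\mathbf{S}}^{-1}$ solves a RHP with jumps $\mathbb{I}+\mathcal{O}(n^{-1})$ on a fixed contour, so small-norm theory gives $\mathbf{E}=\mathbb{I}+\mathcal{O}(n^{-1})$ uniformly in $\lambda$. Undoing all transformations and reading the $\lambda^{-1}$ coefficient in $q^{[n]}(n\chi,n\tau)=2\ii\lim_{\lambda\to\infty}\lambda R_{12}(\lambda;\chi,\tau)$ --- the theta-quotient from the outer parametrix together with the residual term $\ii(\Im(a_4)-\Im(b_4)+\Im(c_4)-\Im(d_4))$ coming from the expansion of $g$ at $\infty$ --- reproduces Eq.~\eqref{eq:qn-genus-3-1}. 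The step I expect to be the main obstacle is the construction and global control of the $g$-function: showing that the eight moment and vanishing conditions admit a Schwarz-symmetric solution on the claimed region and, crucially, that $\Im h_3$ has exactly the sign configuration needed for the lens opening throughout that region. This is significantly more delicate than in the genus-zero and genus-one cases, where only one pair of branch points must be tracked, and it is also where the boundary of the region and its correction by the discriminant of $h_3$ have to be pinned down.
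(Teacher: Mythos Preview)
Your proposal is correct and follows essentially the same route as the paper: a genus-three $g$-function determined by normalization and reality conditions on the integration constants, Deift--Zhou lensing via the factorizations in Eq.~\eqref{remark:decom}, an outer parametrix built from Riemann--Theta functions on the hyperelliptic curve together with the scalar function $F_4$, Airy local models at the eight branch points, and a small-norm error estimate. One minor correction: the factor $\ii(\Im(a_4)-\Im(b_4)+\Im(c_4)-\Im(d_4))$ does not come from the expansion of $g$ at infinity but from the large-$\lambda$ expansion of the outer parametrix itself, specifically from the off-diagonal contribution of $\gamma_4(\lambda)-\gamma_4(\lambda)^{-1}$.
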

In this region, for fixed $\tau$, as $\chi$ increases, two branch points will approach to each other until they coincide, then the genus-three region will transfer into the genus-one region. Now we give the genus-one region asymptotics in the following theorem.
\begin{theorem}\label{theo:genus-one}
(The asymptotics in the genus-one region) For given $(\chi, \tau)$ in the genus-one region, the asymptotics about $q^{[n]}(n\chi, n\tau)$ can be expressed with the Riemann-Theta function of genus-one. Especially, its modulus can be written as the Jacobi elliptic function,
\begin{multline}\label{eq:qn-genus-1-4}
q^{[n]}(n\chi, n\tau)=\frac{\Theta\left(A(\infty)+A(Q)+\ii\pi+\frac{B}{2}\right)}{\Theta\left(A(\infty)+A(Q)+\ii\pi+\frac{B}{2}-F_{31}U\right)}\frac{\Theta\left(A(\infty)-A(Q)-\ii\pi-\frac{B}{2}+F_{31}U\right)}{\Theta\left(A(\infty)-A(Q)-\ii\pi-\frac{B}{2}\right)}\\
\times \ii\left(\Im(a_3)-\Im(b_3)\right)\ee^{2F_{31}J_{31}-2F_{30}}+\mathcal{O}(n^{-1}),
\end{multline}
where $A(\infty), B$ are the Abel integrals defined in Eq.\eqref{eq:Abel-int}, $F_{31}, F_{30}$ and $U$ are some functions with $\chi$ and $\tau$ given in Eq.\eqref{eq:F3130} and Eq.\eqref{eq:U-genus-one}, $Q=\frac{\Re(a_3)\Im(b_3)-\Re(b_3)\Im(a_3)}{\Im(b_3)-\Im(a_3)}$, $a_3, b_3$ are the branch points of the genus-one algebraic curve. Especially, the modulus of $q^{[n]}(n\chi, n\tau)$ can be rewritten as
\begin{equation}\label{eq:qn-genus-one-1}
\left|q^{[n]}(n\chi, n\tau)\right|^2=\left[\left(\Im(a_3)-\Im(b_3)\right)^2-\left|a_3-b_3\right|^2\cn^2\left(u+K(m),m\right)\right]\ee^{\ii n\varsigma d_3+4\left|F_{31}J_{31}\right|^2}+\mathcal{O}(n^{-1}),
\end{equation}
where $m=\frac{\theta_{2}^4(0)}{\theta_{3}^4(0)},$ $ \varsigma$ is given in Eq.\eqref{eq:varsigma} and $K(m)$ is the first kind of complete elliptic integral defined as
\begin{equation}
K(m):=\int_{0}^{\frac{\pi}{2}}\frac{d\xi}{\sqrt{1-m\sin^2(\xi)}}.
\end{equation}
\end{theorem}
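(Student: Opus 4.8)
The plan is to run the Deift--Zhou nonlinear steepest descent on RHP~\ref{rhp-M1}, following the genus-zero scheme of Theorem~\ref{theo:genus-zero} but over a genus-one Riemann surface. First I would attach to the radical $R(\lambda)=\sqrt{(\lambda-a_3)(\lambda-a_3^*)(\lambda-b_3)(\lambda-b_3^*)}$ a $g$-function $g_3(\lambda;\chi,\tau)$, built exactly as the genus-zero $g$ was built from $\sqrt{(\lambda-a_1)(\lambda-a_1^*)}$: $g_3'$ is required to equal $R(\lambda)$ times a rational function whose poles at $\pm\ii,\pm k\ii$ reproduce the logarithmic terms of $\vartheta'$ in \eqref{eq:phase}, with the residue/moment conditions fixing the Schwarz-symmetric branch points $a_3,b_3$ as functions of $(\chi,\tau)$ and forcing $h_3'=g_3'+\vartheta'$ to be single-valued and to decay like $\mathcal{O}(\lambda^{-2})$ at infinity. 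Setting $h_3(\lambda;\chi,\tau)=g_3+\vartheta$, one must then verify the sign structure: $h_3$ real on the two bands $a_3\!\leftrightarrow\!a_3^*$ and $b_3\!\leftrightarrow\!b_3^*$, a constant imaginary jump across the gap, and $\Im(h_3)$ of the correct sign on the lens regions adjacent to the bands. This is the genus-one analogue of the ``at least two real roots'' requirement that delimits the genus-zero region, and its breakdown (a pair of branch points colliding, i.e.\ the double-root degeneration of $h_2$ mentioned before Theorem~\ref{theo:genus-three}) is precisely the boundary of the genus-one region.

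Next I would conjugate $\mathbf{S}=\mathbf{R}\,\ee^{\ii n(g_3-\kappa_3)\sigma_3}$ with a real normalization constant $\kappa_3$ so that $\mathbf{S}\to\mathbb{I}$, turning the jump on $\partial D_0$ into one governed by $\ee^{\pm2\ii n h_3}$ --- exponentially small on the gap arc, purely oscillatory on the bands. Then I open lenses around the two bands using the ``DUTU'' and ``DLTL'' factorizations of $\mathbf{Q}_c$ in \eqref{remark:decom}, passing to $\mathbf{T}$, so that the new jumps on the lens boundaries tend to $\mathbb{I}$ exponentially where $\Im(h_3)$ has the right sign, leaving the constant antidiagonal jump $\mathbf{Q}_C^{[3]}$ on the bands (with the diagonal factors $\mathbf{Q}_L^{[1]},\mathbf{Q}_L^{[2]}$ absorbed into the normalization). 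The global parametrix $\dot{\mathbf{T}}^{(\infty)}$ for this constant-jump model problem is constructed on the genus-one surface from the Riemann theta function, the Abel map $A(\cdot)$, the Riemann constant $B$ and the distinguished point $Q$; evaluating $2\ii\lim_{\lambda\to\infty}\lambda(\dot{\mathbf{T}}^{(\infty)})_{12}$ through \eqref{eq:qn-al} yields the theta-quotient in \eqref{eq:qn-genus-1-4}. At each of the four branch points I would glue an Airy parametrix (as prepared in Appendix~\ref{app:Airy}); any real stationary point of $h_3$ that survives lies in the region where $\Im(h_3)$ already drives the jump to $\mathbb{I}$, so --- unlike the genus-zero case --- no parabolic-cylinder parametrix contributes at leading order, and the matching error on the four circles is $\mathcal{O}(n^{-1})$.

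I would then close with the standard small-norm argument for the error $\mathbf{E}=\mathbf{T}(\dot{\mathbf{T}})^{-1}$: its jump is $\mathbb{I}+\mathcal{O}(n^{-1})$ uniformly, hence $\mathbf{E}=\mathbb{I}+\mathcal{O}(n^{-1})$ and its $\lambda^{-1}$-coefficient perturbs $q^{[n]}$ by $\mathcal{O}(n^{-1})$, giving \eqref{eq:qn-genus-1-4}. The reduction of $|q^{[n]}(n\chi,n\tau)|^2$ to the Jacobi form \eqref{eq:qn-genus-one-1} is a genus-one theta computation: specialising the quotient of four theta values to the elliptic curve, applying Jacobi's imaginary transformation and the classical correspondences $\sn,\cn,\dn\leftrightarrow\theta$-quotients with $m=\theta_2^4(0)/\theta_3^4(0)$, together with an addition formula shifting the argument by $K(m)$, collapses the product into $(\Im(a_3)-\Im(b_3))^2-|a_3-b_3|^2\cn^2(u+K(m),m)$.

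The step I expect to be the main obstacle is the verification of the $h_3$ sign structure: showing that the moment/period equations admit a Schwarz-symmetric solution $a_3(\chi,\tau),b_3(\chi,\tau)$ throughout the claimed region, and that the resulting topology of the sets $\{\Im(h_3)\gtrless0\}$ genuinely supports a lens opening consistent with the ``DUTU''/``DLTL'' factorizations. This is the $g$-function ansatz and it is the only part that is truly region-dependent rather than routine. A secondary, purely computational, difficulty is the theta-to-elliptic reduction leading to \eqref{eq:qn-genus-one-1}, which requires choosing the homology basis and the base point on the genus-one surface so that the theta quotient degenerates cleanly into a single $\cn^2$.
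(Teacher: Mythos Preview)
Your overall architecture is exactly the paper's: a genus-one $g$-function built from $\mathcal{R}_3(\lambda)$, lens opening via the factorizations in \eqref{remark:decom}, a theta-function outer parametrix on the elliptic curve, Airy local models at the four branch points, and a small-norm estimate giving the $\mathcal{O}(n^{-1})$ error. Two points where your sketch diverges from what actually has to be done are worth flagging.

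First, the cut topology: the bands are \emph{not} the Schwarz-symmetric arcs $a_3\!\leftrightarrow\!a_3^*$ and $b_3\!\leftrightarrow\!b_3^*$ as you wrote, but rather $\Sigma_{g_3}^+=[a_3,b_3]$ in the upper half-plane and its reflection $\Sigma_{g_3}^-=[a_3^*,b_3^*]$, joined by a gap arc $\Gamma_{g_3}$ crossing the real axis. On $\Gamma_{g_3}$ the $g$-function does not have a twist jump but a constant \emph{additive} jump $g_{3,+}-g_{3,-}=d_3$, which after the conjugation survives as a genuine diagonal jump $\ee^{\ii n d_3\sigma_3}$ on $\mathbf{T}_3$; this is what forces the extra scalar $F_3(\lambda)$ (with expansion coefficients $F_{31},F_{30}$) before one can reduce the outer model to the constant $(\ii\sigma_2)$ jump solvable by theta functions. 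Your description of ``constant imaginary jump across the gap'' and the direct conjugation $\mathbf{S}=\mathbf{R}\,\ee^{\ii n(g_3-\kappa_3)\sigma_3}$ misses this layer.

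Second, the reduction of $|q^{[n]}|^2$ to \eqref{eq:qn-genus-one-1} in the paper is not a generic theta-identity shuffle: it uses Abel's theorem to rewrite $A(Q)$ in terms of $A(a_3),A(b_3^*),A(\infty)$, and then a dedicated lemma that evaluates the constant $\theta_4^2(\pi C)/\theta_2^2(\pi C)$ by constructing an auxiliary meromorphic function $\varrho(k)$ on the Riemann surface, identifying its divisor, and computing a residue. Your plan of ``Jacobi's imaginary transformation plus addition formula'' is the right flavor, but the actual computation hinges on this divisor argument to pin down the prefactor $|a_3-b_3|^2$.
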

By choosing $\tau=1$, we give the comparison between the exact determinant solution and the asymptotic analysis for the genus-three and genus-one region together, which is shown in Fig.\ref{fig:os}.
\begin{figure}[ht]
\centering
\includegraphics[width=0.3\textwidth]{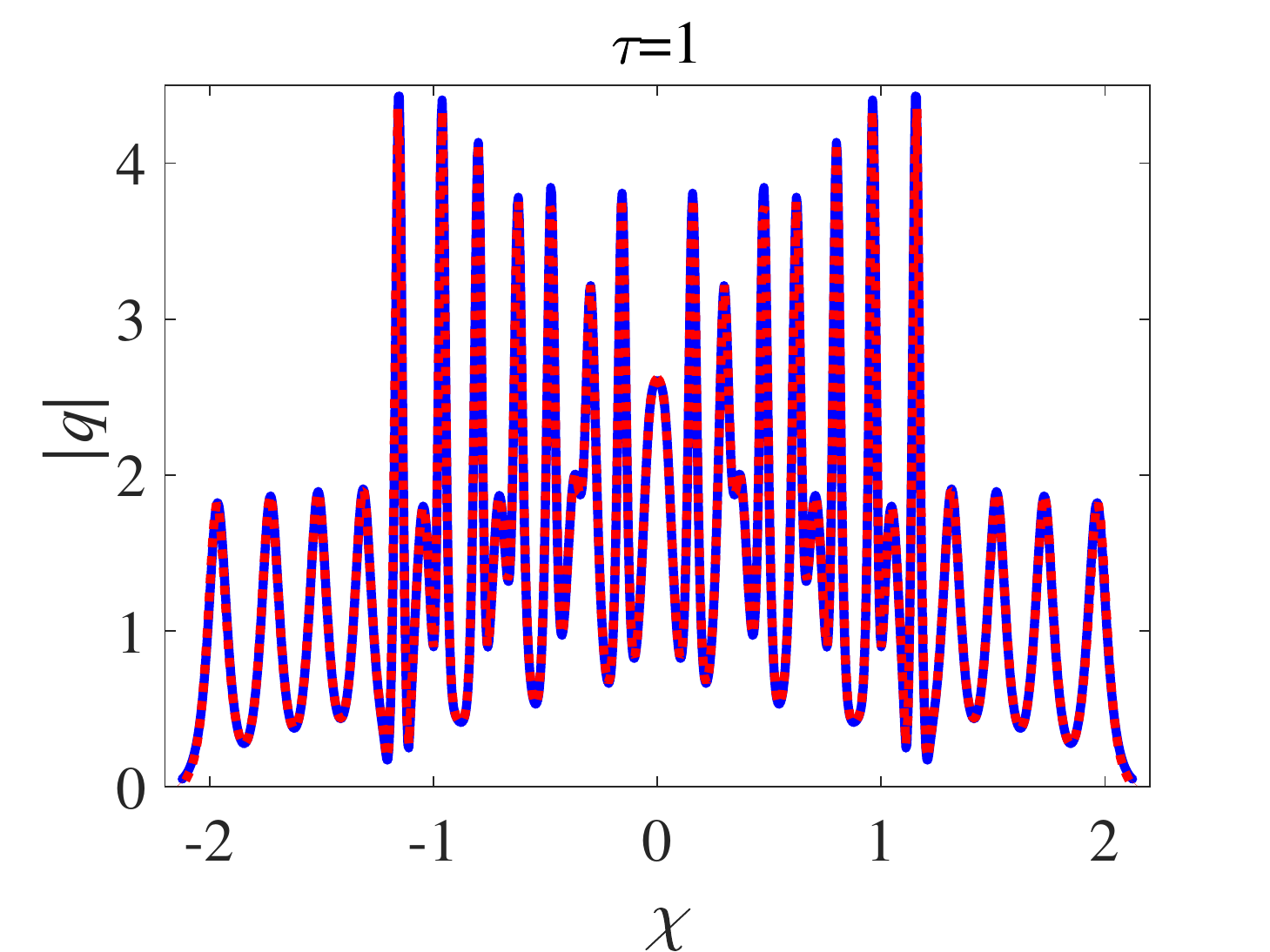}
\centering
\includegraphics[width=0.3\textwidth]{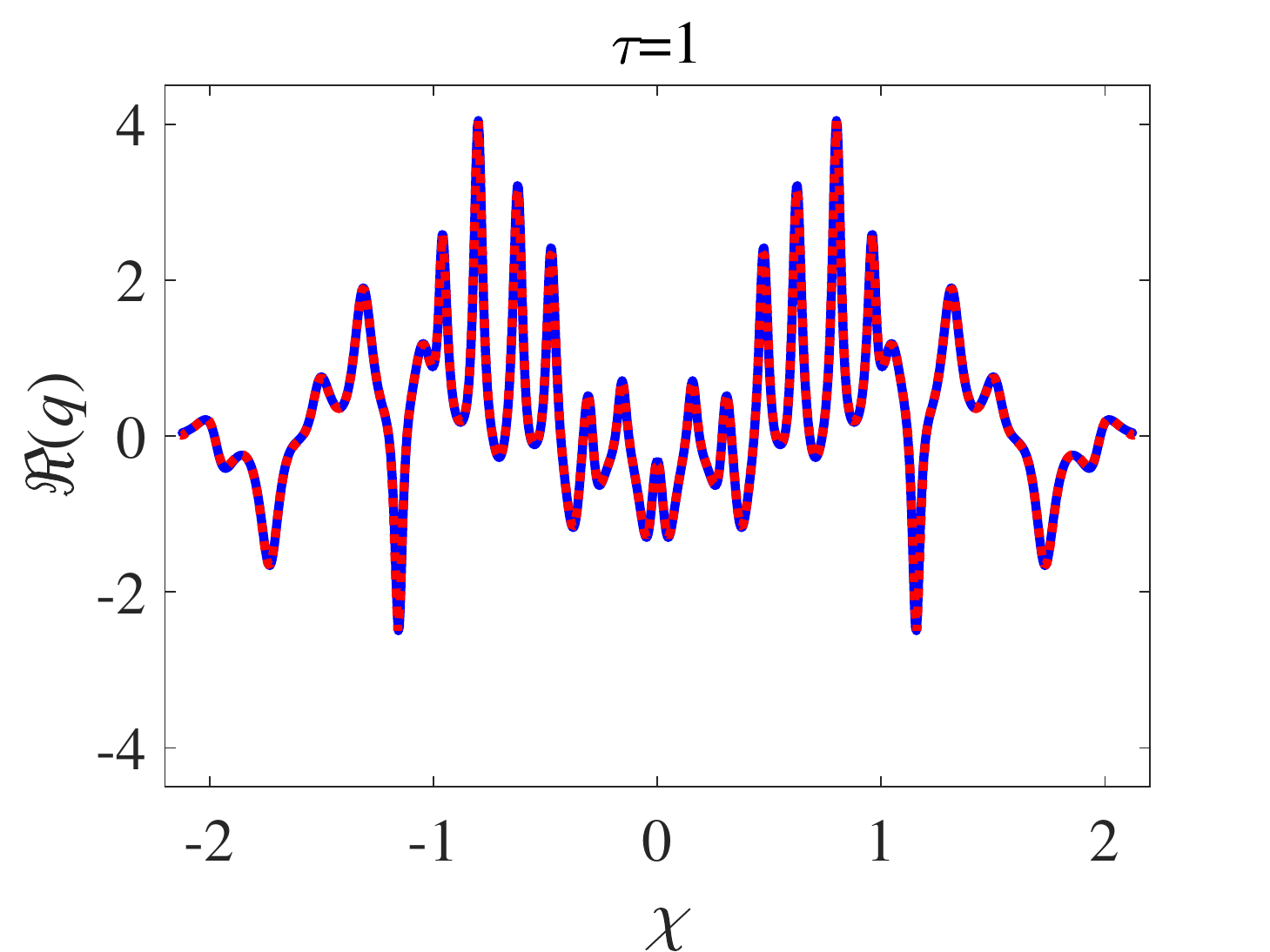}
\centering
\includegraphics[width=0.3\textwidth]{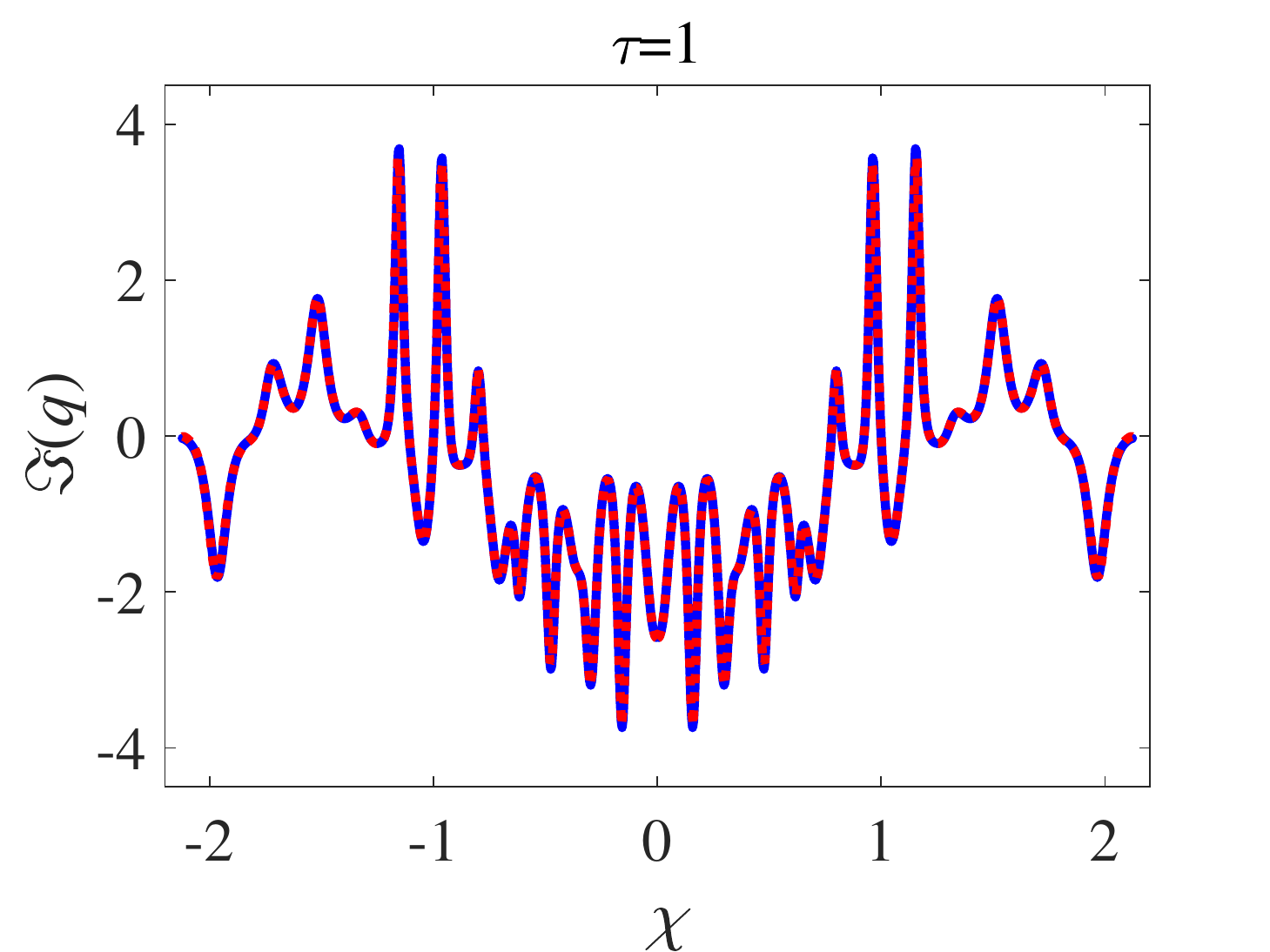}
\caption{The comparison between the exact solution ($15$-th order breather) and its asymptotic solution in the genus-one and genus-three region with $\tau=1$ (as shown by the yellow dashed line in Fig. \ref{fig:3-d}). The blue solid line is the exact solution and the red dotted line is the asymptotic solution. The left panel is the modulus of this solution, the middle one is the real part and the right one is the imaginary part.}
\label{fig:os}
\end{figure}

Next, we give a detailed computation about these four different asympototics.
\section{The algebraic-decay region}
\label{sec:al-decay}
Now, we begin to study the asymptotics in the algebraic-decay region. For this region, the controlling phase term $\vartheta(\lambda; \chi, \tau)$ has at least three real critical points, and we can analyze it through the Deift-Zhou nonlinear steepest descent method directly. By choosing one suit $\chi$ and $\tau$, we give the contour plot of $\Im(\vartheta(\lambda; \chi, \tau))$ in Fig.\ref{fig:al-decay}.
\begin{figure}[ht]
\centering
\includegraphics[width=0.45\textwidth]{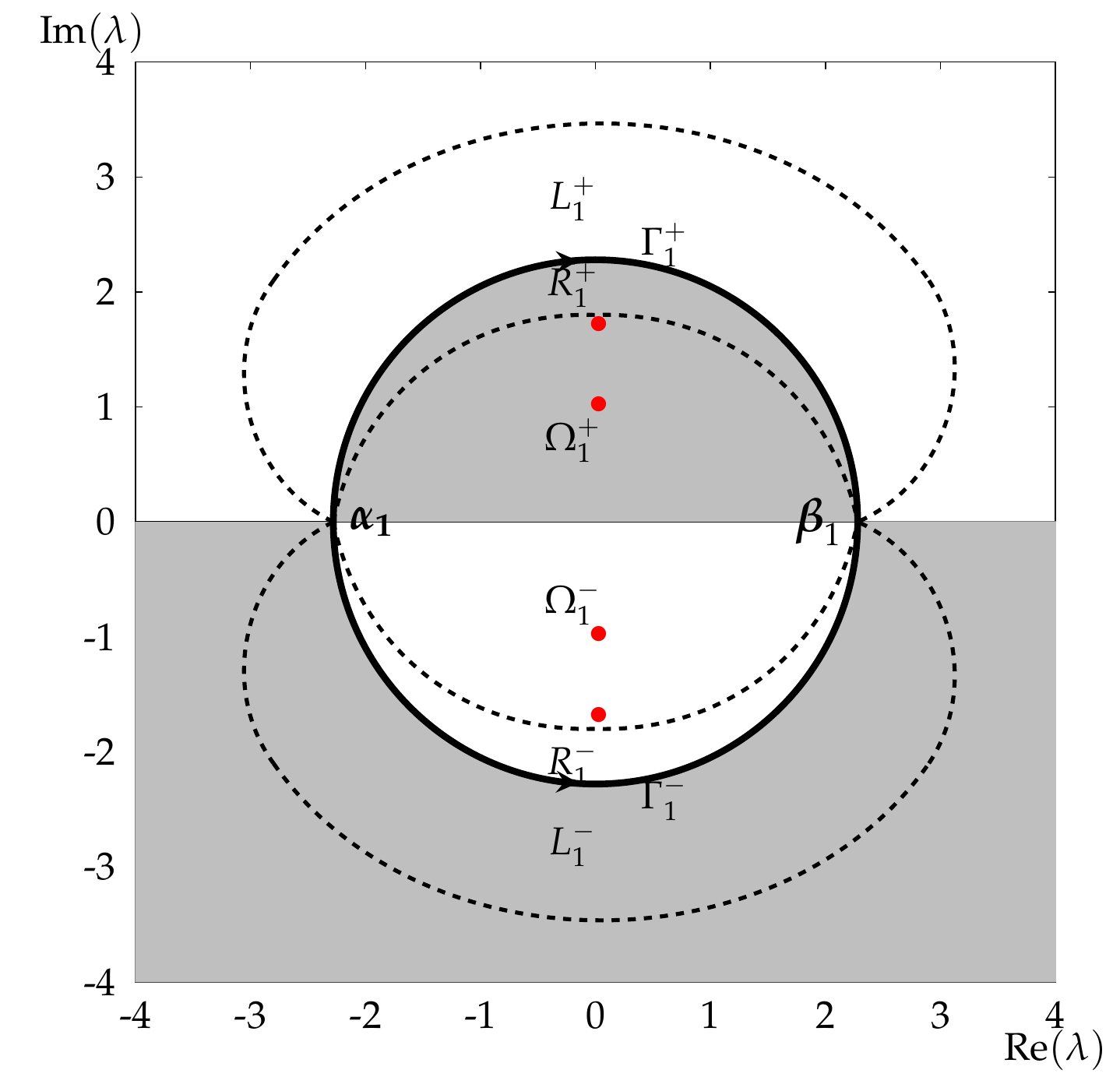}
\centering
\includegraphics[width=0.45\textwidth]{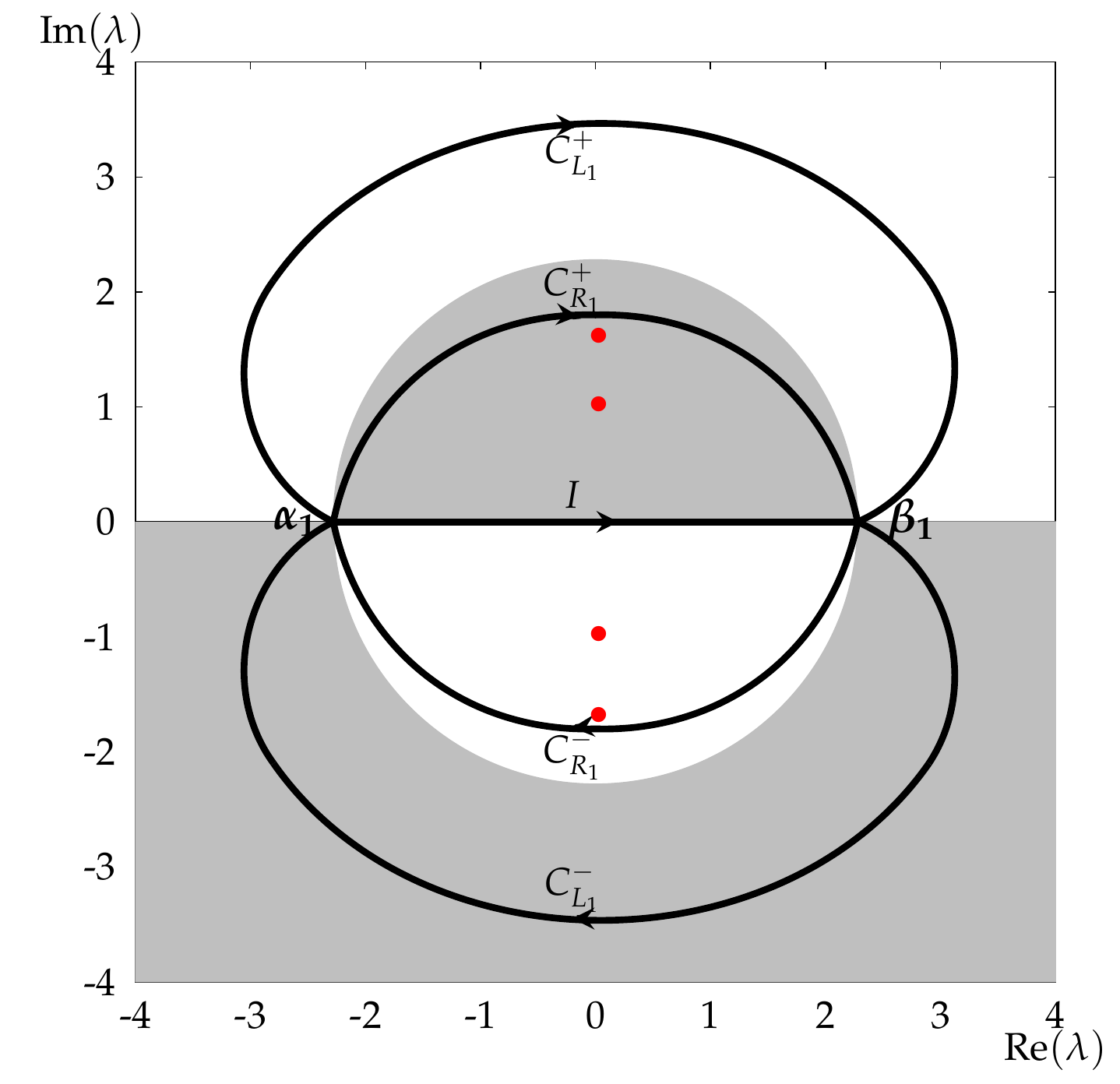}
\caption{The contour of ${\Im}\left(\vartheta\left(\lambda; \frac{1}{2}, 0\right)\right)$ in the algebraic-decay region by choosing $k=2$, where ${\Im}\left(\vartheta\left(\lambda; \frac{1}{2}, 0\right)\right)<0$ (shaded) and ${\Im}\left(\vartheta\left(\lambda; \frac{1}{2}, 0\right)\right)>0$ (unshaded), the red dots are the singularities at $\lambda=\pm\ii, \lambda=\pm2\ii$. The left one gives the original contour and the right one is the corresponding contour deformation. }
\label{fig:al-decay}
\end{figure}

With the sign of $\Im(\vartheta(\lambda; \chi, \tau))$, we can deform $\mathbf{R}(\lambda; \chi, \tau)$ through the theory of Deift-Zhou nonlinear steepest descent method.

Set
\begin{equation}
\mathbf{S}_{1}(\lambda; \chi, \tau):=\left\{\begin{aligned}&\mathbf{R}(\lambda; \chi, \tau)\ee^{-\ii n\vartheta(\lambda; \chi, \tau)\sigma_3}\mathbf{Q}_{c}\ee^{\ii n\vartheta(\lambda; \chi, \tau)\sigma_3},\quad\lambda\in D_0\cap\left(D_{1}^{+}\cup D_{1}^{-}\right)^{c},\\
&\mathbf{R}(\lambda; \chi, \tau),\quad\quad\quad\quad\quad\quad\quad\quad\quad\quad\quad\quad\quad\quad\quad\quad\quad {\rm otherwise},\end{aligned}\right.
\end{equation}
where $D_{1}^{+}=\Omega_1^+\cup R_1^{+}, D_1^{-}=\Omega_1^{-}\cup R_1^-$. Then the jump contour about $\mathbf{S}_{1}(\lambda; \chi, \tau)$ transfers into the boundary of $D_{1}^{\pm}$, which can be seen from the left panel of Fig.\ref{fig:al-decay}. Based on the decomposition in equation \eqref{remark:decom}, we can define a new matrix $\mathbf{T}_{1}(\lambda; \chi, \tau)$ related to $\mathbf{S}_{1}(\lambda; \chi, \tau)$ by introducing the ``lens" domains $L_1^{\pm}, R_1^{\pm}$ on either side of $\partial D_{1}^{\pm}$, see the left panel again. Set
\begin{equation}
\begin{aligned}
\mathbf{T}_{1}(\lambda; \chi, \tau):&=\mathbf{S}_{1}(\lambda; \chi, \tau)\ee^{-\ii n\vartheta(\lambda; \chi, \tau)\sigma_3}
\left(\mathbf{Q}_{R}^{[2]}\right)^{-1}\ee^{\ii n\vartheta(\lambda; \chi, \tau)\sigma_3},\quad &\lambda\in L_1^{+},\\
\mathbf{T}_{1}(\lambda; \chi, \tau):&=\mathbf{S}_{1}(\lambda; \chi, \tau)\mathbf{Q}_{L}^{[2]}\ee^{-\ii n\vartheta(\lambda; \chi, \tau)\sigma_3}
\mathbf{Q}_{C}^{[2]}\ee^{\ii n\vartheta(\lambda; \chi, \tau)\sigma_3},\quad &\lambda\in R_{1}^{+},\\
\mathbf{T}_{1}(\lambda; \chi, \tau):&=\mathbf{S}_{1}(\lambda; \chi, \tau)\mathbf{Q}_{L}^{[2]},\quad &\lambda\in \Omega^{+}_{1},\\
\mathbf{T}_{1}(\lambda; \chi, \tau):&=\mathbf{S}_{1}(\lambda; \chi, \tau)\ee^{-\ii n\vartheta(\lambda; \chi, \tau)\sigma_3}
\left(\mathbf{Q}_{R}^{[1]}\right)^{-1}\ee^{\ii n\vartheta(\lambda; \chi, \tau)\sigma_3},\quad &\lambda\in L_{1}^{-},\\
\mathbf{T}_{1}(\lambda; \chi, \tau):&=\mathbf{S}_{1}(\lambda; \chi, \tau)\mathbf{Q}_{L}^{[1]}\ee^{-\ii n\vartheta(\lambda; \chi, \tau)\sigma_3}
\mathbf{Q}_{C}^{[1]}
\ee^{\ii n\vartheta(\lambda; \chi, \tau)\sigma_3},\quad &\lambda\in R_{1}^{-},\\
\mathbf{T}_{1}(\lambda; \chi, \tau):&=\mathbf{S}_{1}(\lambda; \chi, \tau)\mathbf{Q}_{L}^{[1]},\quad &\lambda\in \Omega_{1}^{-},
\end{aligned}
\end{equation}
in other domains, we set $\mathbf{T}_{1}(\lambda; \chi, \tau)=\mathbf{S}_{1}(\lambda; \chi, \tau)$, then the jump conditions about $\mathbf{T}_{1}(\lambda; \chi, \tau)$ on the arcs of jump contour will be changed into the following form,
\begin{equation}
\begin{aligned}
\mathbf{T}_{1,+}(\lambda; \chi, \tau)&=\mathbf{T}_{1,-}(\lambda; \chi, \tau)\ee^{-\ii n\vartheta(\lambda; \chi, \tau)\sigma_3}\mathbf{Q}_{R}^{[2]}\ee^{\ii n\vartheta(\lambda; \chi, \tau)\sigma_3}, \quad&\lambda\in C_{L_1}^{+},\\
\mathbf{T}_{1,+}(\lambda; \chi, \tau)&=\mathbf{T}_{1,-}(\lambda; \chi, \tau)\ee^{-\ii n\vartheta(\lambda; \chi, \tau)\sigma_3}\mathbf{Q}_{C}^{[2]}\ee^{\ii n\vartheta(\lambda; \chi, \tau)\sigma_3}, \quad&\lambda\in C_{R_1}^{+},\\
\mathbf{T}_{1,+}(\lambda; \chi, \tau)&=\mathbf{T}_{1,-}(\lambda; \chi, \tau)\ee^{-\ii n\vartheta(\lambda; \chi, \tau)\sigma_3}\mathbf{Q}_{R}^{[1]}\ee^{\ii n\vartheta(\lambda; \chi, \tau)\sigma_3}, \quad&\lambda\in C_{L_1}^{-},\\
\mathbf{T}_{1,+}(\lambda; \chi, \tau)&=\mathbf{T}_{1,-}(\lambda; \chi, \tau)\ee^{-\ii n\vartheta(\lambda; \chi, \tau)\sigma_3}\mathbf{Q}_{C}^{[1]}\ee^{\ii n\vartheta(\lambda; \chi, \tau)\sigma_3}, \quad&\lambda\in C_{R_1}^{-},\\
\mathbf{T}_{1,+}(\lambda; \chi, \tau)&=\mathbf{T}_{1,-}(\lambda; \chi, \tau)2^{\sigma_3},\quad&\lambda\in I.
\end{aligned}
\end{equation}
From the sign chart in Fig.\ref{fig:al-decay}, we can see that when $n$ is large, the jump matrices about $\mathbf{T}_{1}(\lambda; \chi, \tau)$ will decay into the identity exponentially except for the interval $I=\left[\alpha_1, \beta_1\right]$. As for this jump condition, we can construct the parametrix to deal with it.
\subsection{Parametrix construction} With this constant jump condition on $I=\left[\alpha_1, \beta_1\right]$, we first construct the outer parametrix $\dot{\mathbf{T}}_{1}^{\rm out}(\lambda; \chi, \tau)$ by the Plemelj formula,
\begin{equation}\label{eq:T1out}
\dot{\mathbf{T}}_{1}^{\rm out}(\lambda; \chi, \tau)=\left(\frac{\lambda-\alpha_1}{\lambda-\beta_1}\right)^{\ii p\sigma_3}, \quad p:=\frac{\log(2)}{2\pi}, \quad \lambda\in\mathbb{C}\setminus I.
\end{equation}
It is easy to check that $\dot{\mathbf{T}}_{1}^{\rm out}(\lambda; \chi, \tau)$ can match $\mathbf{T}_{1}(\lambda; \chi, \tau)$ very well on $I$ except for the endpoints $\alpha_1, \beta_1$, where $\dot{\mathbf{T}}_{1}^{\rm out}(\lambda; \chi, \tau)$ has singularities. Thus we need to construct the inner parametrix in the neighborhood of $\alpha_1$ and $\beta_1$ separately. In the small neighborhood of $\alpha_1$ and $\beta_1$, we hope that the inner parametrix should satisfy all the jump conditions with $\mathbf{T}_{1}(\lambda; \chi, \tau)$. Denote $D_{\alpha_1}(\delta), D_{\beta_1}(\delta)$ as the small disks centered at $\alpha_1, \beta_1$ with the small radius $\delta$ respectively. Firstly, set a conformal map $f_{\alpha_1}(\lambda; \chi, \tau), f_{\beta_1}(\lambda; \chi, \tau)$ as
\begin{equation}\label{eq:con-map}
f_{\alpha_1}(\lambda; \chi, \tau)^2=2\left[\vartheta(\alpha_1, \chi, \tau)-\vartheta(\lambda; \chi, \tau)\right],\quad f_{\beta_1}(\lambda; \chi, \tau)^2=2\left[\vartheta(\lambda; \chi, \tau)-\vartheta(\beta_1; \chi, \tau)\right].
\end{equation}
It is easy to see that $f_{\alpha_1}(\alpha_1; \chi, \tau)=0, f_{\beta_1}(\beta_1; \chi, \tau)=0$, and $(f'_{\alpha_1}(\alpha_1; \chi, \tau))^2=-\vartheta''(\alpha_1; \chi, \tau), (f'_{\beta_1}(\beta_1; \chi, \tau))^2=\vartheta''(\beta_1; \chi, \tau)$, we take the analytic square root such that $f'_{\alpha_1}(\alpha_1; \chi, \tau)=-\sqrt{-\vartheta''(\alpha_1; \chi, \tau)}<0, f'_{\beta_1}(\beta_1; \chi, \tau)=\sqrt{\vartheta''(\beta_1; \chi, \tau)}>0.$ Based on this new defined $f_{\alpha_1}(\lambda; \chi, \tau), f_{\beta_1}(\lambda; \chi, \tau)$, we can construct a holomorphic function in $D_{\alpha_1}(\delta), D_{\beta_1}(\delta)$,
\begin{equation}
\mathbf{H}_{\alpha_1}(\lambda; \chi, \tau):=\left(\frac{\alpha_1-\lambda}{f_{\alpha_1}(\lambda; \chi, \tau)}\right)^{\ii p\sigma_3}(\beta_1-\lambda)^{-\ii p\sigma_3}\left(\ii\sigma_2\right),\quad \mathbf{H}_{\beta_1}(\lambda; \chi, \tau):=(\lambda-\alpha_1)^{\ii p\sigma_3}\left(\frac{f_{\beta_1}(\lambda; \chi, \tau)}{\lambda-\beta_1}\right)^{\ii p\sigma_3}.
\end{equation}
With the L'Hospital's rule, at the points $\alpha_1, \beta_1$, we have
\begin{equation}\label{eq:Hab}
\mathbf{H}_{\alpha_1}(\alpha_1; \chi, \tau)=\left(\frac{-1}{f'_{\alpha_1}(\alpha_1; \chi, \tau)}\right)^{\ii p\sigma_3}\left(\beta_1-\alpha_1\right)^{-\ii p\sigma_3}(\ii\sigma_2), \quad\mathbf{H}_{\beta_1}(\beta_1; \chi, \tau)=\left(\beta_1-\alpha_1\right)^{\ii p\sigma_3}\left(f_{\beta_1}'(\beta_1; \chi, \tau)\right)^{\ii p\sigma_3}.
\end{equation}
In $D_{\alpha_1}(\delta), D_{\beta_1}(\delta)$, we introduce two new variables $\zeta_{\alpha_1}=n^{1/2}f_{\alpha_1}(\lambda; \chi, \tau), \zeta_{\beta_1}=n^{1/2}f_{\beta_1}(\lambda; \chi, \tau)$ and two matrices $\mathbf{U}_{\alpha_1}(\lambda; \chi, \tau), \mathbf{U}_{\beta_1}(\lambda; \chi, \tau)$ defined as
\begin{equation}
\begin{aligned}
\mathbf{U}_{\alpha_1}(\lambda; \chi, \tau):&=\mathbf{T}_{1}(\lambda; \chi, \tau)\ee^{-\ii n\vartheta(\alpha_1; \chi, \tau)\sigma_3}\ii\sigma_2,&\quad \lambda\in D_{\alpha_1}(\delta),\\
\mathbf{U}_{\beta_1}(\lambda; \chi, \tau):&=\mathbf{T}_{1}(\lambda; \chi, \tau)\ee^{-\ii n\vartheta(\beta_1; \chi, \tau)\sigma_3},&\quad \lambda\in D_{\beta_1}(\delta).
\end{aligned}
\end{equation}
Then the jump conditions satisfied by $\mathbf{U}_{\alpha_1}(\lambda; \chi, \tau), \mathbf{U}_{\beta_1}(\lambda; \chi, \tau)$ are shown in Fig.\ref{fig:rays-para} with the variables $\zeta=\zeta_{\alpha_1}, \zeta=\zeta_{\beta_1}$ respectively.
\begin{figure}[ht]
\centering
\includegraphics[width=0.45\textwidth]{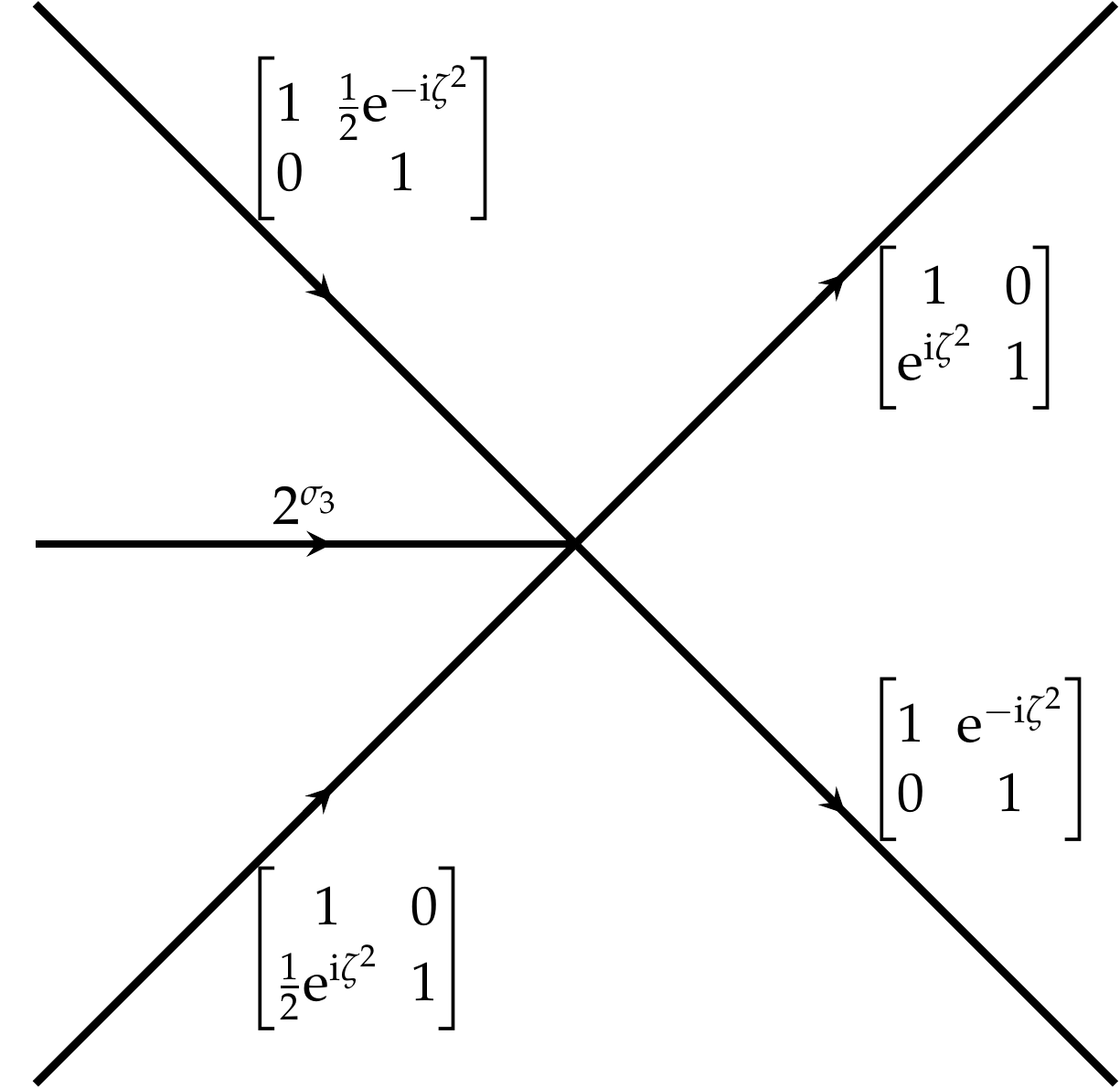}
\caption{The jump conditions about $\mathbf{U}=\mathbf{U}_{\alpha_1}(\lambda; \chi, \tau)$ and $\mathbf{U}=\mathbf{U}_{\beta_1}(\lambda; \chi, \tau)$ with the new conformal variables $\zeta_{\alpha_1}, \zeta_{\beta_1}$.}
\label{fig:rays-para}
\end{figure}

In \cite{Bilman-Duke-2019}, the authors showed that the solutions $\mathbf{U}_{\alpha_1}(\lambda; \chi, \tau), \mathbf{U}_{\beta_1}(\lambda; \chi, \tau)$ can be given by the standard parabolic cylinder function. We now give a RHP about the matrix $\mathbf{U}(\zeta)$.
\begin{rhp}
(Parabolic cylinder parametrix)
There exists a $2\times 2$ matrix $\mathbf{U}(\zeta)$ satisfying the following conditions.
\begin{itemize}
\item {\bf Analyticity:} $\mathbf{U}(\zeta)$ is analytic except for the five rays shown in Fig. \ref{fig:rays-para}.
\item {\bf Jump condition:} When $\lambda$ is on these five rays, $\mathbf{U}_{+}(\zeta)=\mathbf{U}_{-}(\zeta)\mathbf{V}^{\rm PC}(\zeta), $ where $\mathbf{V}^{\rm PC}(\zeta)$ is the jump matrices given in Fig. \ref{fig:rays-para}.
\item {\bf Normalization:} $\mathbf{U}(\zeta)\zeta^{\ii p\sigma_3}\to\mathbb{I}$ as $\lambda\to\infty$.
\end{itemize}
\end{rhp}
The solution of $\mathbf{U}(\zeta)$ can be expressed by the parabolic cylinder function. What we need is the asymptotic expansion when $\zeta\to\infty$. Following the result in \cite{Bilman-Duke-2019}, when $\zeta\to\infty$, the asymptotic expansion about $\mathbf{U}(\zeta)\zeta^{\ii p\sigma_3}$ has the following formula,
\begin{equation}
\mathbf{U}(\zeta)\zeta^{\ii p\sigma_3}=\mathbb{I}+\frac{1}{2\ii\zeta}\begin{bmatrix}0&\alpha\\
-\beta&0
\end{bmatrix}+\begin{bmatrix}\mathcal{O}(\zeta^{-2})&\mathcal{O}(\zeta^{-3})\\
\mathcal{O}(\zeta^{-3})&\mathcal{O}(\zeta^{-2})
\end{bmatrix},\quad\zeta\to\infty,
\end{equation}
where
\begin{equation}\label{eq:alphabeta}
\alpha=2^{\frac{3}{4}}\sqrt{2\pi}\Gamma\left(\frac{\ii\ln(2)}{2\pi}\right)^{-1}\ee^{\ii\pi/4}\ee^{\ii(\ln(2))^2/(2\pi)},\quad \beta=-\alpha^*.
\end{equation}
By using $\mathbf{U}(\zeta)$, we can define the inner parametrix in $D_{\alpha_1}(\delta)$ and $D_{\beta_1}(\delta)$,
\begin{equation}
\begin{aligned}
\dot{\mathbf{T}}_{1}^{\alpha_1}(\lambda; \chi, \tau):&=n^{-\ii p\sigma_3/2}\ee^{-\ii n\vartheta(\alpha_1; \chi, \tau)\sigma_3}\mathbf{H}_{\alpha_1}(\lambda; \chi, \tau)\mathbf{U}_{\alpha_1}(\zeta_{\alpha_1})(-\ii\sigma_2)\ee^{\ii n\vartheta(\alpha_1; \chi, \tau)\sigma_3}, &\quad\lambda\in D_{\alpha_1}(\delta),\\
\dot{\mathbf{T}}_{1}^{\beta_1}(\lambda; \chi, \tau):&=n^{\ii p\sigma_3/2}\ee^{-\ii n\vartheta(\beta_1; \chi, \tau)\sigma_3}\mathbf{H}_{\beta_1}(\lambda; \chi, \tau)\mathbf{U}_{\beta_1}(\zeta_{\beta_1})\ee^{\ii n\vartheta(\beta_1; \chi, \tau)\sigma_3},&\quad \lambda\in D_{\beta_1}(\delta).
\end{aligned}
\end{equation}
Then the whole parametrix about $\mathbf{T}_{1}(\lambda; \chi, \tau)$ is
\begin{equation}
\dot{\mathbf{T}}_{1}(\lambda; \chi, \tau):=\left\{\begin{aligned}&\dot{\mathbf{T}}_{1}^{\alpha_1}(\lambda; \chi, \tau),&\quad\lambda\in D_{\alpha_1}(\delta),\\
&\dot{\mathbf{T}}_{1}^{\beta_1}(\lambda; \chi, \tau),&\quad\lambda\in D_{\beta_1}(\delta),\\
&\dot{\mathbf{T}}_{1}^{\rm out}(\lambda; \chi, \tau),&\quad\lambda\in \mathbb{C}\setminus\left(I\cup\overline{D_{\alpha_1}(\delta)}\cup \overline{D_{\beta_1}(\delta)}\right).\\
\end{aligned}\right.
\end{equation}
Next we will analyze the error between the parametrix $\dot{\mathbf{T}}_{1}(\lambda; \chi, \tau)$ and $\mathbf{T}_{1}(\lambda; \chi, \tau)$.
\subsection{Error analysis}
Set the error function between $\mathbf{T}_{1}(\lambda; \chi, \tau)$ and its parametrix $\dot{\mathbf{T}}_{1}(\lambda; \chi, \tau)$ as $\mathcal{E}_1(\lambda; \chi, \tau)$, that is
\begin{equation}
\mathcal{E}_1(\lambda; \chi, \tau):=\mathbf{T}_{1}(\lambda; \chi, \tau)\left(\dot{\mathbf{T}}_{1}(\lambda; \chi, \tau)\right)^{-1},
\end{equation}
the corresponding discontinuous contours about $\mathcal{E}_1(\lambda; \chi, \tau)$ are set as $\Sigma_{\mathcal{E}_1}$ and the jump matrix about $\mathcal{E}_1(\lambda; \chi, \tau)$ is set as $\mathbf{V}_{\mathcal{E}_1}(\lambda; \chi, \tau)$, that is $\mathcal{E}_{1,+}(\lambda; \chi, \tau)=\mathcal{E}_{1,-}(\lambda; \chi, \tau)\mathbf{V}_{\mathcal{E}_1}(\lambda; \chi, \tau)$. From the definition of $\dot{\mathbf{T}}_{1}(\lambda; \chi, \tau)$, we know that the jump arcs about $\mathcal{E}_{1}(\lambda; \chi, \tau)$ are shown in Fig. \ref{fig:jump-E-al}.
\begin{figure}[ht]
\centering
\includegraphics[width=0.45\textwidth]{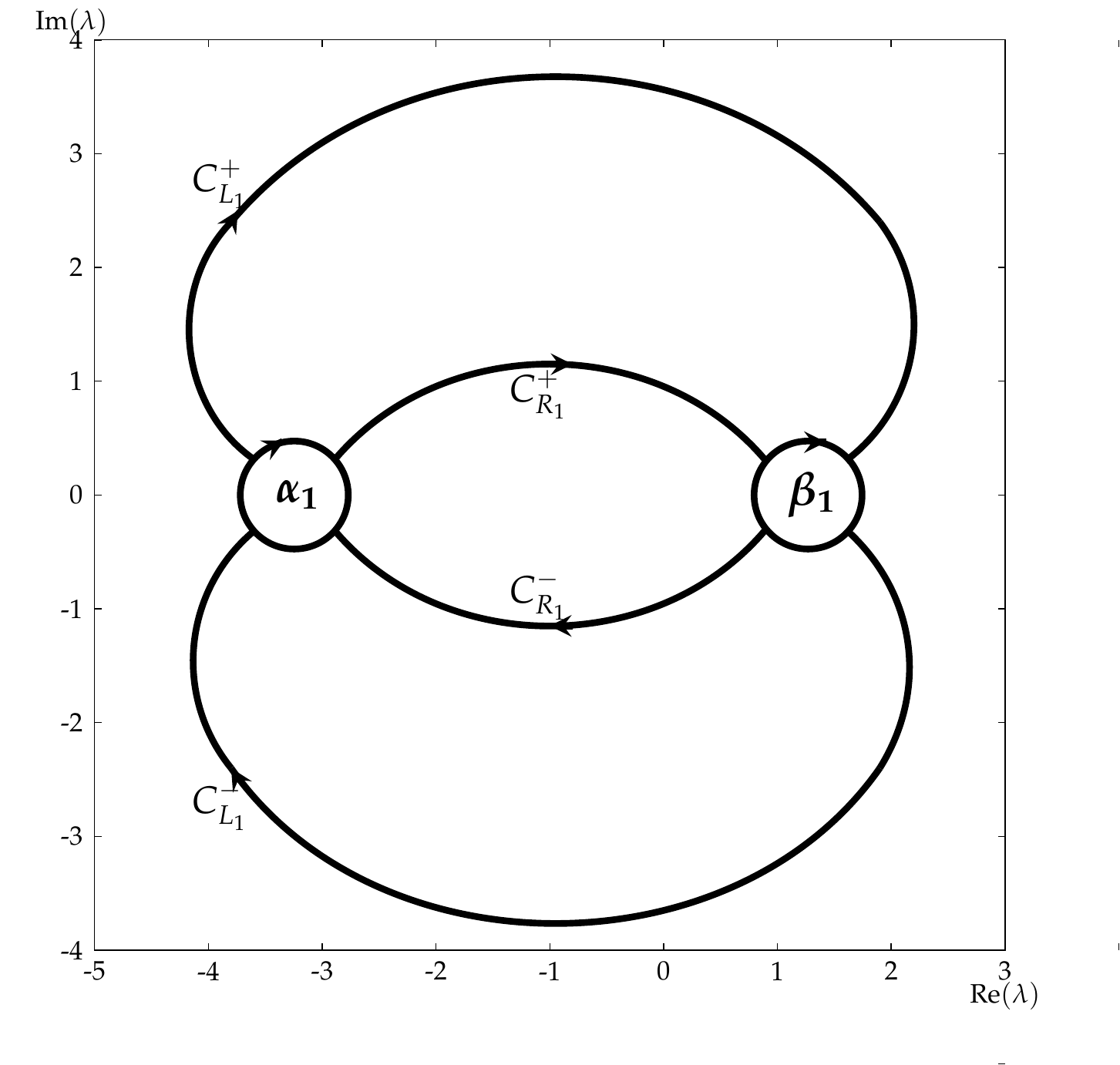}
\caption{The jump contours about $\mathcal{E}_{1}(\lambda; \chi, \tau)$.}
\label{fig:jump-E-al}
\end{figure}

When $\lambda\in \left(C_{L_1}^{\pm}\cup C_{R_1}^{\pm}\right)\cap \Sigma_{\mathcal{E}_1}$, the jump condition $\mathbf{V}_{\mathcal{E}_1}(\lambda; \chi, \tau)$ is
\begin{equation}
\mathbf{V}_{\mathcal{E}_1}(\lambda; \chi, \tau)=\dot{\mathbf{T}}^{\rm out}_{1}(\lambda; \chi, \tau)\mathbf{T}_{1,-}(\lambda; \chi, \tau)^{-1}\mathbf{T}_{1,+}(\lambda; \chi, \tau)\dot{\mathbf{T}}_{1}^{\rm out}(\lambda; \chi, \tau)^{-1},
\end{equation}
which will exponentially approach to the identity as $n$ is large, thus the errors in this arcs are
\begin{equation}
\|\mathbf{V}_{\mathcal{E}_1}(\lambda; \chi, \tau)-\mathbb{I}\|=\mathcal{O}(\ee^{-\mu_1n}),
\end{equation}
where $\mu_1$ is a positive constant. When $\lambda\in \partial D_{\alpha_1}(\delta)$ and $\lambda\in \partial D_{\beta_1}(\delta)$, the jump conditions between $\mathcal{E}_{1,+}(\lambda; \chi, \tau)$ and $\mathcal{E}_{1,-}(\lambda; \chi, \tau)$ change into
\begin{multline}\label{eq:jump-E-alpha1}
\mathcal{E}_{1,+}(\lambda; \chi, \tau)=\mathcal{E}_{1,-}(\lambda; \chi, \tau)n^{-\frac{1}{2}\ii p\sigma_3}\ee^{-\ii n\vartheta(\alpha_1; \chi, \tau)\sigma_3}\mathbf{H}_{\alpha_1}(\lambda; \chi, \tau)\\\cdot\left(\mathbb{I}+\frac{1}{2\ii n^{1/2}f_{\alpha_1}(\lambda; \chi, \tau)}\begin{bmatrix}0&\alpha\\
-\beta&0
\end{bmatrix}+\begin{bmatrix}\mathcal{O}(n^{-1})&\mathcal{O}(n^{-\frac{3}{2}})\\
\mathcal{O}(n^{-\frac{3}{2}})&\mathcal{O}(n^{-1})
\end{bmatrix}\right)\mathbf{H}_{\alpha_1}(\lambda; \chi, \tau)^{-1}\ee^{\ii n\vartheta(\alpha_1; \chi, \tau)\sigma_3}n^{\frac{1}{2}\ii p\sigma_3},\,\lambda\in\partial D_{\alpha_1}(\delta),
\end{multline}
and
\begin{multline}\label{eq:jump-E-beta1}
\mathcal{E}_{1,+}(\lambda; \chi, \tau)=\mathcal{E}_{1,-}(\lambda; \chi, \tau)n^{\frac{1}{2}\ii p\sigma_3}\ee^{-\ii n\vartheta(\beta_1; \chi, \tau)\sigma_3}\mathbf{H}_{\beta_1}(\lambda; \chi, \tau)\\\cdot\left(\mathbb{I}+\frac{1}{2\ii n^{1/2}f_{\beta_1}(\lambda; \chi, \tau)}\begin{bmatrix}0&\alpha\\
-\beta&0
\end{bmatrix}+\begin{bmatrix}\mathcal{O}(n^{-1})&\mathcal{O}(n^{-\frac{3}{2}})\\
\mathcal{O}(n^{-\frac{3}{2}})&\mathcal{O}(n^{-1})
\end{bmatrix}\right)\mathbf{H}_{\beta_1}(\lambda; \chi, \tau)^{-1}\ee^{\ii n\vartheta(\beta_1; \chi, \tau)\sigma_3}n^{-\frac{1}{2}\ii p\sigma_3},\,\lambda\in\partial D_{\beta_1}(\delta).
\end{multline}
With the jump conditions, we know that the errors about $\mathbf{V}_{\mathcal{E}_{1}}(\lambda; \chi, \tau)$ satisfy
\begin{equation}
\|\mathbf{V}_{\mathcal{E}_1}(\lambda; \chi, \tau)-\mathbb{I}\|=\mathcal{O}(n^{-1/2}).
\end{equation}
\subsection{The asymptotic expression of $q^{[n]}(n\chi, n\tau)$ in the algebraic-decay region}
From the beginning expression $q^{[n]}(n\chi, n\tau)$ in Eq.\eqref{eq:qn-al}, we can recover the solution $q^{[n]}(n\chi, n\tau)$ from $\mathbf{T}_{1}(\lambda; \chi, \tau)$ by the following formula,
\begin{equation}\label{eq:qn-al-1}
\begin{aligned}
q^{[n]}(n\chi, n\tau)&=2\ii\lim\limits_{\lambda\to\infty}\lambda \mathbf{T}_{1}(\lambda; \chi, \tau)_{12}\\
&=2\ii\lim\limits_{\lambda\to\infty}\lambda\left(\mathcal{E}_{1}(\lambda; \chi, \tau)\dot{\mathbf{T}}^{\rm out}_{1}(\lambda; \chi, \tau)\right)_{12}\\
&=2\ii\lim\limits_{\lambda\to\infty}\lambda\left(\mathcal{E}_{1,11}(\lambda; \chi, \tau)\dot{\mathbf{T}}_{1,12}^{\rm out}(\lambda; \chi, \tau)+\mathcal{E}_{1,12}(\lambda; \chi, \tau)\dot{\mathbf{T}}_{1,22}^{\rm out}(\lambda; \chi, \tau)\right).
\end{aligned}
\end{equation}
By Eq.\eqref{eq:T1out}, we know that $\dot{\mathbf{T}}_{1}^{\rm out}(\lambda; \chi, \tau)$ is a diagonal matrix, thus Eq.\eqref{eq:qn-al-1} changes into
\begin{equation}
q^{[n]}(n\chi, n\tau)=2\ii\lim\limits_{\lambda\to\infty}\lambda\mathcal{E}_{1,12}(\lambda; \chi, \tau).
\end{equation}
Therefore, the key point is studying the asymptotic behavior of $\mathcal{E}_{1,12}$ when $\lambda\to\infty$. From the above analysis, we see that when $\lambda\in\left(C_{L_1}^{\pm}\cup C_{R_1}^{\pm}\right)\cap\Sigma_{\mathcal{E}_1}$, $\|\mathbf{V}_{\mathcal{E}_1}(\lambda; \chi, \tau)-\mathbb{I}\|=\mathcal{O}(\ee^{-\mu_1n})$, while for $\lambda\in\left(\partial D_{\alpha_1}(\delta)\cup \partial D_{\beta_1}(\delta)\right)$, $\|\mathbf{V}_{\mathcal{E}_1}(\lambda; \chi, \tau)-\mathbb{I}\|=\mathcal{O}(n^{-1/2})$. Obviously, we only need to consider the leading order error term for $\lambda\in\left(\partial D_{\alpha_1}(\delta)\cup \partial D_{\beta_1}(\delta)\right)$, others are the higher order error terms and can be omitted. Through the Plemelj formula, we have
\begin{equation}
\mathcal{E}_{1}(\lambda; \chi, \tau)=\mathbb{I}+\frac{1}{2\pi\ii}\int_{\partial D_{\alpha_1}(\delta)\cup \partial D_{\beta_1}(\delta)}\frac{\mathcal{E}_{1,-}(\xi; \chi, \tau)\left(\mathbf{V}_{\mathcal{E}_{1}}(\xi; \chi, \tau)-\mathbb{I}\right)}{\xi-\lambda}d\xi+\mathcal{O}(\ee^{-\mu_1n}),
\end{equation}
then the solution $q^{[n]}(n\chi, n\tau)$ can be given by
\begin{multline}
q^{[n]}(n\chi, n\tau)=-\frac{1}{\pi}\int_{\partial D_{\alpha_1}(\delta)\cup \partial D_{\beta_1}(\delta)}\left[\mathcal{E}_{1,11,-}(\xi; \chi, \tau)\mathbf{V}_{\mathcal{E}_1, 12}(\xi; \chi, \tau)+\mathcal{E}_{1,12,-}(\xi; \chi, \tau)\left(\mathbf{V}_{\mathcal{E}_{1},22}(\xi; \chi, \tau)-1\right)\right]d\xi\\+\mathcal{O}(\ee^{-\mu_1n}).
\end{multline}
From the jump conditions in Eq.\eqref{eq:jump-E-alpha1} and Eq.\eqref{eq:jump-E-beta1}, we know that $\mathbf{V}_{\mathcal{E}_1, 22}(\lambda; \chi, \tau)-1=\mathcal{O}(n^{-1}), \mathcal{E}_{1,12}(\lambda; \chi, \tau)=\mathcal{O}(n^{-1/2})$ in $L^{\infty}\left(\partial D_{\alpha_1}(\delta)\cup\partial D_{\beta_1}(\delta)\right)$, thus we have
\begin{equation}
q^{[n]}(n\chi, n\tau)=-\frac{1}{\pi}\int_{\partial D_{\alpha_1}(\delta)\cup\partial D_{\beta_1}(\delta)}\mathbf{V}_{\mathcal{E}_1,12}(\xi; \chi, \tau)d\xi+\mathcal{O}(n^{-3/2}).
\end{equation}
Again, from the jump conditions in Eq.\eqref{eq:jump-E-alpha1} and Eq.\eqref{eq:jump-E-beta1}, we can calculate $\mathbf{V}_{\mathcal{E}_{1}, 12}(\xi; \chi, \tau)$ as
\begin{equation}
\begin{aligned}
\mathbf{V}_{\mathcal{E}_1, 12}(\xi; \chi, \tau)&=\frac{n^{-\ii p}\ee^{-2\ii n\vartheta(\alpha_1; \chi, \tau)}}{2\ii n^{1/2}f_{\alpha_1}(\xi; \chi, \tau)}\beta H_{\alpha_1, 12}(\xi; \chi, \tau)^2+\mathcal{O}(n^{-3/2}),\quad \xi\in\partial D_{\alpha_1}(\delta),\\
\mathbf{V}_{\mathcal{E}_1, 12}(\xi; \chi, \tau)&=\frac{n^{\ii p}\ee^{-2\ii n\vartheta(\beta_1; \chi, \tau)}}{2\ii n^{1/2}f_{\beta_1}(\xi; \chi, \tau)}\alpha H_{\beta_1, 11}(\xi; \chi, \tau)^2+\mathcal{O}(n^{-3/2}),\quad \xi\in\partial D_{\beta_1}(\delta).
\end{aligned}
\end{equation}
With the residue theorem at $\xi=\alpha_1, \xi=\beta_1$, the potential function $q^{[n]}(n\chi, n\tau)$ in the algebraic-decay region can be given as
\begin{equation}\label{eq:qn-al-2}
q^{[n]}(n\chi, n\tau){=}\frac{1}{n^{1/2}}\Bigg[n^{-\ii p}\ee^{-2\ii n\vartheta(\alpha_1; \chi, \tau)}\frac{\beta H_{\alpha_1, 12}(\alpha_1; \chi, \tau)^2}{f'_{\alpha_1}(\alpha_1; \chi, \tau)}{+}n^{\ii p}\ee^{-2\ii n\vartheta(\beta_1; \chi, \tau)}\frac{\alpha H_{\beta_1, 11}(\beta_1; \chi, \tau)^2}{f'_{\beta_1}(\beta_1; \chi, \tau)}\Bigg]{+}\mathcal{O}(n^{-3/2}).
\end{equation}
Substituting $H_{\alpha_1,12}(\alpha_1; \chi, \tau), H_{\beta_1, 11}(\beta_1; \chi, \tau), f'_{\alpha_1}(\alpha_1; \chi, \tau), f'_{\beta_1}(\beta_1; \chi, \tau), \alpha $ and $\beta$ into Eq.\eqref{eq:qn-al-2}, we can get the expression in Eq.\eqref{eq:al}, it completes the asymptotic analysis for the algebraic-decay region (Theorem \ref{theo:al}).
\section{The genus-zero region}
\label{sec:genus-zero}
Last section, we give the asymptotic analysis in the algebraic-decay region. In this section, we continue to analyze this asymptotics in the genus-zero region. Compared to the algebraic-decay region, the genus-zero region becomes a little more complicated, the reason is that the original phase term $\vartheta(\lambda; \chi, \tau)$ can not be used for analyzing in this region, we must introduce a $g$ function relying on an algebraic curve with the genus-zero. We begin by giving a scalar RHP about $g_{2}(\lambda)\equiv g_{2}(\lambda; \chi, \tau)$ function.
\begin{rhp}\label{rhp-genus-zero}
(The $g_{2}(\lambda)$ function in the genus-zero region) Seek $g_{2}(\lambda)$ function with the following properties.
\begin{itemize}
\item{\bf Analyticity:} $g_{2}(\lambda)$ is analytic for $\lambda\in\mathbb{C}\setminus \Sigma_{g_2}^{\pm}$, where $\Sigma_{g_2}^{\pm}$ is the branch cut to be determined, where $g_2(\lambda)$ takes the continuous boundary values from the left and right sides.
\item {\bf Jump condition:} When $\lambda\in\Sigma_{g_2}^{\pm}$, $g_{2, \pm}(\lambda)$ is related by the jump conditions,
\begin{equation}\label{eq:g2}
g_{2,+}(\lambda)+g_{2,-}(\lambda)+2\vartheta(\lambda; \chi, \tau)=\kappa_2,
\end{equation}
where $\kappa_2$ is a real integration constant.
\item {\bf Normalization:} $g_2(\lambda)\to\mathcal{O}(\lambda^{-1})$ as $\lambda\to\infty$.
\item {\bf Symmetry:} $g_2(\lambda)$ has the symmetry condition
\begin{equation}
g_2(\lambda)=g_2(\lambda^*)^*.
\end{equation}
\end{itemize}
\end{rhp}
To solve $g_2(\lambda)$ in RHP \ref{rhp-genus-zero}, we introduce the algebraic curve $\mathcal{R}_{2}(\lambda)\equiv\mathcal{R}_{2}(\lambda; \chi, \tau)$ as
\begin{equation}\label{eq:R2}
\mathcal{R}_{2}(\lambda)=\sqrt{(\lambda-a_{2})(\lambda-a_2^*)}.
\end{equation}
As $\vartheta(\lambda; \chi, \tau)$ contains the logarithmic function and $\kappa_2$ is an unknown integration constant, we take the derivative with respect to $\lambda$ for Eq.\eqref{eq:g2} and get
\begin{equation}
g'_{2,+}(\lambda)+g'_{2,-}(\lambda)=-2\left[\chi+2\lambda\tau+\frac{\ii}{2}\left(\frac{1}{\lambda-\ii}-\frac{1}{\lambda+\ii}+\frac{1}{\lambda-k\ii}-\frac{1}{\lambda+k\ii}\right)\right],\quad \lambda\in\Sigma_{g_2}^{+}\cup \Sigma_{g_2}^{-}=\left[a_2^*, a_2\right].
\end{equation}
Through the definition $\mathcal{R}_{2}(\lambda)$ in Eq.\eqref{eq:R2}, we have
\begin{equation}
\left(\frac{g'_{2}(\lambda)}{\mathcal{R}_{2}(\lambda)}\right)_{+}-\left(\frac{g'_{2}(\lambda)}{\mathcal{R}_{2}(\lambda)}\right)_{-}=-\frac{2\left[\chi+2\lambda\tau+\frac{\ii}{2}\left(\frac{1}{\lambda-\ii}-\frac{1}{\lambda+\ii}+\frac{1}{\lambda-k\ii}-\frac{1}{\lambda+k\ii}\right)\right]}{\mathcal{R}_{2,+}(\lambda)}.
\end{equation}
By using the Plemelj formula, we have
\begin{equation}
g'_{2}(\lambda)=\frac{\mathcal{R}_{2}(\lambda)}{2\pi\ii}\int_{\Sigma_{g_2}^{+}\cup \Sigma_{g_2}^{-}}-\frac{2\left[\chi+2\xi\tau+\frac{\ii}{2}\left(\frac{1}{\xi-\ii}-\frac{1}{\xi+\ii}+\frac{1}{\xi-k\ii}-\frac{1}{\xi+k\ii}\right)\right]}{\mathcal{R}_{2,+}(\xi)(\xi-\lambda)}d\xi.
\end{equation}
Moreover, with the generalized residue theorem, we have
\begin{equation}\label{eq:dg}
g_2'(\lambda){=}\frac{\ii}{2}\mathcal{R}_{2}(\lambda)\left(\frac{1}{\mathcal{R}_{2}(\ii)(\lambda-\ii)}{-}\frac{1}{\mathcal{R}_{2}(-\ii)(\lambda+\ii)}{+}\frac{1}{\mathcal{R}_{2}(k\ii)(\lambda-k\ii)}{-}\frac{1}{\mathcal{R}_{2}(-k\ii)(\lambda+k\ii)}{-}4\ii\tau\right){-}\vartheta'(\lambda; \chi, \tau).
\end{equation}
From the normalization condition in the RHP, expanding $g_2'(\lambda)$ at $\lambda\to\infty$, we can get two equations for the coefficients of $\lambda^0$ and $\lambda^{-1}$, that is
\begin{equation}\label{eq:two-eq-genus-zero}
\begin{aligned}
\mathcal{O}(\lambda^0):\quad&\frac{\ii}{\mathcal{R}_{2}(\ii)}-\frac{\ii}{\mathcal{R}_{2}(-\ii)}+\frac{\ii}{\mathcal{R}_{2}(k\ii)}-\frac{\ii}{\mathcal{R}_{2}(-k\ii)}-4\Re(a_{2})\tau-2\chi=0,\\
\mathcal{O}(\lambda^{-1}):\quad& \frac{1}{\mathcal{R}_{1}(\ii)}+\frac{1}{\mathcal{R}_{1}(-\ii)}+\frac{k}{\mathcal{R}_{2}(k\ii)}+\frac{k}{\mathcal{R}_{2}(-k\ii)}+4\Re(a_{2})^2\tau-2\Im(a_2)^2\tau+2\Re(a_{2})\chi=0.
\end{aligned}
\end{equation}

For fixed $\chi$ and $\tau$, we can solve Eq.\eqref{eq:two-eq-genus-zero} numerically. For convenience, we introduce a new function $h_{2}(\lambda)\equiv h_2(\lambda; \chi, \tau)$ as
\begin{equation}\label{eq:h2}
h_{2}(\lambda):=g_2(\lambda)+\vartheta(\lambda; \chi, \tau),
\end{equation}
which will become the new controlling phase term by replacing the original phase term $\vartheta(\lambda; \chi, \tau)$.

Compared to the large order soliton/rogue wave, in the genus-zero region, there are always two equations to determine two unknown parameters. But here the determining equations \eqref{eq:two-eq-genus-zero} are more complicated, if we try to separate the real part and the imaginary part, the last determining equation will become a higher order polynomial function, and then we have to solve it numerically. For the higher genus region in later sections, we will confront with the similar problems. All of the unknown parameters appearing in the $g$-function will be calculated numerically.

By choosing one suit $\chi$ and $\tau$ in this region, we give the contour plot about $\Im(h_2(\lambda))$ in Fig. \ref{fig:genus-zero}.
\begin{figure}[ht]
\centering
\includegraphics[width=0.45\textwidth]{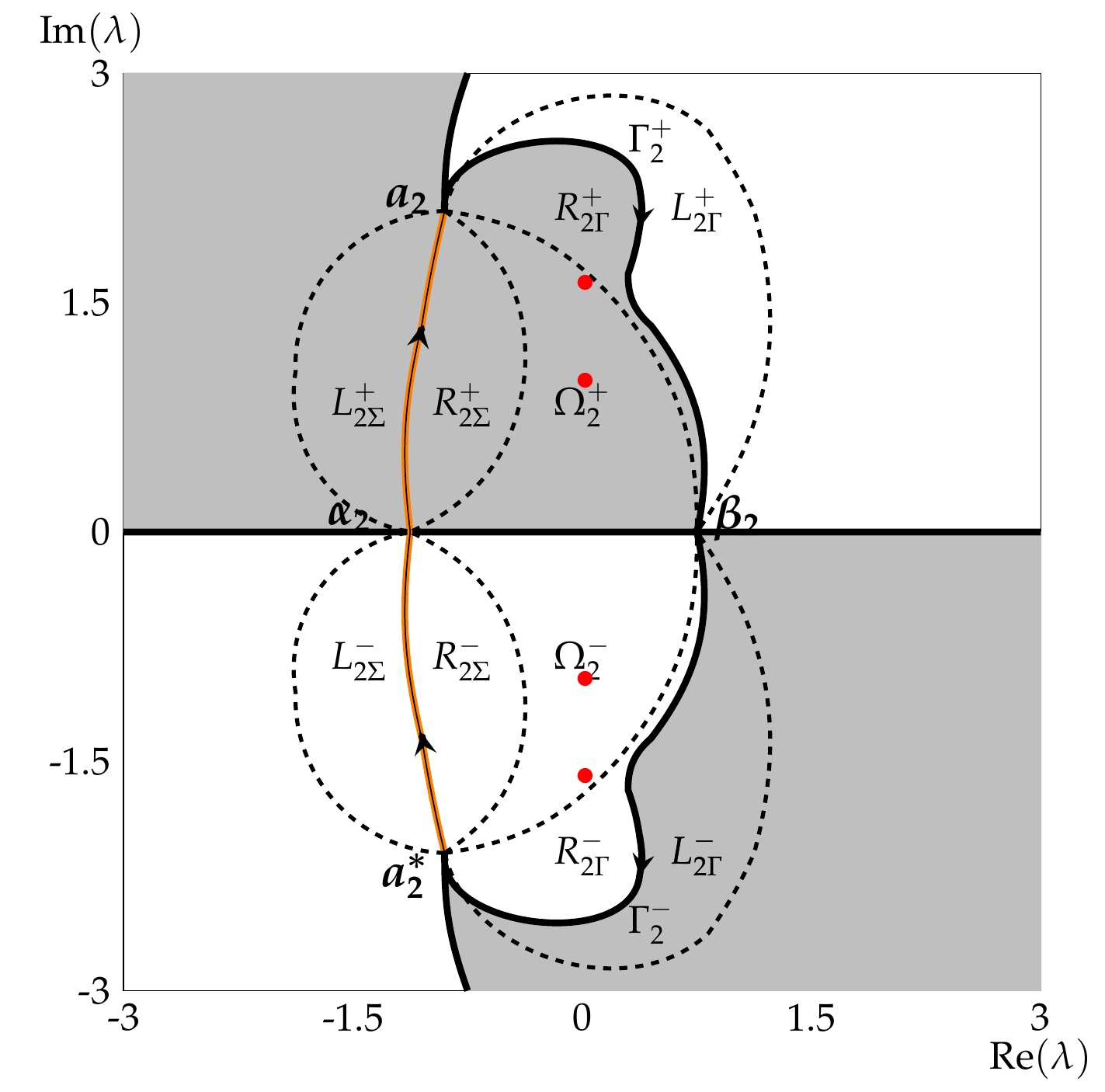}
\centering
\includegraphics[width=0.45\textwidth]{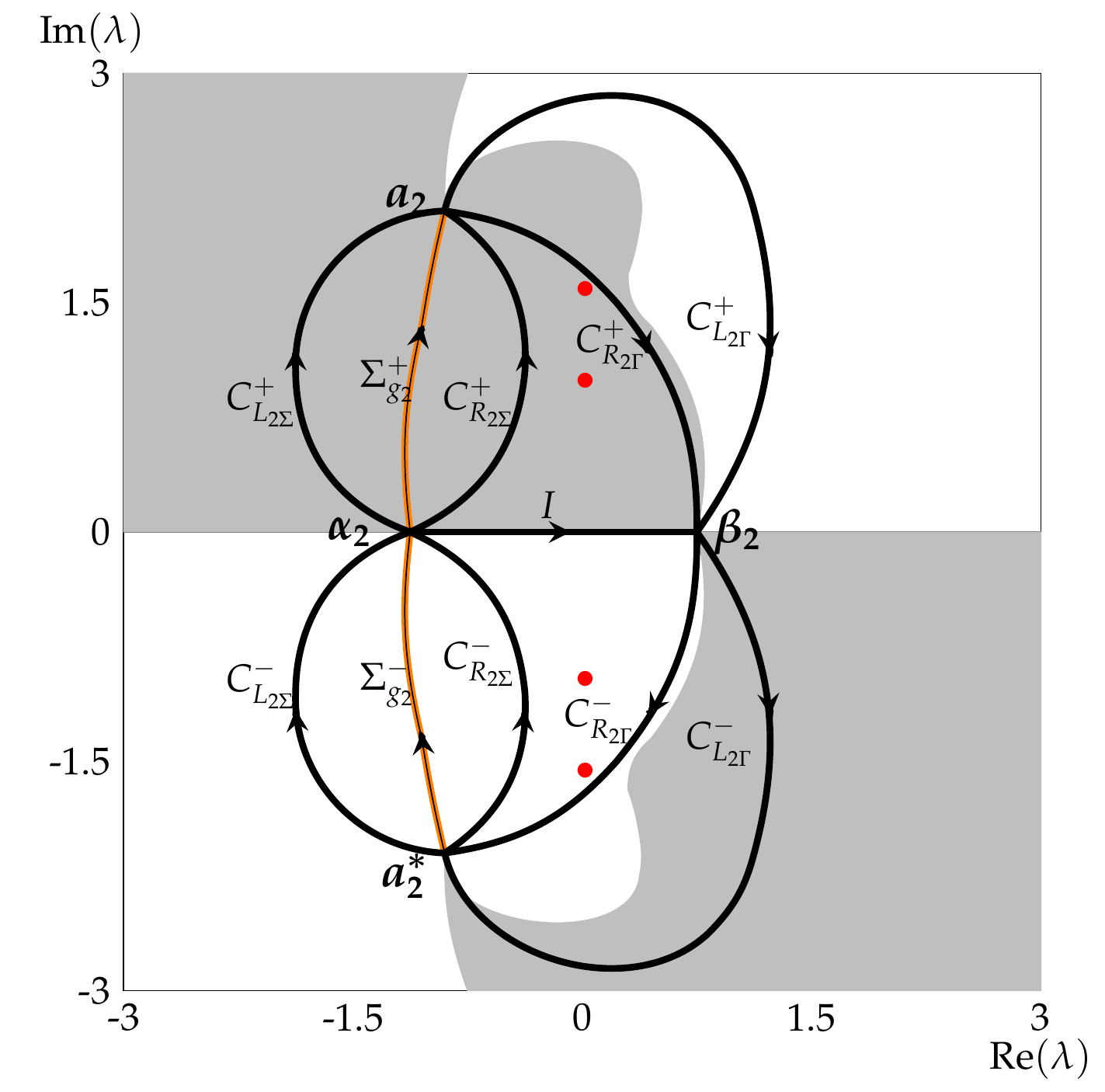}
\caption{The contour plot of ${\Im}(h_2(\lambda))$ in the genus-zero region with the parameters $\chi=\frac{4}{5}, \tau=\frac{1}{3}$ and $k=2$, where ${\Im}(h_2(\lambda))<0$ (shaded) and ${\Im}(h_2(\lambda))>0$ (unshaded), the red dots are the singularities $\lambda=\pm\ii, \lambda=\pm2\ii$. The left one gives the original contour and the right one is the corresponding contour deformation. }
\label{fig:genus-zero}
\end{figure}

With the sign chart in Fig. \ref{fig:genus-zero}, we can deform $\mathbf{R}(\lambda; \chi, \tau)$ by the Deift-Zhou nonlinear steepest descent method.

Set
\begin{equation}
\mathbf{S}_{2}(\lambda; \chi, \tau):=\left\{\begin{aligned}&\mathbf{R}(\lambda; \chi, \tau)\ee^{-\ii n\vartheta(\lambda; \chi, \tau)\sigma_3}\mathbf{Q}_{c}\ee^{\ii n\vartheta(\lambda; \chi, \tau)\sigma_3},\quad &\lambda\in D_0\cap\left(D_{2}^{+}\cup D_{2}^{-}\right)^{c},\\
&\mathbf{R}(\lambda; \chi, \tau),\quad &\text{otherwise},
\end{aligned}\right.
\end{equation}
where $D_{2}^{+}=R_{2\Sigma}^{+}\cup \Omega_{2}^{+}\cup R_{2\Gamma}^{+}, D_{2}^{-}=R_{2\Sigma}^{-}\cup \Omega_{2}^{-}\cup R_{2\Gamma}^{-},$
then the jump condition about $\mathbf{S}_{2}(\lambda; \chi, \tau)$ changes into the boundary of $D_{2}^{+}\cup D_{2}^{-}$. Moreover, set
\begin{equation}\label{eq:S-no}
\begin{aligned}
\mathbf{T}_{2}(\lambda; \chi, \tau):&=\mathbf{S}_{2}(\lambda; \chi, \tau)\ee^{-\ii n\vartheta(\lambda; \chi, \tau)\sigma_3}
\left(\mathbf{Q}_{R}^{[2]}\right)^{-1}\ee^{\ii n\vartheta(\lambda; \chi, \tau)\sigma_3}\ee^{\ii n g_{2}(\lambda)\sigma_3},\quad &\lambda\in L_{2\Gamma}^{+},\\
\mathbf{T}_{2}(\lambda; \chi, \tau):&=\mathbf{S}_{2}(\lambda; \chi, \tau)\mathbf{Q}_{L}^{[2]}\ee^{-\ii n\vartheta(\lambda; \chi, \tau)\sigma_3}
\mathbf{Q}_{C}^{[2]}\ee^{\ii n\vartheta(\lambda; \chi, \tau)\sigma_3}\ee^{\ii n g_{2}(\lambda)\sigma_3},\quad &\lambda\in R_{2\Gamma}^{+},\\
\mathbf{T}_{2}(\lambda; \chi, \tau):&=\mathbf{S}_{2}(\lambda; \chi, \tau)\mathbf{Q}_{L}^{[2]}\ee^{\ii n g_{2}(\lambda)\sigma_3},\quad &\lambda\in \Omega_{2}^{+},\\
\mathbf{T}_{2}(\lambda; \chi, \tau):&=\mathbf{S}_{2}(\lambda; \chi, \tau)\mathbf{Q}_{L}^{[2]}\ee^{-\ii n\vartheta(\lambda; \chi, \tau)\sigma_3}\mathbf{Q}_{L}^{[3]}\ee^{\ii n\vartheta(\lambda; \chi, \tau)\sigma_3}\ee^{\ii n g_{2}(\lambda)\sigma_3},\quad &\lambda\in R_{2\Sigma}^{+},\\
\mathbf{T}_{2}(\lambda; \chi, \tau):&=\mathbf{S}_{2}(\lambda; \chi, \tau)\ee^{-\ii n\vartheta(\lambda; \chi, \tau)\sigma_3}
\left(\mathbf{Q}_{R}^{[3]}\right)^{-1}\ee^{\ii n\vartheta(\lambda; \chi, \tau)\sigma_3}\ee^{\ii n g_{2}(\lambda)\sigma_3},\quad &\lambda\in L_{2\Sigma}^{+},\\
\mathbf{T}_{2}(\lambda; \chi, \tau):&=\mathbf{S}_{2}(\lambda; \chi, \tau)\ee^{-\ii n\vartheta(\lambda; \chi, \tau)\sigma_3}
\left(\mathbf{Q}_{R}^{[1]}\right)^{-1}\ee^{\ii n\vartheta(\lambda; \chi, \tau)\sigma_3}\ee^{\ii n g_{2}(\lambda)\sigma_3},\quad &\lambda\in L_{2\Gamma}^{-},\\
\mathbf{T}_{2}(\lambda; \chi, \tau):&=\mathbf{S}_{2}(\lambda; \chi, \tau)\mathbf{Q}_{L}^{[1]}\ee^{-\ii n\vartheta(\lambda; \chi, \tau)\sigma_3}
\mathbf{Q}_{C}^{[1]}
\ee^{\ii n\vartheta(\lambda; \chi, \tau)\sigma_3}\ee^{\ii n g_{2}(\lambda)\sigma_3},\quad &\lambda\in R_{2\Gamma}^{-},\\
\mathbf{T}_{2}(\lambda; \chi, \tau):&=\mathbf{S}_{2}(\lambda; \chi, \tau)\mathbf{Q}_{L}^{[1]}\ee^{\ii n g_{2}(\lambda)\sigma_3},\quad &\lambda\in \Omega_{2}^{-},\\
\mathbf{T}_{2}(\lambda; \chi, \tau):&=\mathbf{S}_{2}(\lambda; \chi, \tau)\mathbf{Q}_{L}^{[1]}\ee^{-\ii n\vartheta(\lambda; \chi, \tau)\sigma_3}
\mathbf{Q}_{L}^{[4]}\ee^{\ii n\vartheta(\lambda; \chi, \tau)\sigma_3}\ee^{\ii n g_{2}(\lambda)\sigma_3},\quad &\lambda\in R_{2\Sigma}^{-},\\
\mathbf{T}_{2}(\lambda; \chi, \tau):&=\mathbf{S}_{2}(\lambda; \chi, \tau)\ee^{-\ii n\vartheta(\lambda; \chi, \tau)\sigma_3}
\left(\mathbf{Q}_{R}^{[4]}\right)^{-1}\ee^{\ii n\vartheta(\lambda; \chi, \tau)\sigma_3}\ee^{\ii n g_{2}(\lambda)\sigma_3},\quad &\lambda\in L_{2\Sigma}^{-},
\end{aligned}
\end{equation}
in other regions, we set $\mathbf{T}_{2}(\lambda; \chi, \tau):=\mathbf{S}_{2}(\lambda; \chi, \tau)\ee^{\ii n g_{2}(\lambda)\sigma_3}$. Then the jump conditions about $\mathbf{T}_{2}(\lambda; \chi, \tau)$ can be changed into
\begin{equation}\label{eq:jump-genus-zero}
\begin{aligned}
\mathbf{T}_{2, +}(\lambda; \chi, \tau)&=\mathbf{T}_{2, -}(\lambda; \chi, \tau)\ee^{-\ii nh_2(\lambda)\sigma_3}
\mathbf{Q}_{R}^{[2]}\ee^{\ii nh_2(\lambda)\sigma_3},\quad &\lambda\in C_{L_{2\Gamma}}^{+},\\
\mathbf{T}_{2, +}(\lambda; \chi, \tau)&=\mathbf{T}_{2, -}(\lambda; \chi, \tau)\ee^{-\ii nh_2(\lambda)\sigma_3}\mathbf{Q}_{C}^{[2]}\ee^{\ii nh_2(\lambda)\sigma_3},\quad &\lambda\in C_{R_{2\Gamma}}^{+},\\
\mathbf{T}_{2, +}(\lambda; \chi, \tau)&=\mathbf{T}_{2, -}(\lambda; \chi, \tau)\ee^{-\ii nh_2(\lambda)\sigma_3}\mathbf{Q}_{L}^{[3]}\ee^{\ii nh_2(\lambda)\sigma_3},\quad &\lambda\in C_{R_{2\Sigma}}^{+},\\
\mathbf{T}_{2, +}(\lambda; \chi, \tau)&=\mathbf{T}_{2, -}(\lambda; \chi, \tau)\ee^{-\ii nh_2(\lambda)\sigma_3}\mathbf{Q}_{R}^{[3]}\ee^{\ii nh_2(\lambda)\sigma_3},\quad &\lambda\in C_{L_{2\Sigma}}^{+},\\
\mathbf{T}_{2, +}(\lambda; \chi, \tau)&=\mathbf{T}_{2, -}(\lambda; \chi, \tau)\ee^{-\ii nh_2(\lambda)\sigma_3}\mathbf{Q}_{R}^{[1]}\ee^{\ii nh_2(\lambda)\sigma_3},\quad &\lambda\in C_{L_{2\Gamma}}^{-},\\
\mathbf{T}_{2, +}(\lambda; \chi, \tau)&=\mathbf{T}_{2, -}(\lambda; \chi, \tau)\ee^{-\ii nh_2(\lambda)\sigma_3}\mathbf{Q}_{C}^{[1]}\ee^{\ii nh_2(\lambda)\sigma_3},\quad &\lambda\in C_{R_{2\Gamma}}^{-},\\
\mathbf{T}_{2, +}(\lambda; \chi, \tau)&=\mathbf{T}_{2, -}(\lambda; \chi, \tau)\ee^{-\ii nh_2(\lambda)\sigma_3}\mathbf{Q}_{L}^{[4]}\ee^{\ii nh_2(\lambda)\sigma_3},\quad &\lambda\in C_{R_{2\Sigma}}^{-},\\
\mathbf{T}_{2, +}(\lambda; \chi, \tau)&=\mathbf{T}_{2, -}(\lambda; \chi, \tau)\ee^{-\ii nh_2(\lambda)\sigma_3}\mathbf{Q}_{R}^{[4]}\ee^{\ii nh_2(\lambda)\sigma_3},\quad &\lambda\in C_{L_{2\Sigma}}^{-},\\
\mathbf{T}_{2, +}(\lambda; \chi, \tau)&=\mathbf{T}_{2, -}(\lambda; \chi, \tau)\begin{bmatrix}0&\ee^{-\ii n\left(2\vartheta(\lambda; \chi, \tau)+g_{2,+}(\lambda)+g_{2,-}(\lambda)\right)}\\
-\ee^{\ii n\left(2\vartheta(\lambda; \chi, \tau)+g_{2,+}(\lambda)+g_{2,-}(\lambda)\right)}&0
\end{bmatrix},&\lambda\in \Sigma_{g_2}^{+}\cup\Sigma_{g_2}^{-},\\
\mathbf{T}_{2, +}(\lambda; \chi, \tau)&=\mathbf{T}_{2, -}(\lambda; \chi, \tau)2^{\sigma_3},&\lambda\in I.
\end{aligned}
\end{equation}
When $n$ is large, the jump matrices on $C_{L_{2\Gamma}}^{\pm}, C_{R_{2\Gamma}}^{\pm}, C_{L_{2\Sigma}}^{\pm}$ and $C_{R_{2\Sigma}}^{\pm}$ will converge to the identity exponentially, and when $\lambda\in\Sigma_{g_2}^{\pm}\cup I$, the jump matrices become the constants due to the constraint condition in Eq.\eqref{eq:g2}, which is a solvable RHP except for some singularities. Next, we will construct its parametrix from the constant jump matrices.
\subsection{Parametrix construction}
From the constant jump matrices when $\lambda\in \Sigma_{g_2}^{\pm}$ and $\lambda\in I$, we can set the outer parametrix as
\begin{equation}\label{eq:T_2out}
\dot{\mathbf{T}}_{2}^{\rm out}(\lambda; \chi, \tau):=\mathbf{K}_{2}(\lambda; \chi, \tau)\left(\frac{\lambda-\alpha_2}{\lambda-\beta_2}\right)^{\ii p\sigma_3},\quad p=\frac{\log(2)}{2\pi},\quad \lambda\in\mathbb{C}\setminus\left(\Sigma_{g_2}^{\pm}\cup I\right),
\end{equation}
where the unknown matrix $\mathbf{K}_{2}(\lambda; \chi, \tau)$ should satisfy the following jump condition to match $\mathbf{T}_{2}(\lambda; \chi, \tau)$ on the arcs $\Sigma_{g_2}^{\pm}$,
\begin{equation}
\mathbf{K}_{2,+}(\lambda; \chi, \tau)=\mathbf{K}_{2,-}(\lambda; \chi, \tau)\left(\frac{\lambda-\alpha_2}{\lambda-\beta_2}\right)^{\ii p\sigma_3}\begin{bmatrix}0&\ee^{-\ii n\kappa_2}\\
-\ee^{\ii n\kappa_2}&0
\end{bmatrix}\left(\frac{\lambda-\alpha_2}{\lambda-\beta_2}\right)^{-\ii p\sigma_3},\quad \lambda\in\Sigma_{g_2}^{\pm}.
\end{equation}
It is easy to see that the jump matrix about $\mathbf{K}_{2}(\lambda; \chi, \tau)$ is a function with respect to $\lambda$, thus we should introduce another new matrix $\mathbf{G}_{2}(\lambda; \chi, \tau)$ to convert the jump matrix as a constant such that we can get its solution in an explicit formula.

Define
\begin{equation}
\mathbf{K}_{2}(\lambda; \chi, \tau)=\mathbf{G}_{2}(\lambda; \chi, \tau)\ee^{-(\ii k_2(\lambda)-\ii k_{2}(\infty))\sigma_3},
\end{equation}
where $k_2(\lambda)$ satisfies
\begin{equation}
k_{2,+}(\lambda)+k_{2,-}(\lambda)=2p\log\left(\frac{\lambda-\alpha_2}{\lambda-\beta_2}\right),\quad \lambda\in\Sigma_{g_2}^{\pm}.
\end{equation}
With the Plemelj formula and the generalized residue theorem, $k_2(\lambda)$ can be solved as
\begin{equation}
k_2(\lambda)=p\mathcal{R}_{2}(\lambda)\int_{\alpha_2}^{\beta_2}\frac{1}{\mathcal{R}_{2}(\xi)(\xi-\lambda)}d\xi+p\left(\frac{\lambda-\alpha_2}{\lambda-\beta_2}\right),
\end{equation}
especially, when $\lambda\to\infty$, we have
\begin{equation}\label{eq:mu-genus-zero}
 k_{2}(\infty)=\lim\limits_{\lambda\to\infty}k_2(\lambda)=-p\int_{\alpha_2}^{\beta_2}\frac{1}{\mathcal{R}_{2}(\xi)}d\xi.
\end{equation}
As a result, the jump matrix related by $\mathbf{G}_{2}(\lambda; \chi, \tau)$ is
\begin{equation}
\mathbf{G}_{2,+}(\lambda; \chi, \tau)=\mathbf{G}_{2,-}(\lambda; \chi, \tau)\begin{bmatrix}0&\ee^{-\ii n\kappa_2+2\ii  k_{2}(\infty)}\\
-\ee^{\ii n\kappa_2-2\ii k_{2}(\infty)}&0
\end{bmatrix}.
\end{equation}
Then we can obtain the solution about $\mathbf{G}_{2}(\lambda; \chi, \tau)$ by the linear algebra and Plemelj formula,
\begin{equation}
\mathbf{G}_{2}(\lambda; \chi, \tau)=\ee^{-\frac{\ii \pi}{4}\sigma_3}\ee^{\frac{2\ii k_{2}(\infty)-\ii n\kappa_2}{2}\sigma_3}\mathbf{Q}_c^{-1}\left(\frac{\lambda-a_2}{\lambda-a_2^*}\right)^{\frac{1}{4}\sigma_3}\mathbf{Q}_c\ee^{\frac{\ii n\kappa_2-2\ii k_{2}(\infty)}{2}\sigma_3}\ee^{\frac{\ii \pi}{4}\sigma_3},
\end{equation}
thus the outer parametrix $\dot{\mathbf{T}}_{2}^{\rm out}(\lambda; \chi, \tau)$ is given by \eqref{eq:T_2out}.
We hope that the parametrix $\dot{\mathbf{T}}_{2}^{\rm out}(\lambda; \chi, \tau)$ can be a good approximation to $\mathbf{T}_{2}(\lambda; \chi, \tau)$, but unfortunately, the outer parametrix $\dot{\mathbf{T}}_{2}^{\rm out}(\lambda; \chi, \tau)$ has singularities at $\lambda=\alpha_2, \lambda=\beta_2, \lambda=a_2, \lambda=a_2^*$. Thus we should consider the local analysis in the neighborhood of these four singularities. Firstly, we give the inner parametrix construction at these real singularities $\lambda=\alpha_2$ and $\lambda=\beta_2$. Similar to the inner parametrix construction in the algebraic-decay region, we set $D_{\alpha_2}(\delta)$ and $D_{\beta_2}(\delta)$ as the small disks centered at $\alpha_2$ and $\beta_2$ with the radius $\delta$. We define the conformal mappings $f_{\alpha_2}(\lambda)\equiv f_{\alpha_2}(\lambda; \chi, \tau)$ and $f_{\beta_2}(\lambda)\equiv f_{\beta_2}(\lambda; \chi, \tau)$ in the neighborhood of $\lambda=\alpha_2$ and $\lambda=\beta_2$ respectively,
\begin{equation}
f_{\alpha_2}^2(\lambda)=2\left(h_2(\alpha_2)-h_2(\lambda)\right), \quad f_{\beta_2}^2(\lambda)=2\left(h_2(\lambda)-h_2(\beta_2)\right).
\end{equation}
In this region, we still suppose $f'_{\alpha_2}(\alpha_2)=-\sqrt{-h_{2}''(\alpha_2)}<0, f'_{\beta_2}(\beta_2)=\sqrt{h_2''(\beta_2)}>0$. It should be noticed that at the point $\lambda=\alpha_2$, $h_2(\lambda)$ is discontinuous, for convenience, we set the value of $h_2(\alpha_2)$ in the right of $\Sigma_{g_2}^{\pm}$, that is $h_2(\alpha_2):=h_{2,-}(\alpha_2)$. Next, we introduce two matrices $\mathbf{U}_{\alpha_2}(\lambda; \chi, \tau)$ and $\mathbf{U}_{\beta_2}(\lambda; \chi, \tau)$ and new variables $\zeta_{\alpha_2}=n^{1/2}f_{\alpha_2}(\lambda)$ and $\zeta_{\beta_2}=n^{1/2}f_{\beta_2}(\lambda)$. In $D_{\beta_2}(\delta)$, define $\mathbf{U}_{\beta_2}(\lambda; \chi, \tau)$ as
\begin{equation}
\mathbf{U}_{\beta_2}(\lambda; \chi, \tau):=\mathbf{T}_{2}(\lambda; \chi, \tau)\ee^{-\ii nh_{2}(\beta_2)\sigma_3},\quad \lambda\in D_{\beta_2}(\delta),
\end{equation}
then the jump matrices satisfied by $\mathbf{U}_{\beta_2}(\lambda; \chi, \tau)$ change into five rays, which can be seen from Fig. \ref{fig:rays-para} with the variable $\zeta=\zeta_{\beta_2}$. Similarly, the inner parametrix $\dot{\mathbf{T}}_{2}^{\beta_2}(\lambda; \chi, \tau)$ can be defined as
\begin{multline}
\dot{\mathbf{T}}_{2}^{\beta_2}(\lambda; \chi, \tau)=\mathbf{K}_{2}(\lambda; \chi, \tau)n^{\ii p\sigma_3/2}\ee^{-\ii nh_2(\beta_2)\sigma_3}\mathbf{H}_{\beta_2}(\lambda; \chi, \tau)\mathbf{U}_{\beta_2}(\zeta_{\beta_2})\ee^{\ii nh_2(\beta_2)\sigma_3},\quad \lambda\in D_{\beta_2}(\delta),\\
\mathbf{H}_{\beta_2}(\lambda; \chi, \tau):=(\lambda-\alpha_2)^{\ii p\sigma_3}\left(\frac{f_{\beta_2}(\lambda; \chi, \tau)}{\lambda-\beta_2}\right)^{\ii p\sigma_3}.
\end{multline}
The inner parametrix construction at $\lambda=\alpha_2$ becomes a little more complex due to the cut at this point. Thus we can define $\mathbf{U}_{\alpha}(\lambda; \chi, \tau)$ as
\begin{equation}
\mathbf{U}_{\alpha_2}(\lambda; \chi, \tau):=\left\{\begin{aligned}&\mathbf{T}_{2}(\lambda; \chi, \tau)\ii^{\sigma_3}\ee^{-\ii nh_{2}(\alpha_2)\sigma_3}{\ii\sigma_2},&\quad\lambda\in D_{\alpha_2,-}(\delta),\\
&\mathbf{T}_{2}(\lambda; \chi, \tau)\ee^{-\ii n\frac{\kappa_2}{2}\sigma_3}(-\ii\sigma_2)\ee^{\ii n\frac{\kappa_2}{2}\sigma_3}\ii^{\sigma_3}\ee^{-\ii n h_{2}(\alpha_2)\sigma_3}\ii\sigma_2,&\quad\lambda\in D_{\alpha_2, +}(\delta),
\end{aligned}\right.
\end{equation}
where $D_{\alpha_2,-}(\delta)$ and $D_{\alpha_{2},+}(\delta)$ denote the right and the left hand side of $\Sigma_{g_2}^{+}\cup \Sigma_{g_2}^{-}$ in $D_{\alpha_2}(\delta)$.

Correspondingly, the inner parametrix in $D_{\alpha_2}(\delta)$ can be given as
\begin{equation}
\dot{\mathbf{T}}_{2}^{\alpha_2}(\lambda; \chi, \tau){:=}\left\{\begin{aligned}\mathbf{K}_{2}(\lambda; \chi, \tau)\mathbf{H}_{\alpha_2}(\lambda; \chi, \tau)n^{\ii p\sigma_3/2}\ee^{\ii nh_{2}(\alpha_2)\sigma_3}\ii^{-\sigma_3}&\mathbf{U}_{\alpha_2}(\zeta_{\alpha_2})(-\ii\sigma_2)\ii^{-\sigma_3}\ee^{\ii n h_{2}(\alpha_2)\sigma_3},\,\lambda\in D_{\alpha_2,-}(\delta),\\
\mathbf{K}_{2}(\lambda; \chi, \tau)\mathbf{H}_{\alpha_2}(\lambda; \chi, \tau)n^{\ii p\sigma_3/2}\ee^{\ii nh_{2}(\alpha_2)\sigma_3}\ii^{-\sigma_3}&\mathbf{U}_{\alpha_2}(\zeta_{\alpha_2})(-\ii\sigma_2)\ii^{-\sigma_3}\ee^{\ii n h_{2}(\alpha_2)\sigma_3}\\ &\cdot\ee^{-\ii n\frac{\kappa_2}{2}\sigma_3}(\ii\sigma_2)\ee^{\ii n\frac{\kappa_2}{2}\sigma_3},\,\lambda\in D_{\alpha_2,+}(\delta),\\
\mathbf{H}_{\alpha_2}(\lambda; \chi, \tau):&=\left(\frac{\alpha_2-\lambda}{f_{\alpha_2}(\lambda; \chi, \tau)}\right)^{\ii p\sigma_3}\left(\beta_2-\lambda\right)^{-\ii p\sigma_3}(\ii\sigma_2).
\end{aligned}\right.
\end{equation}
Next, we will consider the inner parametrix at the local points $a_2$ and $a_2^*$. Let $D_{a_2}(\delta)$ and $D_{a_2^*}(\delta)$ be the small disks centered at $a_2$ and $a_2^*$ with the radius $\delta$ respectively. Since $\lambda=a_2$ and $\lambda=a_2^*$ are the square roots of $h_2'(\lambda; \chi, \tau)$ and $h_2(a_2)=\frac{\kappa_2}{2}$, we can construct a univalent function $f_{a_2}\equiv f_{a_2}(\lambda; \chi, \tau)$ in $D_{a_2}(\delta)$ as
\begin{equation}
f_{a_2}(\lambda)^3=-\left(2h(\lambda)-\kappa_2\right)^2,
\end{equation}
and introduce a new variable $\zeta_{a_2}=n^{\frac{2}{3}}f_{a_2}(\lambda)$.
In the neighborhood of $D_{a_2}(\delta)$, we can set a local $\pmb{\mathcal{U}}_{a_2}(\lambda; \chi, \tau)$ as
\begin{equation}
\pmb{\mathcal{U}}_{a_2}(\lambda; \chi,\tau):=\left\{\begin{aligned}&\mathbf{T}_{2}(\lambda; \chi, \tau)\ee^{-\frac{\ii n\kappa_2}{2}\sigma_3},&\quad\lambda\in D_{a_2}(\delta)\setminus \left(D_{a_2}(\delta)\cap \Omega_{2}^{+}\right),\\
&\mathbf{T}_{2}(\lambda; \chi, \tau)\ee^{-\ii nh_2(\lambda)\sigma_3}\mathbf{Q}_{C}^{[2]}\ee^{\ii nh_2(\lambda)\sigma_3}\ee^{-\frac{\ii n\kappa_2}{2}\sigma_3},&\quad\lambda \in D_{a_2}(\delta)\cap \Omega_{2}.
\end{aligned}\right.
\end{equation}
Then $\pmb{\mathcal{U}}_{a_2}(\zeta_{a_2})$ can be given by the Airy function satisfying the following RHP.
\begin{rhp}\label{rhp-airy}
(Airy parametrix) Seek a $2\times 2$ matrix $\pmb{\mathcal{U}}(\zeta)$ with the following properties:
\begin{itemize}
\item {\bf Analyticity:} $\pmb{\mathcal{U}}(\zeta)$ is analytic except for the four rays in Fig. \ref{four:rays-para}.
\item {\bf Jump condition:} $\pmb{\mathcal{U}}_{+}(\zeta)=\pmb{\mathcal{U}}_{-}(\zeta)\pmb{\mathcal{V}}^{{\rm AI}}(\zeta)$, where $\pmb{\mathcal{V}}^{\rm AI}(\zeta)$ is given in Fig.\ref{four:rays-para}.
\item {\bf Normalization:} $\pmb{\mathcal{U}}(\zeta)\ee^{-\frac{\ii\pi}{4}\sigma_3}\mathbf{Q}_c^{-1}\zeta^{-\sigma_3/4}\to\mathbb{I}$ as $\zeta\to\infty$.
\end{itemize}
\end{rhp}
\begin{figure}[ht]
\centering
\includegraphics[width=0.45\textwidth]{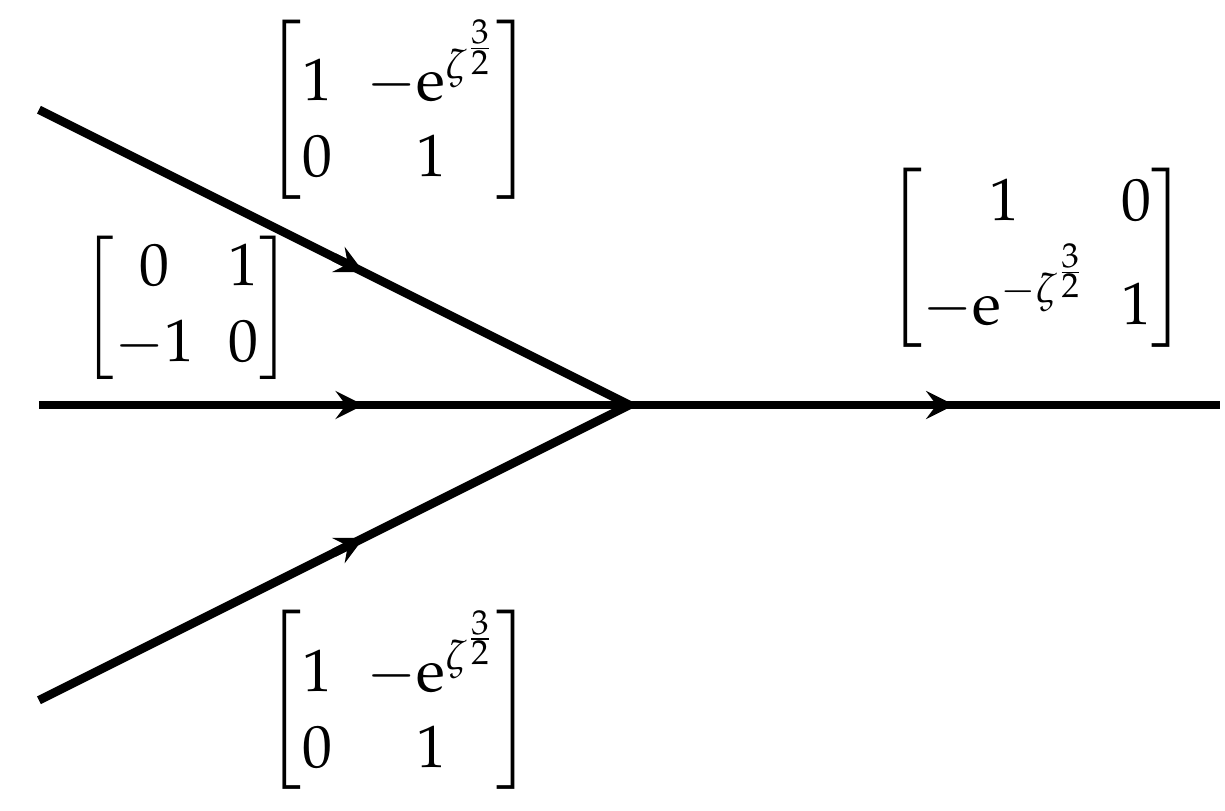}
\caption{The jump conditions about $\pmb{\mathcal{U}}=\pmb{\mathcal{U}}_{a_2}(\lambda; \chi, \tau)$ with the new conformal variables $\zeta_{a_2}$.}
\label{four:rays-para}
\end{figure}
In appendix \ref{app:Airy}, we give a detailed description about the RHP to Airy function, and we give an explicit solution for the RHP.

By using the Airy function, we can define the inner parametrix in $D_{a_2}(\delta)\cup D_{a_2^*}(\delta)$. Before doing it, we first define a new matrix $\pmb{\mathcal{H}}_{2,a_2}(\lambda; \chi, \tau)$  as
\begin{multline}\label{eq:h2a2}
\pmb{\mathcal{H}}_{2,a_2}(\lambda; \chi, \tau){:}{=}\dot{\mathbf{T}}^{\rm out}_{2}(\lambda; \chi, \tau)\ee^{\ii (w(\lambda){-} k_{2}(\infty))\sigma_3}\ee^{{-}\frac{\ii\pi}{4}\sigma_3}\ee^{\frac{2\ii k_{2}(\infty){-}\ii n\kappa_2}{2}\sigma_3}\mathbf{Q}_c^{-1}f_{a_2}(\lambda)^{{-}\frac{1}{4}\sigma_3},\\ w(\lambda){:}{=}p\mathcal{R}_{2}(\lambda)\int_{\alpha_2}^{\beta_2}\frac{1}{\mathcal{R}_{2}(\xi)(\xi{-}\lambda)}d\xi,
\end{multline}
it is clear that $\pmb{\mathcal{H}}_{2,a_2}(\lambda; \chi, \tau)$ is analytic in $D_{a_2}(\delta)$ from the definition of conformal map $\lambda\to f_{a_2}(\lambda)$ and the expression of $\dot{\mathbf{T}}_{2}^{\rm out}(\lambda; \chi, \tau)$. Then we can use the Airy function $\pmb{\mathcal{U}}_{a_2}(\lambda; \chi, \tau)$ and $\pmb{\mathcal{H}}_{2,a_2}(\lambda; \chi, \tau)$ to define the inner parametrix $\dot{\mathbf{T}}_{2}^{a_2}(\lambda; \chi, \tau)$,
\begin{equation}
\dot{\mathbf{T}}_{2}^{a_2}(\lambda; \chi, \tau):=\left\{\begin{aligned}&\pmb{\mathcal{H}}_{2,a_2}(\lambda; \chi, \tau)n^{-\frac{1}{6}\sigma_3}\pmb{\mathcal{U}}_{a_2}(n^{\frac{2}{3}}f_{a_2}(\lambda))\ee^{\frac{\ii n}{2}\kappa_2\sigma_3}\ee^{-\ii w(\lambda)\sigma_3},\quad\lambda\in D_{a_2}(\delta)\setminus(D_{a_2}(\delta)\cap\Omega_{2}),\\
&\pmb{\mathcal{H}}_{2,a_2}(\lambda; \chi, \tau)n^{-\frac{1}{6}\sigma_3}\pmb{\mathcal{U}}_{a_2}(n^{\frac{2}{3}}f_{a_2}(\lambda))\ee^{\frac{\ii n}{2}\kappa_2\sigma_3}\ee^{-\ii w(\lambda)\sigma_3}\ee^{-\ii nh_2(\lambda)\sigma_3}\left(\mathbf{Q}_{C}^{[2]}\right)^{-1}\ee^{\ii nh_2(\lambda)\sigma_3},\\
&\qquad\qquad\qquad\qquad\qquad\qquad\qquad\qquad\qquad\qquad\qquad\qquad\qquad\qquad\qquad\,\lambda\in D_{a_2}(\delta)\cap\Omega_{2}.\end{aligned}\right.
\end{equation}
In $D_{a_2^*}(\delta)$, we can define the local approximation depending on the symmetry $\mathbf{T}_{2}(\lambda; \chi, \tau)=\sigma_2\mathbf{T}(\lambda^*; \chi, \tau)^*\sigma_2$. Then the global parametrix constructing about $\mathbf{T}_{2}(\lambda; \chi, \tau)$ can be set as
\begin{equation}
\dot{\mathbf{T}}_{2}(\lambda; \chi, \tau):=\left\{\begin{aligned}&\dot{\mathbf{T}}_{2}^{\alpha_2}(\lambda; \chi, \tau), &\quad\lambda\in D_{\alpha_2}(\delta),\\
&\dot{\mathbf{T}}_{2}^{\beta_2}(\lambda; \chi, \tau),&\quad\lambda\in D_{\beta_2}(\delta),\\
&\dot{\mathbf{T}}_{2}^{a_2}(\lambda; \chi, \tau),&\quad\lambda\in D_{a_2}(\delta),\\
&\sigma_2\dot{\mathbf{T}}_{2}^{a_2}(\lambda^*; \chi, \tau)^*\sigma_2&\quad\lambda\in D_{a_2^*}(\delta),\\
&\dot{\mathbf{T}}_{2}^{\rm out}(\lambda; \chi, \tau),&\quad\lambda\in \mathbb{C}\setminus\left(\overline{D_{\alpha_2}(\delta)\cup D_{\beta_2}(\delta)\cup D_{a_2}(\delta)\cup D_{a_2^*}(\delta)}\cup \Sigma_{g_2}^{+}\cup \Sigma_{g_2}^{-}\cup I\right).\end{aligned}\right.
\end{equation}
Once the above parametrix is constructed, then we need to give the error analysis between $\mathbf{T}_{2}(\lambda; \chi, \tau)$ and $\dot{\mathbf{T}}_{2}(\lambda; \chi, \tau)$.
\subsection{Error analysis}
Set the error function between $\mathbf{T}_{2}(\lambda; \chi, \tau)$ and $\dot{\mathbf{T}}_{2}(\lambda; \chi, \tau)$ as
\begin{equation}
\mathcal{E}_{2}(\lambda; \chi, \tau):=\mathbf{T}_{2}(\lambda; \chi, \tau)\left(\dot{\mathbf{T}}_{2}(\lambda; \chi, \tau)\right)^{-1}.
\end{equation}
We denote $\mathbf{V}_{\mathcal{E}_{2}}(\lambda; \chi, \tau)$ as the jump matrix for $\mathcal{E}_{2}(\lambda; \chi, \tau)$ and $\Sigma_{\mathcal{E}_{2}}$ as the jump contours. From the definition of $\dot{\mathbf{T}}_{2}(\lambda; \chi, \tau)$, we can see that the jump arcs of $\mathbf{V}_{\mathcal{E}_{2}}(\lambda; \chi, \tau)$ are shown in Fig. \ref{fig:jump-E-no}.
\begin{figure}[ht]
\centering
\includegraphics[width=0.45\textwidth]{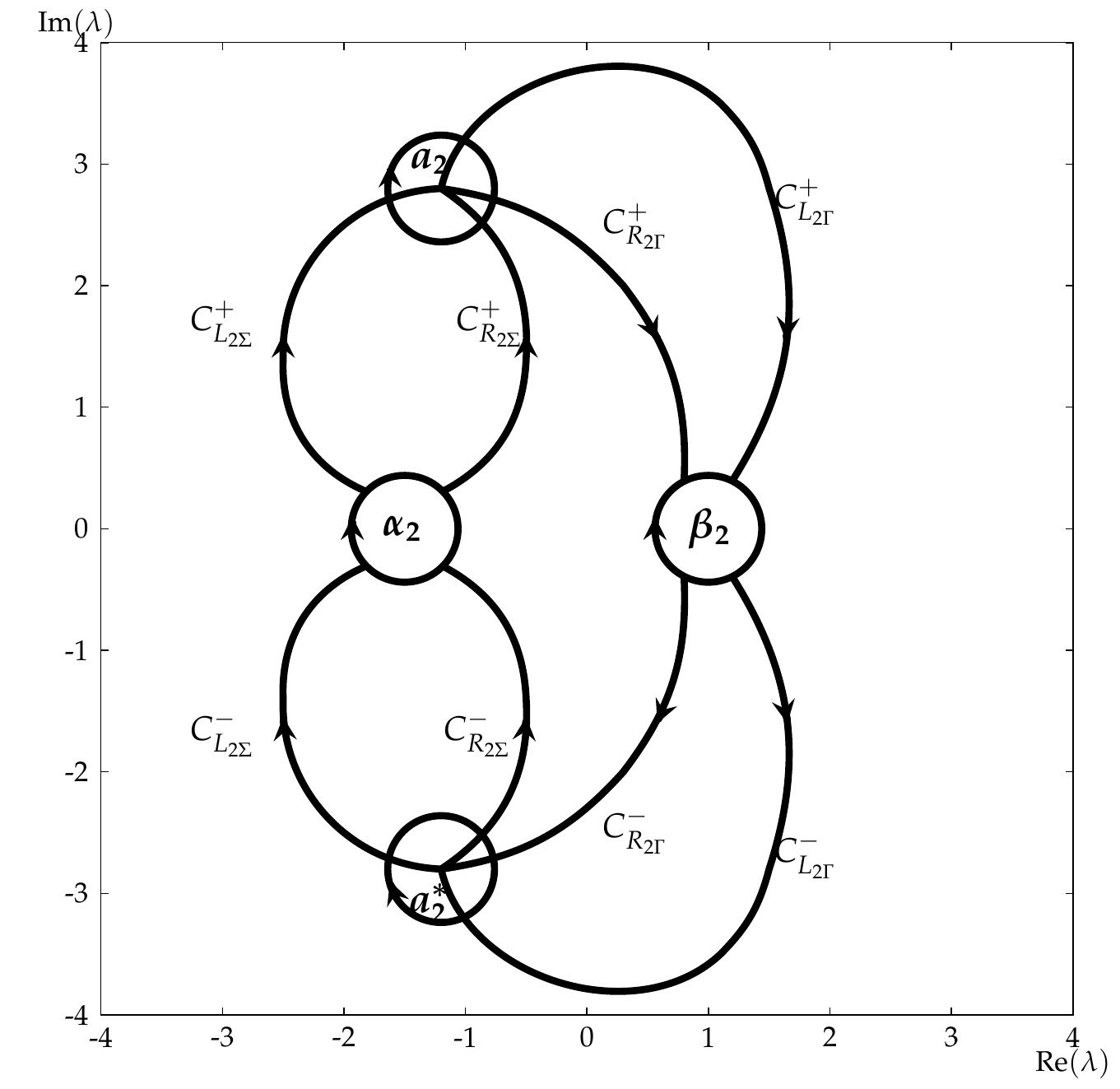}
\caption{The jump conditions about $\mathcal{E}_{2}(\lambda; \chi, \tau)$.}
\label{fig:jump-E-no}
\end{figure}

When $\lambda\in \left(C_{L_{2\Sigma}}^{+}\cup C_{R_{2\Sigma}}^{+}\cup C_{R_{2\Gamma}}^{+}\cup C_{L_{2\Gamma}}^{+}\right)\cap D_{a_2}(\delta)$, the jump conditions of $\mathbf{V}_{\mathcal{E}_{2}}(\lambda; \chi, \tau)$ can be given as
\begin{equation}
\begin{aligned}
&\mathbf{V}_{\mathcal{E}_{2}}(\lambda; \chi, \tau)=\dot{\mathbf{T}}^{a_2}_{2,-}(\lambda; \chi, \tau)\mathbf{T}_{2,-}(\lambda; \chi, \tau)^{-1}\mathbf{T}_{2,+}(\lambda; \chi, \tau)\left(\dot{\mathbf{T}}_{2,+}^{a_2}(\lambda;
\chi, \tau)\right)^{-1}
\\=&\left\{\begin{aligned}&\dot{\mathbf{T}}_{2,+}^{a_{2}}(\lambda; \chi, \tau)\begin{bmatrix}1&0\\
-\ee^{2\ii n h_{2}(\lambda)}\left(1-\ee^{-2\ii w(\lambda)}\right)&1
\end{bmatrix}\left(\dot{\mathbf{T}}_{2,+}^{a_2}(\lambda;
\chi, \tau)\right)^{-1},\quad \lambda\in C_{L_{2\Gamma}}^{+}\cap D_{a_2}(\delta),\\
&\dot{\mathbf{T}}_{2,+}^{a_{2}}(\lambda; \chi, \tau)\begin{bmatrix}1&\frac{1}{2}\ee^{-2\ii n h_{2}(\lambda)}\left(1-\ee^{2\ii w(\lambda)}\right)\\
0&1
\end{bmatrix}\left(\dot{\mathbf{T}}_{2,+}^{a_2}(\lambda;
\chi, \tau)\right)^{-1},\quad \lambda\in C_{R_{2\Gamma}}^{+}\cap D_{a_2}(\delta),\\
&\dot{\mathbf{T}}_{2,+}^{a_{2}}(\lambda; \chi, \tau)\begin{bmatrix}1&-\frac{1}{2}\ee^{-2\ii n h_{2}(\lambda)}\left(1-\ee^{2\ii w(\lambda)}\right)\\
0&1
\end{bmatrix}\left(\dot{\mathbf{T}}_{2,+}^{a_2}(\lambda;
\chi, \tau)\right)^{-1},\quad \lambda\in C_{R_{2\Sigma}}^{+}\cap D_{a_2}(\delta),\\
&\dot{\mathbf{T}}_{2,+}^{a_{2}}(\lambda; \chi, \tau)\begin{bmatrix}1&-\ee^{-2\ii n h_{2}(\lambda)}\left(1-\ee^{2\ii w(\lambda)}\right)\\
0&1
\end{bmatrix}\left(\dot{\mathbf{T}}_{2,+}^{a_2}(\lambda;
\chi, \tau)\right)^{-1},\quad \lambda\in C_{L_{2\Sigma}}^{+}\cap D_{a_2}(\delta).
\end{aligned}\right.
\end{aligned}
\end{equation}
From the definition of $w(\lambda)$ in Eq.\eqref{eq:h2a2}, when $\lambda=a_2, w(a_2)=0$, thus $\left|1-\ee^{\pm2\ii w(\lambda)}\right|$ is bounded in $D_{a_2}(\delta)$. When $n$ is large, $\mathbf{V}_{\mathcal{E}_{2}}(\lambda; \chi, \tau)$ will decay to the identity exponentially.

When $\lambda\in C_{L_{2\Sigma}}^{\pm}\cup C_{R_{2\Sigma}}^{\pm}\cup C_{R_{2\Gamma}}^{\pm}\cup C_{L_{2\Gamma}}^{\pm}\setminus D_{a_2}(\delta)$, from the jump conditions in Eq.\eqref{eq:jump-genus-zero}, $\mathbf{V}_{\mathcal{E}_{2}}(\lambda; \chi, \tau)$ will also decay to the identity exponentially. In the neighborhood of $D_{a_2^*}(\delta)$, we can get a similar result. Thus when $\lambda\in C_{L_{2\Sigma}}^{\pm}\cup C_{R_{2\Sigma}}^{\pm}\cup C_{R_{2\Gamma}}^{\pm}\cup C_{L_{2\Gamma}}^{\pm}$, there exists a positive constant $\mu_2$ such that $\|\mathbf{V}_{\mathcal{E}_{2}}(\lambda; \chi, \tau)-\mathbb{I}\|=\mathcal{O}(\ee^{-\mu_{2}n})$. But when $\lambda\in\partial D_{\alpha_2}(\delta)\cup \partial D_{\beta_2}(\delta)\cup\partial D_{a_2}(\delta)\cup\partial D_{a_2^*}(\delta)$, $\mathbf{V}_{\mathcal{E}_{2}}(\lambda; \chi, \tau)$ exhibits very differently, that is
\begin{equation}
\begin{aligned}
\mathbf{V}_{\mathcal{E}_{2}}(\lambda; \chi, \tau)&=\dot{\mathbf{T}}_{2}^{\alpha_2, \beta_2, a_2, a_2^*}(\lambda; \chi, \tau)\left(\dot{\mathbf{T}}_{2}^{\rm out}(\lambda; \chi, \tau)\right)^{-1},\quad\lambda\in\partial D_{\alpha_2,\beta_2, a_2, a_2^*}(\delta).
\end{aligned}
\end{equation}
If $\lambda\in\partial D_{\alpha_2}(\delta)$ and $\lambda\in\partial D_{\beta_2}(\delta)$, $\mathbf{V}_{\mathcal{E}_{2}}(\lambda; \chi, \tau) $ equals to
\begin{multline}\label{eq:E2alpha2}
\mathbf{V}_{\mathcal{E}_{2}}(\lambda; \chi, \tau)=\pmb{\mathcal{H}}_{2,\alpha_2}(\lambda; \chi, \tau)\mathbf{U}_{\alpha_2}(\zeta_{\alpha_2})\zeta_{\alpha_2}^{\ii p\sigma_3}\pmb{\mathcal{H}}_{2,\alpha_2}(\lambda; \chi, \tau)^{-1},\quad\lambda\in\partial D_{\alpha_2}(\delta),\\
\pmb{\mathcal{H}}_{2,\alpha_2}(\lambda; \chi, \tau):=\mathbf{K}_{2}(\lambda; \chi, \tau)\mathbf{H}_{\alpha_2}(\lambda; \chi, \tau)n^{\ii p\sigma_3/2}\ee^{\ii nh_{2}(\alpha_2)\sigma_3}\ii^{-\sigma_3},
\end{multline}
and
\begin{multline}\label{eq:E2beta2}
\mathbf{V}_{\mathcal{E}_{2}}(\lambda; \chi, \tau)=\pmb{\mathcal{H}}_{2,\beta_2}(\lambda; \chi, \tau)\mathbf{U}_{\beta_2}(\zeta_{\beta_2})\zeta_{\beta_2}^{\ii p\sigma_3}\pmb{\mathcal{H}}_{2,\beta_2}(\lambda; \chi, \tau)^{-1},\quad\lambda\in\partial D_{\beta_2}(\delta),\\
\pmb{\mathcal{H}}_{2,\beta_2}(\lambda; \chi, \tau):=\mathbf{K}_{2}(\lambda; \chi, \tau)\mathbf{H}_{\beta_2}(\lambda; \chi, \tau)n^{\ii p\sigma_3/2}\ee^{-\ii nh_{2}(\beta_2)\sigma_3}.
\end{multline}
When $\lambda\in\partial D_{a_2}(\delta)$ and $\lambda\in\partial D_{a_2^*}(\delta)$, the jump matrices $\mathbf{V}_{\mathcal{E}_{2}}(\lambda; \chi, \tau)$ become
\begin{equation}\label{eq:Ve2-2}
\begin{aligned}
\mathbf{V}_{\mathcal{E}_{2}}(\lambda; \chi, \tau)&=\pmb{\mathcal{H}}_{2,a_2}(\lambda; \chi, \tau)n^{-\frac{1}{6}\sigma_3}\pmb{\mathcal{U}}_{a_2}(\zeta_{a_2})\ee^{-\frac{\ii\pi}{4}\sigma_3}\mathbf{Q}_c^{-1}\zeta_{a_2}^{-\frac{1}{4}\sigma_3}n^{\frac{1}{6}\sigma_3}\pmb{\mathcal{H}}_{2,a_2}(\lambda; \chi, \tau)^{-1},\quad\lambda\in \partial D_{a_2}(\delta),\\
\mathbf{V}_{\mathcal{E}_{2}}(\lambda; \chi, \tau)&{=}\sigma_2\pmb{\mathcal{H}}_{2,a_2}(\lambda^*; \chi, \tau)^*n^{{-}\frac{1}{6}\sigma_3}\pmb{\mathcal{U}}_{a_2}(\zeta^*_{a_2})^*\ee^{\frac{\ii\pi}{4}\sigma_3}\mathbf{Q}_c^{-1}\zeta_{a_2}^{*,-\frac{1}{4}\sigma_3}n^{\frac{1}{6}\sigma_3}\pmb{\mathcal{H}}_{2,a_2}(\lambda^*; \chi, \tau)^{-1,*}\sigma_2,\,\lambda\in \partial D_{a_2^*}(\delta).
\end{aligned}
\end{equation}

Next, we will give the estimation to $\mathbf{V}_{\mathcal{E}_{2}}(\lambda; \chi, \tau)$. With the asymptotics of parabolic cylinder function in section \ref{sec:al-decay}, when $n$ is large, we have
\begin{equation}\label{eq:error-2-1}
\begin{aligned}
\|\mathbf{V}_{\mathcal{E}_2}(\lambda; \chi, \tau)-\mathbb{I}\|=\mathcal{O}(n^{-1/2}),\quad\lambda\in \partial D_{\alpha_2, \beta_2}(\delta).
\end{aligned}
\end{equation}
While for $\lambda\in\partial D_{a_2}(\delta)\cup \partial D_{a_2^*}(\delta)$, the errors of $\|\mathbf{V}_{\mathcal{E}_2}(\lambda; \chi, \tau)-\mathbb{I}\|$ are different, we will give them later. For $\lambda\in\partial D_{a_2}(\delta)$, substitute $\pmb{\mathcal{H}}_{2,a_2}$ in Eq.\eqref{eq:h2a2} and the asymptotic expression $\pmb{\mathcal{U}}_{a_2}(\zeta_{a_2})$ in Eq.\eqref{eq:nor-W} into Eq.\eqref{eq:Ve2-2}, we have
\begin{equation}
\mathbf{V}_{\mathcal{E}_2}(\lambda; \chi, \tau)=\begin{bmatrix}1-\frac{b_{12}\pmb{\mathcal{H}}_{2,a_2,11}\pmb{\mathcal{H}}_{2,a_2,21}}{n^{1/3}\zeta_{a_2}}+\mathcal{O}(n^{-1})&\frac{b_{12}\pmb{\mathcal{H}}_{2,a_2,11}^2}{n^{1/3}\zeta_{a_2}}+\mathcal{O}(n^{-1})\\
-\frac{b_{12}\pmb{\mathcal{H}}_{2,a_2,21}^2}{n^{1/3}\zeta_{a_2}}+\mathcal{O}(n^{-1})&1+\frac{b_{12}\pmb{\mathcal{H}}_{2,a_2,11}\pmb{\mathcal{H}}_{2,a_2,21}}{n^{1/3}\zeta_{a_2}}+\mathcal{O}(n^{-1})
\end{bmatrix}.
\end{equation}
And when $\lambda\in \partial D_{a_2^*}(\delta)$, we can get a similar result by the symmetry condition, thus we have
\begin{equation}\label{eq:error-2-2}
\begin{aligned}
\|\mathbf{V}_{\mathcal{E}_2}(\lambda; \chi, \tau)-\mathbb{I}\|=\mathcal{O}(n^{-1}),\quad\lambda\in \partial D_{a_2, a_2^*}(\delta).
\end{aligned}
\end{equation}
Under this case, the solution $q^{[n]}(n\chi, n\tau)$ can be given by
\begin{equation}\label{eq:qn-no-1}
\begin{aligned}
q^{[n]}(n\chi, n\tau)&=2\ii\lim\limits_{\lambda\to\infty}\lambda\mathbf{T}_{2}(\lambda; \chi, \tau)_{12}\\
&=2\ii\lim\limits_{\lambda\to\infty}\lambda\left(\mathcal{E}_{2}(\lambda; \chi, \tau)\dot{\mathbf{T}}^{\rm out}_{2}(\lambda; \chi, \tau)\right)_{12}\\
&=2\ii\lim\limits_{\lambda\to\infty}\lambda\left(\mathcal{E}_{2,11}(\lambda; \chi, \tau)\dot{\mathbf{T}}_{2,12}^{\rm out}(\lambda; \chi, \tau)+\mathcal{E}_{2,12}(\lambda; \chi, \tau)\dot{\mathbf{T}}_{2,22}^{\rm out}(\lambda; \chi, \tau)\right).
\end{aligned}
\end{equation}
When $\lambda\to\infty$, both $\dot{\mathbf{T}}^{\rm out}_{2}(\lambda; \chi, \tau)$ and $\mathcal{E}_{2}(\lambda; \chi, \tau)$ approach to the identity, thus Eq.\eqref{eq:qn-no-1} changes into
\begin{equation}
q^{[n]}(n\chi, n\tau)=2\ii\lim\limits_{\lambda\to\infty}\lambda\left(\dot{\mathbf{T}}_{2,12}^{\rm out}(\lambda; \chi, \tau)+\mathcal{E}_{2,12}(\lambda; \chi, \tau)\right).
\end{equation}

Next, we give a detailed calculation to $\mathcal{E}_{2}(\lambda; \chi, \tau)$. To study it, we rewrite the jump relation $\mathcal{E}_{2,+}(\lambda; \chi, \tau)=\mathcal{E}_{2,-}(\lambda; \chi, \tau)\mathbf{V}_{\mathcal{E}_{2}}(\lambda; \chi, \tau)$ to another equivalent formula,
\begin{equation}
\mathcal{E}_{2,+}(\lambda; \chi, \tau)-\mathcal{E}_{2,-}(\lambda; \chi, \tau)=\mathcal{E}_{2,-}(\lambda; \chi, \tau)\left(\mathbf{V}_{\mathcal{E}_{2}}(\lambda; \chi, \tau)-\mathbb{I}\right).
\end{equation}
From the error estimation in Eq.\eqref{eq:error-2-1} and Eq.\eqref{eq:error-2-2}, we know that the first order error term comes from the contour $\partial D_{\alpha_2}(\delta)$ and $\partial D_{\beta_2}(\delta)$, thus we only give a detailed calculation about these terms, the higher order error terms are ignored. When $\lambda\in\partial D_{\alpha_2}(\delta)\cup\partial D_{\beta_2}(\delta)$, with the Plemelj formula, we can get the solution of $\mathcal{E}_{2}(\lambda; \chi, \tau)$ as
\begin{equation}\label{eq:E2}
\mathcal{E}_{2}(\lambda; \chi, \tau)=\mathbb{I}+\frac{1}{2\pi\ii}\int_{\partial D_{\alpha_2}(\delta)\cup \partial D_{\beta_2}(\delta)}\frac{\mathcal{E}_{2,-}(\xi; \chi, \tau)(\mathbf{V}_{\mathcal{E}_{2}}(\xi; \chi, \tau)-\mathbb{I})}{\xi-\lambda}d\xi+\mathcal{O}\left(n^{-1}\right).
\end{equation}
Expanding Eq.\eqref{eq:E2} as $\lambda\to \infty$, we get
\begin{equation}
\mathcal{E}_{2}(\lambda; \chi, \tau)=\mathbb{I}-\frac{1}{2\pi\ii}\sum_{j=1}^{\infty}\lambda^{-j}\int_{\partial D_{\alpha_2}(\delta)\cup \partial D_{\beta_2}(\delta)}\mathcal{E}_{2,-}(\xi; \chi, \tau)(\mathbf{V}_{\mathcal{E}_{2}}(\xi; \chi, \tau)-\mathbb{I})\xi^{j-1}d\xi+\mathcal{O}\left(n^{-1}\right),\quad |\lambda|\to\infty.
\end{equation}
With a simple calculation, we further have
\begin{multline}
\lim \limits_{\lambda\to\infty}\lambda\mathcal{E}_{2,12}(\lambda; \chi, \tau)=-\frac{1}{2\pi\ii}\Big[\int_{\partial D_{\alpha_2}(\delta)\cup\partial D_{\beta_2}(\delta)}\mathcal{E}_{2,11,-}(\xi; \chi, \tau)\mathbf{V}_{\mathcal{E}_{2},12}(\xi; \chi, \tau)d\xi\\+\int_{\partial D_{\alpha_2}(\delta)\cup\partial D_{\beta_2}(\delta)}\mathcal{E}_{2,12,-}(\xi; \chi, \tau)\left(\mathbf{V}_{\mathcal{E}_{2},22}(\xi; \chi, \tau)-1\right)d\xi\Big]+\mathcal{O}\left(n^{-1}\right).
\end{multline}
Based on the definition of $\mathbf{V}_{\mathcal{E}_2}(\lambda; \chi, \tau)$ in Eq.\eqref{eq:E2alpha2} and Eq.\eqref{eq:E2beta2}, we can get the asymptotic expression of $q^{[n]}(n\chi, n\tau)$ as
\begin{equation}
q^{[n]}(n\chi, n\tau)=2\ii\lim\limits_{\lambda\to\infty}\lambda \dot{\mathbf{T}}_{2,12}^{\rm out}(\lambda; \chi, \tau)-\frac{1}{\pi}\int_{\partial D_{\alpha_2}(\delta)\cup\partial D_{\beta_2}(\delta)}\mathbf{V}_{\mathcal{E}_{2},12}(\xi; \chi, \tau)d\xi+\mathcal{O}\left(n^{-1}\right).
\end{equation}
Again, from the formula in Eq.\eqref{eq:E2alpha2} and Eq.\eqref{eq:E2beta2}, we have
\begin{multline}
\mathbf{V}_{\mathcal{E}_2,12}(\lambda; \chi, \tau)=-\ii \frac{n^{\ii p}\left(H_{\beta_2, 11}(\lambda)\right)^2\left(K_{2,11}(\lambda; \chi, \tau)\right)^2\ee^{-2\ii n h_2(\beta_2)}\alpha}{2n^{1/2}f_{\beta_2}(\lambda)}\\-\ii \frac{n^{-\ii p}\left(H_{\beta_2, 22}(\lambda)\right)^2\left(K_{2,12}(\lambda; \chi, \tau)\right)^2\ee^{2\ii n h_2(\beta_2)}\beta}{2n^{1/2}f_{\beta_2}(\lambda)}+\mathcal{O}(n^{-1}),\quad \lambda\in D_{\beta_2}(\delta),
\end{multline}
and
\begin{multline}
\mathbf{V}_{\mathcal{E}_2,12}(\lambda; \chi, \tau)=\ii \frac{n^{\ii p}\left(H_{\alpha_2,12}(\lambda)\right)^{-2}\left(K_{2,12, -}(\lambda; \chi, \tau)\right)^2\ee^{2\ii n h_{2,-}(\alpha_2)}\alpha}{2n^{1/2}f_{\alpha_2}(\lambda)}\\+\ii \frac{n^{-\ii p}\left(H_{\alpha_2, 12}(\lambda)\right)^2\left(K_{2,11,-}(\lambda; \chi, \tau)\right)^2\ee^{-2\ii n h_{2,-}(\alpha_2)}\beta}{2n^{1/2}f_{\alpha_2}(\lambda)}+\mathcal{O}(n^{-1}),\quad \lambda\in D_{\alpha_2}(\delta),
\end{multline}
where $\alpha$ and $\beta$ are given in the equation \eqref{eq:alphabeta}.
With the aid of residue theorem, we further have
\begin{multline}
-\frac{1}{\pi}\int_{\partial D_{\alpha_2}(\delta)\cup\partial D_{\beta_2}(\delta)}\mathbf{V}_{\mathcal{E}_{2},12}(\xi; \chi, \tau)d\xi\\=\frac{n^{\ii p}\left(H_{\alpha_2,12}(\alpha_2)\right)^{-2}\left(K_{2,12, -}(\alpha_2; \chi, \tau)\right)^2\ee^{2\ii n h_{2,-}(\alpha_2)}\alpha}{n^{1/2}\sqrt{-h''_{2, -}(\alpha_2)}}+ \frac{n^{-\ii p}\left(H_{\alpha_2,12}(\alpha_2)\right)^2\left(K_{2,-,11}(\alpha_2; \chi, \tau)\right)^2\ee^{-2\ii n h_{2,-}(\alpha_2)}\beta}{n^{1/2}\sqrt{-h''_{2,-}(\alpha_2)}}\\+\frac{n^{\ii p}\left(H_{\beta_2,11}(\beta_2)\right)^2\left(K_{2,11}(\beta_2; \chi, \tau)\right)^2\ee^{-2\ii n h_{2}(\beta_2)}\alpha}{n^{1/2}\sqrt{h''_{2}(\beta_2)}}+ \frac{n^{-\ii p}\left(H_{\beta_2,22}(\beta_2)\right)^2\left(K_{2,12}(\beta_2; \chi, \tau)\right)^2\ee^{2\ii n h_{2}(\beta_2)}\beta}{n^{1/2}\sqrt{h_{2}''(\beta_2)}}+\mathcal{O}(n^{-1}).
\end{multline}
Substitute $\mathbf{H}_{\alpha_2, \beta_2}(\lambda)$, $\mathbf{K}_{2}(\lambda; \chi,\tau)$ into the above equation, then the asymptotic expression in the genus-zero region can be written as Eq.\eqref{eq:q-gen-zero}, where
\begin{equation}\label{eq:para-genus-zero}
\begin{aligned}
\phi_{\alpha_2}&{=}\frac{\pi}{4}{+}\frac{\log(2)^2}{2\pi}{-}\arg\left(\Gamma\left(\ii\frac{\log(2)}{2\pi}\right)\right){+}2k_{2,-}(\alpha_2){+}2nh_{2,-}(\alpha_2){+}p\log\left({-}nh_{2,-}''(\alpha_2)(\beta_2-\alpha_2)^2\right){-}n\kappa_2,\\
\phi_{\beta_2}&{=}\frac{\pi}{4}{+}\frac{\log(2)^2}{2\pi}{-}\arg\left(\Gamma\left(\ii\frac{\log(2)}{2\pi}\right)\right){-}2k_2(\beta_2){-}2nh_{2}(\beta_2){+}p\log\left(nh_{2}''(\beta_2)(\beta_2-\alpha_2)^2\right){+}n\kappa_2,\\
m_{+}^{\alpha_2}&{=}\frac{1}{2}{+}\frac{1}{4}\left(\sqrt{\frac{\alpha_2{-}a_2}{\alpha_2{-}a^*_2}}{+}\left(\sqrt{\frac{\alpha_2{-}a_2}{\alpha_2{-}a^*_2}}\right)^{-1}\right),\quad
m_{-}^{\alpha_2}{=}\frac{1}{2}{-}\frac{1}{4}\left(\sqrt{\frac{\alpha_2{-}a_2}{\alpha_2{-}a^*_2}}{+}\left(\sqrt{\frac{\alpha_2{-}a_2}{\alpha_2{-}a^*_2}}\right)^{-1}\right),\\
m_{+}^{\beta_2}&{=}\frac{1}{2}{+}\frac{1}{4}\left(\sqrt{\frac{\beta_2{-}a_2}{\beta_2{-}a^*_2}}{+}\left(\sqrt{\frac{\beta_2{-}a_2}{\beta_2{-}a^*_2}}\right)^{-1}\right),\quad
m_{-}^{\beta_2}{=}\frac{1}{2}{-}\frac{1}{4}\left(\sqrt{\frac{\beta_2{-}a_2}{\beta_2{-}a^*_2}}{+}\left(\sqrt{\frac{\beta_2{-}a_2}{\beta_2{-}a^*_2}}\right)^{-1}\right),
\end{aligned}
\end{equation}
which completes the proof of theorem \ref{theo:genus-zero}.
In the next section, we will give the asymptotics in the genus-one region.
\section{The genus-one region}
\label{sec:genus-one}
Under the special choice $\lambda_1=\ii$ and $\lambda_2=k\ii$, we know that the velocity of high-order breather is zero. If we choose the spectral parameters $\lambda_1$ and $\lambda_2$ having distinct real parts, we could image that when $n$ is large, the large order asymptotics will contain two distinct genus-one regions by the result of \cite{Bilman-JDE-2021}, which is mainly determined by single high-order solitons with parameters $\lambda_1$ and $\lambda_2$ respectively. To study the large asymptotics of high-order breathers in the genus-one region, we first introduce a scalar RHP about $g_{3}(\lambda)\equiv g_{3}(\lambda; \chi, \tau)$ function.
\begin{rhp}
($g_{3}(\lambda)$-function in the genus-one region) For fixed $\left(\chi, \tau\right)$ in the genus-one region. There exist unique contours $\Sigma_{g_3}^{\pm}, \Gamma_{g_3}$, unique integration constants $\kappa_3$ and $d_3$ and unique $g_3(\lambda)$ function satisfying the following conditions.
\begin{itemize}
\item{\bf Analyticity:} $g_{3}(\lambda)$ is analytic in the complex plane except for $\Sigma_{g_3}^{\pm}, \Gamma_{g_3}$, where these three contours are to be determined.
\item{\bf Jump condition:} $g_{3}(\lambda)$ takes the continuous boundary conditions in these three contours and $g_{3,\pm}(\lambda)$ satisfies the following jump conditions,
    \begin{equation}
    \begin{aligned}
    &g_{3,+}(\lambda)+g_{3,-}(\lambda)+2\vartheta(\lambda; \chi, \tau)=\kappa_3,&\quad\lambda\in \Sigma_{g_3}^{+},\\
    &g_{3,+}(\lambda)+g_{3,-}(\lambda)+2\vartheta(\lambda; \chi, \tau)=\kappa_3,&\quad\lambda\in \Sigma_{g_3}^{-},\\
    &g_{3,+}(\lambda)-g_{3,-}(\lambda)=d_3,&\quad\lambda\in \Gamma_{g_3}.
    \end{aligned}
    \end{equation}
\item{\bf Normalization:} As $\lambda\to\infty$, $g_{3}(\lambda)$ satisfies the following normalization condition,
\begin{equation}
g_{3}(\lambda)\to\mathcal{O}(\lambda^{-1}).
\end{equation}
\item {\bf Symmetry condition:} $g_{3}(\lambda)$ has the symmetry condition $g_{3}(\lambda)=g_{3}(\lambda^*)^*$.
\end{itemize}
\end{rhp}
From the symmetry condition, we can find the fact that $d_3$ is a real number.

To solve this problem, we first take the derivative of $g_3(\lambda)$ with respect to $\lambda$ to eliminate the integration constants $\kappa_3$ and $d_3$. Then we will introduce an algebraic curve $\mathcal{R}_{3}(\lambda)$ with the genus-one, the corresponding cut is set as $\Sigma_{g_3}^+=[a_3, b_3], \Sigma_{g_3}^-=[a_3^*, b_3^*]$,
\begin{equation}
\mathcal{R}_{3}(\lambda)=\sqrt{(\lambda-a_3)(\lambda-a_3^*)(\lambda-b_3)(\lambda-b_3^*)}.
\end{equation}
By using $\mathcal{R}_{3}(\lambda)$, we can derive a jump relation about $\frac{g_3'(\lambda)}{\mathcal{R}_{3}(\lambda)}$,
\begin{equation}
\left(\frac{g_3'(\lambda)}{\mathcal{R}_{3}(\lambda)}\right)_{+}-\left(\frac{g_3'(\lambda)}{\mathcal{R}_{3}(\lambda)}\right)_{-}=\frac{-2\chi-4\lambda\tau-\frac{\ii}{\lambda-\ii}+\frac{\ii}{\lambda+\ii}-\frac{\ii}{\lambda-k\ii}+\frac{\ii}{\lambda+k\ii}}{\mathcal{R}_{3,+}(\lambda)},
\end{equation}
and as $\lambda\to\infty$, we have $$g_3'(\lambda)\to\mathcal{O}(\lambda^{-2}).$$
With the Plemelj formula, $g_3'(\lambda)$ can be given as
\begin{equation}
g_3'(\lambda)=\frac{\mathcal{R}_{3}(\lambda)}{2\pi\ii}\int_{\Sigma_{g_3}^+\cup \Sigma_{g_3}^-}\frac{-2\chi-4\xi\tau-\frac{\ii}{\xi-\ii}+\frac{\ii}{\xi+\ii}-\frac{\ii}{\xi-k\ii}+\frac{\ii}{\xi+k\ii}}{\mathcal{R}_{3,+}(\xi)(\xi-\lambda)}d\xi.
\end{equation}
By using a generalized residue theorem, we can rewrite $g_3'(\lambda)$ as an explicit formula,
\begin{equation}
g_3'(\lambda)=\frac{\ii}{2}\mathcal{R}_{3}(\lambda)\left(\frac{1}{\mathcal{R}_{3}(\ii)(\lambda-\ii)}-\frac{1}{\mathcal{R}(-\ii)(\lambda+\ii)}+\frac{1}{\mathcal{R}_{3}(k\ii)(\lambda-k\ii)}-\frac{1}{\mathcal{R}_{3}(-k\ii)(\lambda+k\ii)}\right)-\vartheta'(\lambda; \chi, \tau).
\end{equation}
From the normalization condition when $\lambda\to\infty$, we can get three equations about these unknown parameters,
\begin{equation}\label{eq:unknown-genus-one}
\begin{aligned}
\mathcal{O}(\lambda):&\frac{1}{2}\left(\frac{\ii}{\mathcal{R}_{3}(\ii)}-\frac{\ii}{\mathcal{R}(-\ii)}+\frac{\ii}{\mathcal{R}_{3}(k\ii)}-\frac{\ii}{\mathcal{R}_{3}(-k\ii)}\right)-2\tau=0,\\
\mathcal{O}(\lambda^0):&\frac{1}{2}\left(\frac{1}{\mathcal{R}_{3}(\ii)}+\frac{1}{\mathcal{R}_{3}(-\ii)}+\frac{k}{\mathcal{R}_{3}(k\ii)}+\frac{k}{\mathcal{R}_{3}(-k\ii)}\right)+\chi+2\left(\Re(a_3)+\Re(b_3)\right)\tau=0,\\
\mathcal{O}(\lambda^{-1}):&\frac{1}{2}\left(\frac{\ii}{\mathcal{R}_{3}(\ii)}-\frac{\ii}{\mathcal{R}_{3}(-\ii)}+\frac{\ii k^2}{\mathcal{R}_3(k\ii)}-\frac{\ii k^2}{\mathcal{R}_{3}(-k\ii)}+2\left(\Re(a_3)+\Re(b_3)\right)\chi\right)\\
\quad\quad\quad&+\left(2\Re(a_3)^2+2\Re(a_3)\Re(b_3)+2\Re(b_3)^2-\Im(a_3)^2-\Im(b_3)^2\right)\tau=0.
\end{aligned}
\end{equation}
The zeros of $\mathcal{R}_{3}(\lambda)$ contain four unknown parameters---the real part and the imaginary part of $a_3$ and $b_3$. Up to now, from Eq.\eqref{eq:unknown-genus-one}, we only have three equations about these parameters, thus we impose that the integration constant $\kappa_3$ is a real number, which can be regarded as the fourth condition to determine these parameters. By choosing one fixed $\chi, \tau$ and $k$ in this genus-one region, we give the contour plot about the imaginary part of $h_3(\lambda)\equiv h_3(\lambda; \chi, \tau):=g_3(\lambda)+\vartheta(\lambda; \chi, \tau)$ in Fig. \ref{fig:genus-one}. It should be emphasized that in this case we choose $k=2$ and there exist two closed contours involving the singularities $\lambda=\pm 2\ii$, whose analysis is similar to the asymptotics in the exponential-decay region, which can be obtained from Eq.\eqref{eq:exp-decay}. For this analysis, it will provide an additional exponential-decay term to the leading order term, but we omit it in this case since it does not affect the leading order term.
\begin{figure}[ht]
\centering
\includegraphics[width=0.45\textwidth]{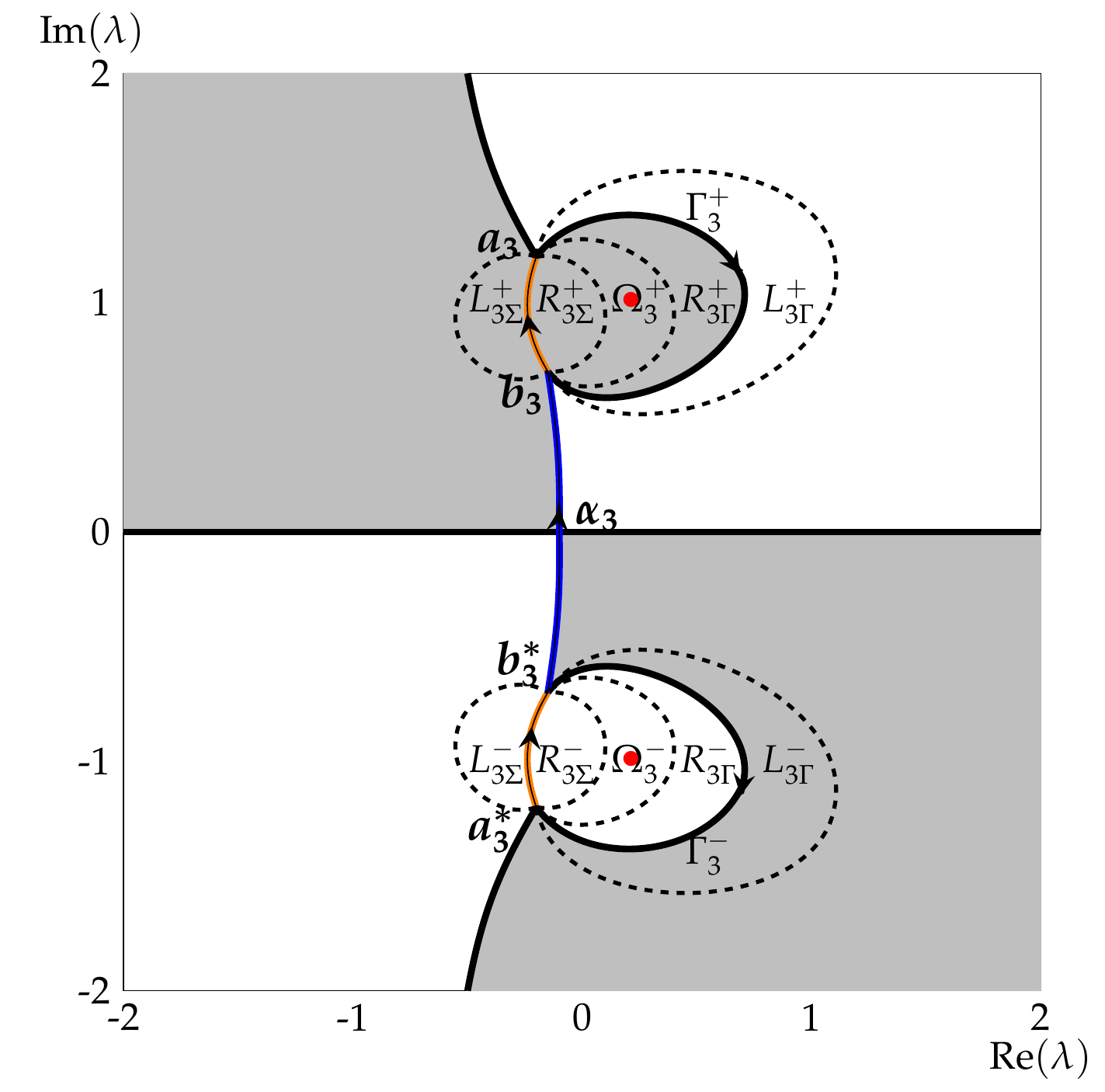}
\centering
\includegraphics[width=0.45\textwidth]{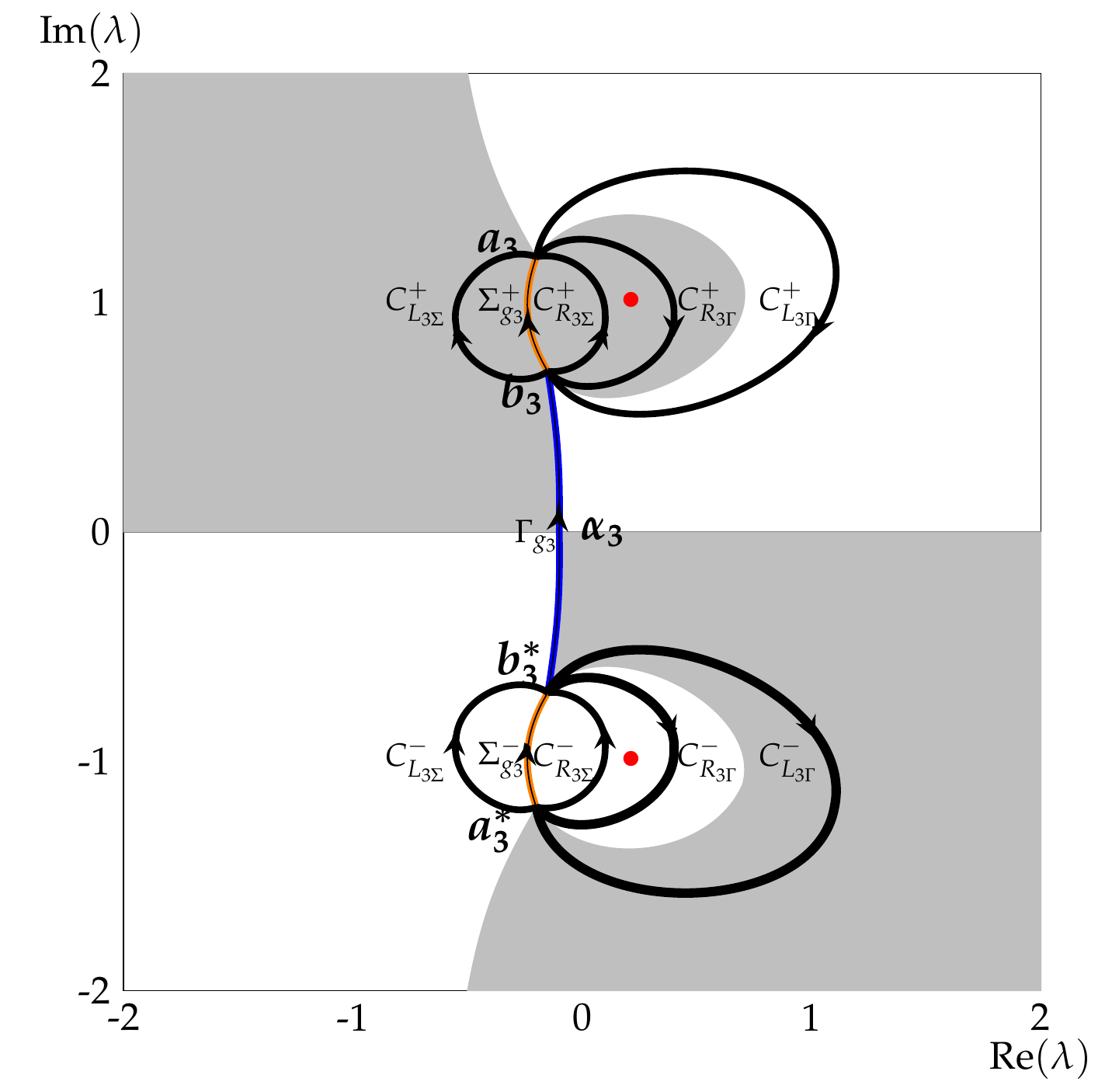}
\caption{The contour plot of ${\Im}\left(h_3\left(\lambda\right)\right)$ in the genus-one region with the parameters $\chi=\frac{9}{5}, \tau=1$ and $k=2$, where ${\Im}\left(h_3\left(\lambda\right)\right)<0$ (shaded) and ${\Im}\left(h_3\left(\lambda\right)\right)>0$ (unshaded), the red dots are the singularities $\lambda=\pm\ii$, the singularities $\lambda=\pm 2\ii$ are omitted. The left one gives the original contour and the right one is the corresponding contour deformation. }
\label{fig:genus-one}
\end{figure}

We shall analyze the asymptotics of genus-one region as shown in the contour plot of Fig.\ref{fig:genus-one}. Similar to the algebraic-decay and genus-zero region, we still would like to deform the original jump contours of $\mathbf{R}(\lambda; \chi, \tau)$ with the Deift-Zhou nonlinear steepest descent method.

Set
\begin{equation}
\mathbf{S}_{3}(\lambda; \chi, \tau):=\left\{\begin{aligned}&\mathbf{R}(\lambda; \chi, \tau)\ee^{-\ii n\vartheta(\lambda; \chi, \tau)\sigma_3}\mathbf{Q}_{c}\ee^{\ii n\vartheta(\lambda; \chi, \tau)\sigma_3},\quad &\lambda\in D_0\cap\left(D_{3}^{+}\cup D_{3}^{-}\right)^{c},\\
&\mathbf{R}(\lambda; \chi, \tau),\quad &\text{otherwise},
\end{aligned}\right.
\end{equation}
where $D_{3}^{+}=R_{3\Sigma}^{+}\cup \Omega_{3}^{+}\cup R_{3\Gamma}^{+}, D_{3}^{-}=R_{3\Sigma}^{-}\cup \Omega_{3}^{-}\cup R_{3\Gamma}^{-}.$
Under this case, the jump matrices will transfer to the boundary of $D_3^{+}$ and $D_3^{-}$. Moreover, set
\begin{equation}\label{eq:S-genus-one}
\begin{aligned}
\mathbf{T}_{3}(\lambda; \chi, \tau):&=\mathbf{S}_{3}(\lambda; \chi, \tau)\ee^{-\ii n\vartheta(\lambda; \chi, \tau)\sigma_3}
\left(\mathbf{Q}_{R}^{[2]}\right)^{-1}\ee^{\ii n\vartheta(\lambda; \chi, \tau)\sigma_3}\ee^{\ii n g_{3}(\lambda)\sigma_3},\quad &\lambda\in L_{3\Gamma}^{+},\\
\mathbf{T}_{3}(\lambda; \chi, \tau):&=\mathbf{S}_{3}(\lambda; \chi, \tau)\mathbf{Q}_{L}^{[2]}\ee^{-\ii n\vartheta(\lambda; \chi, \tau)\sigma_3}
\mathbf{Q}_{C}^{[2]}\ee^{\ii n\vartheta(\lambda; \chi, \tau)\sigma_3}\ee^{\ii n g_{3}(\lambda)\sigma_3},\quad &\lambda\in R_{3\Gamma}^{+},\\
\mathbf{T}_{3}(\lambda; \chi, \tau):&=\mathbf{S}_{3}(\lambda; \chi, \tau)\mathbf{Q}_{L}^{[2]}\ee^{\ii n g_{3}(\lambda; \chi, \tau)\sigma_3},\quad &\lambda\in \Omega_{3}^{+},\\
\mathbf{T}_{3}(\lambda; \chi, \tau):&=\mathbf{S}_{3}(\lambda; \chi, \tau)\mathbf{Q}_{L}^{[2]}\ee^{-\ii n\vartheta(\lambda; \chi, \tau)\sigma_3}\mathbf{Q}_{L}^{[3]}\ee^{\ii n\vartheta(\lambda; \chi, \tau)\sigma_3}\ee^{\ii n g_{3}(\lambda)\sigma_3},\quad &\lambda\in R_{3\Sigma}^{+},\\
\mathbf{T}_{3}(\lambda; \chi, \tau):&=\mathbf{S}_{3}(\lambda; \chi, \tau)\ee^{-\ii n\vartheta(\lambda; \chi, \tau)\sigma_3}
\left(\mathbf{Q}_{R}^{[3]}\right)^{-1}\ee^{\ii n\vartheta(\lambda; \chi, \tau)\sigma_3}\ee^{\ii n g_{3}(\lambda)\sigma_3},\quad &\lambda\in L_{3\Sigma}^{+},\\
\mathbf{T}_{3}(\lambda; \chi, \tau):&=\mathbf{S}_{3}(\lambda; \chi, \tau)\ee^{-\ii n\vartheta(\lambda; \chi, \tau)\sigma_3}
\left(\mathbf{Q}_{R}^{[1]}\right)^{-1}\ee^{\ii n\vartheta(\lambda; \chi, \tau)\sigma_3}\ee^{\ii n g_{3}(\lambda)\sigma_3},\quad &\lambda\in L_{3\Gamma}^{-},\\
\mathbf{T}_{3}(\lambda; \chi, \tau):&=\mathbf{S}_{3}(\lambda; \chi, \tau)\mathbf{Q}_{L}^{[1]}\ee^{-\ii n\vartheta(\lambda; \chi, \tau)\sigma_3}
\mathbf{Q}_{C}^{[1]}
\ee^{\ii n\vartheta(\lambda; \chi, \tau)\sigma_3}\ee^{\ii n g_{3}(\lambda)\sigma_3},\quad &\lambda\in R_{3\Gamma}^{-},\\
\mathbf{T}_{3}(\lambda; \chi, \tau):&=\mathbf{S}_{3}(\lambda; \chi, \tau)\mathbf{Q}_{L}^{[1]}\ee^{\ii n g_{3}(\lambda)\sigma_3},\quad &\lambda\in \Omega_{3}^{-},\\
\mathbf{T}_{3}(\lambda; \chi, \tau):&=\mathbf{S}_{3}(\lambda; \chi, \tau)\mathbf{Q}_{L}^{[1]}\ee^{-\ii n\vartheta(\lambda; \chi, \tau)\sigma_3}
\mathbf{Q}_{L}^{[4]}\ee^{\ii n\vartheta(\lambda; \chi, \tau)\sigma_3}\ee^{\ii n g_{3}(\lambda)\sigma_3},\quad &\lambda\in R_{3\Sigma}^{-},\\
\mathbf{T}_{3}(\lambda; \chi, \tau):&=\mathbf{S}_{3}(\lambda; \chi, \tau)\ee^{-\ii n\vartheta(\lambda; \chi, \tau)\sigma_3}
\left(\mathbf{Q}_{R}^{[4]}\right)^{-1}\ee^{\ii n\vartheta(\lambda; \chi, \tau)\sigma_3}\ee^{\ii n g_{3}(\lambda)\sigma_3},\quad &\lambda\in L_{3\Sigma}^{-},
\end{aligned}
\end{equation}
in other regions, we set $\mathbf{T}_{3}(\lambda; \chi, \tau):=\mathbf{S}_{3}(\lambda; \chi, \tau)\ee^{\ii n g_{3}(\lambda)\sigma_3}$. Then the jump conditions about $\mathbf{T}_{3}(\lambda; \chi, \tau)$ change into
\begin{equation}\label{eq:jump-genus-one}
\begin{aligned}
\mathbf{T}_{3, +}(\lambda; \chi, \tau)&=\mathbf{T}_{3, -}(\lambda; \chi, \tau)\ee^{-\ii nh_3(\lambda)\sigma_3}
\mathbf{Q}_{R}^{[2]}\ee^{\ii nh_3(\lambda)\sigma_3},\quad &\lambda\in C_{L_{3\Gamma}}^{+},\\
\mathbf{T}_{3, +}(\lambda; \chi, \tau)&=\mathbf{T}_{3, -}(\lambda; \chi, \tau)\ee^{-\ii nh_3(\lambda)\sigma_3}\mathbf{Q}_{C}^{[2]}\ee^{\ii nh_3(\lambda)\sigma_3},\quad &\lambda\in C_{R_{3\Gamma}}^{+},\\
\mathbf{T}_{3, +}(\lambda; \chi, \tau)&=\mathbf{T}_{3, -}(\lambda; \chi, \tau)\ee^{-\ii nh_3(\lambda)\sigma_3}\mathbf{Q}_{L}^{[3]}\ee^{\ii nh_3(\lambda)\sigma_3},\quad &\lambda\in C_{R_{3\Sigma}}^{+},\\
\mathbf{T}_{3, +}(\lambda; \chi, \tau)&=\mathbf{T}_{3, -}(\lambda; \chi, \tau)\ee^{-\ii nh_3(\lambda)\sigma_3}\mathbf{Q}_{R}^{[3]}\ee^{\ii nh_3(\lambda)\sigma_3},\quad &\lambda\in C_{L_{3\Sigma}}^{+},\\
\mathbf{T}_{3, +}(\lambda; \chi, \tau)&=\mathbf{T}_{3, -}(\lambda; \chi, \tau)\ee^{-\ii nh_3(\lambda)\sigma_3}\mathbf{Q}_{R}^{[1]}\ee^{\ii nh_3(\lambda)\sigma_3},\quad &\lambda\in C_{L_{3\Gamma}}^{-},\\
\mathbf{T}_{3, +}(\lambda; \chi, \tau)&=\mathbf{T}_{3, -}(\lambda; \chi, \tau)\ee^{-\ii nh_3(\lambda)\sigma_3}\mathbf{Q}_{C}^{[1]}\ee^{\ii nh_3(\lambda)\sigma_3},\quad &\lambda\in C_{R_{3\Gamma}}^{-},\\
\mathbf{T}_{3, +}(\lambda; \chi, \tau)&=\mathbf{T}_{3, -}(\lambda; \chi, \tau)\ee^{-\ii nh_3(\lambda)\sigma_3}\mathbf{Q}_{L}^{[4]}\ee^{\ii nh_3(\lambda)\sigma_3},\quad &\lambda\in C_{R_{3\Sigma}}^{-},\\
\mathbf{T}_{3, +}(\lambda; \chi, \tau)&=\mathbf{T}_{3, -}(\lambda; \chi, \tau)\ee^{-\ii nh_3(\lambda)\sigma_3}\mathbf{Q}_{R}^{[4]}\ee^{\ii nh_3(\lambda)\sigma_3},\quad &\lambda\in C_{L_{3\Sigma}}^{-},\\
\mathbf{T}_{3, +}(\lambda; \chi, \tau)&=\mathbf{T}_{3, -}(\lambda; \chi, \tau)\begin{bmatrix}0&\ee^{-\ii n\kappa_3}\\
-\ee^{\ii n\kappa_3}&0
\end{bmatrix},&\lambda\in \Sigma_{g_3}^{+}\cup \Sigma_{g_3}^{-},\\
\mathbf{T}_{3, +}(\lambda; \chi, \tau)&=\mathbf{T}_{3, -}(\lambda; \chi, \tau)\begin{bmatrix}\ee^{\ii nd_3}&0\\
0&\ee^{-\ii nd_3}
\end{bmatrix},&\lambda\in \Gamma_{g_3}.
\end{aligned}
\end{equation}
When $n$ is large, we know that the jump conditions in Eq.\eqref{eq:jump-genus-one} will tend to the identity matrix exponentially except for the contours $\Sigma_{g_3}^{+}\cup \Sigma_{g_3}^{-} $ and $\Gamma_{g_3}$. Then we give the parametrix construction for $\mathbf{T}_{3}(\lambda; \chi, \tau)$.
\subsection{Parametrix construction for $\mathbf{T}_{3}(\lambda; \chi, \tau)$}
Similar to the genus-zero region, we begin to establish the outer parametrix $\dot{\mathbf{T}}_{3}^{\rm out}(\lambda; \chi, \tau)$ for $\mathbf{T}_{3}(\lambda; \chi, \tau)$. With the jump conditions in Eq.\eqref{eq:jump-genus-one}, we give a RHP about $\dot{\mathbf{T}}_{3}^{\rm out}(\lambda; \chi, \tau)$.
\begin{rhp}\label{rhp:outer-genus-one}
(RHP for the outer parametrix $\dot{\mathbf{T}}_{3}^{\rm out}(\lambda; \chi, \tau)$) Seek a $2\times 2$ matrix $\dot{\mathbf{T}}_{3}^{\rm out}(\lambda; \chi, \tau)$ satisfying the following conditions.
\begin{itemize}
\item {\bf Analyticity:} $\dot{\mathbf{T}}_{3}^{\rm out}(\lambda; \chi, \tau)$ is analytic in $\mathbb{C}\setminus\left(\Sigma_{g_3}^{+}\cup \Sigma_{g_3}^{-}\cup \Gamma_{g_3}\right)$.
\item {\bf Jump condition:} $\dot{\mathbf{T}}_{3}^{\rm out}(\lambda; \chi, \tau)$ takes the continuous boundary values on $\Sigma_{g_3}^{+}\cup \Sigma_{g_3}^{-}\cup \Gamma_{g_3}$, and it is related by the jump conditions $\dot{\mathbf{T}}_{3,+}^{\rm out}(\lambda; \chi, \tau)=\dot{\mathbf{T}}_{3,-}^{\rm out}(\lambda; \chi, \tau)\mathbf{V}_{\dot{\mathbf{T}}_{3}^{\rm out}}(\lambda;\chi, \tau)$, where $\mathbf{V}_{\dot{\mathbf{T}}_{3}^{\rm out}}(\lambda;\chi, \tau)$ is
    \begin{equation}
   \mathbf{V}_{\dot{\mathbf{T}}_{3}^{\rm out}}(\lambda;\chi, \tau)=\left\{\begin{aligned} &\begin{bmatrix}0&\ee^{-\ii n\kappa_3}\\
-\ee^{\ii n\kappa_3}&0
\end{bmatrix},\quad\lambda\in \Sigma_{g_3}^{+}\cup\Sigma_{g_3}^{-},\\
&\begin{bmatrix}\ee^{\ii nd_3}&0\\
0&\ee^{-\ii nd_3}
\end{bmatrix},\quad\lambda\in \Gamma_{g_3}.
\end{aligned}\right.
    \end{equation}
\item {\bf Normalization:} When $\lambda\to\infty$, $\dot{\mathbf{T}}_{3}^{\rm out}(\lambda; \chi, \tau)\to\mathbb{I}$.
\end{itemize}
\end{rhp}
To solve $\dot{\mathbf{T}}_{3}^{\rm out}(\lambda; \chi, \tau)$, we first introduce a scalar function $F_{3}(\lambda; \chi, \tau)$ with the following jump conditions,
\begin{equation}
\begin{aligned}
F_{3,+}(\lambda; \chi, \tau)+F_{3,-}(\lambda; \chi, \tau)&=\ii n\kappa_3,\quad\lambda\in \Sigma_{g_3}^{+}\cup \Sigma_{g_3}^{-},\\
F_{3,+}(\lambda; \chi, \tau)-F_{3,-}(\lambda; \chi, \tau)&=\ii n d_3,\quad\lambda\in\Gamma_{g_3}.
\end{aligned}
\end{equation}
Then $F_{3}(\lambda; \chi, \tau)$ can be calculated by using $\mathcal{R}_{3}(\lambda)$, that is
\begin{equation}\label{eq:F3130}
F_3(\lambda; \chi, \tau)=\frac{\mathcal{R}_{3}(\lambda)}{2\pi\ii}\left(\int_{\Sigma_{g_3}^{+}}\frac{\ii n \kappa_3}{\mathcal{R}_{3}(\xi)(\xi-\lambda)}d\xi+\int_{\Sigma_{g_3}^{-}}\frac{\ii n\kappa_3}{\mathcal{R}_{3}(\xi)(\xi-\lambda)}d\xi+\int_{\Gamma_{g_3}}\frac{\ii nd_3}{\mathcal{R}_{3}(\xi)(\xi-\lambda)}d\xi\right).
\end{equation}
Expanding $F_3(\lambda; \chi, \tau)$ as $\lambda\to\infty$, we have the following series formula,
\begin{equation}
F_{3}(\lambda; \chi, \tau)=F_{31}\lambda+F_{30}+\mathcal{O}(\lambda^{-1}),
\end{equation}
where
\begin{equation}
\begin{aligned}
F_{31}&=-\frac{1}{2\pi \ii}\left(\int_{\Sigma_{g_3}^{+}}\frac{\ii n \kappa_3}{\mathcal{R}_{3}(\xi)}d\xi+\int_{\Sigma_{g_3}^{-}}\frac{\ii n\kappa_3}{\mathcal{R}_{3}(\xi)}d\xi+\int_{\Gamma_{g_3}}\frac{\ii nd_3}{\mathcal{R}_{3}(\xi)}d\xi\right),\\
F_{30}&=-\frac{1}{2\pi \ii}\left(\int_{\Sigma_{g_3}^{+}}\frac{\ii n \kappa_3}{\mathcal{R}_{3}(\xi)}\xi d\xi+\int_{\Sigma_{g_3}^{-}}\frac{\ii n\kappa_3}{\mathcal{R}_{3}(\xi)}\xi d\xi+\int_{\Gamma_{g_3}}\frac{\ii nd_3}{\mathcal{R}_{3}(\xi)}\xi d\xi\right)-\left(\Re(a_3)+\Re(b_3)\right)F_{31}.
\end{aligned}
\end{equation}
With a generalized residue theorem, we can simplify $F_{31}$ and $F_{30}$ as
\begin{equation}
F_{31}=-\frac{1}{2\pi \ii}\int_{\Gamma_{g_3}}\frac{\ii nd_3}{\mathcal{R}_{3}(\xi)}d\xi,\quad F_{30}=\frac{1}{2}\ii n\kappa_3-\frac{1}{2\pi \ii}\int_{\Gamma_{g_3}}\frac{\ii nd_3}{\mathcal{R}_{3}(\xi)}\xi d\xi-\left(\Re(a_3)+\Re(b_3)\right)F_{31}.
\end{equation}
Through $F_{3}(\lambda; \chi, \tau)$, we can redefine a new matrix $\mathbf{P}_{3}(\lambda; \chi, \tau)$ as
\begin{equation}
\mathbf{P}_{3}(\lambda; \chi, \tau)={\rm diag}\left(\ee^{F_{30}}, \ee^{-F_{30}}\right)\dot{\mathbf{T}}_{3}^{\rm out}(\lambda; \chi, \tau){\rm diag}\left(\ee^{-F_{3}(\lambda; \chi, \tau)}, \ee^{F_{3}(\lambda; \chi, \tau)}\right),
\end{equation}
which satisfies a constant jump matrix
\begin{equation}
\mathbf{P}_{3,+}(\lambda; \chi, \tau)=\mathbf{P}_{3,-}(\lambda; \chi, \tau)(\ii\sigma_2),\quad\lambda\in\Sigma_{g_3}^{\pm}.
\end{equation}
When $\lambda\to\infty$, we have the normalization condition
\begin{equation}\label{eq:P3-bc}
\mathbf{P}_{3}(\lambda; \chi, \tau){\rm diag}\left(\ee^{F_{31}\lambda}, \ee^{-F_{31}\lambda}\right)\to\mathbb{I}.
\end{equation}
Before solving $\mathbf{P}_{3}(\lambda; \chi, \tau)$ with the boundary condition, we first give a Riemann surface $\Sigma_{1}$ with two sheets $\Sigma_{11}$ and $\Sigma_{12}$, whose basis can be set as $\{\alpha_{11}, \beta_{11}\}$ cycles, which is shown in Fig. \ref{circle:genus-one}.
\begin{figure}[ht]
\centering
\includegraphics[width=0.3\textwidth]{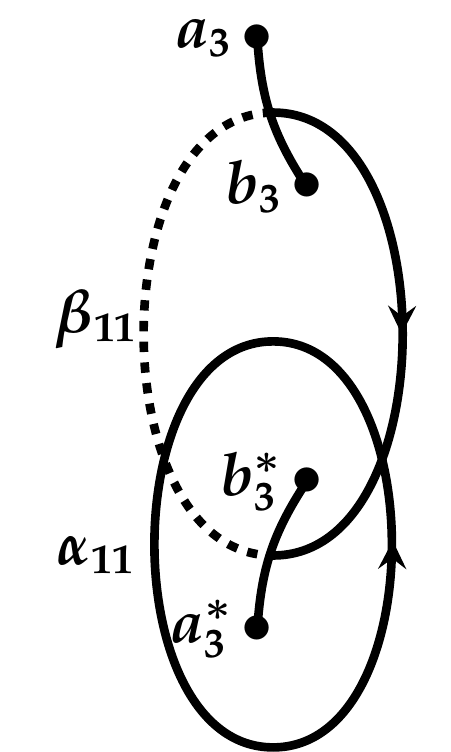}
\caption{Homology basis for the Riemann surface of genus-one. The solid paths indicate the first sheet and the dashed lines lie in the second sheet.}
\label{circle:genus-one}
\end{figure}

Then we give the Abel integrals
\begin{equation}\label{eq:Abel-int}
A(\lambda):=\frac{2\pi\ii}{\oint_{\alpha_{11}}\frac{1}{\mathcal{R}_{3}(\xi)} d\xi}\int_{a_3^*}^{\lambda}\frac{1}{\mathcal{R}_{3}(\xi)}d\xi,\quad B:=\frac{2\pi\ii}{\oint_{\alpha_{11}}\frac{1}{\mathcal{R}_{3}(\xi)} d\xi}\oint_{\beta_{11}}\frac{1}{\mathcal{R}_{3}(\xi)}d\xi.
\end{equation}
These two Abel integrals have the normalization condition
\begin{equation}
A_{+}(b_3^*)=-\ii\pi,\quad B=2\pi\ii \varsigma,
\end{equation}
where $\varsigma$ can be reduced by the elliptic integrals:
\begin{equation}\label{eq:varsigma}
\varsigma=\frac{\ii K(1-m)}{K(m)}.
\end{equation}
In this genus-one region, we define the lattice $\Lambda_{1}$ as
\begin{equation}
\Lambda_{1}:=2\pi\ii j+ Bk, \quad j,k\in\mathbb{Z}.
\end{equation}
Then the Abel integrals have the following relations:
\begin{equation}\label{eq:Arelation}
\begin{aligned}
A_{+}(\lambda)+A_{-}(\lambda)=-B\mod 2\pi \ii \mathbb{Z} ,\quad\lambda\in\Sigma_{g_3}^{+},\\
A_{+}(\lambda)-A_{-}(\lambda)=-2\pi\ii\mod 2\pi \ii \mathbb{Z},\quad\lambda\in\Gamma_{g_3},\\
A_{+}(\lambda)+A_{-}(\lambda)=0\mod 2\pi \ii \mathbb{Z},\quad\lambda\in\Sigma_{g_3}^{-}.
\end{aligned}
\end{equation}
Next, we introduce another Abel differential $\Delta$ as
\begin{equation}\label{eq:U-genus-one}
\Delta:=\frac{\xi^2-(\Re(a_3)+\Re(b_3))\xi+c}{\mathcal{R}_{3}(\xi)}d\xi,
\end{equation}
where $c$ is chosen such that $\Delta$ satisfies a normalization condition $\oint_{\alpha_{11}}\Delta=0.$ Correspondingly, we set the integration of $\oint_{\beta_{11}}\Delta$ as $U.$ Under this condition, $\Delta$ satisfies the following relations,
\begin{equation}\label{eq:Urelation}
\begin{aligned}
\int_{a_{3}^*}^{\lambda^+}\Delta+\int_{a_{3}^*}^{\lambda^{-}}\Delta=-U,\quad\lambda\in \Sigma_{g_3}^{+},\\
\int_{a_{3}^*}^{\lambda^+}\Delta+\int_{a_{3}^*}^{\lambda^{-}}\Delta=0,\quad\lambda\in \Sigma_{g_3}^{-},
\end{aligned}
\end{equation}
and when $\lambda\to\infty$,
\begin{equation}
J_{31}:=\lim\limits_{\lambda\to\infty}\left(\int_{a_3^*}^{\lambda}\Delta -\lambda\right)
\end{equation}
exists.

Following the result in \cite{Bilman-JDE-2021}, we know that the solution of $\mathbf{P}_{3}(\lambda; \chi, \tau)$ can be given by the Theta function, thus we give a brief review about the definition of Theta function.
\begin{definition}\label{prop:theta}
The $\Theta(\lambda)$ function is defined as
\begin{equation}
\begin{aligned}
&\Theta(\lambda)\equiv\Theta(\lambda; \mathbf{B}):=\sum\limits_{\mathbf{m}\in\mathbb{Z}^g}\ee^{\frac{1}{2}\langle \mathbf{m}, \mathbf{B}\mathbf{m}\rangle+\langle\mathbf{m},\lambda\rangle},\\
&\Theta(\lambda+2\pi\ii\mathbf{e}_j)=\Theta(\lambda),\qquad \Theta(\lambda+\mathbf{B}\mathbf{e}_j)=\ee^{-\frac{1}{2}B_{jj}-\lambda_j}\Theta(\lambda),
\end{aligned}
\end{equation}
where $\mathbf{e}_j$s are the unit basis vectors in $\mathbb{C}^{g}$ with the coordinates $(\mathbf{e}_{j})_{k}=\delta_{jk}$, and $\mathbf{B}\mathbf{e}_{j}$s are the $j-th$ column of matrix $\mathbf{B}$.
Especially, when $g=1$, we have
\begin{equation}
\Theta(\lambda; \mathbf{B}):=\sum\limits_{k\in\mathbb{Z}}\ee^{k\lambda+\frac{1}{2}Bk^2}=\theta_{3}\left(\frac{\lambda}{2\ii}, \ee^{\frac{B}{2}}\right),\\
\end{equation}
where $\theta_3(\lambda; \wp)$ is the third Jacobi-Theta function, defined as
\begin{equation}
\theta_3(\lambda; \wp)=\sum\limits_{k\in\mathbb{Z}}\ee^{2\ii k\lambda}\wp^{k^2}.
\end{equation}
\end{definition}
Consider the matrix $\pmb{\mathcal{P}}_{3}(\lambda; \chi, \tau)$ as
\begin{equation}
\pmb{\mathcal{P}}_{3}(\lambda; \chi, \tau):=\begin{bmatrix}\frac{\Theta\left(A(\lambda)+A(Q)+\ii\pi+\frac{B}{2}-F_{31}U\right)}{\Theta\left(A(\lambda)+A(Q)+\ii\pi+\frac{B}{2}\right)}&
\frac{\Theta\left(A(\lambda)-A(Q)-\ii\pi-\frac{B}{2}+F_{31}U\right)}{\Theta\left(A(\lambda)-A(Q)-\ii\pi-\frac{B}{2}\right)}\\
\frac{\Theta\left(A(\lambda)-A(Q)-\ii\pi-\frac{B}{2}-F_{31}U\right)}{\Theta\left(A(\lambda)-A(Q)-\ii\pi-\frac{B}{2}\right)}&
\frac{\Theta\left(A(\lambda)+A(Q)+\ii\pi+\frac{B}{2}+F_{31}U\right)}{\Theta\left(A(\lambda)+A(Q)+\ii\pi+\frac{B}{2}\right)}
\end{bmatrix}{\rm diag}\left(\ee^{-F_{31}\int_{a_3^*}^{\lambda}\Delta},\ee^{F_{31}\int_{a_3^*}^{\lambda}\Delta}\right),
\end{equation}
where the factor $\ii\pi+\frac{B}{2}$ is the Riemann constant \cite{belokolos1994algebro,kotlyarov2017planar} for genus-one and $Q$ is a constant to be determined. From the relations Eq.\eqref{eq:Arelation} and Eq.\eqref{eq:Urelation}, we know that $\pmb{\mathcal{P}}_3(\lambda; \chi, \tau)$ has the jump relation
\begin{equation}
\pmb{\mathcal{P}}_{3,+}(\lambda; \chi, \tau)=\pmb{\mathcal{P}}_{3,-}(\lambda; \chi, \tau)\begin{bmatrix}0&1\\
1&0\end{bmatrix},\quad \lambda\in\Sigma_{g_3}^{+}\cup\Sigma_{g_3}^{-}.
\end{equation}
Using this new matrix $\pmb{\mathcal{P}}_{3}(\lambda; \chi, \tau)$, we can get the solution of $\mathbf{P}_{3}(\lambda; \chi, \tau)$ as
\begin{equation}
\mathbf{P}_{3}(\lambda; \chi, \tau):=\frac{1}{2}{\rm diag}\left(C_{31}, C_{32}\right)\begin{bmatrix}\left(\gamma_{3}(\lambda)+\frac{1}{\gamma_{3}(\lambda)}\right)\pmb{\mathcal{P}}_{3}(\lambda; \chi, \tau)_{11}&-\ii \left(\gamma_{3}(\lambda)-\frac{1}{\gamma_{3}(\lambda)}\right)\pmb{\mathcal{P}}_{3}(\lambda; \chi, \tau)_{12}\\
\ii \left(\gamma_{3}(\lambda)-\frac{1}{\gamma_{3}(\lambda)}\right)\pmb{\mathcal{P}}_{3}(\lambda; \chi, \tau)_{21}&\left(\gamma_{3}(\lambda)+\frac{1}{\gamma_{3}(\lambda)}\right)\pmb{\mathcal{P}}_{3}(\lambda; \chi, \tau)_{22}
\end{bmatrix},
\end{equation}
where $C_{31}$ and $C_{32}$ are given by
\begin{equation}
\begin{aligned}
C_{31}:=\frac{\Theta\left(A(\infty)+A(Q)+\ii\pi+\frac{B}{2}\right)}{\Theta\left(A(\infty)+A(Q)+\ii\pi+\frac{B}{2}-F_{31}U\right)}\ee^{J_{31}F_{31}},\\
C_{32}:=\frac{\Theta\left(A(\infty)+A(Q)+\ii\pi+\frac{B}{2}\right)}{\Theta\left(A(\infty)+A(Q)+\ii\pi+\frac{B}{2}+F_{31}U\right)}\ee^{-J_{31}F_{31}},
\end{aligned}
\end{equation}
and $\gamma_{3}(\lambda)=\left(\frac{(\lambda-b_3)(\lambda-a_3^*)}{(\lambda-b_3^*)(\lambda-a_3)}\right)^{\frac{1}{4}}$ satisfies $\gamma_{3,+}(\lambda)=\ii\gamma_{3,-}(\lambda).$

Observing the off-diagonal element of $\mathbf{P}_{3}(\lambda; \chi, \tau)$, we see that $\lambda=\frac{\Re(a_3)\Im(b_3)-\Re(b_3)\Im(a_3)}{\Im(b_3)-\Im(a_3)}$ is a simple zero of $\mathbf{P}_{3}(\lambda; \chi, \tau)$ and $\lambda=Q$ is a simple pole, thus we can choose $Q=\frac{\Re(a_3)\Im(b_3)-\Re(b_3)\Im(a_3)}{\Im(b_3)-\Im(a_3)}$ to make sure that $\mathbf{P}_{3}(\lambda; \chi, \tau)$ is well defined in the complex plane. Then the outer parametrix has been constructed completely. The outer parametrix can match $\mathbf{T}_{3}(\lambda; \chi, \tau)$ except for the singularities $a_{3}, a_{3}^*, b_{3}$ and $b_{3}^*$. Similar to the genus-zero region, in the neighborhood of $a_{3}, a_{3}^*, b_{3}$ and $b_{3}^*$, we can set the local parametrix as $\dot{\mathbf{T}}_{3}^{a_3}(\lambda; \chi, \tau), \dot{\mathbf{T}}_{3}^{a_3^*}(\lambda; \chi, \tau), \dot{\mathbf{T}}_{3}^{b_3}(\lambda; \chi, \tau), \dot{\mathbf{T}}_{3}^{b_3^*}(\lambda; \chi, \tau)$ in these small disks $D_{a_3}(\delta), D_{a_3^*}(\delta), D_{b_3}(\delta)$ and $D_{b_3^*}(\delta)$ respectively. Then the global paramtrix of $\mathbf{T}_{3}(\lambda; \chi, \tau)$ is
\begin{equation}
\dot{\mathbf{T}}_{3}(\lambda; \chi, \tau):=\left\{\begin{aligned}&\dot{\mathbf{T}}_{3}^{a_3}(\lambda; \chi, \tau),\quad\lambda\in D_{a_3}(\delta),\\&\dot{\mathbf{T}}_{3}^{a_3^*}(\lambda; \chi, \tau),\quad\lambda\in D_{a_3^*}(\delta),\\&\dot{\mathbf{T}}_{3}^{b_3}(\lambda; \chi, \tau),\quad\lambda\in D_{b_3}(\delta),\\
&\dot{\mathbf{T}}_{3}^{b_3^*}(\lambda; \chi, \tau),\quad\lambda\in D_{b_3^*}(\delta),\\
&\dot{\mathbf{T}}_{3}^{\rm out}(\lambda; \chi, \tau),\quad\lambda\in\mathbb{C}\setminus\left(\overline{D_{a_3}(\delta)\cup D_{a_3^*}(\delta)\cup D_{b_3}(\delta)\cup D_{b_3^*}(\delta)}\cup \Sigma_{g_3}^{+}\cup \Sigma_{g_3}^{-}\cup \Gamma_{g_3}\right).
\end{aligned}\right.
\end{equation}
Next, we will give the error analysis between $\mathbf{T}_{3}(\lambda; \chi, \tau)$ and its parametrix $\dot{\mathbf{T}}_{3}(\lambda; \chi, \tau)$.
\subsection{Error analysis}
Set the error function between $\mathbf{T}_{3}(\lambda; \chi, \tau)$ and $\dot{\mathbf{T}}_{3}(\lambda; \chi, \tau)$ as
\begin{equation}
\mathcal{E}_{3}(\lambda; \chi, \tau):=\mathbf{T}_{3}(\lambda; \chi, \tau)\left(\dot{\mathbf{T}}_{3}(\lambda; \chi, \tau)\right)^{-1}.
\end{equation}
Then the jump matrices about $\mathcal{E}_{3}(\lambda; \chi, \tau)$ can be set as $\mathbf{V}_{\mathcal{E}_{3}}(\lambda; \chi, \tau)$. Similar to the analysis in genus-zero region, the errors on the contours $\partial D_{a_3}(\delta), \partial D_{a_3^*}(\delta), \partial D_{b_3}(\delta),\partial D_{b_3^*}(\delta)$ can be given by the Airy function, and the jump matrix $\mathbf{V}_{\mathcal{E}_{3}}(\lambda; \chi, \tau)$ has the error estimation,
\begin{equation}
\begin{aligned}
\|\mathbf{V}_{\mathcal{E}_{3}}(\lambda; \chi, \tau)-\mathbb{I}\|&=\mathcal{O}\left(\ee^{-\mu_3 n}\right)(\mu_3>0),\quad \lambda\in C_{L_{3\Sigma}}^{\pm}{\cup} C_{R_{3\Sigma}}^{\pm}{\cup} C_{R_{3\Gamma}}^{\pm}{\cup} C_{L_{3\Gamma}}^{\pm},\\
\|\mathbf{V}_{\mathcal{E}_{3}}(\lambda; \chi, \tau)-\mathbb{I}\|&=\mathcal{O}(n^{-1}),\quad \lambda\in\partial D_{a_3}(\delta)\cup \partial D_{a_3^*}(\delta)\cup\partial D_{b_3}(\delta)\cup\partial D_{b_3^*}(\delta).\\
\end{aligned}
\end{equation}
Under this case, the potential function $q^{[n]}(n\chi, n\tau)$ can be given by
\begin{equation}\label{eq:qn-genus-1}
\begin{aligned}
q^{[n]}(n\chi, n\tau)&=2\ii\lim\limits_{\lambda\to\infty}\lambda\mathbf{T}_{3}(\lambda; \chi, \tau)_{12}\\
&=2\ii\lim\limits_{\lambda\to\infty}\lambda\left(\mathcal{E}_{3}(\lambda; \chi, \tau)\dot{\mathbf{T}}^{\rm out}_{3}(\lambda; \chi, \tau)\right)_{12}\\
&=2\ii\lim\limits_{\lambda\to\infty}\lambda\left(\mathcal{E}_{3,11}(\lambda; \chi, \tau)\dot{\mathbf{T}}_{3,12}^{\rm out}(\lambda; \chi, \tau)+\mathcal{E}_{3,12}(\lambda; \chi, \tau)\dot{\mathbf{T}}_{3,22}^{\rm out}(\lambda; \chi, \tau)\right)\\
&=2\ii\lim\limits_{\lambda\to\infty}\lambda\dot{\mathbf{T}}_{3,12}^{\rm out}(\lambda; \chi, \tau)+\mathcal{O}(n^{-1}).
\end{aligned}
\end{equation}
Substituting $\dot{\mathbf{T}}_3^{\rm out}(\lambda; \chi, \tau)$ into Eq.\eqref{eq:qn-genus-1}, we have
\begin{multline}\label{eq:qn-genus-1-1}
q^{[n]}(n\chi, n\tau)=\frac{\Theta\left(A(\infty)+A(Q)+\ii\pi+\frac{B}{2}\right)}{\Theta\left(A(\infty)+A(Q)+\ii\pi+\frac{B}{2}-F_{31}U\right)}\frac{\Theta\left(A(\infty)-A(Q)-\ii\pi-\frac{B}{2}+F_{31}U\right)}{\Theta\left(A(\infty)-A(Q)-\ii\pi-\frac{B}{2}\right)}\\
\times \ii\left(\Im(a_3)-\Im(b_3)\right)\ee^{2F_{31}J_{31}-2F_{30}}+\mathcal{O}(n^{-1}).
\end{multline}
Moreover, we can simplify $q^{[n]}(n\chi, n\tau)$ into another equivalent formula following the method \cite{Biondini-CPAM-2017}. Consider a new function $f(\lambda)$
\begin{equation}
f(\lambda)=\gamma_3(\lambda)\left(\gamma_{3}(\lambda)^{-1}-\gamma_{3}(\lambda)\right)
\end{equation}
in the Riemann surface. By a simple calculation, we get that $\lambda=a_3$ and $\lambda=b_3^*$ are two poles of $f(\lambda)$ and $\lambda=Q$ and $\lambda=\infty$ are two zeros of it. With the aid of the Abel theorem \cite{belokolos1994algebro}, we have
\begin{equation}\label{eq:AQ}
A(Q)=A(a_3)+A(b_3^*)-A(\infty).
\end{equation}
Plugging $A(Q)$ into Eq.\eqref{eq:qn-genus-1-1}, we have
\begin{equation}\label{eq:qn-genus-1-2}
\begin{aligned}
q^{[n]}(n\chi, n\tau)&=\ii\left(\Im(a_3)-\Im(b_3)\right)\frac{\Theta\left(2A(\infty)+\ii n\varsigma d_3 \right)\Theta(0)}{\Theta\left(2A(\infty)\right)\Theta(-\ii n\varsigma d_3)}\ee^{2F_{31}J_{31}-2F_{30}}+\mathcal{O}(n^{-1})\\
&=\ii\left(\Im(a_3)-\Im(b_3)\right)\frac{\theta_{2}(\pi C+\frac{n\varsigma d_3}{2})\theta_3(0)}{\theta_{2}(\pi C)\theta_3\left(-\frac{n\varsigma d_3}{2}\right)}\ee^{\frac{\ii n\varsigma d_3}{2}+2F_{31}J_{31}-2F_{30}}+\mathcal{O}(n^{-1}),
\end{aligned}
\end{equation}
where $C$ is a new real constant defined as $$C:=\frac{1}{\oint_{\alpha_{11}}\frac{1}{\mathcal{R}_{3}(\xi)}d\xi}\left(\int_{a_3^*}^{\infty}\frac{1}{\mathcal{R}_{3}(\xi)}d\xi+\int_{a_3}^{\infty}\frac{1}{\mathcal{R}_{3}(\xi)}d\xi\right),$$
and the last relation in Eq.\eqref{eq:qn-genus-1-2} is given by using some shift properties of Jacobi-Theta function \cite{wang1989special}.

Moreover, $\left|q^{[n]}(n\chi, n\tau)\right|^2$ can be simplified further by the following calculation,
\begin{equation}\label{eq:qn-genus-1-3}
\begin{aligned}
\left|q^{[n]}(n\chi, n\tau)\right|^2&=\left(\Im(a_3)-\Im(b_3)\right)^2\frac{\theta_{2}\left(\pi C+\frac{n\varsigma d_3}{2}\right)\theta_2(\pi C-\frac{n\varsigma d_3 }{2})\theta_3^2(0)}{\theta_{2}^2\left(\pi C\right)\theta_3^2\left(-\frac{n\varsigma d_3 }{2}\right)}\ee^{\ii n \varsigma d_3+4\left|F_{31}J_{31}\right|^2}+\mathcal{O}(n^{-1})\\
&=\left(\Im(a_3)-\Im(b_3)\right)^2\left(1-\frac{\theta_4^2\left(\pi C\right)\theta_2^2(0)\theta_4^2(0)}{\theta_2^2(\pi C)\theta_3^4(0)}{\rm sd^2}(u,m)\right)\ee^{\ii n\varsigma d_3+4\left|F_{31}J_{31}\right|^2}+\mathcal{O}(n^{-1}),
\end{aligned}
\end{equation}
where $u:=-\frac{n\varsigma d_3}{2}\theta_{3}^2(0)$.

Next, we give a lemma to determine the constant $\frac{\theta_4^2(\pi C)}{\theta_2^2(\pi C)}$.

\begin{lemma}
\begin{equation}\label{eq:theta-simp}
\frac{\theta_{4}^2(\pi C)}{\theta_2^2(\pi C)}=-4\frac{\theta_{4}^2(0)}{\theta_{2}^2(0)}\frac{\frac{a_3^*-b_3^*}{a_3-b_3}}{\left(\frac{a_3^*-b_3^*}{a_3-b_3}-1\right)^2}.
\end{equation}
\begin{proof}
According to the shift relation between the Jacobi-Theta function, we convert this constant into another equivalent formula,
\begin{equation}\label{eq:Ainfty}
\frac{\theta_4^2(\pi C)}{\theta_{2}^2(\pi C)}=\frac{\theta_4^2\left(-\ii A(\infty)+\frac{\pi\varsigma}{2}\right)}{\theta_{2}^2\left(-\ii A(\infty)+\frac{\pi\varsigma}{2}\right)}=-\frac{\theta_1^2\left(2\pi \tilde{C}\right)}{\theta_3^2\left(2\pi\tilde{C}\right)},
\end{equation}
where $\tilde{C}=-\ii A(\infty):=\frac{1}{\oint_{\alpha_{11}}\frac{1}{\mathcal{R}(s)}ds}\int_{a^*_3}^{\infty}\frac{1}{\mathcal{R}(s)}ds$. Then we can use the relation between $\theta(2C)$ function and $\theta(C)$ function \cite{armitage2006elliptic} to convert this formula further,
\begin{equation}
\frac{\theta_4^2(\pi C)}{\theta_{2}^2(\pi C)}=-4\frac{\theta_4^2(0)}{\theta_2^2(0)}\left(\frac{\varrho}{\varrho^2-1}\right)^2,\,\,\,\, \varrho:=\frac{\theta_3(\pi\tilde{C})\theta_4(\pi\tilde{C})}{\theta_1(\pi\tilde{C})\theta_2(\pi\tilde{C})}.
\end{equation}
Next, we only need to determine the constant $\varrho$. To realize it, we can construct an auxiliary function $\varrho(k)$ on the Riemann surface with
\begin{equation}
\varrho(k):=\frac{\theta_3\left(\frac{\pi}{\oint_{\alpha_{11}}\frac{1}{\mathcal{R}(s)}ds}\int_{a_3^*}^{k}\frac{1}{\mathcal{R}(s)}ds\right)\theta_4\left(\frac{\pi}{\oint_{\alpha_{11}}\frac{1}{\mathcal{R}(s)}ds}\int_{a_3^*}^{k}\frac{1}{\mathcal{R}(s)}ds\right)}
{\theta_1\left(\frac{\pi}{\oint_{\alpha_{11}}\frac{1}{\mathcal{R}(s)}ds}\int_{a_3^*}^{k}\frac{1}{\mathcal{R}(s)}ds\right)\theta_2\left(\frac{\pi}{\oint_{\alpha_{11}}\frac{1}{\mathcal{R}(s)}ds}\int_{a_3^*}^{k}\frac{1}{\mathcal{R}(s)}ds\right)},
\end{equation}
then $\varrho$ equals to $\varrho=\varrho(\infty)$. With this definition, we know that $\varrho^2(k)$ is meromorphic on the Riemann surface. Then $\varrho(k)$ can be determined by its zeros and the poles uniquely. Now we recall the properties of the zeros about the Jacobi-Theta function,
\begin{equation}
\begin{aligned}
\theta_1(z)&=0,\quad z=n\pi +m\pi \varsigma,\,\,\,\,\, m,n\in \mathbb{Z}\\
\theta_{2}(z)&=0,\quad z=n\pi +m\pi \varsigma+\frac{1}{2}\pi,\\
\theta_{3}(z)&=0,\quad z=n\pi +m\pi \varsigma+\frac{1}{2}\pi+\frac{\pi}{2}\varsigma,\\
\theta_{4}(z)&=0,\quad z=n\pi +m\pi \varsigma+\frac{\pi}{2}\varsigma,
\end{aligned}
\end{equation}
thus $\varrho^2(k)$ can be set as
\begin{equation}
\varrho^2(k)=r^2\frac{\left(k-a_3\right)\left(k-b_3\right)}{\left(k-a_3^*\right)\left(k-b_3^*\right)},
\end{equation}
where $r^2$ is a constant to be determined. Moreover, by calculating the residue at the point $k=a_3^*$, we get
\begin{equation}
r^2=\frac{(a_3^*-b_3^*)}{(a_3^*-a_3)(a_3^*-b_3)}\underset{k=a_3^*}{\rm Res}\left(\varrho^2(k)\right)=\frac{\theta_3^2(0)\theta_4^2(0)\left(\oint_{\alpha_{11}}\frac{1}{\mathcal{R}(s)}ds\right)^2\left(a_3^*-b_3^*\right)^2}{4\pi^2\left(\theta_1'(0)\right)^2\theta_2^2(0)}.
\end{equation}
With a simple calculation, we have
\begin{equation}
r^2=\frac{a_3^*-b_3^*}{a_3-b_3},
\end{equation}
which infers that $\varrho=\varrho(\infty)=\left(\frac{a_3^*-b_3^*}{a_3-b_3}\right)^{1/2}$, it completes the proof.
\end{proof}
\end{lemma}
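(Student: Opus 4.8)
The plan is to reduce the whole identity to the evaluation of a single constant $\varrho=\varrho(\infty)$ living on the genus-one curve $\Sigma_1$, and then to pin that constant down by a divisor argument on the surface. First I would rewrite the left-hand ratio $\theta_4^2(\pi C)/\theta_2^2(\pi C)$: insert the value $\pi C=-\ii A(\infty)+\frac{\pi\varsigma}{2}$ recorded in Eq.~\eqref{eq:Ainfty} and apply the quarter-period shift relations among the four Jacobi theta functions, which turns the ratio into $-\theta_1^2(2\pi\tilde C)/\theta_3^2(2\pi\tilde C)$ with $\tilde C=-\ii A(\infty)$. Then I would feed this into the duplication (Landen-type) formulas that express $\theta_1(2z)$ and $\theta_3(2z)$ as products of $\theta_1(z),\theta_2(z),\theta_3(z),\theta_4(z)$; after cancellation the ratio collapses to $-4\,\frac{\theta_4^2(0)}{\theta_2^2(0)}\bigl(\frac{\varrho}{\varrho^2-1}\bigr)^2$ with $\varrho=\theta_3(\pi\tilde C)\theta_4(\pi\tilde C)/\bigl(\theta_1(\pi\tilde C)\theta_2(\pi\tilde C)\bigr)$, so that everything comes down to identifying $\varrho$.

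To compute $\varrho$ I would promote it to a function $\varrho(k)$ of a point $k$ on $\Sigma_1$, replacing $\pi\tilde C$ by the Abel integral $\frac{\pi}{\oint_{\alpha_{11}}\frac{d\xi}{\mathcal{R}_3(\xi)}}\int_{a_3^*}^{k}\frac{d\xi}{\mathcal{R}_3(\xi)}$, so that $\varrho=\varrho(\infty)$. Since $\theta_1$ is odd and $\theta_2,\theta_3,\theta_4$ are even, $\varrho(k)$ changes sign under the sheet interchange, hence $\varrho(k)^2$ descends to a rational function of $k$; a short local analysis near the four branch points, using the normalization $A_+(b_3^*)=-\ii\pi$, $B=2\pi\ii\varsigma$ and the standard positions of the zeros of the Jacobi theta functions, shows that it has simple zeros at $k=a_3,b_3$ and simple poles at $k=a_3^*,b_3^*$ and nothing else, so $\varrho^2(k)=r^2\,\dfrac{(k-a_3)(k-b_3)}{(k-a_3^*)(k-b_3^*)}$ for some constant $r$. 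Matching the residue of $\varrho^2(k)$ at $k=a_3^*$ and simplifying with Jacobi's identity $\theta_1'(0)=\pi\,\theta_2(0)\theta_3(0)\theta_4(0)$ gives $r^2=\dfrac{a_3^*-b_3^*}{a_3-b_3}$, whence $\varrho=\bigl(\tfrac{a_3^*-b_3^*}{a_3-b_3}\bigr)^{1/2}$. Writing $\varrho_0:=\tfrac{a_3^*-b_3^*}{a_3-b_3}$ and substituting $\varrho^2=\varrho_0$ into $\bigl(\tfrac{\varrho}{\varrho^2-1}\bigr)^2=\tfrac{\varrho_0}{(\varrho_0-1)^2}$ then yields Eq.~\eqref{eq:theta-simp}.

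The step I expect to be the crux is the bookkeeping in the middle: getting the precise signs and modular prefactors in the quarter- and half-period shifts exactly right so that the intermediate ratio is genuinely $-\theta_1^2/\theta_3^2$ rather than one of the other argument/sign variants, and then verifying that $\varrho(k)^2$ carries no spurious zeros or poles on $\Sigma_1$. The latter rests on the fact that a theta quotient of this particular shape has exactly the divisor dictated by the vanishing of its numerator and denominator factors, and that those vanishing points are forced to coincide with the branch points under the chosen homology basis and base point $a_3^*$. Once this is secured, the residue computation and the final algebraic substitution are routine.
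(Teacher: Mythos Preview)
Your proposal is correct and follows essentially the same approach as the paper: the same quarter-period shift to $-\theta_1^2(2\pi\tilde C)/\theta_3^2(2\pi\tilde C)$, the same duplication step producing the $\varrho/(\varrho^2-1)$ form, the same promotion of $\varrho$ to a function on $\Sigma_1$, the same divisor identification $\varrho^2(k)=r^2\frac{(k-a_3)(k-b_3)}{(k-a_3^*)(k-b_3^*)}$, and the same residue calculation at $k=a_3^*$ (simplified via $\theta_1'(0)=\pi\theta_2(0)\theta_3(0)\theta_4(0)$) to fix $r^2$. Your remarks about parity of the theta functions and the need for careful sign bookkeeping are reasonable elaborations, but the underlying argument is the paper's.
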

Substituting Eq.\eqref{eq:theta-simp} into Eq.\eqref{eq:qn-genus-1-3}, we can rewrite $\left|q^{[n]}(n\chi, n\tau)\right|^2$ as Eq.\eqref{eq:qn-genus-one-1} by using a shift property between the ${\rm sd}(u)$ function and the ${\rm cn}(u)$ function. Thus we complete the asymptotic analysis in the genus-one region (Theorem \ref{theo:genus-one}).

\section{The genus-three region}
\label{sec:genus-three}
In this section, we will give the asymptotic analysis for the genus-three region. Compared to the result in \cite{Bilman-JDE-2021}, the phase term $\vartheta(\lambda; \chi, \tau)$ in this paper contains two spectral parameters $\lambda_1=\ii$ and $\lambda_2=k\ii$, it is naturally to think that there might add a closed contour that involving another spectral parameter, which will match the genus-three region very well. To study this asymptotics, we firstly find a scalar RHP about $g_{4}(\lambda)\equiv g_4(\lambda; \chi, \tau)$ function.
\begin{rhp}
($g_4(\lambda)$-function in the genus-three region) For fixed $(\chi, \tau)$ in this region, there are unique contours $\Sigma_{g_5}^{\pm}, \Sigma_{g_4}^{\pm}, \Gamma_{g_5}^{\pm}, \Gamma_{g_4}$ and the integration constants $\kappa_4, \kappa_5, d_4, d_5$, this unique $g_{4}(\lambda)$ function satisfies the following conditions.
\begin{itemize}
\item {\bf Analyticity:} $g_4(\lambda)$ is analytic for $\lambda\in\mathbb{C}\setminus\left(\Sigma_{g_5}^{\pm}\cup \Sigma_{g_4}^{\pm}\cup \Gamma_{g_5}^{\pm}\cup \Gamma_{g_4}\right)$, where these contours are to be determined.
\item {\bf Jump condition:} The boundary values of $g_{4}(\lambda)$ on these contours are related by the following jump conditions,
\begin{equation}
\begin{aligned}
&g_{4,+}(\lambda)-g_{4,-}(\lambda)=d_4,\quad\lambda\in \Gamma_{g_4}, \quad g_{4,+}(\lambda)-g_{4,-}(\lambda)=d_5, \quad\lambda\in\Gamma_{g_5}^{\pm},\\
&g_{4,+}(\lambda)+g_{4,-}(\lambda)+2\vartheta(\lambda; \chi, \tau)=\kappa_j, \quad\lambda\in \Sigma_{g_{j}}^{\pm}\quad(j=4,5).
\end{aligned}
\end{equation}
\item {\bf Normalization:} As $\lambda\to\infty$, $g_{4}(\lambda)$ has the normalization condition
\begin{equation}
g(\lambda)=\mathcal{O}(\lambda^{-1}).
\end{equation}
\item {\bf Symmetry condition:} $g_{4}(\lambda)$ satisfies the symmetry relation $g_{4}(\lambda)=g_{4}(\lambda^*)^*$.
\end{itemize}
\end{rhp}
Suppose that there are eight complex numbers $a_4\equiv a_{4}(\chi, \tau), b_4\equiv b_{4}(\chi, \tau), c_4\equiv c_{4}(\chi, \tau), d_4\equiv d_{4}(\chi, \tau)$ and their conjugates, with $\Im(a_4)>\Im(b_4)>\Im(c_4)>\Im(d_4)$. Assume that $\Sigma_{g_5}^{+}$ is oriented from $b_4$ to $a_4$, $\Sigma_{g_4}^{+}$ is oriented from $d_4$ to $c_4$, $\Gamma_{g_5}^{+}$ is oriented from $c_4$ to $b_4$ ,$\Gamma_{g_4}$ is oriented from $d_4^*$ to $d_4$ and $\Sigma_{g_5}^{-}, \Sigma_{g_4}^{-}, \Gamma_{g_5}^{-}$ are the reflection of $\Sigma_{g_5}^{+}, \Sigma_{g_4}^{+}, \Gamma_{g_5}^{+}$ with the real axis respectively, where the orientation is upward. To solve $g_{4}(\lambda)$ in this RHP, we can introduce a function $\mathcal{R}_{4}(\lambda)$ defined as
\begin{equation}
\begin{aligned}
\mathcal{R}_{4}(\lambda)\equiv\mathcal{R}_{4}(\lambda; \chi, \tau)&=\sqrt{(\lambda-a_4)(\lambda-a_4^*)(\lambda-b_4)(\lambda-b_4^*)(\lambda-c_4)(\lambda-c_4^*)(\lambda-d_4)(\lambda-d_4^*)}\\
:&=\sqrt{\lambda^8-s_1\lambda^7+s_2\lambda^6-s_3\lambda^5+s_4\lambda^4-s_5\lambda^3+s_6\lambda^2-s_7\lambda+s_8}.
\end{aligned}
\end{equation}
For convenience, we take the derivative of $g_{4}(\lambda)$ with respect to $\lambda$ to eliminate the integration constants $\kappa_4, \kappa_5, d_4$, $d_5$, that is
\begin{equation}
g_{4,+}'(\lambda)+g_{4,-}'(\lambda)=-2\chi-4\lambda\tau-\frac{\ii}{\lambda-\ii}+\frac{\ii}{\lambda+\ii}-\frac{\ii}{\lambda-k\ii}+\frac{\ii}{\lambda+k\ii},\quad\lambda\in \Sigma_{g_4}^{\pm}\cup \Sigma_{g_5}^{\pm}.
\end{equation}
By dividing $\mathcal{R}_{4,+}(\lambda)$ on both sides, we have
\begin{equation}
\left(\frac{g_{4}'(\lambda)}{\mathcal{R}_{4}(\lambda)}\right)_{+}-\left(\frac{g_{4}'(\lambda)}{\mathcal{R}_{4}(\lambda)}\right)_{-}=-\frac{2\chi+4\lambda\tau+\frac{\ii}{\lambda-\ii}-\frac{\ii}{\lambda+\ii}+\frac{\ii}{\lambda-k\ii}-\frac{\ii}{\lambda+k\ii}}{\mathcal{R}_{4,+}(\lambda)},\quad\lambda\in \Sigma_{g_4}^{\pm}\cup \Sigma_{g_5}^{\pm},
\end{equation}
then $g'_{4}(\lambda)$ can be given by the Plemelj formula,
\begin{equation}
g'_{4}(\lambda)=\frac{\mathcal{R}_{4}(\lambda)}{2\pi\ii}\int_{\Sigma_{g_4}^{\pm}\cup \Sigma_{g_5}^{\pm}}-\frac{2\chi+4\xi\tau+\frac{\ii}{\xi-\ii}-\frac{\ii}{\xi+\ii}+\frac{\ii}{\xi-k\ii}-\frac{\ii}{\xi+k\ii}}{\mathcal{R}_{4,+}(\xi)(\xi-\lambda)}d\xi.
\end{equation}
Moreover, $g'_{4}(\lambda)$ can be given by the generalized residue theorem,
\begin{equation}
g'_{4}(\lambda)=\frac{\ii}{2}\mathcal{R}_{4}(\lambda)\left(\frac{1}{\mathcal{R}_{4}(\ii)(\lambda-\ii)}-\frac{1}{\mathcal{R}_{4}(-\ii)(\lambda+\ii)}+\frac{1}{\mathcal{R}_{4}(k\ii)(\lambda-k\ii)}-\frac{1}{\mathcal{R}_{4}(-k\ii)(\lambda+k\ii)}\right)-\vartheta'(\lambda; \chi, \tau).
\end{equation}
From the normalization condition of $g_4(\lambda)$ when $\lambda\to\infty$, we know that $g'_{4}(\lambda)\to\mathcal{O}(\lambda^{-2})$, which leads to five equations about the unknown parameters,
\begin{equation}
\begin{aligned}
\mathcal{O}(\lambda^3): &\frac{\ii}{2}\left(\frac{1}{\mathcal{R}_{4}(\ii)}-\frac{1}{\mathcal{R}_{4}(-\ii)}+\frac{1}{\mathcal{R}_{4}(k\ii)}-\frac{1}{\mathcal{R}_{4}(-k\ii)}\right)=0,\\
\mathcal{O}(\lambda^2):&-\frac{1}{2}\left(\frac{1}{\mathcal{R}_{4}(\ii)}+\frac{1}{\mathcal{R}_{4}(-\ii)}+\frac{k}{\mathcal{R}_{4}(k\ii)}+\frac{k}{\mathcal{R}_{4}(-k\ii)}\right)=0,\\
\mathcal{O}(\lambda):&-\frac{\ii}{2}\left(\frac{1}{\mathcal{R}_{4}(\ii)}-\frac{1}{\mathcal{R}_{4}(-\ii)}+\frac{k^2}{\mathcal{R}_{4}(k\ii)}-\frac{k^2}{\mathcal{R}_{4}(-k\ii)}-4\ii \tau\right)=0,\\
\mathcal{O}(\lambda^0): &\frac{1}{2}\left(\frac{1}{\mathcal{R}_{4}(\ii)}+\frac{1}{\mathcal{R}(-\ii)}+\frac{k^3}{\mathcal{R}_{4}(k\ii)}+\frac{k^3}{\mathcal{R}_{k^3}(-k\ii)}-2s_1\tau-2\chi\right)=0,\\
\mathcal{O}(\lambda^{-1}):&\frac{\ii}{2}\left(\frac{1}{\mathcal{R}_{4}(\ii)}-\frac{1}{\mathcal{R}_{4}(-\ii)}+\frac{k^4}{\mathcal{R}_{4}(k\ii)}-\frac{k^4}{\mathcal{R}_{4}(-k\ii)}+\left(\frac{3\ii s_1^2}{2}-2\ii s_2\right)\tau+\ii s_1\chi\right)=0.
\end{aligned}
\end{equation}
For fixed $\chi$ and $\tau$, we need to compute the unknown eight parameters, thus we need another three conditions to determine them. From the symmetry of $g_4(\lambda)$, we know that the integration constant $d_4$ is a real number. Moreover, we impose that the other three integration constants are all real numbers as the rest conditions for calculating the eight branch points.

By choosing one fixed $\chi$ and $\tau$, we give the contour plot of ${\Im}\left(h_4\left(\lambda\right)\equiv h_{4}(\lambda; \chi, \tau):=g_{4}(\lambda)+\vartheta(\lambda; \chi, \tau)\right)$ in Fig. \ref{fig:genus-three}.
\begin{figure}[ht]
\centering
\includegraphics[width=0.45\textwidth]{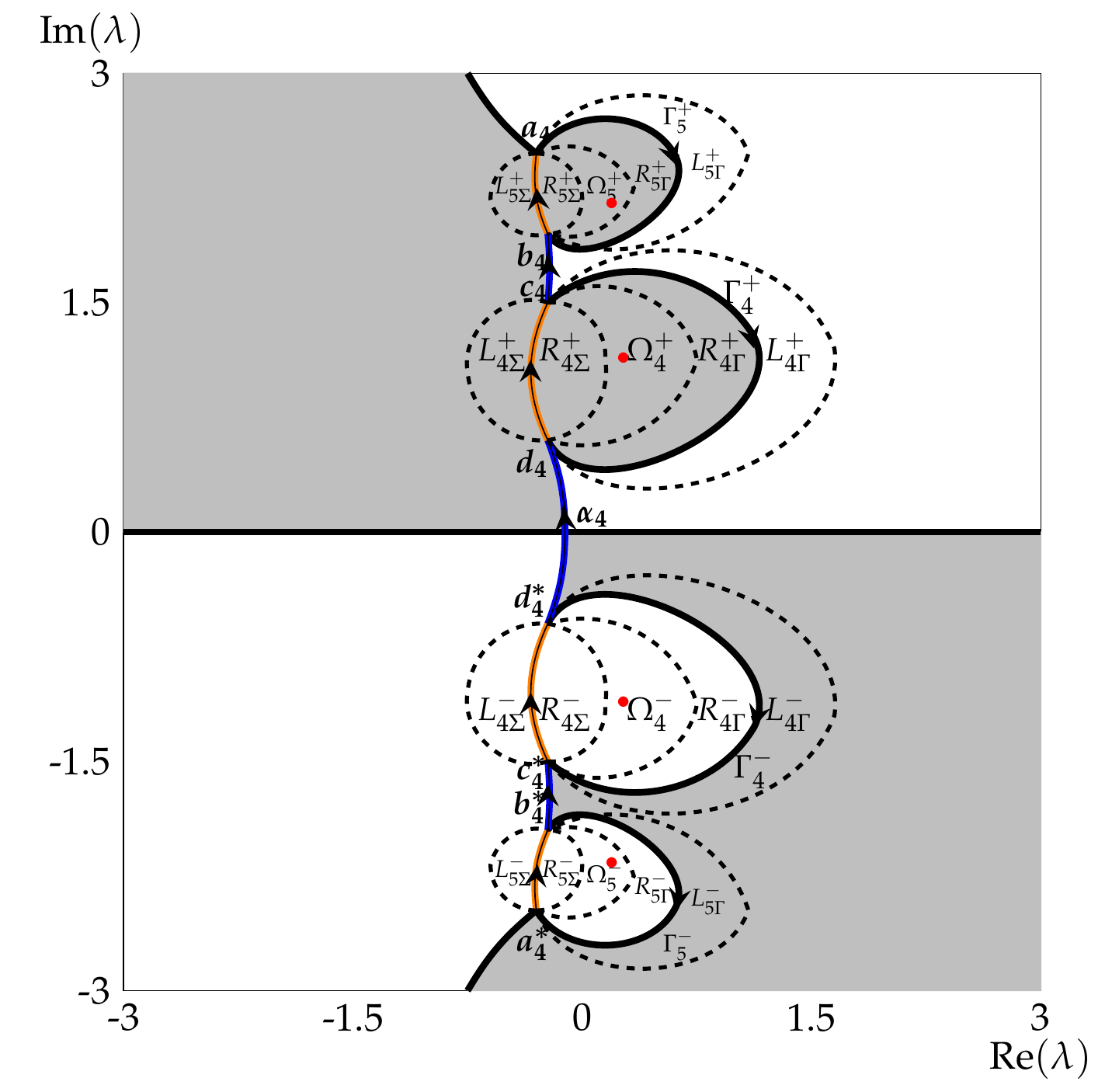}
\centering
\includegraphics[width=0.45\textwidth]{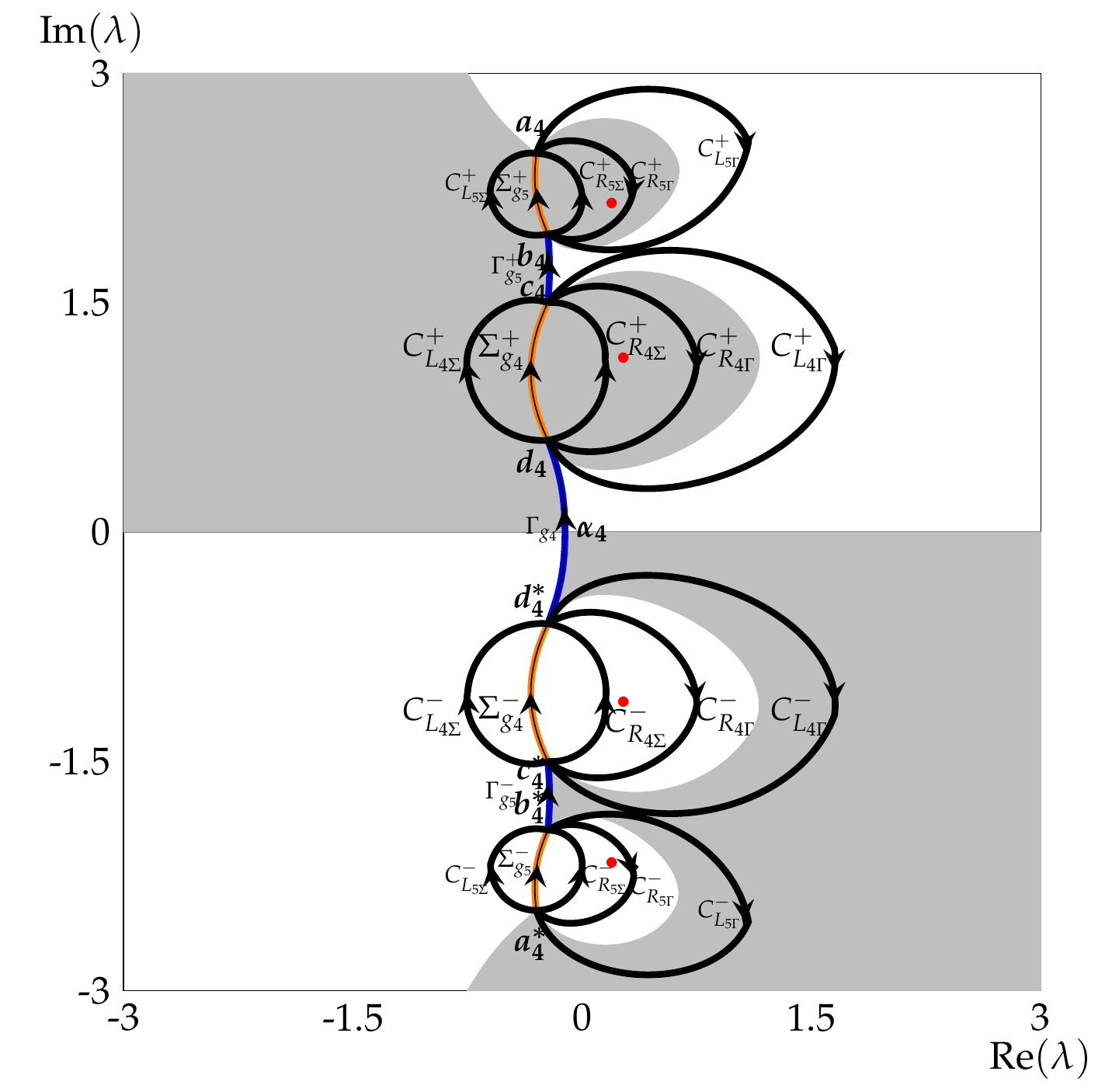}
\caption{The contour plot of ${\Im}\left(h_4\left(\lambda\right)\right)$ in the genus-three region with the parameters $\chi=\frac{11}{10}, \tau=1$ and $k=2$, where ${\Im}\left(h_4\left(\lambda\right)\right)<0$ (shaded) and ${\Im}\left(h_4\left(\lambda\right)\right)>0$ (unshaded), the red dots are the singularities $\lambda=\pm\ii$ and $\lambda=\pm 2\ii$. The left one gives the original contour and the right one is the corresponding contour deformation. }
\label{fig:genus-three}
\end{figure}

Similarly, we still analyze the asymptotics with the Deift-Zhou nonlinear steepest descent method.

Set
\begin{equation}
\mathbf{S}_{4}(\lambda; \chi, \tau):=\left\{\begin{aligned}&\mathbf{R}(\lambda; \chi, \tau)\ee^{-\ii n\vartheta(\lambda; \chi, \tau)\sigma_3}\mathbf{Q}_{c}\ee^{\ii n\vartheta(\lambda; \chi, \tau)\sigma_3},\quad &\lambda\in D_0\cap\left(D_{4}^{+}\cup D_{4}^{-}\right)^{c},\\
&\mathbf{R}(\lambda; \chi, \tau),\quad &\text{otherwise},
\end{aligned}\right.
\end{equation}
where $D_{4}^{+}=R_{4\Sigma}^{+}\cup \Omega_{4}^{+}\cup R_{4\Gamma}^{+}\cup R_{5\Sigma}^{+}\cup \Omega_{5}^{+}\cup R_{5\Gamma}^{+}, D_{4}^{-}=R_{4\Sigma}^{-}\cup \Omega_{4}^{-}\cup R_{4\Gamma}^{-}\cup R_{5\Sigma}^{-}\cup \Omega_{5}^{-}\cup R_{5\Gamma}^{-}.$
Then the jump contour of $\mathbf{S}_{4}(\lambda; \chi, \tau)$ transfers into the boundary of $D_4^{+}$ and $D_4^{-}$, which is suit for deforming the contour with the Deift-Zhou nonlinear steepest descent method. Set
\begin{equation}\label{eq:S-genus-three}
\begin{aligned}
\mathbf{T}_{4}(\lambda; \chi, \tau):&=\mathbf{S}_{4}(\lambda; \chi, \tau)\ee^{-\ii n\vartheta(\lambda; \chi, \tau)\sigma_3}
\left(\mathbf{Q}_{R}^{[2]}\right)^{-1}\ee^{\ii n\vartheta(\lambda; \chi, \tau)\sigma_3}\ee^{\ii n g_{4}(\lambda)\sigma_3},\quad &\lambda\in L_{4\Gamma}^{+}\cup L_{5\Gamma}^{+},\\
\mathbf{T}_{4}(\lambda; \chi, \tau):&=\mathbf{S}_{4}(\lambda; \chi, \tau)\mathbf{Q}_{L}^{[2]}\ee^{-\ii n\vartheta(\lambda; \chi, \tau)\sigma_3}
\mathbf{Q}_{C}^{[2]}\ee^{\ii n\vartheta(\lambda; \chi, \tau)\sigma_3}\ee^{\ii n g_{4}(\lambda)\sigma_3},\quad &\lambda\in R_{4\Gamma}^{+}\cup R_{5\Gamma}^{+},\\
\mathbf{T}_{4}(\lambda; \chi, \tau):&=\mathbf{S}_{4}(\lambda; \chi, \tau)\mathbf{Q}_{L}^{[2]}\ee^{\ii n g_{4}(\lambda)\sigma_3},\quad &\lambda\in \Omega_{4}^{+}\cup \Omega_{5}^{+},\\
\mathbf{T}_{4}(\lambda; \chi, \tau):&=\mathbf{S}_{4}(\lambda; \chi, \tau)\mathbf{Q}_{L}^{[2]}\ee^{-\ii n\vartheta(\lambda; \chi, \tau)\sigma_3}\mathbf{Q}_{L}^{[3]}\ee^{\ii n\vartheta(\lambda; \chi, \tau)\sigma_3}\ee^{\ii n g_{4}(\lambda)\sigma_3},\quad &\lambda\in R_{4\Sigma}^{+}\cup R_{5\Sigma}^{+},\\
\mathbf{T}_{4}(\lambda; \chi, \tau):&=\mathbf{S}_{4}(\lambda; \chi, \tau)\ee^{-\ii n\vartheta(\lambda; \chi, \tau)\sigma_3}
\left(\mathbf{Q}_{R}^{[3]}\right)^{-1}\ee^{\ii n\vartheta(\lambda; \chi, \tau)\sigma_3}\ee^{\ii n g_{4}(\lambda)\sigma_3},\quad &\lambda\in L_{4\Sigma}^{+}\cup L_{5\Sigma}^{+},\\
\mathbf{T}_{4}(\lambda; \chi, \tau):&=\mathbf{S}_{4}(\lambda; \chi, \tau)\ee^{-\ii n\vartheta(\lambda; \chi, \tau)\sigma_3}
\left(\mathbf{Q}_{R}^{[1]}\right)^{-1}\ee^{\ii n\vartheta(\lambda; \chi, \tau)\sigma_3}\ee^{\ii n g_{4}(\lambda)\sigma_3},\quad &\lambda\in L_{4\Gamma}^{-}\cup L_{5\Gamma}^{-},\\
\mathbf{T}_{4}(\lambda; \chi, \tau):&=\mathbf{S}_{4}(\lambda; \chi, \tau)\mathbf{Q}_{L}^{[1]}\ee^{-\ii n\vartheta(\lambda; \chi, \tau)\sigma_3}
\mathbf{Q}_{C}^{[1]}
\ee^{\ii n\vartheta(\lambda; \chi, \tau)\sigma_3}\ee^{\ii n g_{4}(\lambda)\sigma_3},\quad &\lambda\in R_{4\Gamma}^{-}\cup R_{5\Gamma}^{-},\\
\mathbf{T}_{4}(\lambda; \chi, \tau):&=\mathbf{S}_{4}(\lambda; \chi, \tau)\mathbf{Q}_{L}^{[1]}\ee^{\ii n g_{4}(\lambda)\sigma_3},\quad &\lambda\in \Omega_{4}^{-}\cup \Omega_{5}^{-},\\
\mathbf{T}_{4}(\lambda; \chi, \tau):&=\mathbf{S}_{4}(\lambda; \chi, \tau)\mathbf{Q}_{L}^{[1]}\ee^{-\ii n\vartheta(\lambda; \chi, \tau)\sigma_3}
\mathbf{Q}_{L}^{[4]}\ee^{\ii n\vartheta(\lambda; \chi, \tau)\sigma_3}\ee^{\ii n g_{4}(\lambda)\sigma_3},\quad &\lambda\in R_{4\Sigma}^{-}\cup R_{5\Sigma}^{-},\\
\mathbf{T}_{4}(\lambda; \chi, \tau):&=\mathbf{S}_{4}(\lambda; \chi, \tau)\ee^{-\ii n\vartheta(\lambda; \chi, \tau)\sigma_3}
\left(\mathbf{Q}_{R}^{[4]}\right)^{-1}\ee^{\ii n\vartheta(\lambda; \chi, \tau)\sigma_3}\ee^{\ii n g_{4}(\lambda)\sigma_3},\quad &\lambda\in L_{4\Sigma}^{-}\cup L_{5\Sigma}^{-},
\end{aligned}
\end{equation}
in other regions, we set $\mathbf{T}_{4}(\lambda; \chi, \tau):=\mathbf{S}_{4}(\lambda; \chi, \tau)\ee^{\ii n g_{4}(\lambda)\sigma_3}$. With a simple calculation, the jump conditions about $\mathbf{T}_{4}(\lambda; \chi, \tau)$ change into
\begin{equation}\label{eq:jump-genus-three}
\begin{aligned}
\mathbf{T}_{4, +}(\lambda; \chi, \tau)&=\mathbf{T}_{4, -}(\lambda; \chi, \tau)\ee^{-\ii nh_4(\lambda)\sigma_3}
\mathbf{Q}_{R}^{[2]}\ee^{\ii nh_4(\lambda)\sigma_3},\quad &\lambda\in C_{L_{4\Gamma}}^{+}\cup C_{L_{5\Gamma}}^{+},\\
\mathbf{T}_{4, +}(\lambda; \chi, \tau)&=\mathbf{T}_{4, -}(\lambda; \chi, \tau)\ee^{-\ii nh_4(\lambda)\sigma_3}\mathbf{Q}_{C}^{[2]}\ee^{\ii nh_4(\lambda)\sigma_3},\quad &\lambda\in C_{R_{4\Gamma}}^{+}\cup C_{R_{5\Gamma}}^{+},\\
\mathbf{T}_{4, +}(\lambda; \chi, \tau)&=\mathbf{T}_{4, -}(\lambda; \chi, \tau)\ee^{-\ii nh_4(\lambda)\sigma_3}\mathbf{Q}_{L}^{[3]}\ee^{\ii nh_4(\lambda)\sigma_3},\quad &\lambda\in C_{R_{4\Sigma}}^{+}\cup C_{R_{5\Sigma}}^{+},\\
\mathbf{T}_{4, +}(\lambda; \chi, \tau)&=\mathbf{T}_{4, -}(\lambda; \chi, \tau)\ee^{-\ii nh_4(\lambda)\sigma_3}\mathbf{Q}_{R}^{[3]}\ee^{\ii nh_4(\lambda)\sigma_3},\quad &\lambda\in C_{L_{4\Sigma}}^{+}\cup C_{L_{5\Sigma}}^{+},\\
\mathbf{T}_{4, +}(\lambda; \chi, \tau)&=\mathbf{T}_{4, -}(\lambda; \chi, \tau)\ee^{-\ii nh_4(\lambda)\sigma_3}\mathbf{Q}_{R}^{[1]}\ee^{\ii nh_4(\lambda)\sigma_3},\quad &\lambda\in C_{L_{4\Gamma}}^{-}\cup C_{L_{5\Gamma}}^{-},\\
\mathbf{T}_{4, +}(\lambda; \chi, \tau)&=\mathbf{T}_{4, -}(\lambda; \chi, \tau)\ee^{-\ii nh_4(\lambda)\sigma_3}\mathbf{Q}_{C}^{[1]}\ee^{\ii nh_4(\lambda)\sigma_3},\quad &\lambda\in C_{R_{4\Gamma}}^{-}\cup C_{R_{5\Gamma}}^{-},\\
\mathbf{T}_{4, +}(\lambda; \chi, \tau)&=\mathbf{T}_{4, -}(\lambda; \chi, \tau)\ee^{-\ii nh_4(\lambda)\sigma_3}\mathbf{Q}_{L}^{[4]}\ee^{\ii nh_4(\lambda)\sigma_3},\quad &\lambda\in C_{R_{4\Sigma}}^{-}\cup C_{R_{5\Sigma}}^{-},\\
\mathbf{T}_{4, +}(\lambda; \chi, \tau)&=\mathbf{T}_{4, -}(\lambda; \chi, \tau)\ee^{-\ii nh_4(\lambda)\sigma_3}\mathbf{Q}_{R}^{[4]}\ee^{\ii nh_4(\lambda)\sigma_3},\quad &\lambda\in C_{L_{4\Sigma}}^{-}\cup C_{L_{5\Sigma}}^{-},\\
\mathbf{T}_{4, +}(\lambda; \chi, \tau)&=\mathbf{T}_{4, -}(\lambda; \chi, \tau)\begin{bmatrix}0&\ee^{-\ii n\kappa_4}\\
-\ee^{\ii n\kappa_4}&0
\end{bmatrix},&\lambda\in \Sigma_{g_4}^{+}\cup \Sigma_{g_4}^{-},\\
\mathbf{T}_{4, +}(\lambda; \chi, \tau)&=\mathbf{T}_{4, -}(\lambda; \chi, \tau)\begin{bmatrix}0&\ee^{-\ii n\kappa_5}\\
-\ee^{\ii n\kappa_5}&0
\end{bmatrix},&\lambda\in \Sigma_{g_5}^{+}\cup \Sigma_{g_5}^{-},\\
\mathbf{T}_{4, +}(\lambda; \chi, \tau)&=\mathbf{T}_{4, -}(\lambda; \chi, \tau)\begin{bmatrix}\ee^{\ii nd_5}&0\\
0&\ee^{-\ii nd_5}
\end{bmatrix},&\lambda\in \Gamma_{g_5}^{+}\cup \Gamma_{g_5}^{-},\\
\mathbf{T}_{4, +}(\lambda; \chi, \tau)&=\mathbf{T}_{4, -}(\lambda; \chi, \tau)\begin{bmatrix}\ee^{\ii nd_4}&0\\
0&\ee^{-\ii nd_4}
\end{bmatrix},&\lambda\in \Gamma_{g_4}.
\end{aligned}
\end{equation}
When $n$ is large, the jump conditions in Eq.\eqref{eq:jump-genus-three} will tend to the identity matrix exponentially except for the contours $\Sigma_{g_4}^{\pm}\cup \Sigma_{g_5}^{\pm}\cup \Gamma_{g_5}^{\pm}$ and $\Gamma_{g_4}$. In the next subsection, we will give the parametrix construction about $\mathbf{T}_{4}(\lambda; \chi, \tau)$ relying on the constant jump matrix in Eq.\eqref{eq:jump-genus-three}.
\subsection{Parametrix construction for $\mathbf{T}_{4}(\lambda; \chi, \tau)$}
Similar to the analysis in the genus-one region, we can construct the outer parametrix $\dot{\mathbf{T}}_{4}^{\rm out}(\lambda; \chi, \tau)$ satisfying the following RHP.
\begin{rhp}
(RHP for the outer parametrix $\dot{\mathbf{T}}^{\rm out}_{4}(\lambda; \chi, \tau)$) Seek $2\times 2$ matrix $\dot{\mathbf{T}}_{4}^{\rm out}(\lambda; \chi, \tau)$ with the following conditions.
\begin{itemize}
\item {\bf Analyticity:} $\dot{\mathbf{T}}_{4}^{\rm out}(\lambda; \chi, \tau)$ is analytic in $\lambda\in \mathbb{C}\setminus\left(\Sigma_{g_5}^{\pm}\cup \Sigma_{g_4}^{\pm}\cup \Gamma_{g_5}^{\pm}\cup \Gamma_{g_4}\right)$.
\item {\bf Jump condition:} $\dot{\mathbf{T}}_{4}^{\rm out}(\lambda; \chi, \tau)$ takes continuous boundary values on $\left(\Sigma_{g_5}^{\pm}\cup \Sigma_{g_4}^{\pm}\cup \Gamma_{g_5}^{\pm}\cup \Gamma_{g_4}\right)$, and it is related by the jump condition $\dot{\mathbf{T}}_{4,+}^{\rm out}(\lambda; \chi, \tau)=\dot{\mathbf{T}}_{4,-}^{\rm out}(\lambda; \chi, \tau)\mathbf{V}_{\dot{\mathbf{T}}_{4}^{\rm out}}(\lambda; \chi, \tau)$, where $\mathbf{V}_{\dot{\mathbf{T}}_{4}^{\rm out}}(\lambda; \chi, \tau)$ equals to
    \begin{equation}
    \mathbf{V}_{\dot{\mathbf{T}}_{4}^{\rm out}}(\lambda; \chi, \tau)=\left\{\begin{aligned}&\begin{bmatrix}0&\ee^{-\ii n\kappa_4}\\
-\ee^{\ii n\kappa_4}&0
\end{bmatrix},&\lambda\in \Sigma_{g_4}^{+}\cup\Sigma_{g_4}^{-},\\
&\begin{bmatrix}0&\ee^{-\ii n\kappa_5}\\
-\ee^{\ii n\kappa_5}&0
\end{bmatrix},&\lambda\in \Sigma_{g_5}^{+}\cup\Sigma_{g_5}^{-},\\
&\begin{bmatrix}\ee^{\ii nd_5}&0\\
0&\ee^{-\ii nd_5}
\end{bmatrix},&\lambda\in \Gamma_{g_5}^{+}\cup \Gamma_{g_5}^{-},\\
&\begin{bmatrix}\ee^{\ii nd_4}&0\\
0&\ee^{-\ii nd_4}
\end{bmatrix},&\lambda\in \Gamma_{g_4}.
    \end{aligned}\right.
    \end{equation}
\item {\bf Normalization:} When $\lambda\to\infty$, $\dot{\mathbf{T}}_{4}^{\rm out}(\lambda; \chi, \tau)$ has the following normalization condition,
\begin{equation}
\dot{\mathbf{T}}_{4}^{\rm out}(\lambda; \chi, \tau)\to\mathbb{I}.
\end{equation}
\end{itemize}
\end{rhp}
Compared to the genus-one region, there are four jump contours more than the ones to the outer parametrix in the genus-one region, we will introduce a scalar function $F_{4}(\lambda; \chi, \tau)$ with the following jump conditions,
\begin{equation}
\begin{aligned}
&F_{4,+}(\lambda; \chi, \tau)+F_{4,-}(\lambda; \chi, \tau)=\ii n\kappa_4,\quad\lambda\in \Sigma_{g_4}^{+}\cup \Sigma_{g_4}^{-},\\
&F_{4,+}(\lambda; \chi, \tau)+F_{4,-}(\lambda; \chi, \tau)=\ii n\kappa_5,\quad\lambda\in \Sigma_{g_5}^{+}\cup \Sigma_{g_5}^{-},\\
&F_{4,+}(\lambda; \chi, \tau)-F_{4,-}(\lambda; \chi, \tau)=\ii nd_5, \quad\lambda\in\Gamma_{g_5}^{+}\cup \Gamma_{g_5}^-,\\
&F_{4,+}(\lambda; \chi, \tau)-F_{4,-}(\lambda; \chi, \tau)=\ii nd_4, \quad\lambda\in\Gamma_{g_4}.
\end{aligned}
\end{equation}
By using the function $\mathcal{R}_{4}(\lambda)$, $F_{4}(\lambda; \chi, \tau)$ can be expressed as
\begin{multline}\label{eq:F4}
F_{4}(\lambda;\chi,\tau)=\frac{\mathcal{R}_{4}(\lambda)}{2\pi\ii}\Bigg[\int_{\Sigma_{g_4}^{+}\cup \Sigma_{g_4}^{-}}\frac{\ii n\kappa_4}{\mathcal{R}_{4}(\xi)(\xi-\lambda)}d\xi+\int_{\Sigma_{g_5}^{+}\cup \Sigma_{g_5}^{-}}\frac{\ii n\kappa_5}{\mathcal{R}_{4}(\xi)(\xi-\lambda)}d\xi+\int_{\Gamma_{g_4}}\frac{\ii nd_4}{\mathcal{R}_{4}(\xi)(\xi-\lambda)}d\xi\\+\int_{\Gamma_{g_5}^{+}\cup \Gamma_{g_5}^-}\frac{\ii nd_5}{\mathcal{R}_{4}(\xi)(\xi-\lambda)}d\xi\Bigg].
\end{multline}
Expanding $F_{4}(\lambda; \chi, \tau)$ as $\lambda\to\infty$, we have the following series expansion,
\begin{equation}
F_{4}(\lambda; \chi, \tau)=F_{43}\lambda^3+F_{42}\lambda^2+F_{41}\lambda+F_{40}+\mathcal{O}(\lambda^{-1}),
\end{equation}
where
\begin{equation}\label{eq:F4342}
\begin{aligned}
F_{43}&=-\frac{1}{2\pi\ii}\left(\int_{\Sigma_{g_4}^{+}\cup \Sigma_{g_4}^{-}}\frac{\ii n\kappa_4}{\mathcal{R}_{4}(\xi)}d\xi+\int_{\Sigma_{g_5}^{+}\cup \Sigma_{g_5}^{-}}\frac{\ii n\kappa_5}{\mathcal{R}_{4}(\xi)}d\xi+\int_{\Gamma_{g_5}^{+}\cup \Gamma_{g_5}^-}\frac{\ii nd_5}{\mathcal{R}_{4}(\xi)}d\xi+\int_{\Gamma_{g_4}}\frac{\ii nd_4}{\mathcal{R}_{4}(\xi)}d\xi\right),\\
F_{42}&=-\frac{1}{2\pi\ii}\left(\int_{\Sigma_{g_4}^{+}\cup \Sigma_{g_4}^{-}}\frac{\ii n\kappa_4}{\mathcal{R}_{4}(\xi)}\xi d\xi+\int_{\Sigma_{g_5}^{+}\cup \Sigma_{g_5}^{-}}\frac{\ii n\kappa_5}{\mathcal{R}_{4}(\xi)}\xi d\xi+\int_{\Gamma_{g_5}^{+}\cup \Gamma_{g_5}^-}\frac{\ii nd_5}{\mathcal{R}_{4}(\xi)}\xi d\xi+\int_{\Gamma_{g_4}}\frac{\ii nd_4}{\mathcal{R}_{4}(\xi)}\xi d\xi\right)\\&-\frac{s_1}{2}F_{43},\\
F_{41}&=-\frac{1}{2\pi\ii}\left(\int_{\Sigma_{g_4}^{+}\cup \Sigma_{g_4}^{-}}\frac{\ii n\kappa_4}{\mathcal{R}_{4}(\xi)}\xi^2 d\xi+\int_{\Sigma_{g_5}^{+}\cup \Sigma_{g_5}^{-}}\frac{\ii n\kappa_5}{\mathcal{R}_{4}(\xi)}\xi^2 d\xi+\int_{\Gamma_{g_5}^{+}\cup \Gamma_{g_5}^-}\frac{\ii nd_5}{\mathcal{R}_{4}(\xi)}\xi^2 d\xi+\int_{\Gamma_{g_4}}\frac{\ii nd_4}{\mathcal{R}_{4}(\xi)}\xi^2 d\xi\right)\\&
-\frac{s_1}{2}F_{42}+\left(\frac{s_2}{2}-\frac{3s_1^2}{8}\right)F_{43},\\
F_{40}&=-\frac{1}{2\pi\ii}\left(\int_{\Sigma_{g_4}^{+}\cup \Sigma_{g_4}^{-}}\frac{\ii n\kappa_4}{\mathcal{R}_{4}(\xi)}\xi^3 d\xi+\int_{\Sigma_{g_5}^{+}\cup \Sigma_{g_5}^{-}}\frac{\ii n\kappa_5}{\mathcal{R}_{4}(\xi)}\xi^3 d\xi+\int_{\Gamma_{g_5}^{+}\cup \Gamma_{g_5}^-}\frac{\ii nd_5}{\mathcal{R}_{4}(\xi)}\xi^3 d\xi+\int_{\Gamma_{g_4}}\frac{\ii nd_4}{\mathcal{R}_{4}(\xi)}\xi^3 d\xi\right)\\&
-\frac{s_1}{2}F_{41}+\left(\frac{s_2}{2}-\frac{3s_1^2}{8}\right)F_{42}-\left(\frac{s_3}{2}-\frac{3s_1s_2}{4}+\frac{5s_1^3}{16}\right)F_{43}.
\end{aligned}
\end{equation}

With the aid of function $F_{4}(\lambda; \chi, \tau)$, we can redefine a new matrix $\mathbf{P}_{4}(\lambda; \chi, \tau)$ as
\begin{equation}
\mathbf{P}_{4}(\lambda; \chi, \tau)={\rm diag}\left(\ee^{F_{40}}, \ee^{-F_{40}}\right)\dot{\mathbf{T}}_{4}^{\rm out}(\lambda; \chi, \tau){\rm diag}\left(\ee^{-F_{4}(\lambda; \chi, \tau)}, \ee^{F_{4}(\lambda; \chi, \tau)}\right),
\end{equation}
which satisfies a constant jump matrix on $\lambda\in\Sigma_{g_4}^{\pm}\cup\Sigma_{g_5}^{\pm}$,
\begin{equation}
\mathbf{P}_{4,+}(\lambda; \chi, \tau)=\mathbf{P}_{4,-}(\lambda; \chi, \tau)(\ii\sigma_2),\quad \lambda\in\Sigma_{g_4}^{\pm}\cup\Sigma_{g_5}^{\pm}.
\end{equation}
Moreover, when $\lambda\to\infty$, $\mathbf{P}_{4}(\lambda; \chi, \tau)$ has the following normalization condition,
\begin{equation}\label{eq:P4-bc}
\mathbf{P}_{4}(\lambda; \chi, \tau){\rm diag}\left(\ee^{F_{43}\lambda^3+F_{42}\lambda^2+F_{41}\lambda}, \ee^{-F_{43}\lambda^3-F_{42}\lambda^2-F_{41}\lambda}\right)\to\mathbb{I}.
\end{equation}
Compared to the genus-one region, in this region, we should introduce a genus-three Riemann surface $\Sigma_{3}$ with two sheets $\Sigma_{31}$ and $\Sigma_{32}$, and the corresponding basis can be set as $\alpha_{31}, \beta_{31}, \alpha_{32}, \beta_{32}, \alpha_{33},\beta_{33},$ which is shown in Fig.\ref{circle:genus-three}.
\begin{figure}[ht]
\centering
\includegraphics[width=0.3\textwidth]{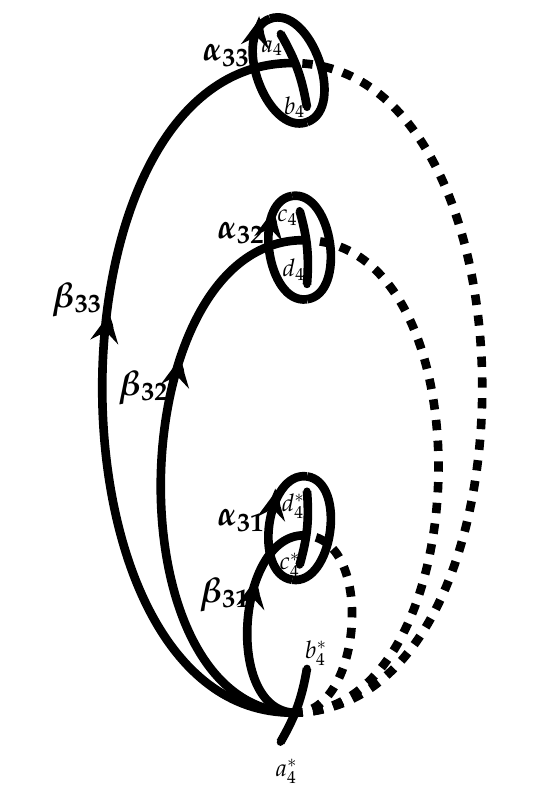}
\caption{Homology basis for the Riemann surface of genus-three. The solid paths indicate the first sheet and the dashed lines lie in the second sheet.}
\label{circle:genus-three}
\end{figure}
Similarly, we introduce the Abel integrals for the genus-three region,
\begin{equation}
\omega_{j}(\lambda)=\int_{a_{4}^*}^{\lambda}\psi_{j}(\xi)d\xi,\quad j=1,2,3,
\end{equation}
where $\psi_{j}(\xi)$ is defined as
\begin{equation}\label{eq:cji}
\psi_{j}(\xi):=\frac{\sum\limits_{i=1}^{3}c_{ji}\xi^{3-i}}{\mathcal{R}_{4}(\xi)},
\end{equation}
and $d\omega_{j}(\mathcal{P})$ is a holomorphic differential on the Riemann surface $\pmb{\chi}$. The coefficients $c_{ji}$s are uniquely determined with the following normalization conditions,
\begin{equation}
\int_{\alpha_{3l}}d\omega_{j}(\mathcal{P})=2\pi\ii\delta_{jl},\quad j,l=1,2,3.
\end{equation}
Correspondingly, the normalized holomorphic differential can give the $b$-period matrix $B_{jl}$ as
\begin{equation}
B_{jl}=\int_{\beta_{3l}}d\omega_{j}(\mathcal{P}),
\end{equation}
which is a symmetric matrix and its real part is negatively defined.
We use them to define a lattice $\Lambda_{3}$ in $\mathbb{C}^{3}$ as
\begin{equation}
\Lambda_{3}=\{2\pi\ii N+\mathbf{B}M, N, M\in\mathbb{Z}^{3}\}.
\end{equation}
By this definition,  the Jacobian variety of $\pmb{\chi}$ is the complex torus $Jac\{\pmb{\chi}\}=\mathbb{C}^{3}/\Lambda_3$. Then we can define the Abel mapping $\mathbf{A}:\pmb{\chi}\to Jac\{\pmb{\chi}\}$ as
\begin{equation}\label{eq:abel-map}
A_{j}(\mathcal{P})=\int_{\mathcal{P}_{0}}^{\mathcal{P}}d\omega_j(\mathcal{Q}),\quad j=1,2,3,
\end{equation}
where the point $\mathcal{P}_{0}$ is given from the base point $a_4^*$ by the condition $\pi(\mathcal{P}_{0})=a_4^*$ and $\mathcal{Q}$ is the integration variable. Especially, for the integral divisors $\mathcal{D}=\mathcal{P}_{1}+\mathcal{P}_{2}+\mathcal{P}_{3}$, we have its Abel mapping as
\begin{equation}
\mathbf{A}(\mathcal{D})=\mathbf{A}(\mathcal{P}_1)+\mathbf{A}(\mathcal{P}_2)+\mathbf{A}(\mathcal{P}_3),
\end{equation}
which is useful for the later asymptotic analysis.

Moreover, the Abel integrals $\mathbf{A}(\lambda)$ has the following properties,
\begin{equation}
\begin{aligned}
\mathbf{A}_{+}(\lambda)-\mathbf{A}_{-}(\lambda)&=0\quad \mod 2\pi\ii \mathbb{Z}^3, \quad \lambda\in \left(b_4^*, c_4^*\right)\cup\left(d_4^*, d_4\right)\cup \left(c_4, b_4\right),\\
\mathbf{A}_{+}(\lambda)+\mathbf{A}_{-}(\lambda)&=\mathbf{B}\mathbf{e}_1\mod 2\pi\ii \mathbb{Z}^3, \quad \lambda\in \left(c_4^*, d_4^*\right),\\
\mathbf{A}_{+}(\lambda)+\mathbf{A}_{-}(\lambda)&=\mathbf{B}\mathbf{e}_2\mod 2\pi\ii \mathbb{Z}^3, \quad \lambda\in \left(d_4, c_4\right),\\
\mathbf{A}_{+}(\lambda)+\mathbf{A}_{-}(\lambda)&=\mathbf{B}\mathbf{e}_3\mod 2\pi\ii \mathbb{Z}^3, \quad \lambda\in \left(b_4, a_4\right).
\end{aligned}
\end{equation}
Next, we define the Abel integrals with the singularities at the point $P_{\infty}$,
\begin{equation}
\begin{aligned}
\Omega_j(\lambda)=\int_{a_4^*}^{\lambda}\Psi_j(\xi)d\xi,\quad j=1,2,3,
\end{aligned}
\end{equation}
where
\begin{equation}
\Psi_j(\xi)=\frac{\sum\limits_{i=1}^{7}s_{ji}\xi^{7-i}}{\mathcal{R}_{4}(\xi)},
\end{equation}
and the unknown coefficients $s_{ji}$s are determined by the normalization conditions
\begin{equation}
\begin{aligned}
&\Omega_1(\lambda)\to \lambda+\mathcal{O}(1),\quad \Omega_2(\lambda)\to\lambda^2+\mathcal{O}(1),\quad \Omega_{3}(\lambda)\to\lambda^3+\mathcal{O}(1), \quad P\to P_{\infty},\\
&\oint_{\alpha_{3l}}d\Omega_{j}(\mathcal{P})=0,\quad j,l=1,2,3.
\end{aligned}
\end{equation}
Correspondingly, the $b$-periods of these integrals are set as
\begin{equation}\label{eq:UVW}
\mathcal{U}_{j}=\int_{\beta_{3j}}d\Omega_{1}(\mathcal{P}), \quad \mathcal{V}_{j}=\int_{\beta_{3j}}d\Omega_{2}(\mathcal{P}), \quad \mathcal{W}_{j}=\int_{\beta_{3j}}d\Omega_{3}(\mathcal{P}),\quad j=1,2,3.
\end{equation}
From the normalization conditions of $\Omega_{j}(\lambda)$, we know that the limit $J_{4j}\,(j=1,2,3)$ defined as
\begin{equation}\label{eq:J-cons}
J_{41}:=\lim\limits_{\lambda\to\infty}\int_{a_4^*}^{\lambda}d\Omega_{1}(\mathcal{P})-\lambda,\quad J_{42}:=\lim\limits_{\lambda\to\infty}\int_{a_4^*}^{\lambda}d\Omega_{2}(\mathcal{P})-\lambda^2,\quad
J_{43}:=\lim\limits_{\lambda\to\infty}\int_{a_4^*}^{\lambda}d\Omega_{3}(\mathcal{P})-\lambda^3
\end{equation}
are existent. Similar to the genus-one region, the solution of $\mathbf{P}_{4}(\lambda; \chi, \tau)$ can be given by using the Riemann-Theta function in the high genus case with the definition \ref{prop:theta}.

Consider the matrix $\pmb{\mathcal{P}}_4(\lambda; \chi, \tau)$ as
\begin{equation}
\pmb{\mathcal{P}}_4(\lambda; \chi, \tau):=\begin{bmatrix}\frac{\Theta\left(\mathbf{A}(\lambda)+\mathbf{d}-\pmb{\mathcal{U}}F_{41}-\pmb{\mathcal{V}}F_{42}-\pmb{\mathcal{W}}F_{43}\right)}{\Theta\left(\mathbf{A}(\lambda)+\mathbf{d}\right)}&
\frac{\Theta\left(\mathbf{A}(\lambda)-\mathbf{d}+\pmb{\mathcal{U}}F_{41}+\pmb{\mathcal{V}}F_{42}+\pmb{\mathcal{W}}F_{43}\right)}{\Theta\left(\mathbf{A}(\lambda)-\mathbf{d}\right)}\\
\frac{\Theta\left(\mathbf{A}(\lambda)-\mathbf{d}-\pmb{\mathcal{U}}F_{41}-\pmb{\mathcal{V}}F_{42}-\pmb{\mathcal{W}}F_{43}\right)}{\Theta\left(\mathbf{A}(\lambda)-\mathbf{d}\right)}&
\frac{\Theta\left(\mathbf{A}(\lambda)+\mathbf{d}+\pmb{\mathcal{U}}F_{41}+\pmb{\mathcal{V}}F_{42}+\pmb{\mathcal{W}}F_{43}\right)}{\Theta\left(\mathbf{A}(\lambda)+\mathbf{d}\right)}\end{bmatrix}
\ee^{-\left(\Omega_1(\lambda)F_{41}+\Omega_2(\lambda)F_{42}+\Omega_3(\lambda)F_{43}\right)\sigma_3},
\end{equation}
then we have
\begin{equation}
\pmb{\mathcal{P}}_{4,+}(\lambda; \chi, \tau)=\pmb{\mathcal{P}}_{4,-}(\lambda; \chi, \tau)\begin{bmatrix}0&1\\
1&0
\end{bmatrix},\quad\lambda\in\Sigma_{g_4}^{\pm}\cup\Sigma_{g_5}^{\pm}.
\end{equation}
Then the solution of $\mathbf{P}_{4}(\lambda; \chi, \tau)$ can be given as
\begin{equation}
\mathbf{P}_{4}(\lambda; \chi, \tau):=\frac{1}{2}{\rm diag}\left(C_{41}, C_{42}\right)\begin{bmatrix}\left(\gamma_{4}(\lambda)+\frac{1}{\gamma_{4}(\lambda)}\right)\pmb{\mathcal{P}}_{4}(\lambda; \chi, \tau)_{11}&-\ii \left(\gamma_{4}(\lambda)-\frac{1}{\gamma_{4}(\lambda)}\right)\pmb{\mathcal{P}}_{4}(\lambda; \chi, \tau)_{12}\\
\ii \left(\gamma_{4}(\lambda)-\frac{1}{\gamma_{4}(\lambda)}\right)\pmb{\mathcal{P}}_{4}(\lambda; \chi, \tau)_{21}&\left(\gamma_{4}(\lambda)+\frac{1}{\gamma_{4}(\lambda)}\right)\pmb{\mathcal{P}}_{4}(\lambda; \chi, \tau)_{22}
\end{bmatrix},
\end{equation}
where
$C_{41}$ and $C_{42}$ are defined as
\begin{equation}
\begin{aligned}
C_{41}&=\frac{\Theta\left(\mathbf{A}(\infty)+\mathbf{d}\right)}{\Theta\left(\mathbf{A}(\infty)+\mathbf{d}-\pmb{\mathcal{U}}F_{41}-\pmb{\mathcal{V}}F_{42}-\pmb{\mathcal{W}}F_{43}\right)}\ee^{J_{41}F_{41}+J_{42}F_{42}+J_{43}F_{43}},\\
C_{42}&=\frac{\Theta\left(\mathbf{A}(\infty)+\mathbf{d}\right)}{\Theta\left(\mathbf{A}(\infty)+\mathbf{d}+\pmb{\mathcal{U}}F_{41}+\pmb{\mathcal{V}}F_{42}+\pmb{\mathcal{W}}F_{43}\right)}\ee^{-J_{41}F_{41}-J_{42}F_{42}-J_{43}F_{43}},\\
\end{aligned}
\end{equation}
and $\gamma_4(\lambda)=\left(\frac{(\lambda-a_4^*)(\lambda-b_4)(\lambda-c_4^*)(\lambda-d_4)}{(\lambda-a_4)(\lambda-b_4^*)(\lambda-c_4)(\lambda-d_4^*)}\right)^{\frac{1}{4}}$ satisfies $\gamma_{4,+}=\ii\gamma_{4,-}$. And $\gamma_{4}-\frac{1}{\gamma_{4}}$ has three zeros $\mathcal{P}_{1}, \mathcal{P}_{2}, \mathcal{P}_{3}$ at the first sheet $\Sigma_{31}$. Following the idea in \cite{belokolos1994algebro,kotlyarov2017planar}, the constant $\mathbf{d}$ can be given as
\begin{equation}\label{eq:d}
\mathbf{d}=\mathbf{A}\left(\mathcal{D}\right)+\mathbf{K},
\end{equation}
where $\mathbf{K}$ is the Riemann-Theta constant vector, modulo the lattice $\Lambda_3$, with
\begin{equation}
K_{j}=\frac{2\pi\ii+B_{jj}}{2}-\frac{1}{2\pi\ii}\sum\limits_{l=1,l\neq j}^{3}\int_{\alpha_{3l}}\left(\int_{\mathcal{P}_{0}}^{\mathcal{Q}}\omega_{j}\right)\omega_l.
\end{equation}
Especially, in the hyperelliptic case, $K_j$ can be reduced into
\begin{equation}
K_j=\frac{1}{2}\sum\limits_{l=1}^{3}B_{lj}+\pi\ii\left(j-2\right).
\end{equation}
Under this condition, $\mathbf{P}_{4}(\lambda; \chi, \tau)$ is well defined in the complex plane. Then the outer parametrix has been constructed completely.
Similar to the genus-zero and genus-one region, in the neighborhood of $a_{4}, a_{4}^*, b_{4}, b_{4}^*, c_{4}, c_4^*, d_4$ and $d_{4}^*$, we can set the local parametrix as $\dot{\mathbf{T}}_{4}^{a_4,a_4^*,b_4,b_4^*,c_4,c_4^*,d_4,d_4^*}(\lambda; \chi, \tau), $ the corresponding small disks are set as $D_{a_4,a_4^*,b_4,b_4^*,c_4,c_4^*,d_4,d_4^*}(\delta)$. Then the global paramtrix of $\mathbf{T}_{4}(\lambda; \chi, \tau)$ is
\begin{equation}
\dot{\mathbf{T}}_{4}(\lambda; \chi, \tau):=\left\{\begin{aligned}&\dot{\mathbf{T}}_{4}^{a_4}(\lambda; \chi, \tau),\quad\lambda\in D_{a_4}(\delta),\\&\dot{\mathbf{T}}_{4}^{a_4^*}(\lambda; \chi, \tau),\quad\lambda\in D_{a_4^*}(\delta),\\&\dot{\mathbf{T}}_{4}^{b_4}(\lambda; \chi, \tau),\quad\lambda\in D_{b_4}(\delta),\\
&\dot{\mathbf{T}}_{4}^{b_4^*}(\lambda; \chi, \tau),\quad\lambda\in D_{b_4^*}(\delta),\\
&\dot{\mathbf{T}}_{4}^{c_4}(\lambda; \chi, \tau),\quad\lambda\in D_{c_4}(\delta),\\&\dot{\mathbf{T}}_{4}^{c_4^*}(\lambda; \chi, \tau),\quad\lambda\in D_{c_4^*}(\delta),\\&\dot{\mathbf{T}}_{4}^{d_4}(\lambda; \chi, \tau),\quad\lambda\in D_{d_4}(\delta),\\
&\dot{\mathbf{T}}_{4}^{d_4^*}(\lambda; \chi, \tau),\quad\lambda\in D_{d_4^*}(\delta),\\
&\dot{\mathbf{T}}_{4}^{\rm out}(\lambda; \chi, \tau),\quad\lambda\in\mathbb{C}\setminus\left(\overline{D_{a_4,a_4*,b_4,b_4^*,c_4,c_4^*,d_4,d_4^*}(\delta)}\cup \Sigma_{g_4}^{\pm}\cup \Sigma_{g_5}^{\pm}\cup \Gamma_{g_5}^{\pm}\cup\Gamma_{g_4}\right).
\end{aligned}\right.
\end{equation}
Next, we will give the error analysis between $\mathbf{T}_{4}(\lambda; \chi, \tau)$ and its parametrix $\dot{\mathbf{T}}_{4}(\lambda; \chi, \tau)$.
\subsection{Error analysis}
Set the error function between $\mathbf{T}_{4}(\lambda; \chi, \tau)$ and $\dot{\mathbf{T}}_{4}(\lambda; \chi, \tau)$ as
\begin{equation}
\mathcal{E}_{4}(\lambda; \chi, \tau):=\mathbf{T}_{4}(\lambda; \chi, \tau)\left(\dot{\mathbf{T}}_{4}(\lambda; \chi, \tau)\right)^{-1}.
\end{equation}
Then the jump condition about $\mathcal{E}_{4}(\lambda; \chi, \tau)$ can be set as $\mathbf{V}_{\mathcal{E}_{4}}(\lambda; \chi, \tau)$. Similar to the genus-one region, the jump matrix $\mathbf{V}_{\mathcal{E}_{4}}(\lambda; \chi, \tau)$ has the error formula,
\begin{equation}
\begin{aligned}
\|\mathbf{V}_{\mathcal{E}_{4}}(\lambda; \chi, \tau)-\mathbb{I}\|&=\mathcal{O}\left(\ee^{-\mu_4 n}\right)\,\,(\mu_4>0),\, \lambda\in C_{L_{j\Sigma}}^{\pm}\cup C_{R_{j\Sigma}}^{\pm}\cup C_{R_{j\Gamma}}^{\pm}\cup C_{L_{j\Gamma}}^{\pm}\,\,(j=4,5),\\
\|\mathbf{V}_{\mathcal{E}_{4}}(\lambda; \chi, \tau)-\mathbb{I}\|&=\mathcal{O}(n^{-1}),\quad \lambda\in\partial D_{a_4,a_4^*,b_4,b_4^*,c_4,c_4^*,d_4,d_4^*}(\delta).\\
\end{aligned}
\end{equation}
The potential function $q^{[n]}(n\chi, n\tau)$ in the genus-three region can be given by
\begin{equation}\label{eq:qn-genus-3}
\begin{aligned}
q^{[n]}(n\chi, n\tau)&=2\ii\lim\limits_{\lambda\to\infty}\lambda\mathbf{T}_{4}(\lambda; \chi, \tau)_{12}\\
&=2\ii\lim\limits_{\lambda\to\infty}\lambda\left(\mathcal{E}_{4}(\lambda; \chi, \tau)\dot{\mathbf{T}}^{\rm out}_{4}(\lambda; \chi, \tau)\right)_{12}\\
&=2\ii\lim\limits_{\lambda\to\infty}\lambda\left(\mathcal{E}_{4,11}(\lambda; \chi, \tau)\dot{\mathbf{T}}_{4,12}^{\rm out}(\lambda; \chi, \tau)+\mathcal{E}_{4,12}(\lambda; \chi, \tau)\dot{\mathbf{T}}_{4,22}^{\rm out}(\lambda; \chi, \tau)\right)\\
&=2\ii\lim\limits_{\lambda\to\infty}\lambda\dot{\mathbf{T}}_{4,12}^{\rm out}(\lambda; \chi, \tau)+\mathcal{O}(n^{-1}).
\end{aligned}
\end{equation}
Substituting $\dot{\mathbf{T}}_4^{\rm out}(\lambda; \chi, \tau)$ into Eq.\eqref{eq:qn-genus-3}, then we get the asymptotics Eq.\eqref{eq:qn-genus-3-1} for the genus-three region (Theorem \ref{theo:genus-three}).

%
%
%
\appendix

\titleformat{\section}[display]
{\centering\LARGE\bfseries}{ }{11pt}{\LARGE}
\renewcommand{\appendixname}{Appendix \ \Alph{section} }
\setcounter{equation}{0}
\setcounter{equation}{0}
\renewcommand\theequation{\Alph{section}.\arabic{equation}}
\setcounter{definition}{0}
\renewcommand\thedefinition{\Alph{section}.\arabic{definition}}

\setcounter{lemma}{0}
\renewcommand\thelemma{\Alph{section}.\arabic{lemma}}
\section{\appendixname: Exact solution of RHP \ref{rhp-airy} in terms of Airy function}
\label{app:Airy}
In this appendix, we will build the solution of RHP \ref{rhp-airy} in an explicit formula. Firstly, we can introduce a normalization transformation $\mathbf{W}(\zeta)=\pmb{\mathcal{U}}(\zeta)\ee^{\frac{\zeta^{3/2}}{2}\sigma_3}$ to convert the jump matrices into the following constant ones, that is
\begin{equation}\label{eq:jump-W}
\mathbf{W}_{+}(\zeta)=\mathbf{W}_{-}(\zeta)=\begin{bmatrix}1&0\\-1&1
\end{bmatrix},\quad\arg(\zeta)=0,\quad \mathbf{W}_{+}(\zeta)=\mathbf{W}_{-}(\zeta)=\begin{bmatrix}0&1\\-1&0
\end{bmatrix},\quad\arg(-\zeta)=0,
\end{equation}
and
\begin{equation}\label{eq:jump-W-1}
\mathbf{W}_{+}(\zeta)=\mathbf{W}_{-}(\zeta)\begin{bmatrix}1&-1\\
0&1
\end{bmatrix},\quad \arg(\zeta)=\pm\frac{2}{3}\pi.
\end{equation}
The normalization condition about $\mathbf{W}(\zeta)$ will change into
\begin{equation}
\lim\limits_{\zeta\to\infty}\mathbf{W}\ee^{-\frac{\zeta^{\frac{3}{2}}}{2}\sigma_3}\ee^{-\frac{\ii \pi}{4}\sigma_3}\mathbf{Q}_{c}^{-1}\zeta^{-\frac{1}{4}\sigma_3}\to\mathbb{I}.
\end{equation}
Then we will find the solution $\mathbf{W}(\zeta)$ by the special function. Since the jump matrices about $\mathbf{W}(\zeta)$ are constants in all directions, and $\frac{d\mathbf{W}(\zeta)}{d\zeta}$ also satisfies the exact jump condition with $\mathbf{W}(\zeta)$. As a result, $\mathbf{Y}(\zeta):=\frac{d\mathbf{W}(\zeta)}{d\zeta}\mathbf{W}(\zeta)^{-1}$ is an entire function in the complex $\zeta$ plane. Suppose that there exists a constant matrix $\mathbf{B}=\begin{bmatrix}b_{11}&b_{12}\\
b_{21}&b_{22}
\end{bmatrix}$ such that the normalization condition have the following formula,
\begin{equation}\label{eq:nor-W}
\mathbf{W}(\zeta)\ee^{-\frac{\zeta^{\frac{3}{2}}}{2}\sigma_3}\ee^{-\frac{\ii\pi}{4}\sigma_3}\mathbf{Q}_c^{-1}\zeta^{-\frac{1}{4}\sigma_3}=\mathbb{I}+\mathbf{B}\zeta^{-1}+\mathcal{O}(\zeta^{-2}).
\end{equation}
Then the entire function $\mathbf{Y}(\zeta)$ can be given as
\begin{equation}
\mathbf{Y}(\zeta)=\frac{3}{4}\begin{bmatrix}b_{21}&b_{22}-b_{11}-\zeta\\
-1&-b_{21}
\end{bmatrix}.
\end{equation}
By a direct calculation, we know that the second row of $\mathbf{W}(\zeta)$ satisfies the Airy function by using a scale transformation, that is
\begin{equation}
\frac{d^2 w_{2}}{d\eta^2}-\eta w_2=0,\quad \eta=\left(\frac{3}{4}\right)^{\frac{2}{3}}\left(\zeta-c\right), \quad c:=-b_{21}^2-b_{11}+b_{22}.
\end{equation}
When $\eta$ is large, we have the following asymptotic expansion about the Airy solution,
\begin{equation}\label{eq:asy-airy}
\begin{aligned}
{\rm Ai}(\eta)&=\frac{\ee^{-\frac{2}{3}\eta^{\frac{3}{2}}}}{2\sqrt{\pi}\zeta^{\frac{1}{4}}}\sum\limits_{k=0}^{\infty}\frac{(-1)^ku_k}{\left(\frac{2}{3}\eta^{\frac{3}{2}}\right)^k}, \quad \eta\to\infty, \quad \left|\arg(\eta)\right|\leq \pi-\delta, \\ {\rm Ai}'(\eta)&=-\frac{\zeta^{\frac{1}{4}}\ee^{-\frac{2}{3}\eta^{\frac{3}{2}}}}{2\sqrt{\pi}}\sum\limits_{k=0}^{\infty}\frac{(-1)^k(6k+1)u_k}{(1-6k)\left(\frac{2}{3}\eta^{\frac{3}{2}}\right)^k},
\end{aligned}
\end{equation}
where $\delta$ is an enough small positive constant and $u_k$ has a recurrence relation
\begin{equation}
u_{0}=1,\quad u_{k}=\frac{(6k-5)(6k-3)(6k-1)}{216(2k-1)k}u_{k-1}.
\end{equation}
Let us first consider this solution in the sector $0<\arg(\zeta)<\frac{2}{3}\pi$. From this asymptotics in Eq.\eqref{eq:asy-airy}, we can choose the basic solution as $w_2={\rm Ai}(\eta)$ and $w_2={\rm Ai}\left(\eta\ee^{-\frac{2\ii\pi}{3}}\right)$, which can be expanded by Eq.\eqref{eq:asy-airy} directly in this sector. Then the solution of the second row of $\mathbf{W}(\zeta)$ can be assumed as
\begin{equation}\label{eq:airy-W}
\begin{bmatrix}
W_{21}(\zeta)&W_{22}(\zeta)
\end{bmatrix}=\begin{bmatrix}a_{1}{\rm Ai}(\eta)+b_1{\rm Ai}\left(\eta\ee^{-\frac{2\ii\pi}{3}}\right)&a_{2}{\rm Ai}(\eta)+b_2{\rm Ai}\left(\eta\ee^{-\frac{2\ii\pi}{3}}\right)
\end{bmatrix},
\end{equation}
where $a_1, b_1, a_2, b_2$ are some constants to be determined. Expanding Eq.\eqref{eq:airy-W} at $\zeta\to\infty$ and comparing it with the asymptotics of $W_{21}(\zeta)$ and $W_{22}(\zeta)$, we can calculate the following relations,
\begin{equation}\label{eq:rela-b}
\begin{aligned}
&\frac{b_1}{2\left(\frac{4}{3}\right)^{-\frac{1}{6}}\zeta^{\frac{1}{4}}\ee^{-\frac{\ii\pi}{6}}\sqrt{\pi}}\ee^{\frac{\zeta^{\frac{3}{2}}}{2}}=-\frac{\sqrt{2}\ee^{\frac{\ii\pi}{4}}}{2\zeta^{\frac{1}{4}}}\ee^{\frac{\zeta^{\frac{3}{2}}}{2}},\quad
\frac{a_2}{2\left(\frac{4}{3}\right)^{-\frac{1}{6}}\zeta^{\frac{1}{4}}\sqrt{\pi}}\ee^{-\frac{\zeta^{\frac{3}{2}}}{2}}=\frac{\sqrt{2}\ee^{-\frac{\ii\pi}{4}}}{2\zeta^{\frac{1}{4}}}\ee^{-\frac{\zeta^{\frac{3}{2}}}{2}},\quad \\&c=0,\quad b_{21}=0, \quad b_{22}=0.
\end{aligned}
\end{equation}
Thus we have
\begin{equation}
\begin{bmatrix}
W_{21}(\zeta)&W_{22}(\zeta)
\end{bmatrix}=\begin{bmatrix}\sqrt{2\pi}\left(\frac{3}{4}\right)^{\frac{1}{6}}\ee^{-\frac{11}{12}\ii\pi}{\rm Ai}\left(\left(\frac{3}{4}\right)^{\frac{2}{3}}\zeta\ee^{-\frac{2}{3}\ii\pi}\right)&\sqrt{2\pi}\left(\frac{3}{4}\right)^{\frac{1}{6}}\ee^{-\frac{1}{4}\ii\pi}{\rm Ai}\left(\left(\frac{3}{4}\right)^{\frac{2}{3}}\zeta\right)
\end{bmatrix}.
\end{equation}
Furthermore, from the differential equation $\frac{d\mathbf{W}(\zeta)}{d\zeta}=\mathbf{Y}(\zeta)\mathbf{W}(\zeta) $ and the relation in Eq.\eqref{eq:rela-b}, we get
\begin{equation}
\begin{aligned}
W_{11}(\zeta)&=-\frac{4}{3}W_{21,\zeta}=\sqrt{2\pi}\left(\frac{4}{3}\right)^{\frac{1}{6}}\ee^{-\frac{7}{12}\ii\pi}{\rm Ai}'\left(\left(\frac{3}{4}\right)^{\frac{2}{3}}\zeta\ee^{-\frac{2}{3}\ii\pi}\right),\\
W_{12}(\zeta)&=-\frac{4}{3}W_{22,\zeta}=\sqrt{2\pi}\left(\frac{4}{3}\right)^{\frac{1}{6}}\ee^{\frac{3}{4}\ii\pi}{\rm Ai}'\left(\left(\frac{3}{4}\right)^{\frac{2}{3}}\zeta\right).
\end{aligned}
\end{equation}
Thus the solution of $\mathbf{W}(\zeta)$ in the sector $0<\arg(\zeta)<\frac{2}{3}\pi$ can be given as
\begin{equation}
\mathbf{W}(\zeta)=\begin{bmatrix}\sqrt{2\pi}\left(\frac{4}{3}\right)^{\frac{1}{6}}\ee^{-\frac{7}{12}\ii\pi}{\rm Ai}'\left(\left(\frac{3}{4}\right)^{\frac{2}{3}}\zeta\ee^{-\frac{2}{3}\ii\pi}\right)&\sqrt{2\pi}\left(\frac{4}{3}\right)^{\frac{1}{6}}\ee^{\frac{3}{4}\ii\pi}{\rm Ai}'\left(\left(\frac{3}{4}\right)^{\frac{2}{3}}\zeta\right)\\
\sqrt{2\pi}\left(\frac{3}{4}\right)^{\frac{1}{6}}\ee^{-\frac{11}{12}\ii\pi}{\rm Ai}\left(\left(\frac{3}{4}\right)^{\frac{2}{3}}\zeta\ee^{-\frac{2}{3}\ii\pi}\right)&\sqrt{2\pi}\left(\frac{3}{4}\right)^{\frac{1}{6}}\ee^{-\frac{1}{4}\ii\pi}{\rm Ai}\left(\left(\frac{3}{4}\right)^{\frac{2}{3}}\zeta\right)
\end{bmatrix}.
\end{equation}
For the other sectors, we can get the solution of $\mathbf{W}(\zeta)$ with the jump conditions Eq.\eqref{eq:jump-W} and Eq.\eqref{eq:jump-W-1},
\begin{equation}
\begin{aligned}
&\mathbf{W}(\zeta)\\
&{=}\left\{\begin{aligned}&\begin{bmatrix}\sqrt{2\pi}\left(\frac{4}{3}\right)^{\frac{1}{6}}\ee^{-\frac{7}{12}\ii\pi}{\rm Ai}'\left(\left(\frac{3}{4}\right)^{\frac{2}{3}}\zeta\ee^{-\frac{2}{3}\ii\pi}\right)&\sqrt{2\pi}\left(\frac{4}{3}\right)^{\frac{1}{6}}\ee^{-\frac{11}{12}\ii\pi}{\rm Ai}'\left(\left(\frac{3}{4}\right)^{\frac{2}{3}}\zeta\ee^{-\frac{4}{3}\ii\pi}\right)\\
\sqrt{2\pi}\left(\frac{3}{4}\right)^{\frac{1}{6}}\ee^{-\frac{11}{12}\ii\pi}{\rm Ai}\left(\left(\frac{3}{4}\right)^{\frac{2}{3}}\zeta\ee^{-\frac{2}{3}\ii\pi}\right)&\sqrt{2\pi}\left(\frac{3}{4}\right)^{\frac{1}{6}}\ee^{-\frac{7}{12}\ii\pi}{\rm Ai}\left(\left(\frac{3}{4}\right)^{\frac{2}{3}}\zeta\ee^{-\frac{4}{3}\ii\pi}\right)
\end{bmatrix},\,\, \frac{2}{3}\pi<\arg(\zeta)<\pi,\\
&\begin{bmatrix}\sqrt{2\pi}\left(\frac{4}{3}\right)^{\frac{1}{6}}\ee^{-\frac{11}{12}\ii\pi}{\rm Ai}'\left(\left(\frac{3}{4}\right)^{\frac{2}{3}}\zeta\ee^{\frac{2}{3}\ii\pi}\right)&\sqrt{2\pi}\left(\frac{4}{3}\right)^{\frac{1}{6}}\ee^{\frac{5}{12}\ii\pi}{\rm Ai}'\left(\left(\frac{3}{4}\right)^{\frac{2}{3}}\zeta\ee^{\frac{4}{3}\ii\pi}\right)\\
\sqrt{2\pi}\left(\frac{3}{4}\right)^{\frac{1}{6}}\ee^{-\frac{7}{12}\ii\pi}{\rm Ai}\left(\left(\frac{3}{4}\right)^{\frac{2}{3}}\zeta\ee^{\frac{2}{3}\ii\pi}\right)&\sqrt{2\pi}\left(\frac{3}{4}\right)^{\frac{1}{6}}\ee^{\frac{1}{12}\ii\pi}{\rm Ai}\left(\left(\frac{3}{4}\right)^{\frac{2}{3}}\zeta\ee^{\frac{4}{3}\ii\pi}\right)
\end{bmatrix},\,\, -\pi<\arg(\zeta)<-\frac{2}{3}\pi,\\
&\begin{bmatrix}\sqrt{2\pi}\left(\frac{4}{3}\right)^{\frac{1}{6}}\ee^{-\frac{11}{12}\ii\pi}{\rm Ai}'\left(\left(\frac{3}{4}\right)^{\frac{2}{3}}\zeta\ee^{\frac{2}{3}\ii\pi}\right)&\sqrt{2\pi}\left(\frac{4}{3}\right)^{\frac{1}{6}}\ee^{\frac{3}{4}\ii\pi}{\rm Ai}'\left(\left(\frac{3}{4}\right)^{\frac{2}{3}}\zeta\right)\\
\sqrt{2\pi}\left(\frac{3}{4}\right)^{\frac{1}{6}}\ee^{-\frac{7}{12}\ii\pi}{\rm Ai}\left(\left(\frac{3}{4}\right)^{\frac{2}{3}}\zeta\ee^{\frac{2}{3}\ii\pi}\right)&\sqrt{2\pi}\left(\frac{3}{4}\right)^{\frac{1}{6}}\ee^{-\frac{1}{4}\ii\pi}{\rm Ai}\left(\left(\frac{3}{4}\right)^{\frac{2}{3}}\zeta\right)
\end{bmatrix},\,\, -\frac{2}{3}\pi<\arg(\zeta)<0.
\end{aligned}
\right.
\end{aligned}
\end{equation}
We can easily check that the solution $\mathbf{W}(\zeta)$ can match the normalization condition Eq.\eqref{eq:nor-W} in all directions as $\zeta\to\infty$, then we get the solution of $\mathbf{W}(\zeta)$ in the whole $\zeta$ plane.
\section*{Acknowledgements}

Liming Ling is supported by the National Natural Science Foundation of China (Grant Nos. 12122105); Xiaoen Zhang is supported by the National Natural Science Foundation of China (Grant No.12101246), the China Postdoctoral Science Foundation (Grant No. 2020M682692), the Guangzhou Science and Technology Program of China(Grant No. 202102020783).
\bibliographystyle{siam}
\bibliography{reference}

\end{document}